\def\showtableofcontents{1}
\newif\ifnotes
\notesfalse

\documentclass[11pt,letterpaper]{article}

\ifnotes
  \usepackage[draft,notes=true,later=true]{dtrt}
\else
  \usepackage[notes=false,later=false]{dtrt}
\fi
\usepackage[T1]{fontenc} %
\usepackage{lmodern} %
\usepackage[utf8]{inputenc} %
\usepackage{cmap} %
\usepackage{multirow}
\usepackage{booktabs}
\usepackage{float}
\usepackage{amsmath,amsthm,amssymb,algpseudocode,algorithm,cryptocode}
\usepackage{bm} %
\usepackage{graphicx}
\usepackage{fullpage}
\usepackage{caption}
\usepackage{tikz}
\usetikzlibrary{angles,quotes,quantikz}
\usepackage{physics}
\usepackage{mathtools}
\usepackage[capitalize,nameinlink]{cleveref}
  \crefname{step}{Step}{Steps}
  \crefname{claim}{Claim}{Claims}
  \crefname{definition}{definition}{definitions}
  \Crefname{definition}{Definition}{Definitions}
\hypersetup{colorlinks={true},linkcolor={blue},urlcolor={red},citecolor={blue}}
\usepackage{xspace} %
\usepackage{paralist}
\usepackage{enumitem}
\usepackage{mdframed}
\usepackage{bbm}
\usepackage{xcolor}
\allowdisplaybreaks

\newtheorem{theorem}{Theorem}[section]
\newtheorem{proposition}[theorem]{Proposition}
\newtheorem{claim}[theorem]{Claim}
\newtheorem{lemma}[theorem]{Lemma}

\newtheorem{corollary}[theorem]{Corollary}

\newtheorem{remark}[theorem]{Remark}

\theoremstyle{definition}
\newtheorem{definition}[theorem]{Definition}

\theoremstyle{remark}

\def\poly{{\rm poly}}
\def\negl{{\rm negl}}
\newcommand{\Win}{\mathrm{Win}}

\let\oldket\ket
\renewcommand{\ket}[1]{\oldket*{#1}}
\let\oldketbra\ketbra
\renewcommand{\ketbra}[1]{\oldketbra*{#1}}

\newcommand{\cA}{\mathcal{A}}

\newcommand{\cK}{\mathcal{K}}

\newcommand{\wt}{\mathrm{wt}}

\newcommand{\Id}{\mathbf{I}}

\renewcommand{\epsilon}{\varepsilon}

\newcommand{\cO}{\mathsf{cO}}

\crefname{equation}{}{}

\usepackage{titling}

\title{Lower Bounds for Preprocessing Attacks \\ on Quantum Cryptography}
\date{October 3, 2026}
\author{Fangqi Dong \\ Princeton \and Alex Lombardi \\ Princeton}

\begin{document}

\maketitle
\thispagestyle{empty} 
\begin{abstract}
    
We prove near-optimal lower bounds for preprocessing attacks on quantum cryptography in the random oracle model. Specifically, we prove that a $T$-query adversary with $S$ qubits of non-uniform advice can recover a random key $k$ from the $n$-qubit binary phase state $\ket{\psi_k} \propto \sum_{x} R(k,x) \ket{x}$ with probability at most $O(\frac{T^2 + \sqrt{ST}}{N})$ for $N=2^n$. In contrast, the best known bound for post-quantum one-way functions is $O(\frac{T^2 + ST}{N})$, and indeed there is a trivial attack with $S = N$. This demonstrates a new advantage of quantum cryptography over classical cryptography: $n$ qubits of communication suffice for security against preprocessing attacks with space up to $N^2$ rather than $N$. 

We achieve this result by making use of a simple methodology: express the optimal preprocessing attack as the operator norm of a random matrix, and bound this value in expectation over the random oracle via the trace-moment method. Bounding these trace moments turns out to have a natural interpretation using the compressed oracle methodology [Zhandry, Crypto 2019], which we then analyze. This can be viewed as a simplification and generalization of the approach of Liu [Eurocrypt 2023] for proving the security of post-quantum cryptography against preprocessing attacks. 

We use this methodology to prove several additional results.
\begin{itemize}
    \item We tighten Liu's analysis of post-quantum PRGs in the QROM, achieving a distinguishing advantage bound of $O(\frac{T^2}N + \sqrt{\frac{ST}N})$.
    \item We apply this methodology in the context of \emph{unitary synthesis}, extending the one-query lower bound of Lombardi-Ma-Wright [STOC 2024] to hold against adversaries that can make one arbitrary function query \emph{along with} polynomially many (adaptive) queries to the random oracle, either before or after the function query. This also gives a way to interpret the original LMW24 result in terms of compressed oracles. 
    \item Finally, we prove a tight $O(\frac{\sqrt{S}}N)$ bound for the \emph{pseudorandomness} of random binary phase states against space $S$ distinguishers.
\end{itemize}
     
\end{abstract}

\ifnum\showtableofcontents=1
{
\newpage
\setcounter{tocdepth}{2}
\tableofcontents
\thispagestyle{empty}
}
\newpage
\fi

\setcounter{page}{1}
\section{Introduction}
\label{sec:introduction}

We study the security of quantum cryptography against preprocessing attacks with time-space complexity $(T,S)$ in the quantum random oracle model~\cite{AC:BDFLSZ11}. In a preprocessing attack \cite{Hellman80,STOC:FiaNao91,C:Unruh07}, the adversary may perform an unbounded amount of work before receiving its challenge, storing information in the form of an $S$-bit (or $S$-qubit) advice string (or quantum state). It then receives the challenge and
makes $T$ queries to the random oracle as usual. The question is to understand the resulting time-space security tradeoffs: given $S$ qubits of advice, how many queries are necessary to break cryptosystems of interest? 

The study of preprocessing attacks in classical cryptography dates back to the 1980s \cite{Hellman80,STOC:FiaNao91}, where Hellman showed how to break the one-wayness of an $n$-bit random oracle with success probability $O\Big((\frac{S^2 T}{N^2})^{1/3}\Big)$, i.e., a complete break with $N^{2/3}$ time and advice length. Famously, there is no matching lower bound: while a sequence of recent works \cite{C:Unruh07,EC:DodGuoKat17,EC:CDGS18} (building on a seminal work of Yao \cite{STOC:Yao90}) establish that algorithms with time-space complexity $(T,S)$ achieve inversion probability at most $O(\frac{ST}N)$, it is known that sharpening the tradeoff $ST = \Omega(N)$ would imply new circuit lower bounds \cite{TCC:CorKog19}.

Preprocessing attacks have also received significant attention in the \emph{post-quantum setting} \cite{AC:HhaXagYam19,ITC:ChuLiaQia20,FOCS:CGLQ20,TCC:GLLZ21,EC:Liu23,C:DonLiuWu24}, where we study the security of classical cryptosystems against quantum attacks. In this setting, quantum attackers have two notable sources of additional power: an adversary can query the oracle in superposition, and it can store a \emph{quantum state} during the preprocessing phase.

For our two most foundational symmetric-key cryptographic primitives, one-way functions and pseudorandom generators, the state-of-the-art security bounds in the aux-input QROM are due to Liu \cite{EC:Liu23}, who showed that a random oracle can be inverted (as a one-way function) with probability at most
\[
    \epsilon_{\mathrm{OWF}}\leq O\!\left(\frac{T^2+ST}{N}\right),
\]
while a random oracle can be broken as a PRG with distinguishing advantage at most 
\[
    \epsilon_{\mathrm{PRG}}\leq O\!\left(\frac{T}{\sqrt{N}} + \Big(\frac{ST}N\Big)^{1/3} \right). 
\]

These results concern the security of \emph{classical} primitives against quantum preprocessing attacks. In this work, we study the analogous problem in the context of \emph{fully quantum cryptography}, where the cryptosystem itself involves quantum computation and communication. In particular, we focus on \emph{single-copy} one-way states \cite{TQC:MY24} and pseudorandom states \cite{C:JiLiuSon18}, natural lightweight forms of quantum cryptography that can be used to build quantum bit commitments \cite{C:MorYam22} and therefore other quantum cryptographic primitives such as oblivious transfer, zero-knowledge proofs, and secure computation \cite{C:BCKM21b,EC:GLSV21}.

We therefore ask:

\begin{center}
    \emph{What is the achievable quantitative security of these quantum cryptographic primitives? }
\end{center}

To study this question in the random oracle model, we consider the standard instantiation using binary phase states. For a random oracle $R: [K] \times [N] \rightarrow \{\pm 1\}$, we define the family of pure states
\[ \ket{\psi_{R, k}} = \frac 1 {\sqrt N} \sum_{x\in [N]} R(k,x) \ket{x}.
\]

For one-wayness, we ask that it is hard to recover a random key $k$\footnote{In this paper, we consider the regime $K\leq N$; while this inversion problem is not information-theoretically solvable with probability $1$, it is solvable with high probability when $K \leq cN$ for a sufficiently small constant $c$, while it is solvable with constant probability when $K=N$.} from the state $\ket{\psi_{R,k}}$, while for pseudorandomness (with $K < N/2$) we ask that the mixed state $\mathbb E_k \ketbra{\psi_{R,k}}$ is computationally indistinguishable from maximally mixed. 

Interestingly, there is also a fascinating connection between these cryptographic primitives and the \emph{unitary synthesis problem} \cite{AK07,STOC:LomMaWri24,DLM26}; indeed, breaking the one-wayness or pseudorandomness of random binary phase states implies a unitary synthesis algorithm for an associated family of unitaries in the following hybrid query model:

\begin{itemize}
    \item The algorithm can make \emph{one} query to an arbitrary boolean function $f_R$ depending on $R$, and
    \item The algorithm can make up to $T$ (sequential) queries to the random oracle $R$ itself.
\end{itemize}
This is a hybrid model lying in between one-query unitary synthesis and full-fledged polynomial-query unitary synthesis. Note that just as one-query algorithms are more powerful than ``quantum programs'' ($0$-query algorithms with quantum advice), this hybrid model is more powerful than $(T,S)$-preprocessing algorithms. 

\subsection{Our Results}

For our first result, we prove a quantitatively strong security bound for OWS in the aux-input QROM for the binary phase state family $\ket{\psi_{R,k}}$ defined above. 

\begin{theorem}[Informal; see Theorem~\ref{thm:1ows-ts-tradeoff}]\label{thm:main-ows-intro}
    Any adversary with $S$ qubits of advice that makes $T$ queries to $R$ recovers the random key $k$ from $\ket{\psi_{R,k}}$ with probability at most
    \[
        O\!\left(\frac{1+T^2+\sqrt{S(T+1)}}{K}\right).
    \]
\end{theorem}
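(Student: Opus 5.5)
We follow the methodology announced in the abstract: write the optimal preprocessing attack's success probability as the operator norm of a random (oracle-dependent) matrix, then bound its expectation by trace moments. Fix the adversary's online algorithm, a sequence of unitaries $U_0,O_R,U_1,\dots,O_R,U_T$ followed by a measurement $\{\Pi_k\}$ of the key register. Given the challenge $\ket{\psi_{R,k}}$ and advice $\sigma_R$, the success probability is $\frac1K\sum_k \Tr\big[(\Pi_k)\, V_R(\sigma_R\otimes \ketbra{\psi_{R,k}})V_R^\dagger\big]$, where $V_R$ is the oracle-dependent unitary. Define the positive operator
\[
M_R=\frac1K\sum_{k}\bigl(\Id\otimes\bra{\psi_{R,k}}\bigr)V_R^\dagger \Pi_k V_R\bigl(\Id\otimes\ket{\psi_{R,k}}\bigr)
\]
on the advice register. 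The best advice for $R$ achieves exactly $\|M_R\|_{\mathrm{op}}$, so we must bound $\mathbb E_R\|M_R\|$. The $S$ advice qubits enter only through dimension: we use $\|M_R\|^{m}\le \Tr M_R^{m}$ but must first reduce to a relevant subspace. The standard trick is that $\Tr$ over the advice register costs a factor $2^S$. Taking $m$ copies gives
\[
\mathbb E_R\|M_R\|\le \Big(\mathbb E_R \Tr M_R^m\Big)^{1/m}\le 2^{S/m}\Big(\max_{\text{states}}\ \mathbb E_R\,\langle M_R^m\rangle\Big)^{1/m}.
\]
Equivalently, we bound $\mathbb E_R\Tr M_R^m$ as $2^S\cdot\epsilon_m$, where $\epsilon_m$ is the probability that $m$ sequential runs of the attack on fresh independent keys $k_1,\dots,k_m$ (sharing the same $R$, with the advice state passed along) all succeed. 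Choosing $m\approx S$ then makes $2^{S/m}=O(1)$.

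The core step is bounding the multi-instance success probability $\epsilon_m$, the probability that an $mT$-query algorithm given $\ket{\psi_{R,k_1}},\dots$ sequentially (with no advice) recovers all keys. It should be at most $\big(O(\tfrac{T^2 + (i\text{-dependent term})}{K})\big)^m$, with the $i$-th round succeeding with conditional probability $O\big(\frac{1+T^2 + \sqrt{(i)\,T}}{K}\big)$, roughly. We analyze this with Zhandry's compressed oracle. We purify the challenge states: preparing $\ket{\psi_{R,k}}$ amounts to one phase query to $R(k,\cdot)$ on a uniform superposition $\sum_x\ket{x}$, so each challenge is one extra compressed-oracle query. To guess $k_i$, the algorithm must have the database contain a nontrivial correlation with row $k_i$. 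With the challenge query alone, the recorded entry $(k_i,x)$ is spread over $N$ values of $x$ uniformly. Its only link to the key is through the adversary's own queries hitting row $k_i$, which gives the $T^2/K$ term via the usual progress-measure argument (each query increases the amplitude on "found row $k_i$" by $O(\sqrt{\cdot/K})$). It also gives the $1/K$ guessing term.

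The new ingredient, and the expected main obstacle, is the $\sqrt{ST}/K$ term. Earlier rounds can leave in the database information about many rows, roughly $iT$ recorded entries after $i$ rounds, and a fresh key $k_i$ hits one of these rows with probability about $iT/K$. We must show this only helps with amplitude rather than probability, i.e., conditional success $O(\sqrt{iT}/K)$ rather than $iT/K$. Knowing a few entries of row $k_i$ must be checked against a state whose row is uniformly spread over $N$ coordinates. We intend to exploit that the overlap of $\ket{\psi_{R,k}}$ with anything determined by $t$ recorded points of row $k$ has amplitude $\sqrt{t/N}\le\sqrt{t/K}$. Summed over the $K$ candidate rows, this should yield success probability $\frac1K\cdot(\sqrt{iT\cdot K}/\sqrt K)$-type bounds via Cauchy–Schwarz. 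Multiplying the conditional bounds and taking $m=S$ gives $\epsilon_m^{1/m}\lesssim \frac{1+T^2+\sqrt{ST}}{K}$, hence the theorem. Making this amplitude bookkeeping rigorous inside the compressed oracle, with a careful potential function across rounds, is where most of the work lies.
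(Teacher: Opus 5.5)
Your reduction matches the paper's: optimizing the advice gives $\mathbb E_R\|M_R\|$, a moment of order $m\approx S$ makes the $2^S$ trace dimension cost $O(1)$, purifying $R$ (including the challenge preparation) in the compressed oracle lets each application of $M_R$ grow the database by at most $2T+2$, and you multiply per-application bounds indexed by the current database size.

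One correction: $\langle z|M_R^m|z\rangle$ is not the probability that $m$ physical runs on independent classical keys all succeed. Each factor of $M_R$ averages the key coherently; it is a post-selected ``run, un-run, project back'' process. What you need is an operator-norm statement. For the purified $\widehat Y$ acting on advice $\otimes$ database, you want $\|P_{\le\ell}\widehat Y P_{\le\ell}\|\le\beta(\ell)$, i.e.\ $|\langle\psi|\widehat Y|\psi\rangle|\le\beta(\ell)$ for every unit state supported on databases of size at most $\ell$ and arbitrarily entangled with the advice. Then $\|\widehat Y^{i}\ket{z}\ket{\emptyset}\|\le\beta((2T+2)i)\,\|\widehat Y^{i-1}\ket{z}\ket{\emptyset}\|$, and the product goes through.

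The genuine gap is that this bound with $\beta(\ell)=O\bigl((1+T^2+\sqrt{\ell})/K\bigr)$ is essentially the whole proof, and your sketch supplies no mechanism for the $\sqrt{\ell}$.

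The naive estimate (a fresh key lands on an occupied row with probability at most $\ell/K$) gives only $O(S(T+1)/K)$. Your heuristic that $t$ recorded points of row $k$ have overlap $\sqrt{t/N}$ with $\ket{\psi_{R,k}}$ is the classical-advice picture, and it fails here. Storing $r$ copies of $\ket{\psi_{R,k}}$ places up to $r$ superposed entries in row $k$, entangled with the advice. The symmetric-subspace test then recognizes $k$ with constant probability, not with probability about $r/N$. Indeed, a local bound of the form $(1+T^2)/K+\ell/(KN)$ would contradict that attack.

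The missing idea is a heavy/light split of the challenge row at threshold $h=\sqrt{\ell}$, shifted by $T$. The shift works because each query changes $|F_k|$ by at most one, so the heavy part is monotone under queries. The three pieces are:

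(i) At most $\ell/h$ keys have more than $h$ entries, so heavy-row components carry total weight at most $\ell/(Kh)$.

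(ii) On light rows, group amplitudes by final database $F$ and key $k$. When the challenge entry $(k,x)$ survives in $F$, the only coherently interfering histories are the at most $h+T$ choices of which entry of $F_k$ was the one added by the challenge. Cauchy--Schwarz together with $\sum_k\Pi_k=I$ then gives $(h+T)/K$.

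(iii) When $(k,x)\notin F$, either it was already in the initial database and got deleted, or a later query toggled it off. A query-by-query induction $\sqrt{G_t}\le\sqrt{G_{t-1}}+\sqrt{2/K}$ starting from $G_0\le\ell/(KN)$ gives $O(\ell/(KN)+T^2/K)$.

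Balancing (i) against (ii) is exactly where $\sqrt{\ell}/K$ comes from. Without this argument, your proposal stops at the reduction.
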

In particular, this bound simplifies to $O\!\left(\frac{1+T^2+\sqrt{S(T+1)}}{N}\right)$ when $K=N$. This bound is tight in two important settings:
\begin{itemize}
    \item $S=0$: in this case, the bound $O(T^2/N)$ matches a Grover-style search algorithm \cite{Rosenthal22}
    \item $T=0$: in this case, the bound $O(\sqrt{S}/N)$ matches (up to logarithmic factors) a folklore attack \cite{STOC:LomMaWri24} in which the preprocessing algorithm stores $O(\sqrt{S})$ copies of $\sqrt{S}$ of the $\ket{\psi_k}$ and applies a sequence of symmetric subspace tests to recover the key whenever the challenge key lies in the memorized set (see \cref{sec:symmetric-subspace-attack}). 
\end{itemize}
Whether the mixed term $\frac{\sqrt{ST}}{N}$ is tight remains an open problem; in particular, it would be interesting to understand whether there are barriers (like in the classical setting \cite{TCC:CorKog19}) to identifying this term.

We obtain a stronger bound when the advice is classical. Specifically, Theorem~\ref{thm:1ows-ts-tradeoff-classical} bounds the success probability of an adversary with $S$ bits of advice by
\[
    O\!\left(\frac{1}{K}+\frac{T^2}{K}+\frac{S(T+1)}{KN}\right).
\]
For $T=o(\sqrt K)$, this requires $ST=\Omega(KN)$ for constant success, compared with $ST=\Omega(K^2)$ for quantum advice. These thresholds coincide when $K=N$, but the classical advice bound is stronger for $K \ll N$. 

\paragraph{Advantage over classical cryptography.}
We particularly note that \cref{thm:main-ows-intro} demonstrates a fascinating advantage of quantum cryptography over classical cryptography: a cryptosystem with $n$ qubits of communication\footnote{For example, one can convert this OWS into a quantum bit commitment scheme with similar security with $n$-qubit commitments. See \cref{sec:commitments} for more details. } can have security against preprocessing attacks with space up to $S = N^2$ rather than $S=N$ in the classical setting! In general, we obtain a time-space security tradeoff beating the classical (post-quantum) setting for all $T \ll \sqrt{N}$. This identifies a new advantage of quantum cryptography over classical cryptography: the achievable security level for a fixed (e.g.) commitment length. 

\paragraph{Methodology and additional results.}

As we explain in the technical overview, we prove \cref{thm:main-ows-intro} using a simple methodology inspired by a combination of the prior work \cite{EC:Liu23} and recent results on unitary synthesis \cite{STOC:LomMaWri24,DLM26}:

\begin{itemize}
    \item express the success probability of the optimal preprocessing algorithm as the maximum eigenvalue of a random matrix defined in terms of $R$,
    \item bound this eigenvalue using the trace moment method
    \item bound the trace moments via a natural interpretation using the compressed oracle methodology \cite{C:Zhandry19}.
\end{itemize}  
This paradigm turns out to be quite powerful, leading to several additional results in this paper. 

\paragraph{Post-quantum PRGs.}
As a byproduct of this methodology, we obtain an improvement on the state-of-the-art security of PRGs in the aux-input QROM \cite{EC:Liu23}. 

\begin{theorem}[Informal; see \cref{thm:prg-ts-tradeoff}]\label{thm:prg-intro}
    Any $T$-query adversary with $S$ qubits of advice has distinguishing advantage at most
    \[
        O\!\left(\frac{T^2}{N}+\sqrt{\frac{ST}{N}}\right)
    \]
    for breaking a PRG with seed length $n = \log N$. 
\end{theorem}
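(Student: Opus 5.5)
We will follow the three-step methodology described above, adapted to a distinguishing rather than search task. Fix a PRG $G:[N]\to[M]$ modeled by the random oracle, with $M \gg N$ (say $M \geq N^2$ so collisions are negligible). Let $A$ be a $T$-query algorithm that receives a challenge $y$ and uses advice $\sigma_R$ on $S$ qubits. For each oracle $R$ we define the ``advantage operator''
\[
  \Delta_R \;=\; \mathbb{E}_{x}\, A_{R,G(x)} \;-\; \mathbb{E}_{y}\, A_{R,y},
\]
where $A_{R,y}$ is the accept POVM element obtained by running the $T$-query algorithm on challenge $y$ and then measuring. The optimal advantage with $S$ qubits of advice is then at most $\mathbb{E}_R \lambda_{\max}(\Delta_R)$ restricted to an $S$-qubit register. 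We will use the standard reduction that converts $S$ qubits of advice into a ``guess-the-advice'' or multi-instance game. Concretely, we bound $\lambda_{\max}(\Delta_R)^{2\ell} \le \operatorname{Tr}(\Delta_R^{2\ell})$ on the $2^S$-dimensional space. Taking expectations gives
\[
  \mathbb{E}_R\lambda_{\max}\le \Big(2^S\,\mathbb{E}_R \operatorname{tr}(\Delta_R^{2\ell})\Big)^{1/2\ell},
\]
with $\operatorname{tr}$ the normalized trace. We choose $\ell\approx S$ so that the $2^S$ factor costs only a constant.

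The core step is to bound the normalized moment $\mathbb{E}_R\operatorname{tr}(\Delta_R^{2\ell})$. We expand $\Delta_R^{2\ell}$ as an alternating product of $2\ell$ sessions of the algorithm. Each session either receives a real challenge $G(x_i)$ for fresh uniform $x_i$ or a uniform $y_i$, and it carries a sign. Running all sessions sequentially against a single compressed oracle, this becomes a single $2\ell T$-query experiment. We then argue that every term where some real challenge $x_i$ is never "touched" cancels in expectation against its random counterpart. The reason is that, in the compressed oracle database, a point $x_i$ that is never queried makes $G(x_i)$ identically distributed to uniform. The remaining terms require each of the sessions to be "hit" by some query. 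We will bound the amplitude that a specified collection of $j$ fresh uniform points all appear in the database after $2\ell T$ total queries by roughly $(c\ell T/N)^{j/2}$ per point in norm. This uses Zhandry's database-growth bound, in which each query adds one entry, and the probability that a uniform $x$ lies in a database of size $q$ is $q/N$. Pairing and grouping sessions will then give a bound of the shape
\[
  \mathbb{E}_R\operatorname{tr}(\Delta_R^{2\ell}) \le \Big(O\big(\tfrac{T^2}{N}\big)+O\big(\tfrac{\ell T}{N}\big)^{1/2}\Big)^{2\ell}.
\]
Here the $T^2/N$ term comes from within-session hits, mirroring the $S=0$ bound. The $\sqrt{\ell T/N}$ term comes from cross-session hits, where a query in one session lands on another session's seed.

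Setting $\ell=S$ and taking $2\ell$-th roots yields $O(T^2/N+\sqrt{ST/N})$, with the $2^{S/2\ell}$ factor being constant. The main obstacle is the cancellation-and-counting argument in the moment bound. The key difficulty is showing that the signed sum over real versus random challenges genuinely cancels every configuration in which some session's seed is not in the database. We must also show this cancellation still holds when the sessions interact adaptively through the shared compressed oracle. Equally delicate is obtaining only a square-root loss $\sqrt{\ell T/N}$ per session rather than the cube-root loss of Liu's argument.

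For this step we would first try writing each session's contribution as $(\Pi_{\text{hit}} + \Pi_{\text{miss}})$ projectors on the compressed database, applied after the session. The miss part is killed exactly by the sign cancellation. For the hit part we would bound operator norms via Cauchy–Schwarz across adjacent pairs of sessions, which is where the square root should arise.
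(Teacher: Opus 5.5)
\textbf{There is a genuine gap: the moment bound your whole argument rests on is asserted, not proved, and the mechanism you sketch for it does not work as stated.}

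Your outer architecture matches the paper: an advantage operator on the advice register, a trace moment of order $\approx S$, and a compressed-oracle purification. The step that carries all the content is the bound $\mathbb{E}_R\operatorname{tr}(\Delta_R^{2\ell})\le\bigl(O(T^2/N)+O(\sqrt{\ell T/N})\bigr)^{2\ell}$, and it has three problems.
\begin{itemize}
\item In the compressed picture the real challenge $G(x_i)$ is itself generated by an oracle query at $x_i$. So in a real session $x_i$ is generically recorded in the database no matter what the algorithm does. A hit/miss projector on the database ``applied after the session'' therefore does not isolate a piece on which real and random sessions agree.
\item ``$x_i$ is never touched'' is not an event you can condition on. The objects you expand are products of non-physical sandwich operators $\langle y|A^\dagger\Pi A|y\rangle$, which contain inverse queries, and in the cyclic trace a session can be hit from either side.
\item A hitting-probability count only gives $O(T/\sqrt N)$ for the within-session term. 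The $T^2/N$ term needs a second-order averaging argument that you do not supply.
\end{itemize}

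\textbf{The missing ideas are two.}

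First, avoid global bookkeeping altogether. Let $P_{\le m}$ be the projector onto databases of size $\le m$. Since $\widehat Y$ is Hermitian and each application grows the database by at most $2T+2$, one has $\|\widehat Y^t|z,\emptyset\rangle\|\le\prod_{i\le t}\|P_{\le(2T+2)i}\widehat YP_{\le(2T+2)i}\|$. So it suffices to bound $|\langle\psi|\widehat Y|\psi\rangle|$ for a single session started from a state supported on databases of size $\le\ell$.

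Second, decouple the challenge from the oracle with a resampling hybrid. Write $Y_R=\Gamma_R+B_R$, where $\Gamma_R$ compares $(R(x),O_R)$ with $(r,O_{R^{x\to r}})$, and $B_R$ compares $(r,O_{R^{x\to r}})$ with $(r,O_R)$.
\begin{itemize}
\item The first pair is identically distributed, which in compressed form reads $(I-Q_x)\widehat\Gamma_x(I-Q_x)=0$. Hence $|\langle\psi|\widehat\Gamma_x|\psi\rangle|\le 4\|Q_x\psi\|$, and Cauchy--Schwarz over $x$ gives $4\sqrt{\ell/N}$. This one-sided, first-order cancellation is where the square root comes from, not Cauchy--Schwarz across adjacent sessions.
\item For $B$, use a BBBV hybrid with $\mathbb{E}_x\|\ket{v_T^x}-\ket{v_T}\|^2\le 4T^2/N$. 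Combine it with the averaged estimate $\|\mathbb{E}_x(\ket{v_T^x}-\ket{v_T})\|\le 2T^2/N$, which holds because the operators $\cO_x-\cO$ act on orthogonal query addresses. Together these give $8T^2/N$.
\end{itemize}

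Separately, your assumption $M\ge N^2$ is unnecessary, since the resampling identity holds for every $M$.
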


As mentioned earlier, the prior known bound was $O(T/\sqrt N+(ST/N)^{1/3})$. In a nutshell, while \cite{EC:Liu23} consider a sequential repeated variant of PRG distinguishing, the operator norm method enables considering a ``centered'' random matrix governing the distinguishing advantage rather than the win probability; via the trace moment method, this enables reducing the bound to a variant of uniform PRG security.

\paragraph{Single-copy pseudorandom states.}
We also prove a tight bound for 1-copy PRS in the special case $T=0$. Specifically, we show (see \cref{thm:program-1prs}) that an adversary with $S$ qubits of advice has
\[
    \varepsilon_{\mathrm{PRS}}
    \leq O\!\left(\frac{\sqrt S}{K}\right)
    \qquad (T=0).
\]
Thus, in this setting, we obtain the same quantitative
security for pseudorandomness as for one-wayness. The symmetric subspace attack described above also gives a matching distinguishing attack up to
logarithmic factors.

In particular, \cref{thm:program-1prs} gives a formal hardness-of-approximation result for \emph{quantum programs} \cite{Yuen}, which are unitary synthesis algorithms that consist only of an advice state (and are otherwise independent of the unitary being implemented). Yang, Renner, and Chiribella~\cite{YRC20} proved that $2^{-\mathsf{poly}(n)}$-accurate quantum programs require advice states of length $\Omega(N^2)$; our result extends the tradeoff to rule out algorithms with low (or even just constant) fidelity (see \cite{STOC:LomMaWri24,DLM26} for formal definitions).

\cref{tab:preprocessing-comparison} summarizes the known time-space-probability tradeoffs for breaking OWFs, PRGs, 1OWS, and 1PRS, respectively. 
\begin{table}[H]
    \centering
    \small
    \setlength{\tabcolsep}{10pt}
    \renewcommand{\arraystretch}{1.5}
    \begin{tabular}{@{}lcl@{}}
        \toprule
        \textbf{Primitive} & \textbf{Security} & \textbf{Reference} \\
        \midrule
        Post-quantum OWFs
        & $\displaystyle \varepsilon\leq O\!\left(\frac{T^2+S(T+1)}{N}\right)$
        & \cite{EC:Liu23} \\[4pt]
        Post-quantum PRGs
        & $\displaystyle \varepsilon\leq O\!\left(\frac{T^2}{N}+\sqrt{\frac{S(T+1)}{N}}\right)$
        & \textbf{This work} \\[4pt]
        One-way states (1OWS)
        & $\displaystyle \varepsilon\leq O\!\left(\frac{T^2+\sqrt{S(T+1)}}{N}\right)$
        & \textbf{This work} \\[4pt]
        Pseudorandom states (1PRS)
        & $\displaystyle \varepsilon\leq O\!\left(\frac{\sqrt S}{N}\right)\quad(T=0)$
        & \textbf{This work} \\
        \bottomrule
    \end{tabular}
    \caption{Security bounds against $(T,S)$-preprocessing attacks in the aux-input QROM. We set $K = \Theta(N)$ in this comparison.}
    \label{tab:preprocessing-comparison}
    \label{tab:preprocessing-lower-bounds}
\end{table}

For the classical primitives, the OWF and PRG bounds have the same online query term
$T^2/N$, but the preprocessing term increases from $S(T+1)/N$ to
$\sqrt{S(T+1)/N}$. This is
inherent, as for $T=0$, a simple distinguisher\footnote{For a fixed oracle $G$, memorize the optimal statistical distinguisher between the distribution $G(k)|_{[1:\log S]}$) and the uniform distribution. This requires $S$ bits of advice and achieves expected distinguishing advantage $\Omega(\sqrt{S/N})$ for a random $G$. This is closely related to more sophisticated distinguishers studied in \cite{C:DeTreTul10}.} achieves advantage
$\Omega(\sqrt{S/N})$.

In contrast, at $T=0$, both our (tight) 1OWS and 1PRS
bounds are $O(\sqrt S/N)$; pseudorandomness does not incur an analogous square root loss. 

\paragraph{Unitary synthesis with random oracle queries}
Finally, we analyze the hybrid query unitary synthesis model described above: a 1OWS or 1PRS attacker that can make polynomially many queries to the random oracle $R$ as well as one \emph{arbitrary} function query $f_R$ depending on $R$. Without any
extra queries to $R$, the results of~\cite{STOC:LomMaWri24,DLM26} bound the
search success probability and distinguishing advantage by $
    O\!\left(\frac{\log(2M)}{K}\right)$ and $
    O\!\left(\sqrt{\frac{\log(2M)}{K}}\right)$, respectively, where $M$ is the workspace dimension. Using our methodology, we extend these results to the setting of additional adaptive queries to the random oracle.

\begin{theorem}[Informal; see \cref{thm:binary-phase-search-post,thm:binary-phase-distinguish-post}]
    Consider an adversary with workspace dimension $M$ that makes one
    query to an arbitrary Boolean function $f_R$, followed by $T$ queries
    to $R$. Its search success probability and distinguishing advantage are
    at most
    \[
        O\!\left(\frac{T^2+(T+1)\log(2M)}{K}\right)
        \qquad\text{and}\qquad
        O\!\left(\sqrt{\frac{T^2+(T+1)\log(2M)}{K}}\right),
    \]
    respectively.\footnote{We note that the distinguishing bound is likely not tight and can be improved to $O\big(\frac{T^2}K\big) + O\!\left(\sqrt{\frac{(T+1)\log(2M)}{K}}\right)$ by a similar trick used in the proof of \cref{thm:prg-intro}, but we have not worked out the details. }
\end{theorem}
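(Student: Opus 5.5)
We follow the paper's three-step methodology: write the optimal attack's success probability as an operator norm, bound that norm by trace moments, and evaluate the moments with compressed oracles. Fix $R$. An adversary first applies the single query $O_{f_R}$ (WLOG acting on the challenge register together with an $M$-dimensional workspace), then runs a $T$-query oracle algorithm $A^R$, and finally measures a key guess.

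\textbf{Search bound, operator-norm form.} The success probability is
\[
\mathbb E_k \bigl\| \Pi_k A^R O_{f_R} (\ket{\psi_{R,k}}\otimes\ket{0}) \bigr\|^2 .
\]
Since $O_{f_R}$ is an arbitrary diagonal $\pm1$ unitary, we cannot average over it. Following \cite{STOC:LomMaWri24}, we instead bound by the worst $f$, paying a union bound over the $2^{N M}$ choices. Equivalently, we use the LMW trick of writing the diagonal phase via its spectral decomposition, which reduces to a statement where $\log(2M)$ enters as a moment order.

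Concretely, we define the PSD matrix
\[
W_R=\frac1K\sum_k \bigl(\ketbra{\psi_{R,k}}\otimes \Id_M\bigr)\, \bigl(O_{f}^\dagger A^{R\dagger}\Pi_k A^R O_f\bigr)\, \bigl(\ketbra{\psi_{R,k}}\otimes\Id_M\bigr).
\]
Its trace, suitably normalized, bounds the success probability, and we compare $\mathrm{tr}(W_R^p)$ against $M\|W_R\|^p$. Choosing $p=\Theta(\log 2M)$ turns $\mathbb E\,\mathrm{tr}(W^p)^{1/p}$ into an operator-norm bound losing only a constant factor. We therefore aim to show
\[
\mathbb E_R\, \mathrm{tr}(W_R^p)\le M\cdot O\!\left(\frac{T^2+(T+1)p}{K}\right)^{p}
\]
for $p\approx\log 2M$.

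\textbf{Evaluating the moments.} The trace of the $p$-th power is a sum over key sequences $k_1,\dots,k_p$ of a product of $p$ alternating blocks. Each block contains two copies of the phase state for $k_i$, i.e.\ two ``free'' oracle entries $R(k_i,x)$, $R(k_i,x')$, together with $2T$ queries to $R$. We purify $R$ with Zhandry's compressed oracle \cite{C:Zhandry19}, so that $\mathbb E_R\,\mathrm{tr}(W^p)$ becomes an amplitude for a $2p(T+1)$-query process returning to the empty database. A term is non-negligible only if each key $k_i$ is ``paired'': either it repeats among $k_1,\dots,k_p$, or the adversary's queries recorded an entry in row $k_i$ before the $\Pi_{k_i}$ measurement.

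The standard compressed-oracle recording bound gives amplitude $O(\sqrt{t/K})$ for hitting a fresh row after $t$ queries. This yields a factor $O((T^2+T\cdot(\text{number of distinct earlier keys}))/K)$ per block. Summing over partitions of $[p]$ into repeated keys gives the combinatorial factor $p^{p}$, i.e.\ a per-block factor $p/K$, for repeats. Combining these yields $\bigl((T^2+(T+1)p)/K\bigr)^p$. Setting $p=\log 2M$ gives the first bound.

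\textbf{Distinguishing bound.} We run the same argument with $\Pi_k$ replaced by the accepting projector. We subtract the maximally mixed expectation and bound the norm of the centered matrix, then apply the standard search-to-distinguishing reduction. This reduction costs a square root, so
\[
\text{adv}\le O\!\left(\sqrt{\frac{T^2+(T+1)\log 2M}{K}}\right).
\]

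\textbf{Main obstacle.} The hard step is the compressed-oracle moment computation. The $2T$ adaptive queries per block, interleaved across $p$ blocks, can correlate row contents. We must show that the database growth per block is at most $O(T+1)$ entries, so that the fresh-row recording probability accumulates only linearly in $pT$ rather than quadratically. Our first attempt will be a potential/progress-measure argument on database size along the $2p(T+1)$-step process, treating the $f$-query as free.
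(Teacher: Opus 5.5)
\textbf{There is a genuine gap at the first step, the reduction to an operator norm, and everything after it depends on that step.}

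Your matrix $W_R$ still contains $O_f$, where $f=f_R$ is an arbitrary function of the whole table $R$. Its entries are therefore not low-degree polynomials in $R$, and $\mathbb E_R\operatorname{Tr}(W_R^p)$ has no compressed-oracle purification. ``Treating the $f$-query as free'' is precisely what cannot be done. Your fallback, a union bound over all $f$, is not equivalent to anything in \cite{STOC:LomMaWri24} and is quantitatively fatal: absorbing a factor $2^{NM}$ into a $p$-th moment forces $p\gtrsim NM$, which makes the bound trivial.

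Independently, $W_R$ is the wrong matrix even for a fixed $f$. Since $0\preceq X_k:=O_f^\dagger A^{R\dagger}\Pi_kA^RO_f\preceq\Id$, you have $0\preceq W_R\preceq\frac1K\bigl(\sum_k\ketbra{\psi_{R,k}}\bigr)\otimes\Id_M$. So $\|W_R\|=O(1/K)$ for typical $R$ no matter what the adversary does, and it carries no information about the attack. The success probability $\frac1K\sum_k\bra{\psi_{R,k},0}X_k\ket{\psi_{R,k},0}$ is an average of quadratic forms in \emph{different} vectors and is not a quadratic form of $W_R$. Its trace does dominate the success probability, but only by summing over all $M$ workspace starting states, an overcount of up to a factor $M$. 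Dropping the pre-query isometry $V$ is also not without loss of generality, since $O_f$ is diagonal in a fixed basis.

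\textbf{The missing idea is the weight-vector decomposition of \Cref{lem:weight-vector-decomposition}.} Write $V\ket{\psi_{R,k}}=D_{R,k}\ket{\wt_V}$, where $\ket{\wt_V}$ is a fixed unit vector and $D_{R,k}=\sum_xR(k,x)D_x$ is diagonal and \emph{linear} in $R$. Because $D_{R,k}$ commutes with $O_f$, the success probability equals $\bra{\wt_V}O_f^\dagger W^{(T)}_RO_f\ket{\wt_V}\le\|W^{(T)}_R\|$, where $W^{(T)}_R=\frac1K\sum_kD_{R,k}^\dagger A_R^\dagger\Pi_kA_RD_{R,k}$. This is a single quadratic form in one unit vector, so the maximum over $f$ is absorbed for free. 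The resulting matrix is independent of $f$, acts on an $M$-dimensional space (hence moment order $\log M$), and has degree $2T+2$ in $R$. Only then is the compressed-oracle trace moment meaningful.

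Your moment evaluation by pairings and a $p^p$ partition count is also only a heuristic. The paper instead proves a small-database lemma for states supported on databases of size $\le\ell$:
\begin{itemize}
\item swap the challenge row $k$ with a fresh unqueried row $0$, at cost $O(\sqrt{\ell/K})$ in amplitude;
\item replace the $T$ conjugated queries one at a time, at cost $O(1/\sqrt K)$ each;
\item note that the resulting key-independent experiment wins with probability at most $1/K$.
\end{itemize}
It then inducts, using that each application of $\widehat W$ grows the database by at most $2(T+1)$.

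Finally, there is no search-to-distinguishing reduction in the direction you need: hardness of search does not imply indistinguishability. The paper bounds the centered operator $X_R=\frac1K\sum_kD_{R,k}^\dagger A_R^\dagger\Pi A_RD_{R,k}-\sum_xD_x^\dagger A_R^\dagger\Pi A_RD_x$ directly. The square root arises because the same hybrids give the small-database gap $|p'-p'_*|\le|\sqrt{p'}-\sqrt{p'_*}|\bigl(|\sqrt{p'}-\sqrt{p'_*}|+2\bigr)=O\bigl((\sqrt\ell+T)/\sqrt K\bigr)$, which is linear rather than quadratic in the amplitude error.
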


We also obtain lower bounds when the adversary makes queries to $R$ both
before and after its query to $f_R$; see \cref{thm:1-query-pre-post-queries}. The quantitative bounds
are weaker in this case, but still give negligible success probability/advantage
when $K$ is exponential and the algorithm uses polynomially many queries
and workspace qubits.

Our proof also gives a new interpretation of the original
results of \cite{STOC:LomMaWri24,DLM26}. After applying their spectral
relaxation to remove the arbitrary function query, we can analyze the
resulting operator using the trace moment method (rather than appealing to a generic matrix concentration inequality). The key
point is that the rescaling matrices introduced by their spectral relaxation are
\emph{linear in the coefficients of $R$}. This is why a compressed oracle
appears even though the original algorithm queries an arbitrary function
of $R$, rather than $R$ itself. The same interpretation then accommodates
actual queries to $R$.

\subsection{Acknowledgements and AI use}
A.L. thanks Fermi Ma, John Wright, and John Bostanci for helpful conversations in the early stages of this project.

The research process leading to our main results included several interactive sessions with GPT-5.5 Pro and GPT-5.6 Sol during April-August 2026. The main ideas behind several of the results were provided by the authors, but ChatGPT was used to help work out proof details. \cref{thm:1ows-ts-tradeoff-classical} was one-shot proved by ChatGPT given our previously obtained results, and ChatGPT is primarily responsible for the proof of \cref{lem:program-local} (the main lemma used to prove \cref{thm:program-1prs}). All results and proofs have been post-processed by humans and the authors take full responsibility for the contents of this paper. 

GPT-5.6 Sol and GPT-6 Astra were used to help with the writing and preparation of this manuscript.

\section{Technical Overview}
\label{sec:technical-overview}

We now describe the techniques behind our results, focusing on the special case of \cref{thm:main-ows-intro}. We then describe some key modifications and innovations required to apply the technique to our other settings of interest. 

\subsection{One-way states with no online queries}
\label{subsec:overview-zero-query}

To show how we prove \cref{thm:main-ows-intro}, we primarily consider the special case of $T=0$; here, an adversary is given one copy of the state
\[ \ket{\psi_k} = \frac 1 {\sqrt N} \sum_{x\in [N]} R(k,x) \ket{x}
\]
as well as an $S$-qubit advice state $\ket{\phi_R}$ depending on $R$, and wants to recover the string $k$. We wish to prove that any such adversary succeeds with probability at most 
\[ O\Big(\frac{\sqrt{S}}K\Big).
\]
We note that in this $T=0$ special case, \cite{DLM26} already prove the suboptimal bound $O\big( \frac{S}K\big)$, but we will nevertheless use this case to demonstrate our new approach (and obtain the tight bound).

\paragraph{Operator norm captures advice.}
Fix the adversary's oracle-independent algorithm. Roughly following \cite{STOC:LomMaWri24,DLM26}, we can describe its computation as a fixed projective measurement $\{\Pi_k\}_{k\in[K]}$ on the
advice register $\mathsf{A}$, the challenge register $\mathsf{X}$, and an additional workspace register $\mathsf{W}$ initialized to $\ket{0}$. Then, using an arbitrary advice state
$\ket{\varphi}$, the adversary's success probability is given by
\[
    \mathbb E_k\,
    \bra{\varphi}_{\mathsf{A}}\bra{0}_{\mathsf{W}}\bra{\psi_{R,k}}_{\mathsf{X}} \Pi_k
    \ket{\varphi}_{\mathsf{A}}\ket{0}_{\mathsf{W}}\ket{\psi_{R,k}}_{\mathsf{X}}
    =\bra{\varphi}Y_R\ket{\varphi},
\]
where we have defined the matrix
\[  Y_R:=\mathbb E_k\,
    (I_{\mathsf{A}}\otimes \bra{0}_{\mathsf{W}} \otimes \bra{\psi_{R,k}}_{\mathsf{X}})\cdot \Pi_k \cdot
    (I_{\mathsf{A}}\otimes \ket{0}_{\mathsf{W}} \otimes \ket{\psi_{R,k}}_{\mathsf{X}}).
\]
By construction, $Y_R$ is positive semidefinite and acts on the $S$-qubit
advice register. Since the $\ket{0}_{\mathsf{W}}$ dilation term is irrelevant for our arguments, we will drop it for the rest of this overview.

With this setup, we can express the adversary's win probability in the aux-input random oracle model as the expected operator norm 
\begin{equation}
    \underset{R}{\mathbb E}\max_{\ket{\varphi}}
        \bra{\varphi}Y_R\ket{\varphi}
    =\underset{R}{\mathbb E}\|Y_R\|_{\mathrm{op}}.
    \label{eq:overview-advice-norm}
\end{equation}
Thus, we now ask: how do we prove tight bounds on this expected operator norm? \cite{DLM26} expressed $\|Y_R\|_{\mathrm{op}}$ as the squared operator norm of a rectangular square root of $Y_R$ and applied a concentration inequality for matrix Rademacher series, but this yields a suboptimal bound and does not generalize to $T>0$. 

\paragraph{The trace moment method.}
Rather than making black-box use of existing concentration inequalities, we apply one of the most common general-purpose methodologies for bounding the operator norm of a random matrix: the trace moment method. In particular, we consider the $2S$th trace moments and bound
\begin{equation}
    \mathbb E_R\|Y_R\|_{\mathrm{op}} \leq \Big(\mathbb E_R\|Y_R\|^{2S}_{\mathrm{op}}\Big)^{1/2S}
    \leq
    \left(\mathbb E_R\operatorname{Tr}(Y_R^{2S})\right)^{1/2S}.
    \label{eq:overview-trace-bound}
\end{equation}
We choose a moment of order $S$ because the trace sums over $2^S$
basis vectors, so the second inequality only incurs a constant factor loss. We thus focus on bounding
\[
    \mathbb E_R\operatorname{Tr}(Y_R^{2S})
    =\sum_{z\in\{0,1\}^S}\mathbb E_R\bra zY_R^{2S}\ket z,
\]
so every term now starts from an oracle-independent state $\ket z$.

\paragraph{Enter compressed oracles.}
The entries of the matrix $Y_R^{2S}$ are degree $4S$ polynomials in the coefficients of $R$; in such situations, in order to understand $\mathbb E_R Y_R^{2S}$ it is often fruitful to purify the random $R$ and study the resulting matrix using the compressed oracle framework \cite{C:Zhandry19}. 

In more detail, in the uncompressed
purification, the preparation of $\ket{\psi_{R,K}}$ is the isometry
\[
    \ket R
    \longmapsto
    \ket{\psi_{R,k}}_{\mathsf X}\ket R
    = \frac{1}{\sqrt N}\sum_{x\in[N]}
      \ket x_{\mathsf X}\otimes R(k,x)\ket R.
\]
In the compressed oracle view, multiplication by $R(k,x)$ is replaced
by the operator $T_{k,x}$ that toggles the corresponding database entry:
\[
    T_{k,x}\ket D
    := \ket{D\oplus\{(k,x)\}}.
\]
Thus, the purified challenge preparation map given intial database $\ket{D}_{\mathsf{db}}$ is given by
\[
    \widehat C_k:
    \ket D_{\mathrm{db}}
    \longmapsto
    \frac{1}{\sqrt N}\sum_{x\in[N]}
    \ket x_{\mathsf X}
    \otimes T_{k,x}\ket D_{\mathrm{db}}.
\]
The compressed version of $Y_R$ is therefore
\[
    \widehat Y
    := \mathbb E_k\,
       \widehat C_k^\dagger
       (\Pi_k\otimes I_{\mathrm{db}})
       \widehat C_k.
\]
This operator acts on the advice register $\mathsf A$ and the database register $\mathrm{db}$. As initially a uniformly random $R$ is sampled, the initial compressed oracle database is $\ket{\emptyset}$ and the expression we are trying to bound is equal to 

\begin{equation}
\label{eq:overview-compressed-moment}
    \mathbb E_R\operatorname{Tr}(Y_R^{2S})
    = \sum_{z\in\{0,1\}^S}
      \bra{z,\emptyset}\widehat Y^{2S}\ket{z,\emptyset}
    = \sum_{z\in\{0,1\}^S}
      \left\|\widehat Y^S\ket{z,\emptyset}\right\|^2.
\end{equation}

In a nutshell, there is hope for bounding this expression because (fixing a choice of $z$) $\ket{\phi_S} = \widehat Y^{S} \ket{z, \emptyset}$ is a state supported on compressed oracle databases of size at most $2S$. Indeed, by considering the sequence of states $\ket{\phi_0}, \ket{\phi_1}, \hdots, \ket{\phi_S}$ and applying a simple induction, our desired bound $\| \ket{\phi_S}\|^2 \leq \Big(O\big(\frac{\sqrt S}K\big)\Big)^{2S}$ follows from the following main technical claim:

\begin{claim}
    Let $\ket{\phi}$ on registers $\mathsf A, \mathrm{db}$ be supported on compressed oracle databases of size at most $\ell$. Then, 
    \[ \bra{\phi} \widehat Y \ket{\phi} \leq O\Big( \frac{1+\sqrt \ell}K\Big).
    \]
\end{claim}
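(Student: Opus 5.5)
Write $\ket{\phi}=\sum_D \ket{\phi_D}_{\mathsf A}\ket D_{\mathrm{db}}$ with $|D|\le \ell$. Expand $\widehat Y$ directly:
\[
\bra{\phi}\widehat Y\ket{\phi}=\frac{1}{KN}\sum_{k}\sum_{x,x'}\sum_{D,D'}\bra{\phi_{D'}}\bra{x'}\Pi_k\ket{x}\ket{\phi_D}\,\langle D'\oplus\{(k,x')\}\,|\,D\oplus\{(k,x)\}\rangle .
\]
The database inner product is nonzero only when $D'=D\oplus\{(k,x)\}\oplus\{(k,x')\}$. This gives two contributions.

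\textbf{Diagonal terms ($x=x'$, $D=D'$).} These contribute $\frac{1}{KN}\sum_{k,x}\bra{\phi_D}\bra{x}\Pi_k\ket{x}\ket{\phi_D}$. Since $\sum_k\Pi_k\le I$, summing over $k$ gives $\sum_x\langle x|\cdot|x\rangle\le N$ per unit norm. The total is therefore at most $\frac{1}{K}\|\phi\|^2$, which accounts for the $1/K$ term.

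\textbf{Off-diagonal terms ($x\ne x'$).} Here $D'=D\oplus\{(k,x),(k,x')\}$. I would split these further by how $D$ meets row $k$.
\begin{itemize}
\item If neither $(k,x)$ nor $(k,x')$ lies in $D$, the size grows by two. Such terms can be reorganized as $\bra{\phi} A_k^\dagger\, \Pi_k\, B_k\ket{\phi}$-type expressions.
\item Terms where at least one of $(k,x)$ or $(k,x')$ lies in $D$ (a ``removal'' step) are controlled by the at most $\ell$ entries of $D$.
\end{itemize}
It is cleanest to write $\widehat C_k=\widehat C_k^{\mathrm{add}}+\widehat C_k^{\mathrm{rem}}$. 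The add part maps $\ket D\mapsto N^{-1/2}\sum_{x:(k,x)\notin D}\ket x\ket{D\cup\{(k,x)\}}$. The rem part is $N^{-1/2}\sum_{x:(k,x)\in D}\ket x\ket{D\setminus\{(k,x)\}}$, which has operator norm at most $\sqrt{\ell_k/N}$ on databases with $\ell_k$ entries in row $k$. We then have $\bra\phi\widehat Y\ket\phi=\mathbb E_k\|\Pi_k^{1/2}(\widehat C^{\mathrm{add}}_k+\widehat C^{\mathrm{rem}}_k)\phi\|^2$. Applying $(a+b)^2\le 2a^2+2b^2$ bounds this by twice an add term plus twice a rem term.

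\textbf{Add term.} The add images for distinct $k$ land on databases differing in distinct new entries, so they are orthogonal across both $k$ and $x$ after projecting $\mathrm{db}$. Together with $\sum_k\Pi_k\le I$, this should give $\mathbb E_k\|\Pi_k^{1/2}\widehat C_k^{\mathrm{add}}\phi\|^2\le O(1/K)$. The key step is to write $\widehat C^{\mathrm{add}}_k\ket D$ as $\ket{\psi}$-like uniform superpositions in $\mathsf X$ tagged by the db, and to use that $\sum_k \langle\cdot|\Pi_k|\cdot\rangle$ over orthogonal tags is bounded.

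\textbf{Rem term (main obstacle).} The crude bound $\mathbb E_k\ell_k/N\le \ell/(KN)$ is already small, so the $\sqrt\ell/K$ must come from cross terms between add and rem, which Cauchy--Schwarz would handle badly. Instead I would bound the cross term $2\mathrm{Re}\,\mathbb E_k\langle \Pi_k \widehat C^{\mathrm{add}}_k\phi,\widehat C^{\mathrm{rem}}_k\phi\rangle$. Its bra and ket components are both supported on the same database $D\cup\{(k,x)\}=D'\setminus\{(k,x')\}$, so $D'=D\cup\{(k,x),(k,x')\}$. I would bound it by the product of $\sqrt{\mathbb E_k\|\Pi_k^{1/2}\widehat C^{\mathrm{add}}\phi\|^2}\le O(1/\sqrt K)$ and a weighted rem norm. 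That weighted rem norm should be $\sqrt{\mathbb E_k \ell_k/ \ldots}$, tuned so that the product becomes $\sqrt{\ell}/K$.

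I expect the precise accounting to need the fact that the rem contribution, summed over $k$ with the measurement operators, satisfies $\sum_k\|\Pi_k^{1/2}\widehat C_k^{\mathrm{rem}}\phi\|^2\le \ell/K$-type bounds using $\sum_k\Pi_k\le I$ per fixed $x$. Combining this with the $1/K$ add bound via Cauchy--Schwarz yields $O(\sqrt{\ell}/K)$. Getting this normalization exactly right is the delicate part.
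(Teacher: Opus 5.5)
\textbf{There is a genuine gap, and it is in the add term, not the cross term.}

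Your bound $\mathbb E_k\|\Pi_k\widehat C_k^{\mathrm{add}}\ket\phi\|^2\le O(1/K)$ rests on the add images being orthogonal. Applying $\Pi_k$ destroys that orthogonality. Fix a key $k$ and a final database $F$. Every $x\in F_k$ gives a history $(D,x)=(F\setminus\{(k,x)\},x)$, and all of these land on the same basis state $\ket F$. They are distinguished only by the $\mathsf X$ register, and $\Pi_k$ acts on $\mathsf A\mathsf X$. So the vectors $\Pi_k\ket{\phi_{F\setminus\{(k,x)\}}}\ket x$ for $x\in F_k$ can add coherently. The squared norm of their sum can then be as large as $|F_k|$ times the sum of their individual squared norms. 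This is exactly the interference the symmetric-subspace attack (\Cref{prop:symmetric-subspace-attack}) exploits.

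In fact your add bound cannot hold. You correctly note that the removal part satisfies $\mathbb E_k\|\widehat C_k^{\mathrm{rem}}\ket\phi\|^2\le \ell/(KN)$. Combined with an $O(1/K)$ add bound, this would give $\bra\phi\widehat Y\ket\phi=O(1/K+\ell/(KN))=O(1/K)$ for all $\ell\le N$. The trace-moment argument would then bound the zero-query success probability with $S\le N$ qubits by $O(1/K)$. That contradicts the attack's $\Omega(\sqrt{S/\log N}/K)$ success probability.

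Relatedly, the $\sqrt\ell/K$ does not come from the add/rem cross term. Cauchy--Schwarz with the trivial add bound $1$ already caps that term at $2\sqrt{\ell/(KN)}\le 2\sqrt\ell/K$. Your closing accounting (``rem $\lesssim\ell/K$, add $\lesssim 1/K$'') is also inconsistent with the correct removal bound.

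\textbf{The missing idea is a threshold $h$ on the row size.} Group the add part by $(k,F)$.
\begin{itemize}
\item If $|F_k|\le h$: there are at most $h$ interfering histories, so Cauchy--Schwarz costs a factor $h$. After that the terms decouple, and $\sum_k\Pi_k\preceq I$ together with $\sum_{D,x}\|\ket{\phi_D}\ket x\|^2=N$ bounds this contribution by $h/K$.
\item If $|F_k|>h$: then the initial row already had $|D_k|\ge h$. Drop $\Pi_k$ and use orthogonality of the $\ket x$ labels; the weight is at most $\frac1{KN}\sum_{D,x}\|\phi_D\|^2\,\#\{k:|D_k|\ge h\}\le \ell/(Kh)$. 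This holds because a database of size at most $\ell$ has at most $\ell/h$ heavy rows and $k$ is uniform.
\end{itemize}
Adding the removal part $\ell/(KN)$ and using the triangle inequality gives $O\bigl(h/K+\ell/(Kh)+\ell/(KN)\bigr)$. Choosing $h=\sqrt\ell$ yields the claim.
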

Informally, we view this type of argument as an abstract generalization of the alternating measurement game in \cite{EC:Liu23} where a (potentially non-physical) process is repeated, maintaining a joint advice-database state supported on sparse databases. The expression $\bra{\phi} \widehat Y \ket{\phi}$ exactly corresponds to playing the OWS inversion game with a partially filled database register (entangled with the challenge state). 

\paragraph{A simple small database bound.}
To complete the proof, it remains to prove the claim, bounding $\bra{\phi}\widehat Y\ket{\phi}$ over databases of size at most $\ell$. We emphasize that in order to get the tight final result, we must currently prove a bound of $\sqrt{\ell}/K$ (rather than a simpler $\ell/K$).

Take any normalized state
\[
    \ket{\phi}
    = \sum_D \ket{\phi_D}_{\mathsf A}\ket D_{\mathrm{db}}
\]
supported on databases of size at most $\ell$.
For a fixed key $k$, the state after challenge preparation is
$\widehat C_k\ket{\phi}$, so the quantity we want to bound is
\[
    \bra{\phi}\widehat Y\ket{\phi}
    = \frac1K\sum_k
      \left\|
        (\Pi_k\otimes I_{\mathrm{db}})
        \widehat C_k\ket{\phi}
      \right\|^2.
\]
To make the expression more convenient, we now also purify the random key $k$. The joint state after coherent challenge preparation is
\[
    \ket{\Psi}
    := \frac1{\sqrt K}\sum_k
       \bigl(\widehat C_k\ket{\phi}\bigr)\ket k_{\mathsf K}
    = \frac1{\sqrt{KN}}\sum_{k,D,x}
      \ket{\phi_D}_{\mathsf A}\ket x_{\mathsf X}
      \ket{D\oplus\{(k,x)\}}_{\mathrm{db}}\ket k_{\mathsf K}.
\]
The event that the adversary's output equals the purified
hidden key is represented by
\[
    \Pi_{\mathrm{win}}
    := \sum_k
       \Pi_k\otimes I_{\mathrm{db}}
       \otimes\ketbra k_{\mathsf K},
\]
so that
\[
    \bra{\phi}\widehat Y\ket{\phi}
    = \left\|\Pi_{\mathrm{win}}\ket{\Psi}\right\|^2.
\]
We bound this squared norm by splitting $\Pi_{\mathrm{win}}\ket{\Psi}$ into three parts. Write $D_k=\{x:(k,x)\in D\}$ and choose a threshold $h\geq 1$. Our three parts are then given by:

\begin{itemize}
    \item The triples $(k, D, x)$ with $|D_k|>h$ (``heavy rows''). This component $\ket{\Psi_1}$ of $\ket{\Psi}$ has squared norm at most $\ell/(Kh)$, because a database with $\ell$ entries has at most $\ell/h$ such rows, and $k$ is uniform.
    \item Among the remaining $k$, consider the part satisfying $(k,x)\in D$ (where the challenge preparation actually \emph{removes} an entry from $D$). This component $\ket{\Psi_2}$ of $\ket{\Psi}$ has squared norm at most $\ell/(KN)$, since a uniform pair $(k,x)$ hits the database with at most this probability.
\end{itemize}

In the remaining \emph{new entry} (and main) part $\ket{\Psi_3}$ of $\ket{\Psi}$, challenge preparation adds $(k,x)$ to a row of size at most $h$. We group its amplitudes by the final
database $F=D\cup\{(k,x)\}$. For a fixed final database \(F\) and key \(k\), the only possible histories are obtained by declaring one of the entries \((k,x)\in F\) to be the newly added challenge entry; hence there are at most \(h+1\) coherently interfering histories. Cauchy--Schwarz therefore bounds its
success contribution by
\[
    \|\Pi_{\mathrm{Win}} \ket{\Psi_3}\|^2 \leq \frac{h+1}{KN}\sum_{D,k,x}
       \bra{\phi_D}_{\mathsf{A}}\bra{x}_{\mathsf{X}}\Pi_k
       \ket{\phi_D}_{\mathsf{A}}\ket{x}_{\mathsf{X}}
    =\frac{h+1}{K}.
\]
Combining the three parts using the triangle inequality gives
\[
    \bra\phi\widehat Y\ket\phi
    \leq 3\left(\frac{\ell}{Kh}
                    +\frac{\ell}{KN}
                    +\frac{h+1}{K}\right).
\]
Setting $h = \sqrt{\ell}$ yields the desired bound. 

\subsection{Adding online queries}
\label{subsec:overview-online-queries}
Fortunately, the $T=0$ case captures most of our full proof of \cref{thm:main-ows-intro}. To handle the case of $T>0$, we let $A_R$ denote the adversary's $T$-query algorithm. The matrix of interest then becomes
\[
    Y_R=\mathbb E_k\,
       \bra{\psi_{R,k}}A_R^\dagger\Pi_kA_R\ket{\psi_{R,k}},
\]
with fixed ancillas suppressed. Indeed, the matrix $Y_R$ still has the property that $\bra{\varphi} Y_R \ket{\varphi}$ is equal to the probability that the adversary successfully inverts on a random challenge with advice state $\ket{\varphi}$. Crucially, $Y_R$ still has the property that its entries have low degree in $R$, namely degree $2(T+1)$. As a result, when we purify $Y_R$ to $\widehat Y$, the resulting matrix can increase a compressed oracle database size by at most $2(T+1)$. 

With this setup, our proof from earlier goes through unchanged up until the key claim, which now becomes:

\begin{claim}
    Let $\ket{\phi}$ on registers $\mathsf A, \mathrm{db}$ be supported on compressed oracle databases of size at most $\ell$. Then, 
    \[ \bra{\phi} \widehat Y \ket{\phi} \leq O\Big( \frac{1+T^2+\sqrt \ell}K\Big).
    \]
\end{claim}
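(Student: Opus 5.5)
We plan to reduce the $T>0$ claim to the $T=0$ analysis by tracking how the $T$ online compressed-oracle queries can change the database, and by charging a Grover-type $O(T^2/K)$ term to the event that the adversary's queries ``find'' the challenge row. Purify the key as before. The joint state after challenge preparation is
\[
    \ket{\Psi}=\frac1{\sqrt{KN}}\sum_{k,D,x}\ket{\phi_D}\ket{x}\ket{D\oplus\{(k,x)\}}\ket k.
\]
We then apply $\widehat A$, the compressed-oracle implementation of $A_R$ acting on $\mathsf A,\mathsf X,\mathsf W,\mathrm{db}$, and measure $\Pi_{\mathrm{win}}$, so that $\bra\phi\widehat Y\ket\phi=\|\Pi_{\mathrm{win}}\widehat A\ket{\Psi}\|^2$. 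As in the $T=0$ case, we first split $\ket\Psi=\ket{\Psi_1}+\ket{\Psi_2}+\ket{\Psi_3}$ into the heavy-row part, the removal part, and the new-entry part, with threshold $h=\sqrt\ell$. Since $\widehat A$ is unitary, the first two parts still contribute at most $O(\ell/(Kh)+\ell/(KN))$. It therefore remains to bound $\|\Pi_{\mathrm{win}}\widehat A\ket{\Psi_3}\|^2$ by $O((1+T^2+h)/K)$.

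For $\ket{\Psi_3}$, we introduce a ``progress measure'' projector $P$ onto states whose database contains, in row $k$ (where $k$ is the purified key), more than $h+1$ entries, or, more generally, entries not already present among the original $\le h$ row entries plus the challenge entry. The intended argument follows the standard compressed-oracle search lower bound (as in Zhandry's and Liu's analyses). Each compressed query can increase the amplitude on ``row $k$ was touched by a query'' by at most $O(\sqrt{1/K})$ in a hybrid argument, since $k$ is uniform and independent of the adversary's view apart from the entries already in the database. This gives $\|P\widehat A\ket{\Psi_3}\|\le O(T/\sqrt K)$ after $T$ steps, with the bookkeeping done via the usual triangle inequality summed over steps. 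This yields the $T^2/K$ term. On the complementary part, we intend to rerun the Cauchy--Schwarz counting from the $T=0$ case. We group the amplitudes by the final database $F$ and key $k$, and argue that the row-$k$ part of $F$ is unchanged by $\widehat A$ on this branch. Consequently, the number of coherently interfering histories is again at most $h+1$, giving the contribution $(h+1)/K$. Combining the pieces by the triangle inequality yields $O((1+T^2+\sqrt\ell)/K)$.

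The main obstacle is making the second step rigorous. Compressed-oracle queries act on row $k$ nontrivially even when they do not ``find'' anything new, since they can decompress, recompress, and rotate amplitudes on existing entries, including the challenge entry $(k,x)$ itself. So ``row $k$ untouched'' is not literally invariant. My first attempt would be to commute the Cauchy--Schwarz step past $\widehat A$ by using the fact that $\widehat A$ commutes with the operation of relabeling which entry of row $k$ is the challenge entry. Concretely, I would define the isometry $V:\ket{\phi_D}\ket{x}\ket{D\cup\{(k,x)\}}\ket{k}\mapsto$ (history-indexed state), and check that on the low-progress subspace the histories remain orthogonal apart from the $h+1$-fold collision. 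If this fails, the fallback is to accept the weaker $O(T h/K)$ bound for the interfering part, and recover the target by choosing $h=\sqrt{\ell/(T+1)}$. This would give $O((T^2+\sqrt{\ell(T+1)})/K)$, which matches the final form of \cref{thm:main-ows-intro} once $\ell=O(S(T+1))$.
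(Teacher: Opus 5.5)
Your treatment of the heavy-row and removal parts is fine. The problem is the step meant to produce the $T^2/K$ term: it charges that term to the wrong event, and as stated it is false. The bound $\|P\widehat A\ket{\Psi_3}\|\le O(T/\sqrt K)$ relies on $k$ being independent of the adversary's view. It is not, because the row-$k$ database entries are entangled with $\ket{\phi_D}$ and $\ket{x}$. Conditioned on $k$, the adversary's reduced state in $\ket{\Psi_3}$ carries exactly the interference terms that your own $T=0$ Cauchy--Schwarz bounds by $(h+1)/K$, and that bound is essentially tight. Take the compressed purification of the symmetric-subspace advice of \cref{sec:symmetric-subspace-attack}, with $\ell$ chosen so that its stored rows have at most $\sqrt\ell$ entries. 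From $\ket{\Psi_3}$ the adversary can guess $k$ with probability $\Omega(\sqrt\ell/K)$. A single query to the guessed row at a fresh column then creates a new row-$k$ entry with that probability. So already for $T=1$ the mass on $P$ is $\Omega(\sqrt\ell/K)$, not $O(1/K)$. A hybrid that honestly charges amplitude $\sqrt{(h+1)/K}$ per query yields only $T^2\sqrt\ell/K$.

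If instead $P$ records only \emph{new} row-$k$ entries, as your formal definition does, the complement step fails. The complement still contains branches where a query toggled off the challenge cell $(k,x)$. On those branches every $x\notin F_k$ is a consistent history, so the count of $h+1$ interfering histories breaks. Your decompress/rotate worry, by contrast, does not arise: here $\cO=\sum_{a,y}Q_{a,y}\otimes T_{a,y}$ only toggles entries.

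The fallback does not rescue the argument either. Taking $h=\sqrt{\ell/(T+1)}$ gives $O((T^2+\sqrt{\ell(T+1)})/K)$, which is weaker than the claim. With $\ell=O(S(T+1))$ it yields $(T+1)\sqrt S/K$, not $\sqrt{S(T+1)}/K$.

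The missing idea is to charge the Grover term to \emph{deletion of the challenge cell} rather than to touching row $k$, and to split by the \emph{final} database. Because queries only toggle, the state after $t$ queries is $\frac{1}{\sqrt{KN}}\sum_{k,S,x}\ket{v_{t,S,x}}\ket{S\oplus\{(k,x)\}}\ket{k}$, with $\ket{v_{t,S,x}}$ independent of $k$. Group by the final database $F$:
\begin{itemize}
\item If $|F_k|>\sqrt\ell+T$: each query changes $|F_k|$ by at most one, so the initial row was already heavy, costing $O(\sqrt\ell/K)$.
\item If $x\in F_k$ and $|F_k|\le\sqrt\ell+T$: Cauchy--Schwarz over at most $\sqrt\ell+T$ histories, together with $\sum_k\Pi_k=I$, gives $(\sqrt\ell+T)/K$. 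So new row-$k$ entries created by queries are harmless and cost only an additive $T/K$.
\item If $x\notin F_k$: the cell $(k,x)$ was either in the initial database (mass $\ell/(KN)$) or was hit by a query at exactly $(k,x)$. For the latter, the projectors $Q_{k,x}$ are orthogonal across $x$ and $\ket{v_{t-1,S,x}}$ does not depend on $k$. Hence $\sum_k Q_{k,x}\preceq I$ bounds the per-query increment by $\sqrt{2/K}$ in amplitude, whatever the adversary knows about $k$.
\end{itemize}
The last case contributes $\bigl(\sqrt{\ell/(KN)}+T\sqrt{2/K}\bigr)^2$, and together the three cases give the claim.
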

In other words, the $T$ queries to (the purified) $R$ help with inversion only so much as Grover's algorithm moves a state towards the desired solution. 

We prove this claim (\cref{lem:1ows-l-database-t-query}), by generalizing our $T=0$ analysis. The main new difficulty concerns contributions for which the entry $(k,x)$ associated with the original $\ket{x}$ summand of the challenge is
absent from the final database. In the $T=0$ case, this can happen only if $(k,x)$ was already present in the initial database, so challenge preparation deleted it. With online queries, however, an entry added by challenge preparation can also be removed by a subsequent online query to the same position. We bound these contributions inductively: their norm starts at most $\sqrt{\ell/(KN)}$ and increases by at most $O(1/\sqrt K)$ per query, giving a squared norm bound of $O\big(\frac \ell {KN}+\frac{T^2}K\big)$.

Substituting $\ell=O(S(T+1))$ into the claim yields
\[
    \varepsilon_{\mathrm{OWS}}
    \leq O\!\left(\frac{T^2+\sqrt{S(T+1)}}{K}\right),
\]
completing our proof of \cref{thm:main-ows-intro}. 

\subsection{Sketches of our additional results}
We defer detailed discussions of our additional results to their respective technical sections, but we briefly summarize the additional ideas required to obtain them. We emphasize that for each result below, we apply the high-level strategy of bounding the trace moments of an appropriate matrix $Y_R$.

\paragraph{Post-quantum PRGs.}
For a random oracle $R:[N]\to[M]$, let $A_R=(O_RU)^T$ denote the
adversary's online computation and let $0\preceq\Pi\preceq I$ be its
final acceptance operator. The matrix of interest is now
\[
    Y_R :=
    \mathbb E_{x\in[N]}
    \bra{R(x)}A_R^\dagger\Pi A_R\ket{R(x)}
    - \mathbb E_{r\in[M]}
    \bra r A_R^\dagger\Pi A_R\ket r.
\]
In particular, 
$|\bra\varphi Y_R\ket\varphi|$ is equal to the adversary's distinguishing advantage using advice
$\ket\varphi$, so optimizing over advice again gives
$\mathbb E_R\|Y_R\|_{\mathrm{op}}$. Compared to the OWS (or OWF) setting, there is not an obvious operational interpretation of ``repeating $Y_R$'', but we can nevertheless apply the trace moment method (and inductive analysis from earlier) to bound its operator norm! 

As before, each application of the compressed operator $\widehat Y$
increases the database size by at most $2T+2$. The analogue of our key
claim is that, for any normalized state $\ket\phi$ on the advice and
database registers supported on databases of size at most $\ell$,
\[
    \left|\bra\phi\widehat Y\ket\phi\right|
    \leq O\!\left(\frac{T^2}{N}+\sqrt{\frac{\ell}{N}}\right).
\]
To prove this bound, we introduce a resampling hybrid: instead of
providing $R(x)$ as the challenge, we sample an independent
$r\in[M]$ and replace the oracle value at the hidden seed $x$ by $r$.
The contribution from the initial database is
$O(\sqrt{\ell/N})$, while the change in the average acceptance probability
caused by reprogramming a uniform position is $O(T^2/N)$. Substituting $\ell=O(S(T+1))$ into the small database bound
gives
\[
    \epsilon_{\mathrm{PRG}}
    \leq O\!\left(\frac{T^2}{N}
          +\sqrt{\frac{S(T+1)}{N}}\right),
\]
proving Theorem~1.2.

\paragraph{1PRS with $T=0$.}
This setting follows the setup from the 1OWS case but works with a single test $\Pi$ (rather than a measurement $\{\Pi_k\})$ and considers the centered matrix 
\[
    Y_R := \mathbb E_k
    \bra{\psi_{R,k}}\Pi\ket{\psi_{R,k}}
    - \frac1N\sum_{x\in[N]}\bra x\Pi\ket x.
\]
This is the difference between the real and ideal acceptance operators
on the $S$-qubit advice register, so maximizing
$|\bra\phi Y_R\ket\phi|$ over advice gives
$\|Y_R\|_{\mathrm{op}}$. 

The corresponding ``small database'' bound, presented in \cref{sec:prs-small-database}, is a significant departure from the other proof techniques in this paper. Specifically, we introduce an \emph{alternative} purification of compressed oracle databases as uniform superpositions of \emph{ordered} databases, and describe the effect of challenge generation in terms of \emph{swaps} between the existing database positions and a new one, increasing the database size by one. We refer the reader to \cref{sec:quantum-program-1prs} for more details. 

\paragraph{One-query unitary synthesis.}
To recover the random binary phase state result from \cite{DLM26}, we apply their spectral relaxation but give a new compressed oracle analysis of the resulting expected operator norm. Writing the adversary as a fixed isometry $V:\mathbb C^N\to\mathbb C^M$,
a phase query $O_{f_R}$, and a final measurement, the weight vector
decomposition of~\cite{STOC:LomMaWri24,DLM26} gives
$V\ket{\psi_{R,k}}=D_{R,k}\ket{\mathrm{wt}_V}$, where
$\ket{\mathrm{wt}_V}$ is fixed and $D_{R,k}$ is diagonal. Since
$D_{R,k}$ commutes with $O_{f_R}$, search success is a quadratic form
in $O_{f_R}\ket{\mathrm{wt}_V}$ with matrix
\[
    Y_R := \mathbb E_k D_{R,k}^\dagger\Pi_k D_{R,k}.
\]
Thus, it suffices to bound 
$\mathbb E_R \|Y_R\|_{\mathrm{op}}$. Crucially,
$D_{R,k}=\sum_x R(k,x)D_x$ for fixed diagonal matrices $D_x$, so the matrix $Y_R$ is Hermitian PSD and quadratic in $R$! Thus, we can apply the same methodology and attempt to purify $R$ and bound the resulting trace moments in the compressed oracle basis. For the small database
bound, we (similarly to the 1OWS case with $T=0$) split the rescaling into addition and deletion terms and use
$\sum_xD_x^\dagger D_x=\sum_xD_xD_x^\dagger\preceq I$ in place of
normalization of challenge preparation. This gives a small database search
bound of $O((\ell+1)/K)$. We refer the reader to \cref{sec:unitary-synthesis} for more details. 

\paragraph{The hybrid unitary synthesis model.}
In the hybrid unitary synthesis model, there are two types of queries to the random oracle $R$ that must be treated differently. For queries to $R$ that occur \emph{before} the function query $f_R$, we generically remove them with a sub-optimal but polynomial security loss, and leave a tight analysis of these queries for future work. 

The main part of the analysis is handling the combination of one function query $f_R$ with $T$ queries to $R$ afterwards. Letting $A_R$ be the adversary's $T$-query algorithm, we arrive at the matrix 
\[
    Y_R := \mathbb E_k
    D_{R,k}^\dagger A_R^\dagger\Pi_k A_R D_{R,k}.
\]
Crucially, $Y_R$ still has degree at most $2T+2$ in $R$, so the methodology (conceivably) applies.
The key idea in the small database bound is to swap the challenge row with an
unused independent row. The initial swap contributes
$O(\sqrt{\ell/K})$ to the success amplitude, while the subsequent
query hybrid contributes $O(T/\sqrt K)$. This gives the desired bound
$O((1+T^2+\ell)/K)$.

\section{Preliminaries}
\label{sec:preliminaries}

We use $N=2^n$ for the dimension of the main $n$-qubit register and $M=2^m$ for the dimension of the adversary's post-isometry workspace. For a positive integer $K$, we write $[K]:=\{1,\dots,K\}$. When we write $x+y$ or $x\cdot y$ for $x,y\in\{0,1\}^n$, the operations are taken over $\mathbb{F}_2^n$.

\subsection{Quantum random oracle model and compressed oracles}

In the quantum random oracle model, a hash function is modeled as a random classical function $R:\{0,1\}^{n_1}\to \{0,1\}^{n_2}$ (with the below sections, we might consider different input and output lengths). The function $R$ is sampled at the beginning of any considered game and then fixed. A quantum oracle algorithm with oracle access to $R$ can make queries to $R$, with oracle access defined by a unitary $O_R:\ket{x,y}\mapsto\ket{x,y\oplus R(x)}$, together with intermediate unitaries between queries as $U_1,...,U_{t+1}$. That is, a quantum oracle algorithm with access to $R$ can be defined as a sequence of unitaries: $U_1, O_R,U_2,O_R,...,U_t,O_R,U_{t+1}$. The number of queries in this case is $t$. Without loss of generality (by adding a register as a counter), we can assume that the intermediate unitaries are the same $U$.

\paragraph{Compressed oracles \cite{C:Zhandry19}.} We use the Fourier representation of the compressed oracle. That is, we consider both the uncompressed phase oracle
\[ \ket{x} \otimes \ket{R} \mapsto R(x) \ket{x}\otimes \ket{R}
\]
for $R: [N]\rightarrow \{\pm 1\}$, and the uncompressed standard oracle
\[ \ket{x,y}\otimes \ket{R} \mapsto \ket{x, y\oplus R(x)} \otimes \ket{R},
\]
and convert the purification register of each of these into the Fourier basis. In the Fourier basis, we describe the purification register as containing a \emph{database}, and we have that a compressed oracle query changes at most one entry in a database. We refer the reader to the original paper \cite{C:Zhandry19} for more details.

\subsection{Canonical quantum bit commitments}
\label{sec:canonical-commitments}

We follow the canonical quantum bit commitments as defined from \cite{AC:Yan22,EC:HhaMorYam23}.
We use the formulation with quantum auxiliary inputs, and later consider its quantitative security in the quantum random oracle model.
This section is mostly taken verbatim from Section 3.2 of \cite{EC:HhaMorYam23}.

\begin{definition}[Canonical quantum bit commitments]
\label{def:canonical-commitment}
A \emph{canonical quantum bit commitment scheme} is specified by
polynomial-time computable unitaries
$\{Q_0(\lambda),Q_1(\lambda)\}_{\lambda\in\mathbb N}$
on polynomial-size registers $\mathsf{C}$ (commit) and $\mathsf{W}$ (decommit, as witness).
For $b\in\{0,1\}$, define
\[
    \ket{\psi_b}_{\mathsf{C}\mathsf{W}}:=Q_b\ket{0}_{\mathsf{C}\mathsf{W}},
    \qquad
    \Pi_b:=\ket{\psi_b}\!\bra{\psi_b}_{\mathsf{C}\mathsf{W}},
    \qquad
    \rho_b:=\Tr_\mathsf{W}(\Pi_b).
\]
(For simplification we omit $\lambda$ and simply write $Q_0,Q_1$ to mean $Q_0(\lambda)$ and $Q_1(\lambda)$.)
\end{definition}

\begin{remark}[Protocol]
To commit to $b$, the sender prepares $\ket{\psi_b}_{\mathsf{C}\mathsf{W}}$,
transmits $\mathsf{C}$, and retains $\mathsf{W}$.
To open, the sender transmits $(b,\mathsf{W})$; the receiver accepts according
to the projector $\Pi_b$.
\end{remark}

\begin{definition}[Hiding]
\label{def:commitment-hiding}
For \emph{computational hiding}, every non-uniform quantum polynomial-time
distinguisher $\mathcal A$ with polynomial-size quantum advice must satisfy
\[
    \left|
      \Pr\bigl[\mathcal A(1^\lambda,\rho_0)=1\bigr]
      -
      \Pr\bigl[\mathcal A(1^\lambda,\rho_1)=1\bigr]
    \right|
    =\negl(\lambda).
\]
\emph{Statistical hiding} and \emph{perfect hiding} require, respectively,
\[
    \frac12\lVert\rho_0-\rho_1\rVert_1=\negl(\lambda)
    \qquad\text{and}\qquad
    \rho_0=\rho_1.
\]
\end{definition}

\begin{definition}[Binding]
\label{def:commitment-binding}
\emph{Computational binding} requires every polynomial-time computable
unitary $U_{\mathsf{W}\mathsf{Z}}$ and every normalized auxiliary state $\ket{\tau}_\mathsf{Z}$
on polynomially many qubits to satisfy
\[
    \left\|
      (\Pi_1\otimes I_\mathsf{Z})
      (I_\mathsf{C}\otimes U_{\mathsf{W}\mathsf{Z}})
      \bigl(\ket{\psi_0}_{\mathsf{C}\mathsf{W}}\otimes\ket{\tau}_\mathsf{Z}\bigr)
    \right\|_2
    =\negl(\lambda).
\]
\emph{Statistical binding} removes the efficiency restriction on $U$.
\emph{Perfect binding} requires the norm to equal zero for every such
$U$ and $\ket{\tau}$.
\end{definition}

\subsection{1OWS and 1PRS}
\label{sec:OWSG}
Let $\mathsf{StateGen}$ be a quantum polynomial-time algorithm that,
on input $(1^\lambda,k)$ for $k\in[K]$, outputs an $n(\lambda)$-qubit
pure state $|\psi_k\rangle$, where $K=2^{\kappa(\lambda)}$ and
$N=2^{n(\lambda)}$. Here $\kappa$ and $n$ are polynomially bounded.
We consider non-uniform quantum polynomial-time adversaries with
polynomial-size quantum advice independent of the challenge. For
simplicity, we omit $\lambda$ from the notation when it is clear.

One-way state generators were introduced by Morimae and
Yamakawa~\cite{C:MorYam22}. Their standard definition uses verification of the adversary's output key, while~\cite[Remark~4.1]{C:MorYam22}
also considers the alternative requirement of recovering the
sampled key itself. We use the single-copy version of this
key-recovery formulation. Its success condition also matches
that of the oracle state search game in~\cite[Definition~4.1]{DLM26}.

\begin{definition}[Single-copy one-way states, 1OWS]
We say that $\mathsf{StateGen}$ generates single-copy one-way states
($\mathsf{1OWS}$), in the key-recovery sense, if every such adversary
$\mathcal A$ satisfies
\[
  \Pr_{k\leftarrow[K]}
  [\mathcal A(1^\lambda,|\psi_k\rangle)=k]
  =\operatorname{negl}(\lambda).
\]
The adversary receives one copy of $|\psi_k\rangle$ and succeeds if
it outputs the sampled key $k$. 
\end{definition}
The above definition allows 1OWS to be information-theoretically
secure. 
For the random binary phase states considered in this paper, each pair of distinct states has expected squared overlap \(1/N\). Summing over the \(K-1\) wrong keys gives a total expected squared overlap of \((K-1)/N\le 1\). The overlap bound for the pretty-good measurement (or quantum union bound on testing a constant fraction of all possible keys)  therefore implies that an adversary knowing the full oracle table can recover the key with constant probability, averaged over \(R\). Thus, this family is information-theoretically insecure, and we study the resources (queries, advice) required to recover the key.\\

A distinguishing version gives pseudorandom states. Pseudorandom quantum states were introduced by Ji, Liu, and Song~\cite{C:JiLiuSon18}, whose definition requires computational indistinguishability from Haar-random states even given polynomially many copies. We consider the single-copy variant studied by Morimae and Yamakawa~\cite[Definition~2.2]{C:MorYam22} and Lombardi, Ma, and
Wright~\cite[Definition~5.1]{STOC:LomMaWri24}, where the adversary receives only one copy of the challenge state.

\begin{definition}[Single-copy pseudorandom states, 1PRS]
We say that $\mathsf{StateGen}$ generates single-copy pseudorandom
states ($\mathsf{1PRS}$) if $n(\lambda)\ge\kappa(\lambda)+1$ and
every such adversary $\mathcal A$ satisfies
\[
  \left|
  \Pr_{k\leftarrow[K]}
    [\mathcal A(1^\lambda,|\psi_k\rangle)=1]
  -\Pr_{|\psi\rangle\leftarrow\mu_N}
    [\mathcal A(1^\lambda,|\psi\rangle)=1]
  \right|
  =\operatorname{negl}(\lambda),
\]
where $\mu_N$ is the Haar distribution on pure states in
$\mathbb C^N$. In both experiments the adversary receives one copy
of the challenge state.
\end{definition}

In this paper, we consider the binary phase state family \cite{STOC:LomMaWri24} generated from a random oracle $R:[K]\times[N]\to\{\pm 1\}$, $|\psi_{R,k}\rangle
  :=\frac{1}{\sqrt N}\sum_{x\in[N]}{R(k,x)}|x\rangle$.
For a known key $k$, this state can be generated using one query to $R$. We study the key-recovery probability and distinguishing advantage for this family under non-uniform (with advice) oracle-access bounds specified in each game.

\subsection{The unitary synthesis problem}

The unitary synthesis problem asks whether an arbitrary quantum unitary can be (approximately) implemented efficiently given oracle access to a suitable Boolean function~\cite{STOC:LomMaWri24}. We recall the oracle-circuit formulation from \cite{STOC:LomMaWri24}.

\begin{definition}[Approximating a unitary, \cite{STOC:LomMaWri24}]\label[definition]{def:approx_unitary}
Let $U$ be an $n$-qubit unitary, and let $\Phi_U$ be the associated quantum channel. Let $\Phi_{\mathrm{approx}}$ be a quantum channel with $n$-qubit input and output registers. We say that $\Phi_{\mathrm{approx}}$ $\varepsilon$-approximates $U$ if
\[
D_\diamond(\Phi_{\mathrm{approx}},\Phi_U)\le \varepsilon.
\]
\end{definition}

\begin{definition}[Channel implemented by an oracle circuit, \cite{STOC:LomMaWri24}]
\label{def:channel_implemented_by_oracle_circuit}
Given a $t$-query oracle circuit $\cA^{(\cdot)}$ with an $n$-qubit input register, an $(m-n)$-qubit ancilla register, intermediate unitaries $U_1,\dots,U_{t+1}$ on $m$ qubits, and a Boolean function $f:\{0,1\}^{m}\to\{\pm 1\}$, the induced $n$-qubit channel $\Phi_{\cA^f}$ acts as follows.
\begin{enumerate}[leftmargin=1.5em]
    \item On input $\ket{\psi}$, prepare
    \[
    U_{t+1} \cdot O_f \cdot U_t \cdots O_f \cdot U_2 \cdot O_f \cdot U_1(\ket{\psi}\ket{0^{m-n}}).
    \]
    \item Output the first $n$ qubits and discard the remaining $(m-n)$ qubits.
\end{enumerate}
\end{definition}

\begin{definition}[The unitary synthesis problem, \cite{STOC:LomMaWri24}]
Fix an error parameter $\varepsilon(n)=2^{-\Omega(n)}$. The unitary synthesis problem asks whether there exists a universal oracle circuit $\cA^{(\cdot)}$, computable by a $\poly(n)$-size quantum circuit and making $\poly(n)$ oracle queries, such that for every $n$-qubit unitary $U$ there exists a Boolean function $f:\{0,1\}^*\to\{\pm 1\}$ with
\[
D_\diamond(\Phi_{\cA^f},\Phi_U)\le \varepsilon(n).
\]
\end{definition}

In terms of proving unitary synthesis hardness, instead of dealing with diamond norm hardness, we study through two related oracle state games: the oracle state distinguishing game from \cite{STOC:LomMaWri24} and the oracle state search game from \cite{DLM26}. The reductions in \cite[Section~3.3]{STOC:LomMaWri24} and \cite[Section~4.1]{DLM26} connect hardness of these games, for suitable state families, to the hardness of unitary synthesis. We formally define the search and distinguishing games in \Cref{sec:unitary-synthesis-search,sec:unitary-synthesis-distinguish}, respectively, and analyze them for random binary phase states.

\subsection{One-query oracle algorithms}

The lower bounds in this paper all concern one-query algorithms in terms of the arbitrary phase oracle, so we record a normal form. This also helps capture  one-query algorithms when additional access to the underlying random oracle is allowed.

\begin{definition}[One-query unitary synthesis algorithm]\label{def:one-query-algorithm}
A one-query unitary synthesis algorithm on $n$-qubit inputs is specified by
\begin{itemize}[leftmargin=1.5em]
    \item an oracle $O_f$,
    \item an isometry $V:\mathbb{C}^{N}\to\mathbb{C}^{M}$, representing the computation before the oracle query, and
    \item a unitary $U$ on $\mathbb{C}^{M}$, representing the computation after the oracle query.
\end{itemize}
On input $\ket{\psi}\in\mathbb{C}^N$, the algorithm prepares $U\,O_f\,V\ket{\psi}$ and then outputs the designated $n$-qubit subsystem.
\end{definition}

Fixing the computational basis on the $M$-dimensional workspace, every isometry $V:\mathbb{C}^{N}\to\mathbb{C}^{M}$ can be written as
\[
V=\sum_{i\in[M]} \ket{i}\bra{v_i},
\]
where the vectors $\ket{v_i}\in\mathbb{C}^N$ satisfy
\[
\sum_{i\in[M]} \ketbra{v_i}=\Id_N.
\]
We will repeatedly use this row-vector description of $V$ below.

\subsection{The weight-vector decomposition}\label{sec:weight-vector-decomposition}
We now record the diagonal decomposition of the pre-query isometry. This is mostly taken verbatim from Section 3.3 of \cite{DLM26}. While \cite[Lemma~3.10]{DLM26} uses a weight vector attached to both $V$ and a distribution on the input states, here we consider the distribution as Haar random, since the input state from this paper is mostly a random binary phase state.

\begin{lemma}[Weight-vector decomposition, modified from {\cite[Lemma~3.10]{DLM26}}]
\label{lem:weight-vector-decomposition}
Let $V:\mathbb{C}^N\to\mathbb{C}^M$ be an isometry, and write
\[
V=\sum_{i\in[M]} \ket{i}\bra{v_i}.
\]
Let $\mu$ be the Haar random distribution on pure states in $\mathbb{C}^N$, and define
\[
p_i:=\mathbb{E}_{\psi\sim\mu}\bigl[\abs{\bra{v_i}\psi\rangle}^2\bigr]=\frac{\left<v_i|v_i\right>}{N},
\qquad
\ket{\wt_{V}}:=\sum_{i\in[M]} \sqrt{p_i}\,\ket{i}.
\]
For each pure state $\ket{\psi}$, define the diagonal matrix
\[
D_{V,\psi}
:=
\sum_{i\in[M]:\,p_i>0}
\frac{\bra{v_i}\psi\rangle}{\sqrt{p_i}}\ketbra{i},
\]
with diagonal entry $0$ when $p_i=0$. Then $\ket{\wt_{V}}$ is a unit vector, and for every $\ket{\psi}$ we have
\[
V\ket{\psi}=D_{V,\psi}\ket{\wt_{V}}.
\]
\end{lemma}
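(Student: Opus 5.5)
The plan is to verify the two claims of \cref{lem:weight-vector-decomposition} directly from the definitions: first that $\ket{\wt_V}$ is a unit vector, then the factorization $V\ket{\psi}=D_{V,\psi}\ket{\wt_V}$. No earlier results are needed beyond the row-vector description $V=\sum_i \ket{i}\bra{v_i}$ with $\sum_i \ketbra{v_i}=\Id_N$.

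\textbf{Step 1: formula for $p_i$.} For Haar-random $\ket\psi$ we have $\mathbb E_{\psi\sim\mu}\ketbra{\psi}=\Id_N/N$. Hence
\[
p_i=\mathbb E_{\psi}\,\bra{v_i}\psi\rangle\langle\psi\ket{v_i}=\frac{\langle v_i|v_i\rangle}{N}.
\]
This confirms the stated identity for $p_i$.

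\textbf{Step 2: normalization of $\ket{\wt_V}$.} The squared norm of $\ket{\wt_V}$ is
\[
\sum_i p_i=\frac1N\sum_i \langle v_i|v_i\rangle=\frac1N\operatorname{Tr}\Bigl(\sum_i\ketbra{v_i}\Bigr)=\frac1N\operatorname{Tr}(\Id_N)=1.
\]
So $\ket{\wt_V}$ is a unit vector.

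\textbf{Step 3: the factorization.} Expanding $D_{V,\psi}\ket{\wt_V}$ in the basis gives
\[
\sum_{i:\,p_i>0}\frac{\bra{v_i}\psi\rangle}{\sqrt{p_i}}\sqrt{p_i}\,\ket i=\sum_{i:\,p_i>0}\bra{v_i}\psi\rangle\ket i.
\]
On the other hand, $V\ket\psi=\sum_i \bra{v_i}\psi\rangle\ket i$ sums over all $i$. The two agree provided the terms with $p_i=0$ vanish. If $p_i=0$, then $\langle v_i|v_i\rangle=0$, so $\ket{v_i}=0$ and $\bra{v_i}\psi\rangle=0$. This handles the zero-weight indices, which are the only delicate point.

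The argument is entirely routine; the only thing to be careful about is the convention that diagonal entries with $p_i=0$ are set to zero, and Step 3 shows this convention is consistent with $V$.
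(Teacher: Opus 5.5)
Your proof is correct and follows the same approach as the paper's. The normalization comes from $\sum_i \ketbra{v_i}=\Id_N$: you take a trace, while the paper averages $\sum_i|\bra{v_i}\psi\rangle|^2=1$ over Haar, and the factorization is checked termwise once the zero-weight rows vanish. Your handling of the $p_i=0$ case, deducing $\ket{v_i}=0$ from $\langle v_i|v_i\rangle=Np_i=0$, is slightly cleaner than the paper's appeal to a nonnegative random variable with zero mean.
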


\begin{proof}
Since $V$ is an isometry,
\[
\sum_{i\in[M]} \abs{\bra{v_i}\psi\rangle}^2
=
\bra{\psi}\left(\sum_{i\in[M]}\ketbra{v_i}\right)\ket{\psi}
=
\bra{\psi}\psi\rangle
=1
\]
for every unit vector $\ket{\psi}$. From the definition of $p_i$, $\sum_i p_i=1$, so $\ket{\wt_{V}}$ has unit norm.

For the decomposition itself, if $p_i=0$ then the nonnegative random variable $\abs{\bra{v_i}\psi\rangle}^2$ has expectation $0$, and hence vanishes for every $\ket{\psi}$. Therefore, for every $\ket{\psi}$,
\[
D_{V,\psi}\ket{\wt_{V}}
=
\sum_{i\in[M]} \frac{\bra{v_i}\psi\rangle}{\sqrt{p_i}}\ketbra{i}
\left(\sum_{j\in[M]}\sqrt{p_j}\ket{j}\right)
=
\sum_{i\in[M]} \bra{v_i}\psi\rangle\ket{i}
=
V\ket{\psi}.
\]
\end{proof}

\section{Our use of the trace moment method}
 
All of our proofs follow the same high-level strategy: bound the winning probability (or distinguishing advantage) by the expected operator norm of a random Hermitian matrix.
\begin{itemize}
    \item For quantum advice, this matrix arises by optimizing over the advice state, and view the value as the operator norm.
    \item For one-query unitary synthesis lower bound, it arises from the spectral relaxation from the weight vector decomposition trick in \Cref{sec:weight-vector-decomposition}. This expresses the maximum over all $f_R$ as the maximum quadratic form over a diagonally updated weight vector, which is upper bounded by an operator norm.  
\end{itemize}  
Once this reduction is made, the same trace-moment argument applies: Let $Y_R$ be a random Hermitian matrix acting on a $d$-dimensional space. For every positive integer $t$,
\[
\mathbb E_R\|Y_R\|
\leq
\left(\mathbb E_R\operatorname{Tr}(Y_R^{2t})\right)^{1/2t}.
\]
We express the trace moment using the compressed oracle representation of $Y_R$, with the database initially empty. 
\begin{itemize}
    \item The trace can be viewed as an algorithm that starts with a uniformly random classical input, and each application of $Y$ changes the database size by a bounded amount.
    \item It therefore suffices to control the corresponding operator on states supported on small databases, as this will help us bound the growth of the norm. 
    \item Taking $t$ on the order of the input qubits will make the factor contributed by the trace dimension constant (or say, cancel out the union bound or guessing of the starting classical input).
\end{itemize}

\section{Non-uniform security of 1OWS in QROM}
\label{sec:time-advice-tradeoff-search}

In this section, we prove a non-uniform security of single-copy one-way states (specifically, the binary phase states) in the quantum random oracle model, via the trace moment method. Specifically, we show that any $T$-query quantum algorithm with $S$-qubit advice can invert a single copy of a random binary phase state with winning probability at most $O\left(\frac{T^2+1+\sqrt{S(T+1)}}{K}\right)$.

\begin{definition}[1OWS security game]\label{def:1ows-security-game}
In the non-uniform security game of 1OWS:
\begin{itemize}
    \item In the offline stage, a hash function $R:[K]\times [N]\to\{\pm 1\}$ is sampled and fixed (here $K\le N$), and an algorithm is allowed to have an $S$-qubit advice that depends arbitrarily on $R$ before receiving the challenge.
    \item Then, in the online stage, a challenger will sample a random key $k\gets [K]$, and generate a  single-copy of state $\ket{\psi_{R,k}}:=\frac{1}{\sqrt{N}}\sum_x R(k,x)\ket{x}$, and send it to the algorithm.
    \item The algorithm, after receiving the challenge $\ket{\psi_{R,k}}$, is allowed to make $T$ queries to $R$ and is asked to output a guess $k'$. It succeeds if $k'=k$. 
\end{itemize}
We consider security game in QROM, for $R$ as a random oracle, and the algorithm can have oracle access to $O_R$. The non-uniform security (in terms of $S$-qubit, $T$-query) of 1OWS in QROM is defined as the maximum winning probability, parametrized by $(S,T,K,N)$.
\end{definition}

\begin{theorem}[Non-uniform security of 1OWS in QROM]\label{thm:1ows-ts-tradeoff}
For integer $S,T\ge 0, K\le N$, the non-uniform security (with $S$-qubit advice and $T$-query) of 1OWS in QROM is $O\left(\frac{T^2+1+\sqrt{S(T+1)}}{K}\right)$; that is, the maximum winning probability is bounded by $O\left(\frac{T^2+1+\sqrt{S(T+1)}}{K}\right)$.
\end{theorem}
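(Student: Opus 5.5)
We follow the trace-moment strategy of the technical overview. First, we fix the adversary's oracle-independent data: the online unitaries $U$ (so the online computation is $A_R=(O_RU)^T$ acting on advice register $\mathsf A$, challenge register $\mathsf X$ and workspace $\mathsf W$) and a final projective measurement $\{\Pi_k\}$. With these fixed, the success probability with advice $\ket{\varphi}$ equals $\bra{\varphi}Y_R\ket{\varphi}$, where
\[
Y_R=\mathbb E_k\,(I_{\mathsf A}\otimes\bra{0}_{\mathsf W}\bra{\psi_{R,k}}_{\mathsf X})A_R^\dagger\Pi_kA_R(I_{\mathsf A}\otimes\ket{0}_{\mathsf W}\ket{\psi_{R,k}}_{\mathsf X}),
\]
which is PSD on $2^S$ dimensions. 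Optimizing over advice gives winning probability $\mathbb E_R\|Y_R\|_{\mathrm{op}}$. We then bound this by $(\mathbb E_R\operatorname{Tr}Y_R^{2t})^{1/2t}$ with $t=\max(S,1)$, so that the $2^S$ dimension factor costs only a constant.

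Next, we purify $R$ via Zhandry's compressed phase oracle. Challenge preparation becomes $\widehat C_k:\ket D\mapsto N^{-1/2}\sum_x\ket x\,T_{k,x}\ket D$, and each online query toggles at most one database entry. This defines $\widehat Y=\mathbb E_k\widehat C_k^\dagger\widehat A^\dagger(\Pi_k\otimes I)\widehat A\widehat C_k$, with $\mathbb E_R\operatorname{Tr}Y_R^{2t}=\sum_z\|\widehat Y^t\ket{z,\emptyset}\|^2$. One application of $\widehat Y$ changes the database size by at most $2(T+1)$. Since $\widehat Y$ is PSD and preserves the property "supported on databases of size $\le\ell$" up to $\ell\mapsto\ell+2T+2$, we write $\widehat Y^{j}\ket{z,\emptyset}=\widehat Y^{1/2}\cdot\widehat Y^{1/2}\widehat Y^{j-1}\ket{z,\emptyset}$. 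We induct using $\|\widehat Y v\|^2\le\|\widehat Y\|\,\bra{v}\widehat Y\ket v$ restricted to the relevant low-size subspace. More carefully, we use the fact that $\widehat Y$ is block-banded in database size, so $\|P_{\le \ell'}\widehat Y P_{\le\ell}\|$ is controlled by the quadratic-form bound below. This yields $\|\widehat Y^t\ket{z,\emptyset}\|^2\le\prod_{j\le 2t}c\frac{1+T^2+\sqrt{2j(T+1)}}{K}$. Taking the $1/2t$-th root and summing over $z$ gives $O\bigl(\frac{1+T^2+\sqrt{S(T+1)}}{K}\bigr)$, which proves the theorem.

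The main obstacle is the small-database lemma: for $\ket\phi$ supported on databases of size $\le\ell$, $\bra\phi\widehat Y\ket\phi\le O\bigl(\frac{1+T^2+\sqrt\ell}{K}\bigr)$. We purify $k$ and write the final state $\widehat A\,\ket\Psi$, where $\ket\Psi=K^{-1/2}\sum_k\widehat C_k\ket\phi\ket k$, and bound $\|\Pi_{\mathrm{win}}\widehat A\ket\Psi\|^2$. We split $\ket\Psi$ into three parts with threshold $h=\sqrt\ell$: heavy rows $|D_k|>h$ (squared norm $\le\ell/(Kh)$), deleted entries $(k,x)\in D$ (squared norm $\le\ell/(KN)$), and fresh additions to light rows. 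The queries act on the fresh part, and after them we decompose it according to whether the challenge entry $(k,x)$ is still present in the final database. If it is present, we group by final database $F$ and key $k$. Only $\le h+1+T$ histories interfere, since the entry may be the challenge or one of the light-row entries, and queries may have added up to $T$ entries in row $k$. We then need a Cauchy--Schwarz estimate showing that the success mass is $\le O(h+1+T)/K$ after unitarity of $\widehat A$. This is the delicate point: we would first try commuting the grouping past $\widehat A$ by tracking the row-$k$ content as a register, using that $\sum_k\Pi_k\le I$. If the entry is absent, it was removed by a query. We bound that component's norm by a hybrid: each query can remove the uniformly-keyed entry with amplitude $O(1/\sqrt K)$, starting from $\sqrt{\ell/(KN)}$, giving squared norm $O(\ell/(KN)+T^2/K)$. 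The triangle inequality then combines all terms into the stated bound.
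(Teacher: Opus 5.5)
\textbf{There is a genuine gap, in the small-database lemma.} Your reduction to that lemma is the paper's argument and is fine: operator norm of $Y_R$, trace moment of order $\Theta(S)$, compressed purification, and the banded bound through $\|P_{\le \ell'}\widehat Y P_{\le \ell'}\|$. Drop the $\widehat Y^{1/2}$ detour, since $\widehat Y^{1/2}$ does not respect database size.

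The gap sits exactly at the step you call delicate. You split $\ket{\Psi}$ \emph{before} the online queries, according to the initial database relative to the hidden key: $|D_k|>h$, $(k,x)\in D$, or a fresh addition to a light row. You then let $\widehat A$ act on the fresh part. That is a $k$-dependent projection $P_{k,x}$ of $\ket{\phi}$.
\begin{itemize}
\item For $T=0$ this is harmless. $P_{k,x}$ is diagonal in $D$ and commutes with $\Pi_k$, which is why the overview's argument works.
\item For $T>0$, queries make different initial databases interfere. The vectors you group by $(F,k)$, namely $\bra{F\oplus(k,x)}\widehat A(\ket{x}\otimes P_{k,x}\ket{\phi})$, then depend on $k$.
\item Your Cauchy--Schwarz step only closes if what remains is $\sum_k\|\Pi_k v\|^2$ for a $k$-independent $v$. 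You no longer have that.
\end{itemize}
The obvious repairs do not recover the bound.
\begin{itemize}
\item Comparing with the unrestricted vector costs $\sum_k\|(I-P_{k,x})\ket{\phi}\|^2$, which can be of order $\ell/h$. After the factor $h+T+1$ you get order $(h+T+1)(1+\ell/h)/K$. This is at least of order $\ell/K$ for every $h$, i.e.\ only the easy $O((T^2+\ell)/K)$ estimate.
\item Commuting $P_{k,x}$ through the queries works because it fails to commute only with queries to row $k$. This gives $\sum_k\|\Pi_k\widehat A(\ket{x}\otimes P_{k,x}\ket{\phi})\|^2\le(1+2T)^2$, hence only $O((\sqrt{\ell}+T)T^2/K)$.
\end{itemize}
Your hybrid for the ``absent'' part has the same problem. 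Its $O(1/\sqrt{K})$ per-query increment is obtained by summing over $k$ against $k$-independent vectors.

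\textbf{The missing idea is the paper's order of operations: never restrict the initial state.} Since $T_{k,x}$ commutes with $\cO$ and $U$, the final state is $\frac{1}{\sqrt{KN}}\sum_{k,x}T_{k,x}\widehat A(\ket{x}\ket{\phi})\ket{k}$. In the paper's Feynman-path notation, its component on final database $S$ and key $k$ is $\frac{1}{\sqrt{KN}}\sum_x\ket{v_{T,S\oplus(k,x),x}}$, and $\ket{v_{T,S',x}}$ does not depend on $k$. All splitting is then done on the \emph{final} database.
\begin{itemize}
\item Heavy final rows, $|S_k|>\sqrt{\ell}+T$, are handled after dropping $\Pi_k$. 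Each query changes $|S_k|$ by at most one, so this mass is at most the mass with $|S_k|>\sqrt{\ell}$ right after challenge preparation, which is $\le 2\sqrt{\ell}/K$ (the paper's $H_T\le H_0$).
\item For light final rows, the terms with $x\in S_k$ number at most $\sqrt{\ell}+T$. Cauchy--Schwarz followed by $\sum_k\Pi_k=I$, applied to the $k$-independent $\ket{v_{T,S',x}}$, gives $(\sqrt{\ell}+T)/K$.
\item The terms with $x\notin S_k$, which include your deleted entries, form the quantity $G_T$. It is controlled by $\sqrt{G_t}\le\sqrt{G_{t-1}}+\sqrt{2/K}$ starting from $G_0\le \ell/(KN)$. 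The increment bound $B_t+C_t\le 1/K$ again rests on $\sum_k Q_{k,x}\preceq I$ applied to $k$-independent vectors.
\end{itemize}
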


\paragraph{Proof overview.}
Via the trace-moment method, we first express the search probability as the expected norm of an operator on the advice register, bound it by a trace moment, and also view them in the compressed oracle database mode (\Cref{sec:1OWS-trace-moment-setup}). Then, the main step is to bound the corresponding search probability when the algorithm starts with an arbitrary state
entangled with a compressed database of size $\le \ell$ (\Cref{sec:1OWS-trace-moment-on-small-database}).

For the search case with initial database of size $\le\ell$, a weaker bound of $O((T^2+\ell)/K)$ would be easy to obtain, by introducing a  hybrid that resamples the challenge key row (resampling the $R(k,\cdot)$ part). Compared to the real case, the difference (in terms of amplitude) can be bounded by $O(\sqrt{\ell/K})$ for the initial database where the database on challenge key row $k$ is non-empty. Compared to an ideal case where the challenge is fully random, the difference can be bounded by $O(T/\sqrt{K})$ through a modified BBBV-style hybrid argument. 

To prove this stronger bound $O((T^2+\sqrt{\ell})/K)$ (\Cref{lem:1ows-l-database-t-query-reform}, proof in \Cref{sec:1OWS-bound-with-initial-database}), we use a finer analysis that checks contributions from the following components:
\begin{itemize}
    \item Is the final database on the challenge key $k$ large (or not)?
    \begin{itemize}
        \item If the final database on the challenge key $k$ is large, then the initial database on $k$ should also be large. Then this can be bounded by the size of the initial database. Specifically, for each initial database bounded by size $\ell$, it cannot have $h$-size databases on $\ge \ell/h$ keys. We set $h=\sqrt{\ell}$ here.
    \end{itemize}
    \item Does the ``challenge entry'' stay in the final database? Or say, for the $x$ part in the final state, where it is from $\ket{x}$ in the superposition of the initial challenge state, is $(k,x)$ in the final database?
    \begin{itemize}
        \item If $(k,x)$ is in the final database, note that the final database on challenge key $k$ is small now (if large, it lands in the case already discussed above). Therefore, in the corresponding coherent sum over $x$ inside the final database, paying only that small number can make it into a sum of individual squared norms, which would be easy to upper bound by completeness of the output measurement and unitarity.
        \item If $(k,x)$ is not in the final database, this means $(k,x)$ is in the initial database (which would be expected to bound by $O(\ell/KN)$), or is queried within the $T$ queries (which would be expected to bound by $O(T^2/K)$). This is because $(k,x)$ is added (by $\oplus$) to the database from the challenge state before the first query. 
    \end{itemize}
\end{itemize}

\subsection{1OWS trace-moment setup}
\label{sec:1OWS-trace-moment-setup}
For any quantum oracle algorithm in the 1OWS security (search) game, denote its intermediate unitary between queries as $U$, final projective measurement as $\{\Pi_k\}_{k\in [K]}$, and suppose of hash function $R$ its advice is $\ket{\phi_R}$. We can write its maximum winning probability as the following:
\[
\mathbb{E}_R\max_{\phi_R}\mathbb{E}_k\bra{\phi_R}
\bra{\psi_{R,k}}\bra{0^w}_{\mathsf W}(U^\dagger O_R^\dagger)^T\Pi_k(O_RU)^T\ket{\psi_{R,k}}
\ket{\phi_R}\ket{0^w}_{\mathsf W}.
\]
Here $\ket{0^w}_{\mathsf W}$ denotes the algorithm's workspace of $w$ qubits, which starts with an all $0$ state.
From the definition of the operator norm, this is exactly equal to 
\begin{align}\label{eq:1ows-goal-probability}
    \mathbb{E}_R\|Y_R\|:=
    \mathbb{E}_R\left\|
        \mathbb{E}_k\bra{\psi_{R,k}}\bra{0}_{\mathsf W}(U^\dagger O_R^\dagger)^T\Pi_k(O_RU)^T\ket{\psi_{R,k}}\ket{0}_{\mathsf W}
    \right\|.
\end{align}
Here we define $Y_R$ as the matrix inside the norm.
Now our goal is to bound $\mathbb{E}_R\|Y_R\|$. 

Via the trace moment method,  by Jensen's inequality, we have
\begin{align*}
    \mathbb{E}_{R}\|Y_{R}\|
    \le \left(\mathbb{E}_{R}\|Y_{R}\|^{2t}\right)^{1/2t}\le \left(\mathbb{E}_{R}\Tr(Y_{R}^{2t})\right)^{1/2t}.
\end{align*}
Therefore, now our goal is to upper bound
\begin{align}\label{eq:1ows-goal-trace-moment}
    \mathbb{E}_{R} \Tr(Y_{R}^{2t})=\sum_{z\in [2^S]} \mathbb{E}_R\bra{z}Y_R^{2t}\ket{z}.
\end{align}

\paragraph{Rewrite in compressed oracle framework.} For random function $R$, we can write all things above in the compressed oracle framework. 

Note that in the un-compressed world, the isometry that writes $\ket{\psi_{R,k}}$ into a fresh challenge register can be viewed as   
\[
\ket{R}\mapsto 
\ket{\psi_{R,k}}\otimes\ket{R}=
\frac{1}{\sqrt{N}}\sum_{x} \ket{x}\otimes {R(k,x)}\ket{R}.
\]
Define $T_{k,x}$ on the compressed oracle register that flips the presence of the database entry $(k,x)$; that is, $T_{k,x}\ket{D}=\ket{D\oplus \{(k,x)\}}$.
Now the isometry that generates the fresh challenge using
one compressed query can be viewed as
\[
\ket{D}\mapsto  \left(\frac{1}{\sqrt{N}}\sum_x\ket{x}\otimes T_{k,x}\right) \ket{D}.
\]
Therefore, we can define $\widehat Y$ in the compressed oracle database mode as
\begin{align*}
    \widehat Y
    =\frac1N \cdot\mathbb{E}_k \sum_{x_1,x_2}\bra{x_1}\bra{0^w} (U^\dagger \cO^\dagger)^T \Pi_k (\cO \; U)^T \ket{x_2}\ket{0^w}\otimes T_{k,x_1}T_{k, x_2}
\end{align*}
with compressed phase oracle $\cO$. This will help us rewrite \cref{eq:1ows-goal-trace-moment} as
\begin{align}\label{eq:1ows-goal-compressO}
    \mathbb{E}_R\Tr(Y_R^{2t})=\sum_{z\in [2^S]}\bra{z}\bra{\emptyset}\widehat Y^{2t}\ket{z}\ket{\emptyset}.
\end{align}

\subsection{Bounding the trace moment on small databases}
\label{sec:1OWS-trace-moment-on-small-database}
\begin{lemma}\label{lem:1ows-l-database-t-query}
For $\widehat Y$ defined from the 1OWS security game, if normalized state $\ket{\psi}$ (with both the algorithm's register and the database register) is supported on databases of size $\le \ell$, then 
\[
|\bra\psi \widehat Y\ket\psi|\le c\cdot \frac{T^2+\sqrt{\ell}}{K}
\]
for $1\le\ell\le K^2$ and constant $c$ (actually we can have $c=10$; for $\ell=0$, we can replace $\sqrt{\ell}$ by $\max\{\sqrt{\ell},1\}$ from the $\ell=1$ case).
\end{lemma}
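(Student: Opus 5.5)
We follow the $T=0$ argument of the technical overview, adding a hybrid analysis for the online queries. First we purify the key. Let $\ket{\psi}=\sum_D\ket{\psi_D}\ket{D}$ with $|D|\le\ell$. Define
\[
\ket{\Psi}:=\frac{1}{\sqrt{KN}}\sum_{k,x}\ket{k}_{\mathsf K}\,(\cO\,U)^T\bigl(\ket{x}\ket{0^w}\otimes T_{k,x}\ket{\psi}\bigr).
\]
Since $\widehat Y\succeq 0$ as a compression of a PSD family, we have $\bra\psi\widehat Y\ket\psi=\|\Pi_{\mathrm{win}}\ket\Psi\|^2$, where $\Pi_{\mathrm{win}}=\sum_k\Pi_k\otimes\ketbra{k}$. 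We keep the label $x$ of the challenge summand coherent by writing $\ket\Psi=\frac{1}{\sqrt{KN}}\sum_{k,x}\ket k\ket{\Psi_{k,x}}$. We then split each final database $F$ according to two tests. The first asks whether $|F_k|>h+T$, where $F_k$ is the row $k$ of $F$. The second asks whether $(k,x)\in F$. This gives three pieces $\ket{\Psi_1},\ket{\Psi_2},\ket{\Psi_3}$, and by the triangle inequality $\|\Pi_{\mathrm{win}}\ket\Psi\|^2\le 3\sum_i\|\Pi_{\mathrm{win}}\ket{\Psi_i}\|^2$.

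\emph{Heavy rows.} The $T$ compressed queries change $|F_k|$ by at most $T$, and challenge preparation changes it by at most $1$. So a heavy final row forces $|D_k|\ge h$ in the initial database. We bound this projection by the initial weight: the projector onto $\{D:|D_k|\ge h\}$ commutes with everything up to this bookkeeping. For each $D$, at most $\ell/h$ keys are heavy, and $k$ is uniform. This gives squared norm $\le \ell/(Kh)$.

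\emph{Challenge entry present.} Fix $F$ and $k$ with $|F_k|\le h+T$. Only $x\in F_k$ can contribute, so the coherent sum over $x$ has at most $h+T$ terms. Cauchy--Schwarz then gives
\[
\|\Pi_{\mathrm{win}}\ket{\Psi_3}\|^2\le\frac{h+T}{KN}\sum_{k,x}\|(\Pi_k\otimes I)\ket{\Psi_{k,x}}\|^2 .
\]
The projection onto $(k,x)\in F$ can be dropped, since it is a projector commuting with $\Pi_k$. The remaining sum is at most $\frac{h+T}{KN}\sum_x\sum_k\|\Pi_k\ket{\Psi_{\cdot,x}}\|^2$. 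The subtlety is that $\ket{\Psi_{k,x}}$ depends on $k$ through $T_{k,x}$. We handle this by noting that $\sum_k\|\Pi_k V_k\ket{\cdot}\|^2$ with isometries $V_k$ can be bounded uniformly: we push $T_{k,x}$ through as a unitary that commutes with the $\mathsf A,\mathsf X$-part. If needed, we instead use the database-projected decomposition in the overview, where $T_{k,x}$ maps $\ket D$ to distinct orthogonal $F$. We expect this to yield $(h+T)/K$. Choosing $h=\sqrt\ell$ balances this against $\ell/(Kh)$, and the $T/K$ term is absorbed into $T^2/K$ (or into $1/K$ when $T\le1$).

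\emph{Challenge entry absent.} This is the main obstacle. Here $(k,x)\notin F$ although $T_{k,x}$ toggled it before the first query. Either $(k,x)\in D$ initially, or a later query to position $(k,x)$ removed it. We define the hybrid vector $\ket{\Gamma_j}$ as the component, after $j$ queries, with $(k,x)$ absent from the current database, taken jointly over $(k,x)$ with the $1/\sqrt{KN}$ normalization. Initially $\|\ket{\Gamma_0}\|^2\le \ell/(KN)$. One compressed query changes the presence of $(k,x)$ only if the query register is at $(k,x)$, and the amplitude that flips presence is bounded by the query weight on $(k,x)$. Averaged over the uniform $k$ and the $x$ encoded by the challenge, a query's total weight on the row-$k$, challenge-correlated positions is $O(1/K)$; this is the BBBV-style step. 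Hence $\|\ket{\Gamma_{j+1}}\|\le\|\ket{\Gamma_j}\|+O(1/\sqrt K)$. The delicate point is justifying the $1/\sqrt K$ rather than $1$: we use that summing the query weight over $k$ is at most $1$ per fixed $x$. We would also need to handle entries re-added and removed again, by tracking presence only at the final step. Squaring gives $O(\ell/(KN)+T^2/K)$. Since $\ell\le K^2\le KN$, the term $\ell/(KN)\le\sqrt{\ell}/K$. Collecting the three pieces with $h=\sqrt\ell$ gives the lemma with an absolute constant.
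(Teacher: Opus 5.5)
Your proposal is correct and follows essentially the same route as the paper. It uses the same three-way split: heavy final rows, bounded through the initial weight by $\ell/(Kh)$; light rows with the challenge entry present, handled by Cauchy--Schwarz over at most $h+T$ coherent histories; and the challenge entry absent, handled by an induction that starts at $\ell/(KN)$ and grows by $O(1/\sqrt{K})$ per query. You then take $h=\sqrt{\ell}$, as the paper does. One point should be stated precisely: both of your hedged steps rest on the fact that $T_{k,x}$ commutes with $\cO$, $U$ and $\Pi_k$, because all database toggles commute. Hence $(\cO U)^T(\ket{x}\ket{0^w}\otimes T_{k,x}\ket{\psi}) = T_{k,x}(\cO U)^T(\ket{x}\ket{0^w}\ket{\psi})$, where the inner state does not depend on $k$; this is exactly the paper's $k$-independent path vectors $\ket{v_{t,S,x}}$. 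The steps do not rest on a uniform bound for $\sum_k\|\Pi_k V_k\ket{\cdot}\|^2$ over arbitrary isometries $V_k$, and such a bound is false in general.
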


With \cref{lem:1ows-l-database-t-query}, now we can bound \cref{eq:1ows-goal-compressO}. 

\begin{proof}[Proof of \cref{thm:1ows-ts-tradeoff}]
For each $z$, each $\widehat Y$ will append the database by size at most $(2T+2)$. Therefore,  $\ket{\psi_z^i}:=\widehat Y^i\ket{z}\ket{\emptyset}$ is supported on database of size $\le (2T+2)i$, and for $P_{\le i}$ defined as the projector on database of size $\le i$,
\begin{align*}
    \|\ket{\psi_z^i}\|
    &\le 
    \left\|P_{\le (2T+2)i}\widehat Y P_{\le (2T+2)(i-1)}\ket{\psi_z^{i-1}}\right\|
    \\
    &\le
    \left\|P_{\le (2T+2)i}\widehat Y P_{\le (2T+2)i}\right\|\cdot \left\|\ket{\psi_z^{i-1}}\right\|
    \\
    &\le c\cdot \frac{T^2+\sqrt{(2T+2)i}}{K}\cdot \left\|\ket{\psi_z^{i-1}}\right\|
\end{align*}
Therefore, for each $z\in [2^S]$, $\bra{z}\bra{\emptyset}Y^{2t}\ket{z}\ket{\emptyset}
=\|\ket{\psi_z^t}\|^2$, and we can upper bound \cref{eq:1ows-goal-compressO} by
\[
\sum_{z\in [2^S]}\bra{z}\bra{\emptyset}Y^{2t}\ket{z}\ket{\emptyset}
\le 2^S\cdot \left(c\cdot \frac{T^2+\sqrt{t(2T+2)}}{K}\right)^{2t}.
\]
This will help us upper bound the winning probability in \cref{eq:1ows-goal-probability} as 
\begin{align*}
    \mathbb{E}_R\|Y_R\|\le 
2^{S/2t}\cdot c\cdot \frac{T^2+\sqrt{t(2T+2)}}{K}.
\end{align*}
By setting $t=\max\{S/2,1\}$, we have winning probability upper bounded by $O\left(\frac{T^2+1+\sqrt{S(T+1)}}{K}\right)$.
\end{proof}

Now we are only left to prove \cref{lem:1ows-l-database-t-query}.

\paragraph{Reformulation of \cref{lem:1ows-l-database-t-query}.} For normalized state $\ket{\psi}$ (with both the algorithm's register and the database register), by our definition of $\widehat Y$ in the compressed oracle database mode: 
\begin{align*}
    \left|\bra{\psi}\widehat Y\ket{\psi}\right|
    &=
    \left|
    \bra{\psi}
        \frac1N \cdot\mathbb{E}_k \sum_{x_1,x_2}\bra{x_1}\bra{0^w} (U^\dagger \cO^\dagger)^T \Pi_k (\cO \; U)^T \ket{x_2}\ket{0^w}\otimes T_{k,(x_1\oplus x_2)}
    \ket{\psi}
    \right|
    \\
    &=
    \mathbb{E}_k
    \left\|
    \Pi_k (\cO \; U)^T \left(\frac{1}{\sqrt{N}}\sum_x\ket{x}\otimes T_{k,x}\right)
    \ket{\psi}\ket{0^w}
    \right\|^2.
\end{align*}
This is exactly the winning probability of the following 1OWS search game:
\begin{enumerate}
    \item The algorithm starts with a normalized state $\ket{\psi}$ with support on databases of size $\le \ell$.
    \item The algorithm will receive a challenge input state, which is a single copy of $\ket{\psi_{R,k}}$, for a uniformly random unknown $k\gets [K]$.
    \item Then the algorithm is allowed to make $T$ queries to the (compressed) oracle $R$ ($O_R$), and it needs to guess the underlying challenge key $k$.
\end{enumerate}
Therefore, now we only need to prove the maximum winning probability of this game, and this will give a proof of \cref{lem:1ows-l-database-t-query}. That is, now we only need to prove the following \cref{lem:1ows-l-database-t-query-reform}:
\begin{lemma}[Reformulation of \cref{lem:1ows-l-database-t-query}]
\label{lem:1ows-l-database-t-query-reform}
    In the 1OWS security game where the algorithm starts with state
    $\sum_D\alpha_D\ket{\varphi_D}\ket{D}$ supported on databases of size $\le \ell$ in the compressed oracle database mode, the winning probability of the 1OWS game is bounded by $c\cdot \frac{T^2+\sqrt{\ell}}{K}$. 
\end{lemma}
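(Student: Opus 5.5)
We purify the key and write the final joint state as
\[
\ket{\Psi_T}:=\frac{1}{\sqrt{KN}}\sum_{k,x}(\cO\,U)^T\bigl(\ket{x}\otimes T_{k,x}\ket{\psi}\ket{0^w}\bigr)\ket{k}_{\mathsf K}.
\]
The winning probability is then $\|\Pi_{\mathrm{win}}\ket{\Psi_T}\|^2$ with $\Pi_{\mathrm{win}}=\sum_k\Pi_k\otimes\ketbra{k}$. Following the technical overview, we fix a threshold $h=\sqrt\ell$ and split $\ket{\Psi_T}$ into three orthogonal-by-support pieces according to the final database $F$ and the purified key $k$. Throughout, we keep track of the label $x$ of the challenge summand. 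This is legitimate because the compressed oracle acts unitarily, and the map $(x,D)\mapsto$ branch is linear, so we can decompose before applying $(\cO U)^T$ and then reason about where the entry $(k,x)$ ends up.

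\emph{Step 1 (heavy rows).} Let $\ket{\Psi_1}$ be the component where the \emph{initial} database has $|D_k|>h$. The projection onto this part commutes with the whole evolution up to the choice of $k$, since it is defined on $D$ and $k$ before the challenge is applied. Its squared norm is therefore at most $\Pr_k[|D_k|>h]\le \ell/(Kh)$, uniformly in $D$.

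\emph{Step 2 (lost challenge entry).} On light rows, decompose the remaining state by whether the challenge entry $(k,x)$ is present in the database at the end. The absent branch $\ket{\Psi_2}$ arises in one of two ways:
\begin{itemize}
\item $(k,x)\in D$ initially, which contributes amplitude at most $\sqrt{\ell/(KN)}$.
\item A later compressed query removes $(k,x)$.
\end{itemize}
We bound the second case by a BBBV-style induction on the number of queries $j$. Let $\ket{\mathrm{bad}_j}$ be the projection, after $j$ queries, onto the branches where $(k,x)$ has been removed. Each query changes one database entry at the queried position. The amplitude newly moved into the bad branch is thus bounded by the query weight on row $k$ at the specific point $x$, summed coherently over $x$ tagged by the challenge. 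Since $k$ is uniform and independent of the pre-query state in the ``present'' branch, this is at most $O(1/\sqrt K)$ per query: the total query mass is $1$, and it is spread over $K$ rows. Hence $\|\ket{\Psi_2}\|\le\sqrt{\ell/(KN)}+O(T/\sqrt K)$, giving $O(\ell/(KN)+T^2/K)$. I expect this step to be the main obstacle. The ``per-query $1/\sqrt K$'' claim must be justified even though the state in the present branch does depend on $k$ through the database entry $(k,x)$. I would handle this by comparing with the hybrid in which row $k$ of the challenge is resampled into a fresh, unused row. In that hybrid the pre-query state is $k$-independent, and the swap costs amplitude $O(\sqrt{\ell/K})$, or better only on light rows, so that it does not spoil the bound.

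\emph{Step 3 (present, light rows).} In the remaining component $\ket{\Psi_3}$, the final database $F$ contains $(k,x)$, and $|F_k|\le h+2T+1$ after accounting for query additions. We regroup by $(F,k)$ and the final algorithm register. The histories that interfere coherently are indexed by which element $x\in F_k$ was the challenge entry, so there are at most $h+2T+1$ of them. Cauchy--Schwarz then gives
\[
\|\Pi_{\mathrm{win}}\ket{\Psi_3}\|^2\le (h+2T+1)\sum_x\|\Pi_{\mathrm{win}}\ket{\Psi_3^{(x)}}\|^2.
\]
Since $\sum_k\Pi_k\le I$ and $(\cO U)^T$ is unitary, each $x$-branch, summed over $k$ and weighted by $1/(KN)$, contributes at most $1/K$ in total. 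To make this last estimate hold, we use unitarity for fixed $(x,k)$ and then note that the tagged states for different $x$ have total norm $1/K$ after the projection onto $\Pi_k$ is summed over $k$. Choosing $h=\sqrt{\ell}$, this piece is $O((\sqrt\ell+T)/K)$.

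Finally, we combine the three pieces via $\|a+b+c\|^2\le 3(\|a\|^2+\|b\|^2+\|c\|^2)$. This gives $O\bigl((T^2+\sqrt\ell)/K\bigr)$, using $\ell\le K^2$ so that $\ell/(KN)\le \sqrt\ell/K$ when $K\le N$.
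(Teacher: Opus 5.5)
\textbf{There is a genuine gap.} It sits exactly where you locate the ``main obstacle,'' and your proposed remedy would not give the claimed bound.

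\textbf{The missing observation.} In this phase-oracle (Fourier) database picture, $\cO=\sum_{a,y}Q_{a,y}\otimes T_{a,y}$ and all toggles commute. So $T_{k,x}$ commutes with $U$, with $\cO$, and with $\Pi_k$. Hence
\[
(\cO U)^T\bigl(\ket{x}\otimes T_{k,x}\ket\psi\bigr)=T_{k,x}\ket{w_x},\qquad \ket{w_x}:=(\cO U)^T\bigl(\ket x\otimes\ket\psi\bigr),
\]
where $\ket{w_x}$ is independent of $k$. The key enters only through a relabeling $S\mapsto S\oplus\{(k,x)\}$ of the final database. This is what the paper's vectors $\ket{v_{t,S,x}}$ encode, and it is what makes your summations over $k$ legitimate:
\begin{itemize}
\item $\sum_k\Pi_k\preceq I$ in the Cauchy--Schwarz step;
\item $\sum_k Q_{k,x}\preceq I$ in the per-query step. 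This is the paper's $B_t+C_t\le 1/K$, where $B_t$ and $C_t$ are the amplitudes moved into and out of the ``absent'' branch by a query at $(k,x)$. You also need the outflow term $C_t$, not only the ``newly moved'' amplitude.
\end{itemize}
With this observation no hybrid is needed.

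\textbf{Why your hybrid fails.} Swapping row $k$ with a fresh row changes every component with $D_k\neq\emptyset$. For a database of $\ell$ singleton rows (all light), this happens with probability $\ell/K$ over $k$. So the hybrid costs amplitude $\Theta(\sqrt{\ell/K})$, i.e. $\ell/K$ in probability, and restricting to light rows does not help. This is precisely the route to the weaker $O((T^2+\ell)/K)$ bound that the paper mentions in its overview.

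\textbf{Second problem: splitting on the initial database.} You define heavy/light rows on the \emph{initial} database. The projection $P^{(k)}_{\mathrm{light}}$ is applied before the queries and does not commute with them. A query at $(k,y)$ can map a killed database with $|D_k|=h+1$ and a kept one with $|D_k|=h$ to a common database, so the restriction changes interference patterns. This reintroduces genuine $k$-dependence.
\begin{itemize}
\item In Step 3, after Cauchy--Schwarz you need $\sum_k\|\Pi_k(\cO U)^T(\ket x\otimes P^{(k)}_{\mathrm{light}}\ket\psi)\|^2\le 1$. This fails in general.
\item The naive triangle-inequality bound is $2+2\ell/h$, which inflates your Step 3 estimate to $O((\ell+T\sqrt\ell)/K)$.
\item The same issue infects the per-query bound in Step 2.
\end{itemize}

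\textbf{The fix.} Split on the \emph{final} row size, as the paper does.
\begin{itemize}
\item Project onto $|F_k|>h+T$. This is a projection on the final database, commuting with $\Pi_k$ and with the per-$x$ present/absent projections. Since each query changes $|F_k|$ by at most one, its squared norm is at most the post-challenge heavy mass $\ell/(K\lfloor\sqrt\ell\rfloor)\le 2\sqrt\ell/K$ (the paper's $H_T\le H_0$).
\item On the complement, bound the ``present'' part by Cauchy--Schwarz with factor $h+T$, using the $k$-independent $\ket{w_x}$.
\item Bound the ``absent'' part after dropping both $\Pi_k$ and the row-size restriction, via the induction $\sqrt{G_t}\le\sqrt{G_{t-1}}+\sqrt{2/K}$ with $G_0\le \ell/(KN)$.
\end{itemize}
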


\subsection{Proof of the 1OWS bound for states with initial database}
\label{sec:1OWS-bound-with-initial-database}
In this section, we will prove \cref{lem:1ows-l-database-t-query-reform} (and thus \cref{lem:1ows-l-database-t-query}). 

That is, start with state $\ket{\psi}=\sum_D\alpha_D \ket{\varphi_D}\ket{D}$ in the compressed oracle database mode, where $\sum_D |\alpha_D|^2=1$, $\{\ket{\varphi_D}\}_D$ is a set of normalized states on the algorithm's register, and $\ket{D}$ is the database register, now the goal is to upper bound
\begin{align}\label{eq:goal-for-l-t-search-bound}
    &\mathbb{E}_k
    \left\|
    \Pi_k (\cO \; U)^T \left(\frac{1}{\sqrt{N}}\sum_x\ket{x}\otimes T_{k,x}\right)
    \ket{\psi}
    \right\|^2
    \notag\\
    &\quad=
    \frac{1}{KN}\norm{
        \sum_{k,D,x}\Pi_k(\cO\cdot U)^T\cdot
        \alpha_D\ket{x}\ket{\varphi_D}\ket{D\oplus(k,x)}\ket{k}
    }^2,
\end{align}
where $\{\Pi_k\}$ are the final measurement projectors with $\sum_k\Pi_k=I$ and will only apply on the first 2 registers, and we also introduce a $K$-dimensional key register $\mathsf K$ with basis $\{\ket{k}:k\in [K]\}$ as the last register. As mentioned, we are summing over $D$ such that $|D|\leq\ell$.

\paragraph{Feynman path formulation.}
For $\cO$ as the compressed oracle query, it is defined as $\cO\ket{a,y}\ket{D}=\ket{a,y}\ket{D\oplus(a,y)}$ for a query on key $a$ and input $y$. Write $Q_{a,y}:=\ket{a,y}\bra{a,y}\otimes I$. We define $p_t=(a_j,y_j)_{j=1}^t$ as a Feynman path (for the query part) of length $t$, and define
\[
    \mathcal{O}_{p_t}:=Q_{a_t,y_t}U\cdots Q_{a_1,y_1}U,
    \qquad S_{p_t}:=\bigoplus_{i=1}^t(a_i,y_i).
\]
This operator $\mathcal O_{p_t}$ never touches the database register. From this, we can rewrite the state after $t$ queries as
\begin{align*}
    &\frac{1}{\sqrt{KN}}\sum_{k,D,x}(\cO\cdot U)^t\cdot \alpha_D\ket{x}\ket{\varphi_D}\ket{D\oplus(k,x)}\ket{k}
    \\
    &\quad =
    \frac{1}{\sqrt{KN}}\sum_{k,D,x}\sum_{(a_j,y_j)_j}
    \left[
    \left(\prod_{i=1}^t \cO\cdot (\ketbra{a_i,y_i}\otimes I)\cdot U\right)
    \cdot \alpha_D\ket{x}\ket{\varphi_D}\ket{D\oplus(k,x)}
    \right]
    \ket{k}
    \\
    &\quad =
    \frac{1}{\sqrt{KN}}\sum_{k,D,x}\sum_{p_t}
    \left(
    \mathcal{O}_{p_t}
    \cdot \alpha_D\ket{x}\ket{\varphi_D}
    \right)
    \ket{D\oplus(k,x)\oplus S_{p_t}}\ket{k}
    \\
    &\quad =
    \frac{1}{\sqrt{KN}}\sum_{k,S,x}\sum_{p_t}
    \left(
    \mathcal{O}_{p_t}
    \cdot \alpha_{S\oplus S_{p_t}}\ket{x}\ket{\varphi_{S\oplus S_{p_t}}}
    \right)
    \ket{S\oplus(k,x)}\ket{k}.
\end{align*}

For simplicity, we define
\[
    \ket{v_{t,S,x}}:=\sum_{p_t}\mathcal O_{p_t}\cdot
    \alpha_{S\oplus S_{p_t}}\ket{x}\ket{\varphi_{S\oplus S_{p_t}}}.
\]
Here the empty path, as the initial state, gives $\ket{v_{0,S,x}}=\alpha_S\ket{x}\ket{\varphi_S}$.
By unitarity, and by writing the last query separately, we have
\[
    \sum_S\norm{\ket{v_{t,S,x}}}^2=1,
    \qquad
    \ket{v_{t,S,x}}=\sum_{a,y}Q_{a,y}U\ket{v_{t-1,S\oplus(a,y),x}}.
\]

By orthogonality on the last 2 registers, we can rewrite our goal \cref{eq:goal-for-l-t-search-bound} as
\begin{align}\label{eq:goal-rewrite-for-l-t-search-bound}
    \frac{1}{KN}\sum_{k,S}
    \norm{\Pi_k\sum_x\ket{v_{T,S\oplus(k,x),x}}}^2.
\end{align}

\paragraph{Getting rid of ``large $S_k$''.}
Write $S_k:=\{x:(k,x)\in S\}$. For the sum over $k,S$ such that $|S_k|>\sqrt\ell+T$, we can upper bound it by first getting rid of $\Pi_k$. Write this upper bound as $H_T$, where we define $H_t$ as
\[
    H_t:=\frac{1}{KN}\sum_{\substack{k,S\\|S_k|>\sqrt\ell+t}}
    \norm{\sum_x\ket{v_{t,S\oplus(k,x),x}}}^2.
\]
We can write it as a sum by the last query, due to orthogonality. By letting $S'=S\oplus(a,y)$, as one query changes $|S_k|$ by at most 1, we get
\[
\begin{aligned}
    H_t
    &=\frac{1}{KN}
    \sum_{\substack{k,S',(a,y)\\|(S'\oplus(a,y))_k|>\sqrt\ell+t}}
    \norm{Q_{a,y}U\sum_x\ket{v_{t-1,S'\oplus(k,x),x}}}^2\\
    &\leq\frac{1}{KN}
    \sum_{\substack{k,S'\\|S'_k|>\sqrt\ell+t-1}}
    \norm{\sum_x\ket{v_{t-1,S'\oplus(k,x),x}}}^2
    =H_{t-1}.
\end{aligned}
\]
This gives us a way to go from path of length $t$ to $t-1$. Initially, the $\ket{x}$'s are orthogonal, so
\begin{align}\label{ineq:l-t-search-large-Sk}
    H_T\leq H_0
    =\frac{1}{KN}\sum_{\substack{k,S,x\\|S_k|>\sqrt\ell}}
      |\alpha_{S\oplus(k,x)}|^2
    \leq\frac{\ell}{K\lfloor\sqrt\ell\rfloor}
    \leq\frac{2\sqrt\ell}{K}.
\end{align}
Here $D=S\oplus(k,x)$ must satisfy $|D_k|\geq\lfloor\sqrt\ell\rfloor$, and since $|D|\leq\ell$, each $D$ is counted for at most $\ell/\lfloor\sqrt\ell\rfloor$ such keys.

\paragraph{Getting rid of ``final database with $x$''.}
By triangle inequality, the sum over all $x$ can be divided into 2 parts, based on whether $x\in S_k$. For the part where we sum over $x$ such that $x\in S_k$ and  $|S_k|\leq\sqrt\ell+T$, by Cauchy--Schwarz, we have
\begin{align}
    \frac{1}{KN}\sum_{\substack{k,S\\|S_k|\leq\sqrt\ell+T}}
    \norm{\Pi_k\sum_{x\in S_k}\ket{v_{T,S\oplus(k,x),x}}}^2
    &\leq\frac{\sqrt\ell+T}{KN}\sum_{k,S,x}
    \norm{\Pi_k\ket{v_{T,S\oplus(k,x),x}}}^2 \notag
    \\
    &=\frac{\sqrt\ell+T}{KN}\sum_{S',x}
    \norm{\ket{v_{T,S',x}}}^2
    =\frac{\sqrt\ell+T}{K}.
    \label{ineq:l-t-search-finaldb-with-x}
\end{align}
Here we let $S'=S\oplus(k,x)$, and then sum over the measurement projectors $\Pi_k$.

\paragraph{For the last part of ``final database without $x$''.}
We will upper bound the remaining part by getting rid of $\Pi_k$ and the restriction on $|S_k|$. Define $G_t$ as the following, and write it as a sum by the last query:
\[
\begin{aligned}
    G_t:=\frac{1}{KN}\sum_{k,S}
    \norm{\sum_{x\notin S_k}\ket{v_{t,S\oplus(k,x),x}}}^2
    =\frac{1}{KN}\sum_{k,S,(a,y)}
    \norm{Q_{a,y}U\sum_{x\notin S_k}
        \ket{v_{t-1,S\oplus(a,y)\oplus(k,x),x}}}^2.
\end{aligned}
\]
We can write the sum over $x\notin S_k$ as a sum of 2 parts, according to whether $(k,x)$ is the same as the last query $(a,y)$. Write $A_t$ and $B_t$ for the corresponding squared norms (including the factor $1/(KN)$) according to $x$, for $\{x:(k,x)\neq(a,y)\}$ and 
$\{x:(k,x)=(a,y)\}$, respectively. That is,
\begin{align*}
    A_t&:=
    \frac{1}{KN}\sum_{k,S,(a,y)}
    \norm{Q_{a,y}U\sum_{\substack{x\notin S_k,\\ (k,x)\neq (a,y)}}
        \ket{v_{t-1,S\oplus(a,y)\oplus(k,x),x}}}^2,
    \\
    B_t&:=
    \frac{1}{KN}\sum_{k,S,(a,y)}
    \norm{Q_{a,y}U\sum_{\substack{x\notin S_k,\\ (k,x)= (a,y)}}
        \ket{v_{t-1,S\oplus(a,y)\oplus(k,x),x}}}^2
    =
    \frac{1}{KN}\sum_{k,S}
    \norm{\sum_{\substack{x\notin S_k}} Q_{k,x}U\ket{v_{t-1,S,x}}}^2.
\end{align*}
By triangle inequality, this gives $\sqrt{G_t}\leq\sqrt{A_t}+\sqrt{B_t}$.

\textbf{For the part $(k,x)\neq(a,y)$ in $A_t$:}
We first go one query back by letting $S'=S\oplus(a,y)$. Then $x\notin S_k$ is the same as $x\notin S'_k$. Write the sum with $(k,x)\neq(a,y)$ as the unrestricted sum over $x\notin S'_k$ (which will correspond to $G_{t-1}$) minus the $\{x:x\in S'_k, (k,x)=(a,y)\}$ part (which will correspond to a newly defined term $C_t$): 
\begin{align*}
G_{t-1}&:=\frac{1}{KN}\sum_{k,S'}
    \norm{\sum_{x\notin S'_k}\ket{v_{t-1,S'\oplus(k,x),x}}}^2
=
\frac{1}{KN}\sum_{k,S',(a,y)}
    \norm{Q_{a,y}U\sum_{\substack{x\notin S'_k}}
        \ket{v_{t-1,S'\oplus(k,x),x}}}^2,
\\
C_t&:=
\frac{1}{KN}\sum_{k,S',(a,y)}
    \norm{Q_{a,y}U\sum_{\substack{x\notin S'_k,\\ (k,x)= (a,y)}}
        \ket{v_{t-1,S'\oplus(k,x),x}}}^2
=
\frac{1}{KN}\sum_{\substack{k,S'}}
    \norm{\sum_{x\notin S'_k} Q_{k,x}U\ket{v_{t-1,S'\oplus(k,x),x}}}^2.
\end{align*}
By triangle inequality and orthogonality of the query projectors, this gives $\sqrt{A_t}\leq\sqrt{G_{t-1}}+\sqrt{C_t}$.

\textbf{For the part $(k,x)=(a,y)$ in $B_t$:}
By letting $S'=S\oplus(k,x)$, where this $(k,x)$ is from the last query, we get
\[
    B_t=\frac{1}{KN}\sum_{\substack{k,S'}}
    \norm{\sum_{x\in S'_k}Q_{k,x}U\ket{v_{t-1,S'\oplus(k,x),x}}}^2.
\]

\textbf{Observation for $B_t+C_t$:}
By letting $S''=S'\oplus(k,x)$, as $\sum_kQ_{k,x}\preceq I$ for every $x$, we have
\[
\begin{aligned}
    B_t+C_t
    &=\frac{1}{KN}\sum_{k,S',x}
    \norm{Q_{k,x}U\ket{v_{t-1,S'\oplus(k,x),x}}}^2
    =\frac{1}{KN}\sum_{k,S'',x}
    \norm{Q_{k,x}U\ket{v_{t-1,S'',x}}}^2\\
    &\leq\frac{1}{KN}\sum_{S'',x}\norm{\ket{v_{t-1,S'',x}}}^2
    =\frac{1}{K}.
\end{aligned}
\]

\textbf{Concluding the induction of $G_T$:}
Now we can get an upper bound of $G_T$:
\[
    \sqrt{G_t}
    \leq\sqrt{G_{t-1}}+\sqrt{C_t}+\sqrt{B_t}
    \leq\sqrt{G_{t-1}}+\sqrt{2(B_t+C_t)}
    \leq\sqrt{G_{t-1}}+\sqrt{\frac{2}{K}}.
\]
Since we can also bound $G_0$ as
\[
    G_0=\frac{1}{KN}\sum_{k,D,x\in D_k}|\alpha_D|^2
    =\frac{1}{KN}\sum_D|\alpha_D|^2\cdot|D|
    \leq\frac{\ell}{KN},
\]
this will give an upper bound for $G_T$:
\begin{align}\label{ineq:l-t-search-finaldb-without-x}
    G_T \leq\left(\sqrt{G_0}+T\cdot\sqrt{\frac{2}{K}}\right)^2
    \leq2\cdot\left(\frac{\ell}{KN}+\frac{2T^2}{K}\right).
\end{align}

\paragraph{Conclusion.} From all of above, we can bound \cref{eq:goal-for-l-t-search-bound} through \cref{eq:goal-rewrite-for-l-t-search-bound} and
\begin{align*}
    &\frac{1}{KN}\sum_{k,S}
    \norm{\Pi_k\sum_x\ket{v_{T,S\oplus(k,x),x}}}^2
    =H_T + \frac{1}{KN}\cdot \sum_{\substack{k,S\\|S_k|\leq\sqrt\ell+T}}
    \norm{\Pi_k\sum_{x} \ket{v_{T,S\oplus(k,x),x}}}^2
    \\
    &\quad \le \frac{2\sqrt{\ell}}{K}
    +\frac{2}{KN}\cdot\left(
    \sum_{\substack{k,S\\|S_k|\leq\sqrt\ell+T}}
    \norm{\Pi_k\sum_{x\in S_k} \ket{v_{T,S\oplus(k,x),x}}}^2
    +
    \sum_{\substack{k,S\\|S_k|\leq\sqrt\ell+T}}
    \norm{\Pi_k\sum_{x\notin S_k} \ket{v_{T,S\oplus(k,x),x}}}^2
    \right)
    \tag{by \cref{ineq:l-t-search-large-Sk} and triangle inequality}
    \\
    &\quad \le \frac{2\sqrt{\ell}}{K}+\frac{2(\sqrt{\ell}+T)}{K}+2\cdot G_T\le \frac{10T^2}{K}+\frac{4\sqrt{\ell}}{K}+\frac{4\ell}{KN}.
    \tag{by \cref{ineq:l-t-search-finaldb-with-x} and \cref{ineq:l-t-search-finaldb-without-x}}
\end{align*}

\section{Non-uniform security of 1OWS in QROM, with classical advice}
\label{sec:time-advice-tradeoff-search-classical}

In this section, we bound the non-uniform security of 1OWS in QROM with classical advice. That is, the security game follows the same as in \cref{def:1ows-security-game}, but in the offline stage, an algorithm can only prepare an $S$-bit classical advice string that depends on $R$, before the online stage. The non-uniform security (in terms of $S$-bit, $T$-query) of 1OWS in QROM is defined as the maximum winning probability in this case, parametrized by $S,T,K,N$. we
show that any $T$-query quantum algorithm with $S$-bit classical advice can invert a single copy of a random
binary phase state with winning probability at most $O\left(\frac{T^2+1}{K}+\frac{S(T+1)}{KN}\right)$.

\begin{theorem}[Non-uniform security of 1OWS in QROM, with classical advice]\label{thm:1ows-ts-tradeoff-classical}
For integer $S,T\ge 0$ and $K\le N$, 
the non-uniform security (with $S$-bit advice and $T$-query) of 1OWS in QROM is $O\left(\frac{T^2+1}{K}+\frac{S(T+1)}{KN}\right)$; that is, the maximum winning probability is bounded by $O\left(\frac{T^2+1}{K}+\frac{S(T+1)}{KN}\right)$.
\end{theorem}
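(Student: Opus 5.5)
With classical advice, the adversary's strategy for each $R$ is fixed by an advice string $\alpha(R)\in\{0,1\}^S$. The winning probability is therefore $\mathbb E_R\max_{\alpha}p_\alpha(R)$, where $p_\alpha(R)$ is the success probability of the $T$-query algorithm hard-wired with $\alpha$. Since $p_\alpha(R)=\bra{\alpha}Y_R\ket{\alpha}$ for the diagonal-advice version of $Y_R$, I would not take a full operator norm. Instead, for each fixed $\alpha$, I would bound the moments $\mathbb E_R\, p_\alpha(R)^{t}$ and then take a union bound over the $2^S$ strings:
\[
\mathbb E_R\max_\alpha p_\alpha(R)\le\Big(\sum_\alpha \mathbb E_R\,p_\alpha(R)^t\Big)^{1/t}\le 2^{S/t}\max_\alpha\big(\mathbb E_R\,p_\alpha(R)^t\big)^{1/t}.
\]
With $t\approx S$, the factor $2^{S/t}$ becomes a constant.

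The moment $\mathbb E_R\,p_\alpha(R)^t$ is the probability that $t$ independent runs of the game all succeed, each run using a fresh key $k_i$ and a fresh challenge but the same oracle $R$. This is a multi-instance game in the style of Liu's alternating game. I would purify $R$ with the compressed oracle exactly as in \cref{sec:1OWS-trace-moment-setup}, so the $t$ runs are played sequentially against a common database starting from $\ket{\emptyset}$. Each run adds at most $2T+2$ entries, so before run $i$ the joint state is supported on databases of size $\ell\le (2T+2)i$. One subtlety is that the runs must be performed as success projections applied in sequence; each run then corresponds to an application of $\widehat Y$ restricted to classical advice. The conditional success probability of run $i$ is then a quantity of the form $\bra\psi\widehat Y\ket\psi$ with $\ket\psi$ supported on small databases. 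This is exactly what \cref{lem:1ows-l-database-t-query-reform} controls.

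The key step is to sharpen \cref{lem:1ows-l-database-t-query-reform} to $O\big(\frac{T^2+1}{K}+\frac{\ell}{KN}\big)$ in this setting. The $\sqrt\ell/K$ term in that lemma came only from the heavy-row part $H_0$. The other terms, namely $G_T\le 2(\ell/(KN)+2T^2/K)$ and the final-database-with-$x$ part $\frac{h+T}{K}$, are already fine provided the threshold $h$ is constant. The heavy rows must therefore be handled differently. My plan is to go back to the weaker hybrid argument sketched in the proof overview: resample the challenge row $R(k,\cdot)$ and pay only for database entries on row $k$. Against the database, a fresh challenge is a uniform pair $(k,x)$, and the relevant collision cost should be governed by $\Pr[(k,x)\in D]\le \ell/(KN)$ rather than $\Pr[D_k\neq\emptyset]\le \ell/K$.

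Concretely, the proof of \cref{lem:1ows-l-database-t-query-reform} would be redone with $h=O(1)$. The $\ell/(Kh)$ heavy-row loss would be replaced by an argument showing that an entry in row $k$ at a position other than $x$ does not help identify $k$ beyond $1/K$. The reason is that the phase $R(k,x')$ for $x'\neq x$ is independent of the challenge amplitude at $x$. I expect this to be the main obstacle. In the quantum-advice case the state can be arbitrarily entangled with the database, and row-$k$ entries genuinely correlate with $\ket{\psi_{R,k}}$; this is why the $\sqrt S$ bound is tight there. In the classical case the state of the sequential runs is built only from previous challenges and queries, and I would need to exploit that structure. If the direct route fails, I would fall back on an average-case argument over the sequential runs. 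The total number of entries across all rows is at most $(2T+2)t$, so the expected number of entries landing in a fresh uniformly random row $k_i$ is about $(2T+2)t/K$ on average. Handling it this way, via an amortized bound over $i$ rather than a worst-case bound on $\ell$, should still give the $\frac{S(T+1)}{KN}$ term after multiplying the per-run bounds together, taking the $t$-th root, and using $t=\max\{S,1\}$.
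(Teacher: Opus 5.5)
\textbf{Verdict: genuine gap.} Your outer reduction is sound and matches the paper's. You fix the classical string $z$, bound scalar moments of $p_z(R)$, and pay $2^{S/t}$ for the union over strings. You purify $R$ with the compressed oracle, so the algorithm's start state $\ket{z}\ket{0^w}$ is unentangled with the database. Everything then reduces to a small-database estimate $O\big(\frac{T^2+1}{K}+\frac{\ell}{KN}\big)$. Your sequential-runs formulation can replace the paper's bound on $\|\widehat Y_z^{t}\ket{\emptyset}\|^2$ via $\|P_{\le\ell}\widehat Y_zP_{\le\ell}\|$, provided you note that the conditional database state is mixed (or entangled with earlier runs' discarded registers) and handle this by convexity.

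The gap is that this small-database estimate is the whole content of the theorem, and you do not prove it.
\begin{itemize}
\item Keeping $h=O(1)$ in the proof of \cref{lem:1ows-l-database-t-query-reform} is fine for $G_T$ and for the with-$x$ term. But every row with $|S_k|>h+T$ is then left over. The argument of \cref{sec:1OWS-bound-with-initial-database} bounds those rows only by $\ell/(Kh)$, i.e.\ $O(\ell/K)$.
\item Your heuristic that ``$R(k,x')$ is independent of the challenge amplitude at $x$'' does not replace that bound. The challenge is a coherent superposition, so a stored entry $(k,x')$ is correlated with the $\ket{x'}$ branch of the same challenge: that branch deletes it, and branches $x,x'\in S_k$ interfere through a common final database. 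For $T>0$, the online queries can also touch those entries.
\item The fallback fails too. Charging for $D_k\neq\emptyset$ costs $\Pr_k[D_k\neq\emptyset]\le\ell/K$, which is already an average over uniform $k$, so amortizing across runs gains nothing. It yields $O\big(\frac{T^2+1+S(T+1)}{K}\big)$, a factor $N$ off in the second term and weaker even than \cref{thm:1ows-ts-tradeoff}.
\end{itemize}

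\textbf{The missing idea} is a quantitative way to use unentangledness. The paper builds a controlled unitary $W=\sum_k W_k\otimes\ket{k}\bra{k}_{\mathsf K}$ acting only on inaccessible registers: the database and a fresh $N$-dimensional register $\mathsf B$. It moves the purified real state $\frac{1}{\sqrt{KN}}\sum_{k,x}\ket{x}\ket{z}\,T_{k,x}\ket{\chi}\ket{0}_{\mathsf B}\ket{k}_{\mathsf K}$ to within $(1+\sqrt3)\sqrt{\mu/(KN)}$ of $\frac{1}{\sqrt{KN}}\sum_{k,x}\ket{x}\ket{z}\ket{\chi}\ket{x}_{\mathsf B}\ket{k}_{\mathsf K}$. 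In the latter state the challenge is maximally mixed and independent of $k$.

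The construction has three steps.
\begin{itemize}
\item For a single row state $\ket\beta$, split $T_x\ket\beta=M_+\ket x+M_-\ket x$ into addition and deletion maps. Then $\|M_-\|_{\mathrm{HS}}^2=\nu$ and $\|M_+^\dagger M_+-I\|_1\le 3\nu$.
\item Hence the polar isometry $V$ of $M_+$ satisfies $\|M_++M_--V\|_{\mathrm{HS}}\le(1+\sqrt3)\sqrt\nu$. Take $W_\beta$ to be any unitary with $W_\beta V\ket{x}\ket{0}=\ket\beta\ket{x}$.
\item Assemble $W_k$ blockwise over the uncompressed basis of the other rows. This makes it commute with every $T_{a,y}$ for $a\neq k$, with $U$, and with the success projector. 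So $W\cO W^\dagger$ differs from $\cO$ only on queries to the hidden row.
\end{itemize}

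The estimate then follows.
\begin{itemize}
\item A BBBV hybrid against the reference run, which never touches $\mathsf K$, costs $2T/\sqrt K$ in amplitude.
\item The reference success probability is exactly $1/K$.
\item Squaring $\sqrt p\le\frac{2T+1}{\sqrt K}+(1+\sqrt3)\sqrt{\ell/(KN)}$ gives $O\big(\frac{T^2+1}{K}+\frac{\ell}{KN}\big)$.
\end{itemize}
This is exactly what breaks with quantum advice: the row state would then also live on the advice register, so $W$ could not commute with $U$ and $\Pi$.
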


When \(T=o(\sqrt K)\), \Cref{thm:1ows-ts-tradeoff-classical} implies that constant success probability requires \(ST=\Omega(KN)\) with classical advice, whereas \Cref{thm:1ows-ts-tradeoff} gives the necessary condition \(ST=\Omega(K^2)\) with quantum advice. These conditions agree when \(K=N\). When \(K<N\), the classical-advice lower bound is stronger by a factor of \(N/K\).

\paragraph{Proof overview.}
The improvement for classical advice comes from fixing the advice string before applying the moment argument. For each fixed string $z$, the success probability is a scalar, and its compressed-oracle representation acts only on the database register. Thus, although the database may be in an arbitrary superposition, the algorithm's initial state is fixed and unentangled with it.

For the key step (\Cref{lem:1ows-l-database-t-query-classical}), the key observation is, there is a controlled change of purification on inaccessible registers (database register and challenge key register) that allows us to compare the challenge with a maximally mixed state independent of the hidden key (\Cref{lem:1ows-classical-controlled-unitary}).
\begin{itemize}
    \item From that, for a database of size at most $\ell$, the resulting initial-state error is $O(\sqrt{\ell/(KN)})$.
    \item The change of purification also conjugates the oracle used by the online algorithm, but the two oracles differ only on queries to the hidden row $k$. A query-by-query hybrid therefore contributes $O(T/\sqrt K)$ to the success amplitude.
\end{itemize}

\subsection{1OWS scalar-moment setup}
For any quantum oracle algorithm in the 1OWS security (search) game, denote its intermediate unitary between queries as $U$, final projective measurement as $\{\Pi_k\}_{k\in [K]}$, and suppose of hash function $R$ its advice is $\ket{z_R}$, where $z_R\in\{0,1\}^S$ is a classical string of length $S$. We can write its maximum winning probability as the following (with algorithm's workspace initialized as $\ket{0^w}$):
\[
\mathbb{E}_R\max_{z_R}\mathbb{E}_k\bra{z_R}
\bra{\psi_{R,k}}\bra{0^w}(U^\dagger O_R^\dagger)^T\Pi_k(O_RU)^T\ket{0^w}\ket{\psi_{R,k}}
\ket{z_R}.
\]
Define $Y_{R,z}:=\mathbb{E}_k\bra{z}
\bra{\psi_{R,k}}\bra{0^w}(U^\dagger O_R^\dagger)^T\Pi_k(O_RU)^T\ket{0^w}\ket{\psi_{R,k}}
\ket{z}$, and by Jensen's inequality, 
\begin{align}\label{eq:1ows-goal-probability-classical}
    \mathbb{E}_{R}\max_z Y_{R,z}
    \le \left(\mathbb{E}_{R}\max_z Y_{R,z}^{2t}\right)^{1/2t}\le \left(\sum_{z\in\{0,1\}^S}\mathbb{E}_{R} Y_{R,z}^{2t}\right)^{1/2t}.
\end{align}

\paragraph{Rewrite in compressed oracle framework.} For random function $R$, we can write all things above in the compressed oracle framework. Similarly we define $T_{k,x}:\ket{D}\mapsto \ket{D\oplus\{(k,x)\}}$. We also define $C_{k,z}$ for any fixed advice string $z$, where
\[
C_{k,z}:\ket{D}\mapsto \left(\frac{1}{\sqrt{N}}\sum_x \ket{x}\ket{z}\ket{0^w}T_{k,x}\right)\ket{D}.
\]
Then we can define $\widehat Y_z$ in the compressed oracle database mode as
\begin{align*}
    \widehat Y_z
    =\mathbb{E}_k C_{k,z}^\dagger (U^\dagger \cO^\dagger)^T \Pi_k (\cO \; U)^T C_{k,z}
\end{align*}
with compressed phase oracle $\cO$. This will help us rewrite the scalar moment as 
\begin{align}\label{eq:1ows-goal-compressO-classical}
    \mathbb{E}_R Y_{R,z}^{2t}=\bra{\emptyset}\widehat{Y}_z^{2t}\ket{\emptyset}.
\end{align}

\subsection{Bounding the scalar moment on small databases}
\begin{lemma}\label{lem:1ows-l-database-t-query-classical}
For $\widehat Y_z$ defined from the 1OWS security game with fixed classical advice string $z$, if normalized state $\ket{\chi}$ (only on the database register) is supported on databases of size $\le \ell$, then 
\[
|\bra\chi \widehat Y_z\ket\chi|\le c\cdot \left(\frac{T^2+1}{K}+\frac{\ell}{KN}\right)
\]
for constant $c$ (actually we can have $c=18$).
\end{lemma}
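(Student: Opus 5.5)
We will bound the quantity exactly as in the reformulation of \cref{lem:1ows-l-database-t-query}. Writing $\ket{\chi}=\sum_{D}\beta_D\ket{D}$ with $|D|\le \ell$, and purifying the key into a register $\mathsf K$, we have
\[
\bra\chi\widehat Y_z\ket\chi=\Big\|\Pi_{\mathrm{win}}(\cO\,U)^T\tfrac{1}{\sqrt{KN}}\textstyle\sum_{k,x,D}\beta_D\ket{x}\ket{z}\ket{0^w}\ket{D\oplus(k,x)}\ket{k}\Big\|^2 .
\]
The essential difference from the quantum-advice case is that the algorithm's registers start in the fixed product state $\ket{z}\ket{0^w}$, which is unentangled with the database. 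This is what will let us avoid the heavy-row/Cauchy--Schwarz argument that cost $\sqrt{\ell}/K$ there.

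\textbf{Step 1 (deletion terms).} Split the initial post-challenge state $\ket{\Psi_0}$ into the part with $(k,x)\in D$ and the part with $(k,x)\notin D$. The first part has squared norm
\[
\frac1{KN}\sum_D|\beta_D|^2|D|\le \frac{\ell}{KN},
\]
exactly as for $G_0$ in \cref{sec:1OWS-bound-with-initial-database}. Unitarity of the online computation preserves this amplitude error.

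\textbf{Step 2 (change of purification).} In the remaining ``addition'' part, the database is $D\cup\{(k,x)\}$, and the entry $(k,x)$ is a fresh record of both the key and the challenge position $x$. We construct a unitary $W$ acting on the inaccessible registers (database and key, together with a fresh auxiliary register). $W$ is controlled by the value of the challenge register at time $0$, which is legitimate because we apply it before any query. It maps $\ket{D\cup\{(k,x)\}}\ket{k}$ to $\ket{D}\ket{k,x}_{\mathsf{aux}}$; that is, it moves the challenge entry out of the oracle database into a separate record. After $W$, the algorithm's state is $\frac{1}{\sqrt N}\sum_x\ket{x}\ket z\ket{0^w}$, entangled only with the record $\ket{k,x}$ and with $\ket{D}$ through the coefficients $\beta_D$. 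Since $z$ is fixed and $D$ does not depend on $x$ or $k$, the algorithm's reduced state is independent of $k$. This is exactly the point where classical advice is used: with quantum advice, $\ket{\varphi_D}$ could correlate with rows of $D$. We will check that $W$ is well defined, i.e.\ an isometry, on the addition subspace. The possible non-injectivity (several $(k,x)$ in the same row yielding the same final database) is where we expect to lose at most another $O(\sqrt{\ell/(KN)})$ in amplitude, again controlled by $\sum_D|\beta_D|^2|D|$.

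\textbf{Step 3 (oracle hybrid; main obstacle).} Conjugating by $W$ changes the compressed oracle $\cO$ into $W\cO W^\dagger$. This agrees with the ``ideal'' oracle (compressed oracle on $D$, with the record $(k,x)$ untouched) except on queries to row $k$, where the original oracle could toggle the challenge entry. We run a BBBV-style hybrid over the $T$ queries. The amplitude difference at step $t$ is at most the norm of $Q_{k,\cdot}$ applied to the current state, and because the ideal state is independent of $k$ while $k$ is uniform in the record, this is at most $\sqrt{2/K}$ per query, mirroring the $B_t+C_t\le 1/K$ computation. In the ideal world the algorithm's final state is independent of $k$, so $\sum_k\Pi_k=I$ gives success $1/K$.

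\textbf{Conclusion.} Combining the three amplitude contributions gives
\[
\sqrt{\mathrm{win}}\le \sqrt{1/K}+T\sqrt{2/K}+O\big(\sqrt{\ell/(KN)}\big),
\]
and squaring yields $c\big(\frac{T^2+1}{K}+\frac{\ell}{KN}\big)$ with a constant around $18$. The delicate part is making the conjugated-oracle comparison rigorous: we must verify that $W\cO W^\dagger$ and the ideal oracle differ only by terms supported on row-$k$ queries, including queries that would re-insert or delete the moved entry.
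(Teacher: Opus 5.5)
Your Step~2 has a genuine gap, and it is where the whole difficulty of the lemma lives. As you describe it, $W$ is controlled on the challenge register $\mathsf X$. That control is the only reason $\ket{x}\ket{D\cup\{(k,x)\}}\ket{k}\mapsto\ket{x}\ket{D}\ket{k,x}_{\mathsf{aux}}$ is injective. But $\mathsf X$ belongs to the algorithm, so this is not a change of purification. It alters the algorithm's reduced state: it deletes exactly the interference between histories $(D,x)$ and $(D',y)$ with $D\cup\{(k,x)\}=D'\cup\{(k,y)\}$. It also commutes with neither $U$ nor $\Pi$, so $W(\cO U)^T\neq(W\cO W^\dagger U)^T W$ in general, and your Step~3 hybrid compares the wrong computations. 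Applying $W$ ``before any query'' does not rescue this.

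A sanity check shows the argument proves too much. Nothing in your Steps~2--3 uses that the advice is classical: the $\mathsf X$-controlled map is equally an exact isometry on $\ket{\varphi_D}\ket{x}\ket{D\cup\{(k,x)\}}\ket{k}$, and the resulting state is still independent of $k$. So the argument would give the small-database bound $O(\frac{T^2+1}{K}+\frac{\ell}{KN})$ for quantum advice as well. Via the trace-moment step, that would mean $O(1/N+S/N^2)$ at $T=0$, $K=N$. The symmetric-subspace attack of \cref{prop:symmetric-subspace-attack} beats this: at $S=N^{3/2}$ it succeeds with probability $\Omega(N^{-1/4}/\sqrt{\log N})$.

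If instead $W$ acts only on inaccessible registers, as it must, then one unitary has to send $T_x\ket\beta\ket0$ close to $\ket\beta\ket x$ for all $x$ simultaneously. That requires the addition vectors $M_+\ket x=\sum_{E\not\ni x}\beta_E\ket{E\cup\{x\}}$ to be nearly orthonormal. The non-injectivity you postpone is exactly their failure to be orthonormal. The $O(\sqrt{\ell/(KN)})$ loss you ``expect'' is therefore the entire content of \cref{lem:1ows-classical-controlled-unitary}. Note that the heavy-row/Cauchy--Schwarz treatment of \cref{sec:1OWS-bound-with-initial-database} only bounds this interference by $O(\sqrt{\ell}/K)$ in probability.

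The missing idea is a Gram-matrix identity. For a row-$k$ vector $\ket\beta$ living on the database alone, set $G_\pm:=M_\pm^\dagger M_\pm$, where $M_-$ is the deletion map with $\operatorname{Tr}G_-=\nu$. Then $(G_+)_{xy}=(G_-)_{yx}$ for $x\neq y$ and $(G_+)_{xx}=1-(G_-)_{xx}$. Hence $\|G_+-I\|_1\le 3\nu$, and the polar isometry $V$ of $M_+$ satisfies $\|M_+-V\|_{\mathrm{HS}}^2\le 3\nu$. One then defines $W_\beta$ by $V\ket x\ket0\mapsto\ket\beta\ket x$ (\cref{lem:1ows-classical-single-row}).

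Your Step~3 also needs $[W_k,T_{a,y}]=0$ for $a\neq k$. The paper gets this by doing the construction blockwise after expanding the other rows in the uncompressed basis, where those toggles act as scalars. This is where classicality genuinely enters: with $\ket{\varphi_D}$ present, the same construction yields a unitary that must act on the advice register.

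Once such a $W$ exists, the query hybrid you flag as delicate is routine. $W\cO W^\dagger-\cO$ is supported on row-$k$ queries and has norm at most $2$, giving $2/\sqrt K$ per query. Your target state is exactly the paper's reference state $\ket{\xi_{\mathrm{ref}}}$.
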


With \cref{lem:1ows-l-database-t-query-classical}, now we can bound \cref{eq:1ows-goal-compressO-classical}.
\begin{proof}[Proof of \cref{thm:1ows-ts-tradeoff-classical}]
    Each $\widehat Y_z$ will append the database by size at most $(2T+2)$. Therefore, $\ket{\chi^i}:=\widehat Y_z^i\ket{\emptyset}$ is supported on database of size $\le (2T+2)i$, and for $P_{\le i}$ defined as the projector on database of size $\le i$,
    \begin{align*}
    \|\ket{\chi^i}\|
    &\le 
    \left\|P_{\le (2T+2)i}\widehat Y_z P_{\le (2T+2)(i-1)}\ket{\chi^{i-1}}\right\|
    \\
    &\le
    \left\|P_{\le (2T+2)i}\widehat Y_z P_{\le (2T+2)i}\right\|\cdot \left\|\ket{\chi^{i-1}}\right\|
    \\
    &\le c\cdot \left(\frac{T^2+1}{K}+\frac{(2T+2)i}{KN}\right)\cdot \left\|\ket{\chi^{i-1}}\right\|
\end{align*}

Therefore, $\bra{\emptyset}\widehat Y_z^{2t}\ket{\emptyset}
=\|\ket{\chi^t}\|^2$, and we can upper bound \cref{eq:1ows-goal-compressO-classical} by
\[
\bra{\emptyset}\widehat Y_z^{2t}\ket{\emptyset}
\le \left(c\cdot \frac{(T^2+1)\cdot N+t(2T+2)}{KN}\right)^{2t}.
\]
This will help us upper bound the winning probability in \cref{eq:1ows-goal-probability-classical} as 
\begin{align*}
    \mathbb{E}_R\max_z Y_{R,z}\le 
2^{S/2t}\cdot c\cdot \left(\frac{T^2+1}{K}+\frac{t(2T+2)}{KN}\right).
\end{align*}
By setting $t=S/2$, we have winning probability upper bounded by $O\left(\frac{T^2+1}{K}+\frac{S(T+1)}{KN}\right)$.
\end{proof}

Now we are only left to prove \cref{lem:1ows-l-database-t-query-classical}.

\subsection{Proof of the 1OWS bound for an initial database state}
For normalized state $\ket{\chi}$ only on the database register, we can write it as $\ket{\chi}:=\sum_D \alpha_D\ket{D}$. With our definition of $\widehat Y_z$ in the compressed oracle database mode, we have
\begin{align*}
    \left|\bra{\chi}\widehat Y_z\ket{\chi}\right|
    &=
    \left|\bra{\chi}\mathbb{E}_k C_{k,z}^\dagger (U^\dagger \cO^\dagger)^T \Pi_k (\cO \; U)^T C_{k,z}\ket{\chi}\right|
    \\
    &=\mathbb{E}_k\left\|
        \Pi_k (\cO\; U)^T \left(\frac{1}{\sqrt{N}}\sum_x\ket{x}\ket{z}\ket{0^w}\otimes T_{k,x}\right)\left(\sum_D\alpha_D\ket{D}\right)
    \right\|^2.
\end{align*}
That is, now the goal is to upper bound the winning probability of this search game, starting with $\ket z\ket{0^w}\otimes\ket\chi$ in the database mode before receiving the challenge. Below we can view $\ket{z}$ as $\ket{z,0^w}$, since the upper bound we prove here in \cref{lem:1ows-l-database-t-query-classical} is independent of the length of $z$.

\paragraph{Real and reference states.}
Introduce an additional $N$-dimensional register
$\mathsf B$, initially in a fixed unit vector $\ket0$, and a $K$-dimensional key register
$\mathsf K$ with basis $\{\ket k:k\in[K]\}$. Together with the database register, these three registers are inaccessible to the algorithm. 

Define the following two normalized states, with registers:
\begin{align}
\ket{\xi_{\mathrm{real}}}
&:=\frac1{\sqrt{KN}}\sum_{k,x}
\ket x\ket z\otimes T_{k,x}\ket{\chi}\ket{0}_\mathsf B\ket{k}_\mathsf K,
\label{eq:1ows-classical-real-state}\\
\ket{\xi_{\mathrm{ref}}}
&:=\frac1{\sqrt{KN}}\sum_{k,x}
\ket x\ket z\otimes \ket\chi \ket{x}_\mathsf B\ket{k}_\mathsf K.
\label{eq:1ows-classical-ref-state}
\end{align}
The real state generates the challenge from $k$ (from $\mathsf{K}$), while in the reference state the challenge is maximally mixed and independent of the database (in particular, the reference state is a product with $\frac{1}{\sqrt{K}}\sum_k\ket k_{\mathsf K}$). 

The success projector on all the registers can be defined as $\Pi:=\sum_k\Pi_k\otimes I\otimes\ket k\bra k_{\mathsf K}$, where $I$ acts as the identity on the database register and $\mathsf B$. Thus the probability we want to bound is
\begin{align}
\label{eq:1ows-classical-real-success}
p:=\left|\bra\chi\widehat Y_z\ket\chi\right|
=\norm{\Pi(\cO\; U)^T\ket{\xi_{\mathrm{real}}}}^2.
\end{align}

For comparison, as the reference computation does not touch $\mathsf K$, its success probability is
\begin{align}
\label{eq:1ows-classical-reference-success}
\norm{\Pi(\cO U)^T\ket{\xi_{\mathrm{ref}}}}^2
=\frac1K\sum_k\norm{\Pi_k(\cO\; U)^T
\left(\frac1{\sqrt N}\sum_x\ket x\ket z\ket\chi\ket x\right)}^2
=\frac1K.
\end{align}

\paragraph{A controlled change of purification.}
For a database $D$, define $D_k:=\{x:(k,x)\in D\}$ as the database on $k$, and define
\[
\mu:=\sum_D|\alpha_D|^2|D|\le\ell.
\]
Thus $\mu$ is the expected total number of entries in the initial database. 
\begin{lemma}[Controlled change of purification]
\label{lem:1ows-classical-controlled-unitary}
For every normalized database vector $\ket\chi=\sum_D\alpha_D\ket D$, there exist
unitaries $\{W_k\}_k$ acting on the database register and $\mathsf B$, such that the unitary
\[
W:=\sum_{k\in[K]}W_k\otimes\ket k\bra k_{\mathsf K}
\]
has the following properties:
\begin{enumerate}
\item For every $a\ne k$ and $x\in[N]$, $[W_k,T_{a,x}]=0$, $[W,U]=0$, $[W,\Pi]=0$. Here $[A,B]:=AB-BA$ is the commutator. In particular, conjugating the oracle by $W$ (that is, $W \cO W^\dagger$) can change a query only when its queried key agrees with the key in $\mathsf K$.
\item The initial-state distance satisfies
\begin{align}
\label{ineq:1ows-classical-W-initial-gap}
\norm{W\ket{\xi_{\mathrm{real}}}-\ket{\xi_{\mathrm{ref}}}}
\le(1+\sqrt3)\sqrt{\frac{\mu}{KN}}.
\end{align}
\end{enumerate}
\end{lemma}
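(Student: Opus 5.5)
The plan is to get each $W_k$ from Uhlmann's theorem rather than writing it down explicitly. The commutation requirements in item 1 will then hold because $W_k$ acts only on registers the algorithm never touches. Fix $k$ and split the database register as $\mathrm{db}=\mathrm{db}_k\otimes\mathrm{db}_{\neq k}$, where $\mathrm{db}_k$ holds row $k$ (the set $D_k$) and $\mathrm{db}_{\neq k}$ holds all other rows. I will look for $W_k=I_{\mathrm{db}_{\neq k}}\otimes V_k$ with $V_k$ a unitary on $\mathrm{db}_k\otimes\mathsf B$.

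With this form, item 1 is immediate. $T_{a,x}$ for $a\neq k$ acts only on $\mathrm{db}_{\neq k}$, so it commutes with $W_k$. $U$ and the $\Pi_k$ act only on the algorithm's registers and are block-diagonal in $\mathsf K$, so they commute with $W$. Consequently $W\cO W^\dagger$ differs from $\cO$ only on the branch where the queried key $a$ equals the key stored in $\mathsf K$, because only there does $W_k$ fail to commute with $T_{a,y}$.

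For item 2, fix $k$ and consider the unnormalized key-$k$ slices
\[
\ket{r_k}=\tfrac{1}{\sqrt N}\sum_x\ket x\ket z\,T_{k,x}\ket\chi\ket0_{\mathsf B},
\qquad
\ket{f_k}=\tfrac{1}{\sqrt N}\sum_x\ket x\ket z\ket\chi\ket x_{\mathsf B}.
\]
Since $\norm{W\ket{\xi_{\mathrm{real}}}-\ket{\xi_{\mathrm{ref}}}}^2=\frac1K\sum_k\norm{V_k\ket{r_k}-\ket{f_k}}^2$, it suffices to show
\[
\min_{V_k}\norm{V_k\ket{r_k}-\ket{f_k}}^2\le(1+\sqrt3)^2\,\frac{\mu_k}{N},
\qquad
\mu_k:=\sum_D|\alpha_D|^2|D_k|,
\]
since $\sum_k\mu_k=\mu$. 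By Uhlmann's theorem, taking $\mathrm{db}_k\otimes\mathsf B$ as the purifying system, this minimum equals $2-2F(\rho_k,\sigma_k)$. Here $\rho_k$ and $\sigma_k$ are the reduced states of $\ket{r_k}$ and $\ket{f_k}$ on $\mathsf X\otimes\mathrm{db}_{\neq k}$. One has to check that $\mathsf B$ with $N$ dimensions, together with $\mathrm{db}_k$, is large enough to purify both states; if needed, $\mathsf B$ can be enlarged by a dimension.

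To compute these states, write $\ket\chi=\sum_{E}\ket{E}_{\mathrm{db}_k}\ket{\chi_E}_{\mathrm{db}_{\neq k}}$, grouping by row-$k$ content $E$. Then
\[
\sigma_k=\tfrac{I_{\mathsf X}}{N}\otimes\textstyle\sum_E\ketbra{\chi_E},
\qquad
\rho_k=\tfrac1N\textstyle\sum_{x,x'}\ket x\!\bra{x'}\otimes\sum_{E\oplus\{x\}=E'\oplus\{x'\}}\ket{\chi_E}\!\bra{\chi_{E'}}.
\]
The diagonal $x=x'$ blocks of $\rho_k$ agree exactly with those of $\sigma_k$. The off-diagonal blocks come only from pairs with $E'=E\oplus\{x\}\oplus\{x'\}$. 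When $E=\emptyset$, such a term requires $E'=\{x,x'\}$, so every off-diagonal term needs a nonempty row $k$ on at least one side. Cauchy--Schwarz then bounds the total off-diagonal mass by the amplitude on databases with $|D_k|\ge1$, weighted by how many $(x,x')$ pairs each such database can serve. That number of pairs is what should produce the factor $\mu_k/N$.

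Concretely, I would write $\ket{r_k}=\ket{r_k^{0}}+\ket{r_k^{1}}$, splitting by whether the final row-$k$ content equals $\{x\}$ (the ``fresh'' part, coming from $E=\emptyset$) or not. On the fresh part the explicit swap $\ket{\{x\}}\ket0_{\mathsf B}\mapsto\ket\emptyset\ket x_{\mathsf B}$ maps it onto the corresponding part of $\ket{f_k}$ exactly. The remainder $\ket{r_k^1}$ contains both the terms with $x\in E$ (deletions) and the terms with $x\notin E\neq\emptyset$. I expect this remainder to contribute one $\sqrt{\mu_k/N}$ to the distance, and the mismatch between the ranges of the two pieces (handled via Uhlmann or a Gram-matrix polar decomposition) to contribute the $\sqrt3\sqrt{\mu_k/N}$.

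The main obstacle is getting $\mu_k/N$ rather than just $\mu_k$. The naive estimate of the weight on $E\neq\emptyset$ is only $\sum_{E\ne\emptyset}\norm{\chi_E}^2\le\mu_k$, with no factor $1/N$. So the argument must exploit that, when $E\neq\emptyset$, the vectors $T_{k,x}\ket E$ for $x\notin E$ still form an almost-orthonormal family indexed by $x$, just as the targets $\ket x_{\mathsf B}$ do. Only collisions $E\oplus\{x\}=E'\oplus\{x'\}$ with $E\neq E'$ spoil exact orthonormality, and these are the $1/N$-fraction events. I would first try to define $V_k$ as the polar part of $\sum_{E,x}\ket{E}\ket{x}_{\mathsf B}\bra{E\oplus\{x\}}\bra0_{\mathsf B}$ restricted to $|E|$ small. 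I would then bound its deviation from an isometry by the Gram matrix's distance from the identity, whose off-diagonal entries are supported on these collisions.
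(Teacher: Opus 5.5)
\textbf{There is a genuine gap.} It lies in your structural choice $W_k=I_{\mathrm{db}_{\neq k}}\otimes V_k$. No unitary acting only on $\mathrm{db}_k\otimes\mathsf B$ can achieve item 2, because $\ket\chi$ may correlate row $k$ with the other rows.

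Here is a counterexample. Take $\ket\chi=N^{-1/2}\sum_y\ket{\{(k_0,y),(a_0,y)\}}$ with $a_0\neq k_0$. Then $\mu=2$, and the other-row labels $\ket{g_y}=\ket{\{(a_0,y)\}}$ are orthonormal. Your $V_{k_0}$ touches neither $\mathsf X$ nor $\mathrm{db}_{\neq k_0}$, so
\[
\bra{f_{k_0}}V_{k_0}\ket{r_{k_0}}=\frac{1}{N^2}\sum_{x,y}\bra{\{y\}}\bra{x}_{\mathsf B}\,V_{k_0}\ket{\{x\}\oplus\{y\}}\ket{0}_{\mathsf B}.
\]
For $x\neq y$, the terms $(x,y)$ and $(y,x)$ share the input $\ket{\{x,y\}}\ket0$. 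Their targets $\ket{\{y\}}\ket x$ and $\ket{\{x\}}\ket y$ are orthonormal, so together they contribute at most $\sqrt2$. The overlap is therefore at most $1/\sqrt2+1/N$. This gives $\norm{V_{k_0}\ket{r_{k_0}}-\ket{f_{k_0}}}^2\ge 2-\sqrt2-2/N$ for any $V_{k_0}$ and any dimension of $\mathsf B$. The lemma, by contrast, needs $\sum_k\norm{V_k\ket{r_k}-\ket{f_k}}^2\le 2(1+\sqrt3)^2/N$. Equivalently, the fidelity $F(\rho_{k_0},\sigma_{k_0})$ in your Uhlmann step is about $1/\sqrt2$, not $1-O(\mu_{k_0}/N)$.

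The counting heuristic behind the $1/N$ is also off. An addition $(E,x)$ with $x\notin E$ shares its output $E\cup\{x\}$ with exactly $|E|$ other additions, which is not a $1/N$ fraction. For the same reason, your concrete candidate fails even when $\ket\chi$ is unentangled. That candidate is the polar part of the state-independent map $\sum_{E,x}\ket E\ket x\bra{E\oplus\{x\}}\bra0$. It sends $\ket{E\cup\{x\}}\ket0$ to a uniform combination over the $|E|+1$ ways of splitting off one element. So its overlap with the target is $1/\sqrt{|E|+1}$, which is already $1/\sqrt2$ for $\ket\chi=\ket{\{y_0\}}\otimes\ket g$.

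The $1/N$ can only come from coherence inside a fixed row-$k$ vector $\ket\beta$. The paper's Gram matrix is the $\beta$-dependent $N\times N$ matrix $G_+=M_+^\dagger M_+$, where $M_+\ket x=\sum_{E\not\ni x}\beta_E\ket{E\cup\{x\}}$. Then $G_+-I=G_-^{\mathsf T}-2\operatorname{diag}(G_-)$ has trace norm at most $3\nu$, which is a $\nu/N$ fraction of $I_N$.

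\textbf{The missing idea} is to let $W_k$ be controlled on the other rows, in a basis where every $T_{a,y}$ with $a\neq k$ is diagonal. That basis is the uncompressed basis $\ket s_{-k}$, on which $T_{a,y}\ket s_{-k}=s(a,y)\ket s_{-k}$. Then $W_k=\sum_s\ket s\bra s_{-k}\otimes W_{k,s}$ still satisfies item 1. In each branch, the row-$k$ vector $\ket{\beta_{k,s}}=(\bra s_{-k}\otimes I)\ket\chi$ is a single vector to which the $\beta$-dependent polar construction applies. The per-branch errors $(1+\sqrt3)^2\sum_E|E||\beta_{k,s,E}|^2/N$ sum over $s$ to $(1+\sqrt3)^2\mu_k/N$, since $\sum_s|\beta_{k,s,E}|^2=\sum_{D:D_k=E}|\alpha_D|^2$.

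In the example above, each branch is $\ket{\beta_{k_0,s}}\propto\sum_y s(a_0,y)\ket{\{y\}}$. This is coherent with branch-dependent signs, which is why the controlled construction succeeds where no uncontrolled $V_{k_0}$ can.
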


Proof of \Cref{lem:1ows-classical-controlled-unitary} will be left to \Cref{sec:proof-of-classical-controlled-unitary}. 

From \cref{lem:1ows-classical-controlled-unitary}, now we can prove \cref{lem:1ows-l-database-t-query-classical}.

\begin{proof}[Proof of \cref{lem:1ows-l-database-t-query-classical}]
Let $Q_{a,y}$ be the projector on the algorithm's query registers at address $(a,y)$. Within the compressed oracle framework (with compressed phase oracle),
\[
\cO=\sum_{a\in[K],\,y\in[N]}Q_{a,y}\otimes T_{a,y}.
\]
Define the projectors $\{Q_k\}_k$ on the queried key and the controlled matching projector $Q$ as
\[
Q_k:=\sum_yQ_{k,y},\qquad
Q:=\sum_kQ_k\otimes I\otimes\ket k\bra k_{\mathsf K}.
\]
Then $\sum_kQ_k\preceq I$. Write
\[
\widetilde{\cO}:=W\cO W^\dagger.
\]
It is a unitary. By the first property in \cref{lem:1ows-classical-controlled-unitary}, it agrees with $\cO$
on the complement of $Q$, $\widetilde{\cO}-\cO
=(\widetilde{\cO}-\cO)Q$. Thus for every vector $\ket v$, by triangle inequality,
\begin{align}
\norm{(\widetilde{\cO}-\cO)\ket v}=\norm{(\widetilde{\cO}-\cO)Q\ket{v}}
\le2\norm{Q\ket v}.
\label{ineq:1ows-classical-query-locality}
\end{align}
Also, since $[W,U]=0$, $W(\cO U)^T=(\widetilde{\cO}U)^T W$.

\paragraph{The query hybrid.}
For $0\le i\le T$, define
\begin{align*}
    \ket{\Psi_0}&:=(\cO U)^{T}\ket{\xi_{\mathrm{ref}}},
    \\
    \ket{\Psi_i}&:=(\widetilde{\cO}U)^i(\cO U)^{T-i}\ket{\xi_{\mathrm{ref}}},
    \\
    \ket{\Psi_{T+1}}&:=W(\cO U)^T\ket{\xi_{\mathrm{real}}}.
\end{align*} 
For $0\le i<T$, cancelling the common
unitary suffix and using \cref{ineq:1ows-classical-query-locality}, we have
\begin{align*}
\norm{\ket{\Psi_{i+1}}-\ket{\Psi_i}}
&=\norm{(\widetilde{\cO}-\cO)U(\cO U)^{T-i-1}\ket{\xi_{\mathrm{ref}}}}\\
&\le2\norm{QU(\cO U)^{T-i-1}\ket{\xi_{\mathrm{ref}}}}.
\end{align*}
Since the reference computation never touches $\mathsf K$ before $Q$, on register $\mathsf K$ is still a product with $\frac{1}{\sqrt{K}}\sum_k\ket k$.
Therefore,
\[
\norm{QU(\cO U)^{T-i-1}\ket{\xi_{\mathrm{ref}}}}^2
=\sum_k\norm{(Q_k\otimes \ketbra{k})U(\cO U)^{T-i-1}\ket{\xi_{\mathrm{ref}}}}^2\le\frac1K.
\]
Summing over $\{i:0\le i<T\}$, we have
\begin{align}
\label{ineq:1ows-classical-BBBV-bound}
\norm{\ket{\Psi_T}-\ket{\Psi_0}}
\le\sum_{i=0}^{T-1}\norm{\ket{\Psi_{i+1}}-\ket{\Psi_i}}
\le\frac{2T}{\sqrt K}.
\end{align}
For the last step, \cref{ineq:1ows-classical-W-initial-gap} gives
\begin{align}
\norm{\ket{\Psi_{T+1}}-\ket{\Psi_T}}
&=\norm{W\ket{e_{\mathrm{real}}}-\ket{e_{\mathrm{ref}}}}\notag\\
&\le(1+\sqrt3)\sqrt{\frac{\mu}{KN}}.
\label{ineq:1ows-classical-last-hybrid}
\end{align}

\paragraph{Conclusion.}
Since $W$ commutes with $\Pi$, the real success amplitude is $\sqrt p=\norm{\Pi\ket{\Psi_{T+1}}}$. Therefore, by
\cref{eq:1ows-classical-reference-success,ineq:1ows-classical-BBBV-bound,ineq:1ows-classical-last-hybrid},
\begin{align*}
\sqrt p
&\le\norm{\Pi\ket{\Psi_0}}+\norm{\ket{\Psi_{T+1}}-\ket{\Psi_0}}\\
&\le\frac{2T+1}{\sqrt K}+(1+\sqrt3)\sqrt{\frac{\mu}{KN}}\\
&\le\frac{2T+1}{\sqrt K}+(1+\sqrt3)\sqrt{\frac{\ell}{KN}}\\
&\le \sqrt{18\cdot \left(\frac{T^2+1}{K}+\frac{\ell}{KN}\right)}.
\end{align*}
\end{proof}

\section{Quantum bit commitments in QROM}
\label{sec:commitments}

In this section, we construct a computationally binding and statistically hiding canonical quantum bit commitment scheme in QROM, from the non-uniform security of single-copy one-way states in \Cref{sec:time-advice-tradeoff-search}. 
Applying the conversion of
\cite[Theorem~6.1]{EC:HhaMorYam23}, this will also give a computationally hiding and
statistically binding canonical quantum bit commitment scheme.

\paragraph{Construction.}
Let $R:[K]\times [N]\to\{\pm 1\}$ be the random function, where $K=N^{1-\Omega(1)}$. Let $\mathsf{C}$ and $\mathsf{W}$ be
polynomial-size registers as the commitment register and
the reveal register, respectively. We define polynomial-time
computable unitaries $Q_0^R$ and $Q_1^R$ on $\mathsf{C}\mathsf{W}$ as
\begin{equation}
\begin{aligned}\label{eq:commitment-states}
    Q_0^R\lvert 0\rangle_{\mathsf{C}\mathsf{W}}
        &= \lvert\Psi_0^R\rangle_{\mathsf{C}\mathsf{W}}
         := \frac{1}{\sqrt{K}}\sum_{k\in\cK}\ket{k}_\mathsf{C}\ket{\psi_{R,k}}_\mathsf{W},\\
    Q_1^R\lvert 0\rangle_{\mathsf{C}\mathsf{W}}
        &= \lvert\Psi_1^R\rangle_{\mathsf{C}\mathsf{W}}
         := \frac{1}{\sqrt{K}}\sum_{k\in\cK}\ket{k}_\mathsf{C}\ket{k}_\mathsf{W}.
\end{aligned}
\end{equation}
where $\{\ket{\psi_{R,k}}\}_{k\in\cK}$ are the set of binary phase random states defined from the 1OWS security game \Cref{def:1ows-security-game}. Both $Q_0^R$ and $Q_1^R$ are efficiently computable: for $Q_0^R\ket{0}$, one would only need to prepare $\propto\sum_{k,x}\ket{k,x}$ and then make 1 query to $R$. 

To commit to $b\in\{0,1\}$, the sender prepares
$\lvert\Psi_b^R\rangle_{\mathsf{C}\mathsf{W}}$, sends $\mathsf{C}$ to the receiver, and
keeps $\mathsf{W}$. To open, the sender sends $b$ together with witness register $\mathsf{W}$. The receiver applies $(Q_b^R)^\dagger$ to $\mathsf{C}\mathsf{W}$ and accepts if measuring all qubits
in the computational basis gives the all-zero outcome.
Equivalently, the acceptance projector is
\(\Pi_b^R:= \lvert\Psi_b^R\rangle\langle\Psi_b^R\rvert_{\mathsf{CW}}\).

\begin{theorem}
\label{thm:shcb-commitment-security}
The scheme $(Q_0^R,Q_1^R)$
is a perfectly correct, statistically hiding, and computationally binding canonical quantum bit commitment in QROM. 

Specifically, any adversary breaking binding that starts with an $S$-qubit advice and makes $T$ queries to $O_R$ can win with probability no more than $O\left(\frac{T^2+\sqrt{S(T+1)}}{K}\right)$.
\end{theorem}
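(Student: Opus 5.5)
Correctness is immediate: the receiver applies $(Q_b^R)^\dagger$ to the honest state $Q_b^R\ket{0}$ and always obtains the all-zero outcome. For statistical hiding I will compare the reduced states on $\mathsf C$. For $b=1$, tracing out $\mathsf W$ gives $\rho_1=\frac1K\sum_k\ketbra{k}$. For $b=0$, we get $\rho_0=\frac1K\sum_{k,k'}\langle\psi_{R,k'}|\psi_{R,k}\rangle\ket{k}\bra{k'}$, which is $\frac1K G_R$ with $G_R$ the Gram matrix of the phase states. Hence $\rho_0-\rho_1=\frac1K(G_R-I)$. The matrix $G_R-I$ has off-diagonal entries $\frac1N\sum_x R(k,x)R(k',x)$, each of typical size $1/\sqrt N$. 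I would bound the trace distance via $\|\rho_0-\rho_1\|_1\le \sqrt{K}\,\|\rho_0-\rho_1\|_2$. In expectation over $R$, $\|G_R-I\|_2^2$ is about $K^2/N$, so $\mathbb E_R\|\rho_0-\rho_1\|_1\le \sqrt{K}\cdot\frac1K\cdot\frac{K}{\sqrt N}=\sqrt{K/N}$. This is negligible because $K=N^{1-\Omega(1)}$, and Markov then gives the bound with high probability over $R$.

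For computational binding I will reduce to \cref{thm:1ows-ts-tradeoff}. Suppose a binding adversary with advice $\ket{\tau}$ ($S$ qubits) applies a $T$-query unitary $U^R$ on $\mathsf{W}\mathsf Z$. It wins with amplitude $\epsilon=\|(\Pi_1\otimes I)(I\otimes U^R)(\ket{\Psi_0^R}\ket\tau)\|$, so $\epsilon^2=|\bra{\Psi_1^R}\otimes\bra{\cdot}\ldots|$, which I rewrite as a fidelity. Because $\Pi_1=\ketbra{\Psi_1^R}$ and $\ket{\Psi_1^R}$ is maximally entangled between $\mathsf C$ and $\mathsf W$, projecting onto it is closely related to measuring $\mathsf C$ and $\mathsf W$ in the computational basis and checking that the outcomes agree.

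The 1OWS inverter is then as follows. On challenge $\ket{\psi_{R,k}}$, it places the challenge in $\mathsf W$, applies $U^R$ using advice $\ket\tau$, and measures $\mathsf W$ in the computational basis, outputting the result as $k'$. Its success probability is $\frac1K\sum_k\|(\bra{k}_{\mathsf W}\otimes I)U^R\ket{\psi_{R,k}}\ket\tau\|^2$. Writing $\ket{\Psi_0^R}=\frac1{\sqrt K}\sum_k\ket k_{\mathsf C}\ket{\psi_{R,k}}$ shows that $(\bra{\Psi_1^R}\otimes I)(I\otimes U^R)\ket{\Psi_0^R}\ket\tau=\frac1K\sum_k(\bra k_{\mathsf W}\otimes I)U^R\ket{\psi_{R,k}}\ket\tau$. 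Cauchy--Schwarz over $k$ then gives $\epsilon^2\le \frac1K\sum_k\|(\bra k\otimes I)U^R\ket{\psi_{R,k}}\ket\tau\|^2$, which is exactly the inverter's success probability.

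The inverter uses $S$ qubits of advice and $T$ queries, so \cref{thm:1ows-ts-tradeoff} yields $\epsilon^2\le O\big(\frac{T^2+1+\sqrt{S(T+1)}}{K}\big)$. This is the claimed bound, since the constant term is absorbed for $S,T\ge1$ or stated as in the theorem. The main step to check carefully is this Cauchy--Schwarz identification, in particular that averaging over $R$ commutes correctly with the choice of $\ket\tau$ depending on $R$. Optimizing the advice per $R$ is permitted in \cref{def:1ows-security-game}, so this should be fine.
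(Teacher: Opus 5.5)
Your proposal is correct and matches the paper's proof essentially step for step. Hiding uses $\|A\|_1\le\sqrt{K}\,\|A\|_2$ with $\mathbb{E}_R\|G_R-I\|_2^2=K(K-1)/N$ and Markov. Binding uses the identity $(\bra{\Psi_1^R}\otimes I)(I\otimes U^R)\ket{\Psi_0^R}\ket{\tau}=\frac1K\sum_k(\bra{k}_{\mathsf W}\otimes I)U^R\ket{\psi_{R,k}}\ket{\tau}$ and Cauchy--Schwarz over $k$, which reduces to the 1OWS game of \cref{thm:1ows-ts-tradeoff}. Your remark on the additive $1/K$ term is also right: it is absorbed whenever $S\ge 1$ or $T\ge 1$, a point the paper leaves implicit.
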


\begin{proof}
For correctness, an honest opening is accepted with probability one, since
$(Q_b^R)^\dagger\lvert\Psi^R_b\rangle=\lvert 0\rangle$.

For statistical hiding, let $\rho_b^R:= \operatorname{Tr}_\mathsf{W} \bigl(\lvert\Psi^R_b\rangle\langle\Psi^R_b\rvert\bigr)$ be the receiver's state after the commitment phase. We show that $D(\rho_0^R,\rho_1^R):=\frac12\|\rho_0^R-\rho_1^R\|_1 = \operatorname{negl}(\lambda)$. Note that
\[
\rho_0^R-\rho_1^R=\frac1K\sum_{k\neq\ell}\langle\psi_{R,\ell}|\psi_{R,k}\rangle\,|k\rangle\langle\ell|.
\]
Using $\|A\|_1^2\le K\operatorname{Tr}(A^\dagger A)$ for a $K\times K$ matrix, we obtain
\[
\mathbb E_R \frac14\left\|\rho_0^R-\rho_1^R\right\|_1^2
\le \frac K4\,\mathbb E_R\operatorname{Tr}[(\rho_0^R-\rho_1^R)^2]
=\frac1{4K}\sum_{k\ne\ell}
\mathbb E_R|\langle\psi_{R,\ell}|\psi_{R,k}\rangle|^2
=\frac{K-1}{4N}.
\]
By Markov's inequality, $\Pr_R[D(\rho_0^R,\rho_1^R)>\varepsilon]
\le \frac{K-1}{4N\varepsilon^2}$, and thus for $K=N^{1-\Omega(1)}$, $D(\rho_0^R,\rho_1^R)=\negl(n)$ with $1-\negl(n)$ probability.

For computational binding, for any polynomial-time computable unitary $U^R$ (with oracle access to $R$) acting on $\mathsf{W}\mathsf{Z}$,
where $\mathsf{Z}$ is an arbitrary polynomial-size auxiliary register,
and any auxiliary state $\lvert\tau^R\rangle_\mathsf{Z}$, 
from \Cref{def:commitment-binding}, it suffices to show that
\[
\left\|
        (\langle\Psi_1^R\rvert_{\mathsf{C}\mathsf{W}}\otimes I_\mathsf{Z})
        (I_\mathsf{C}\otimes U^R_{\mathsf{W}\mathsf{Z}})
        (\lvert\Psi_0^R\rangle_{\mathsf{C}\mathsf{W}}\otimes\lvert\tau^R\rangle_\mathsf{Z})
    \right\|
    = \negl(\lambda).
\]
From \cref{eq:commitment-states}, by Cauchy-Schwarz, 
\begin{align*}
    \left\|
        (\langle\Psi_1^R\rvert_{\mathsf{C}\mathsf{W}}\otimes I_\mathsf{Z})
        (I_\mathsf{C}\otimes U^R_{\mathsf{W}\mathsf{Z}})
        (\lvert\Psi_0^R\rangle_{\mathsf{C}\mathsf{W}}\otimes\lvert\tau^R\rangle_\mathsf{Z})
    \right\|^2
    \le 
    \frac{1}{K}\sum_k 
    \left\|
        (\langle k\rvert_{\mathsf{W}}\otimes I_\mathsf{Z})
        U^R_{\mathsf{W}\mathsf{Z}}
        (\lvert\psi_{R,k}\rangle_{\mathsf{W}}\otimes\lvert\tau^R\rangle_\mathsf{Z})
    \right\|^2.
\end{align*}
If we write $U^R$ in terms of an oracle query algorithm with $T$ queries to $O_R$, this is exactly the non-uniform security of the 1OWS game defined in \Cref{def:1ows-security-game}. Therefore, from \Cref{thm:1ows-ts-tradeoff}, for any $T$-query algorithm in QROM, with auxiliary register $\ket{\tau^R}_{\mathsf Z}$ of $S$ qubits, this is upper bounded by $O\left(\frac{T^2+\sqrt{S(T+1)}}{K}\right)$.
\end{proof}

Similarly from \Cref{thm:1ows-ts-tradeoff-classical}, we can bound the security when the adversary's auxiliary register $\ket{\tau^R}$ is restricted to a classical string. 
\begin{theorem}[Binding security with classical auxiliary input]
\label{thm:shcb-commitment-security-classical-advice}
The scheme $(Q_0^R,Q_1^R)$
is a perfectly correct, statistically hiding, and computationally binding canonical quantum bit commitment in QROM. Any adversary breaking binding that starts with an $S$-bit advice and makes $T$ queries to $O_R$ can win with probability no more than $O\left(\frac{T^2}{K}+\frac{S(T+1)}{KN}\right)$.
\end{theorem}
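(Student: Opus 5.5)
The plan is to reuse the proof of \Cref{thm:shcb-commitment-security} verbatim, changing only the final step so that it invokes \Cref{thm:1ows-ts-tradeoff-classical} instead of \Cref{thm:1ows-ts-tradeoff}.

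Correctness and statistical hiding do not involve the adversary, so they carry over unchanged. Correctness holds because $(Q_b^R)^\dagger\ket{\Psi_b^R}=\ket0$. For hiding, the bound $\mathbb E_R\frac14\|\rho_0^R-\rho_1^R\|_1^2\le\frac{K-1}{4N}$, combined with Markov's inequality and $K=N^{1-\Omega(1)}$, again shows that the commitment is statistically hiding with probability $1-\negl$ over $R$.

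For binding, I will take a cheating sender whose auxiliary state is a classical string, $\ket{\tau^R}_{\mathsf Z}=\ket{z_R}$ with $z_R\in\{0,1\}^S$. Its opening unitary $U^R_{\mathsf W\mathsf Z}$ makes $T$ queries to $O_R$, and the workspace inside $\mathsf Z$ is initialized to $\ket{0^w}$. The Cauchy--Schwarz step from the previous proof bounds the squared binding amplitude by
\[
\frac1K\sum_k\bigl\|(\bra k_{\mathsf W}\otimes I_{\mathsf Z})U^R(\ket{\psi_{R,k}}\ket{z_R})\bigr\|^2 .
\]
This is exactly the success probability of the following 1OWS adversary: on input the challenge $\ket{\psi_{R,k}}$ and classical advice $z_R$, it runs $U^R$ and measures $\mathsf W$ in the computational basis to obtain its guess. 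The measurement $\{\ket k\bra k_{\mathsf W}\otimes I\}_k$ is a valid projective measurement, completed by an arbitrary failure outcome if $\dim\mathsf W>K$.

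The only point needing care is that the reduction preserves the classical-advice model of \Cref{sec:time-advice-tradeoff-search-classical}. There, the advice enters as a fixed basis state $\ket z$ alongside a fresh $\ket{0^w}$ workspace, which is exactly our form. Taking $\mathbb E_R$ and applying \Cref{thm:1ows-ts-tradeoff-classical} gives the bound
\[
O\!\left(\frac{T^2+1}{K}+\frac{S(T+1)}{KN}\right).
\]
The additive $1/K$ is absorbed into the stated bound: it is dominated by $T^2/K$ when $T\ge1$, and is negligible for $K$ exponential in $\lambda$. If one wants the literal form, the stated $O(\cdot)$ should be read with $T^2$ replaced by $T^2+1$. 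Since $K$ is exponential while $S$ and $T$ are polynomial, the winning probability is $\negl(\lambda)$, which gives computational binding.
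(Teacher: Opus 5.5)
Your proposal matches the paper's argument, which is just the binding proof of \Cref{thm:shcb-commitment-security} with \Cref{thm:1ows-ts-tradeoff-classical} substituted in place of \Cref{thm:1ows-ts-tradeoff}. Your caveat about the additive $1/K$ is also correct. This route only yields $O\bigl(\frac{T^2+1}{K}+\frac{S(T+1)}{KN}\bigr)$. The paper's statement drops the $+1$ without comment, and that term is not absorbed by the others when $T=0$ and $S<N$.
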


\paragraph{Obtaining computational hiding.}
The flavor-conversion theorem \cite[Theorem~6.1]{EC:HhaMorYam23} converts a computationally binding and statistically hiding canonical quantum bit commitment scheme into a computationally hiding and statistically binding canonical quantum bit commitment scheme, where the computational security goes from $\epsilon$ to $\sqrt{\epsilon}$. Hence we obtain the following consequence.

\begin{corollary}
\label{cor:statistically-binding-commitment}
There exists a computationally hiding and statistically binding canonical quantum bit commitment scheme. 

Specifically, any adversary breaking hiding that starts with an $S$-qubit advice and makes $T$ queries to $O_R$ can win with advantage no more than $O\left(\frac{T+(S(T+1))^{1/4}}{\sqrt{K}}\right)$. If the advice is restricted to an $S$-bit classical string, the advantage over $1/2$ is bounded by $O\left(\frac{T}{\sqrt{K}}+\sqrt{\frac{S(T+1)}{KN}}\right)$.
\end{corollary}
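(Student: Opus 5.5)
We prove the corollary by composing \Cref{thm:shcb-commitment-security} with the flavor-conversion theorem \cite[Theorem~6.1]{EC:HhaMorYam23}. Start from the scheme $(Q_0^R,Q_1^R)$ of \cref{eq:commitment-states}. By \Cref{thm:shcb-commitment-security} it is perfectly correct and statistically hiding. It is also computationally binding: every $T$-query adversary with $S$-qubit auxiliary state breaks binding with probability at most $\epsilon_{\mathrm{bind}}=O\bigl(\frac{T^2+\sqrt{S(T+1)}}{K}\bigr)$.

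The converted scheme swaps the roles of commitment and reveal registers and appends a bit register. Concretely, it uses $Q'_b := (Q_0\otimes X^b)$-style unitaries built from $Q_0^R,Q_1^R$, exactly as in \cite{EC:HhaMorYam23}. Each $Q'_b$ uses one query to $R$, so it is still efficient in the QROM.

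\textbf{Statistical binding of the new scheme.} This follows from the statistical hiding of the original scheme. Since $\frac12\|\rho_0^R-\rho_1^R\|_1$ is negligible with overwhelming probability over $R$, the binding error of the converted scheme is negligible. I expect the error to be at most $O(\sqrt{\text{trace distance}})$ by Uhlmann's theorem, which the conversion theorem supplies.

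\textbf{Computational hiding of the new scheme.} The conversion theorem states that a hiding distinguisher with advantage $\delta$ yields a binding attack on the original scheme with success $\Omega(\delta^2)$. I would check two points here.
\begin{itemize}
\item The reduction preserves resources up to constants. It must make $O(T)$ queries to $R$, since each invocation of the distinguisher unitary or its inverse costs $T$ queries and the $Q_b$ add $O(1)$. It must keep the distinguisher's auxiliary state, so the advice stays $S$ qubits, plus at most $O(1)$ extra qubits in a fixed state.
\item The reduction works relative to an oracle and with advice depending on $R$. The reduction is black-box in the distinguisher's unitary, so it relativizes.
\end{itemize}

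Given both points, $\delta^2 \le O(\epsilon_{\mathrm{bind}}(S,O(T)))$, and hence
\[
\delta \le O\!\left(\sqrt{\frac{T^2+\sqrt{S(T+1)}}{K}}\right)\le O\!\left(\frac{T+(S(T+1))^{1/4}}{\sqrt K}\right),
\]
using $\sqrt{a+b}\le\sqrt a+\sqrt b$.

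\textbf{Classical advice.} If the advice is classical, the reduction's advice is also classical, because the distinguisher's advice string is passed through unchanged. So \Cref{thm:shcb-commitment-security-classical-advice} applies, with its bound $O\bigl(\frac{T^2}{K}+\frac{S(T+1)}{KN}\bigr)$ (the $1/K$ term being dominated or absorbed). Taking square roots gives $O\bigl(\frac{T}{\sqrt K}+\sqrt{\frac{S(T+1)}{KN}}\bigr)$.

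The main obstacle is confirming that \cite[Theorem~6.1]{EC:HhaMorYam23} is stated quantitatively, with $\epsilon\mapsto\sqrt{\epsilon}$ and only constant-factor overhead in queries and auxiliary size. If only the asymptotic version is stated, I would re-derive the reduction explicitly to track these parameters. One further subtlety is that the hiding statement is averaged over $R$ while statistical hiding holds only with high probability over $R$. This is handled by absorbing the failure probability, which is negligible for $K=N^{1-\Omega(1)}$.
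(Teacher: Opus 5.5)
Your proposal matches the paper's proof: apply the flavor conversion of \cite[Theorem~6.1]{EC:HhaMorYam23}, which maps computational security $\epsilon$ to $\sqrt{\epsilon}$, to the scheme of \Cref{thm:shcb-commitment-security}, and substitute the quantum- and classical-advice binding bounds of \Cref{thm:shcb-commitment-security,thm:shcb-commitment-security-classical-advice}; your resource bookkeeping (the swap-from-distinguish reduction runs the distinguisher forward and backward, costing $2T$ queries and one extra fixed qubit, and it passes classical advice through unchanged) is exactly what the paper leaves implicit. One small correction: the converted scheme is not of the form $Q_0\otimes X^b$ but $Q'_b\ket{0}=\frac{1}{\sqrt2}\bigl(\ket{0}_{\mathsf D}\ket{\Psi_0^R}+(-1)^b\ket{1}_{\mathsf D}\ket{\Psi_1^R}\bigr)$, with $\mathsf W\mathsf D$ sent to the receiver and $\mathsf C$ kept for opening, though this does not affect your argument since you only use the theorem's guarantees.
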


\begin{proof}
Apply \cite[Theorem~6.1]{EC:HhaMorYam23} to the scheme of
\Cref{thm:shcb-commitment-security}. Computational binding of the original scheme gives computational hiding of the converted
scheme, while statistical hiding of the original scheme gives statistical binding of the converted scheme. Then the security follows from \Cref{thm:shcb-commitment-security} and \Cref{thm:shcb-commitment-security-classical-advice}.
\end{proof}

\section{Non-uniform security of PRGs in QROM}

In this section, we prove a tighter non-uniform security of pseudorandom generators in the quantum random oracle model via the trace moment method. Specifically, we show that any $T$-query quantum algorithm with $S$-qubit advice can distinguish between a random image and a random element in the range, with an winning probability at most $\frac12+O\left(\frac{T^2}{N}+\sqrt{\frac{S(T+1)}{N}}\right)$, which improves the bound in \cite{EC:Liu23}.
In particular, the current best classical result (distinguishing advantage bounded by $O(T/N+\sqrt{S(T+1)/N})$ for classical queries and classical advice, from \cite{EC:CDGS18}), it recovers the same dependence $\sqrt{S(T+1)/N}$, while allowing both quantum advice and quantum oracle access. 

\begin{definition}[PRG security game]\label{def:prg-security-game}
In the non-uniform security game of PRG:
\begin{itemize}
    \item In the offline stage, a hash function $R:[N]\to[M]$ is sampled and fixed, and an algorithm is allowed to have an $S$-qubit advice state that depends arbitrarily on $R$ before receiving the challenge.
    \item Then, in the online stage, a challenger will sample a challenge bit $b\gets \{0,1\}$.
    \begin{itemize}
        \item If $b=0$, the challenger samples $x\gets [N]$ and sends $R(x)$ to the algorithm.
        \item If $b=1$, the challenger samples $y\gets[M]$ and sends $y$ to the algorithm.
    \end{itemize}
    \item The algorithm, after receiving the challenge, is allowed to make $T$ queries to $R$ and is asked to output a guess $b'$. It succeeds if $b'=b$. 
\end{itemize}
We consider security game in QROM, for $R$ as a random oracle, and the algorithm can have oracle access to $O_R$. The non-uniform security (in terms of $S$-qubit, $T$-query) of PRG in QROM is defined as the maximum distinguishing advantage, parametrized by $(S,T,N,M)$.
\end{definition}

\begin{theorem}[Non-uniform security of PRG in QROM]\label{thm:prg-ts-tradeoff}
The non-uniform security (with $S$-qubit advice and $T$-query) of PRG in QROM is $O\left(\frac{T^2}{N}+\sqrt{\frac{S(T+1)}{N}}\right)$; that is, the maximum distinguishing advantage is bounded by $O\left(\frac{T^2}{N}+\sqrt{\frac{S(T+1)}{N}}\right)$.
\end{theorem}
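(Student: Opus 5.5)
We follow the trace-moment template of \Cref{sec:time-advice-tradeoff-search}, now applied to the centered matrix from the technical overview,
\[
Y_R:=\mathbb E_{x\in[N]}\bra{R(x)}\bra{0^w}(U^\dagger O_R^\dagger)^T\Pi(O_RU)^T\ket{R(x)}\ket{0^w}-\mathbb E_{r\in[M]}\bra{r}\bra{0^w}(U^\dagger O_R^\dagger)^T\Pi(O_RU)^T\ket{r}\ket{0^w},
\]
which is Hermitian on the $S$-qubit advice register. For advice $\ket\varphi$, the distinguishing advantage is $\tfrac12|\bra\varphi Y_R\ket\varphi|$. Maximizing over advice gives at most $\mathbb E_R\|Y_R\|\le(\mathbb E_R\operatorname{Tr}Y_R^{2t})^{1/2t}$. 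Both terms of $Y_R$ have degree at most $2T+2$ in $R$: the challenge $\ket{R(x)}$ contributes one query on each side, and the ideal term only $2T$. So passing to the compressed standard oracle, we obtain $\widehat Y$, a difference of two compressed operators, each enlarging the database by at most $2T+2$. We also have $\mathbb E_R\operatorname{Tr}Y_R^{2t}=\sum_z\|\widehat Y^t\ket{z,\emptyset}\|^2$. The same induction as in the proof of \Cref{thm:1ows-ts-tradeoff}, with $t=\max\{S/2,1\}$, then reduces the theorem to a small-database bound: for normalized $\ket\phi$ supported on databases of size at most $\ell$,
\[
|\bra\phi\widehat Y\ket\phi|\le O\!\left(\frac{T^2}{N}+\sqrt{\frac{\ell}{N}}\right).
\]
One technical point is that $\widehat Y$ is not PSD. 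We therefore use the operator norm of $P_{\le L}\widehat YP_{\le L}$, which is Hermitian, so its norm equals the supremum of $|\bra\phi\cdot\ket\phi|$. That makes the quadratic-form bound sufficient for the induction. Substituting $\ell=(2T+2)t=O(S(T+1))$ then gives the claimed bound.

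For the small-database bound, we interpret $\bra\phi\widehat Y\ket\phi$ operationally. It is the acceptance probability of a real game minus that of an ideal game. In the real game, the adversary starts from $\ket\phi$ (advice entangled with a database), then a purified seed $x$ is chosen, one compressed query writes $R(x)$ into the challenge register, and the adversary makes $T$ compressed queries. In the ideal game, the challenge is a fresh uniform $r$. We insert a \emph{resampling hybrid}: sample an independent $r$, reprogram position $x$ of the oracle to $r$ (in compressed language, set the database entry at $x$ to $r$, overwriting any existing entry), and hand $r$ as the challenge. We then compare the two games with this hybrid separately.

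\textbf{Real vs.\ resampled.} If $x\notin\mathrm{dom}(D)$, a fresh compressed query at $x$ produces exactly a uniformly random value recorded in the database. This coincides with the resampled hybrid. The two differ only on the part of the state where $x$ already lies in $D$. Since $|D|\le\ell$ and $x$ is uniform, this part has squared weight at most $\ell/N$, so the acceptance probabilities differ by $O(\sqrt{\ell/N})$. Here, as with $\mu$ in \Cref{lem:1ows-classical-controlled-unitary}, a change-of-purification unitary acting only on inaccessible registers makes the comparison clean. Note that the loss is linear in the amplitude, which is where the $\sqrt{\ell/N}$ rather than $\ell/N$ comes from.

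\textbf{Resampled vs.\ ideal.} In the hybrid, the challenge $r$ is independent of everything the adversary holds. The only difference from the ideal game is that the oracle at a hidden uniform $x$ has been replaced by $r$. This is reprogramming at a uniformly random position unknown to the adversary, so a BBBV/one-way-to-hiding style argument applies. Summed over queries, the amplitude on $x$ is at most $T/\sqrt N$. The main obstacle is getting $O(T^2/N)$ rather than $O(T/\sqrt N)$. The first approach to try is to note that \emph{both} games are averaged over $x$ and share the same marginal, so the first-order difference cancels in expectation over $x$, leaving only the second-order (squared-amplitude) term. Concretely, we would run the hybrid in the purified-$x$ picture and bound the difference of acceptance probabilities by $\|\Pi\ket{a}\|^2$-type terms, in the style of the $B_t+C_t$ estimate in \Cref{sec:1OWS-bound-with-initial-database}, which sum to $O(T^2/N)$. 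We must also handle the initial database, since reprogramming may collide with entries of $D$ that the adversary can otherwise be correlated with. Combining the two comparisons yields $O(T^2/N+\sqrt{\ell/N})$.
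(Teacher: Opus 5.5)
Your architecture matches the paper's proof: the centered $Y_R$, trace moments of order $t\approx S/2$, database growth $2T+2$ per application of $\widehat Y$, reduction to an $O(T^2/N+\sqrt{\ell/N})$ small-database bound, and the resampling hybrid $R^{x\to r}$. Your real-vs-resampled step is also essentially the paper's. The paper, however, needs no purification unitary or ``overwrite'' map; note that overwriting is not an isometry on the database. It writes $\Gamma_R=\frac1N\sum_x\Gamma_R(x)$ and observes $\mathbb E_{R(x)}\Gamma_R(x)=0$. This gives $(I-Q_x)\widehat\Gamma_x(I-Q_x)=0$ and $\|\widehat\Gamma_x\|\le 2$, so $|\bra{\psi}\widehat\Gamma_x\ket{\psi}|\le 4\|Q_x\ket{\psi}\|$, and Cauchy--Schwarz over $x$ gives $4\sqrt{\ell/N}$.

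The genuine gap is in resampled-vs-ideal, which is exactly where $T^2/N$ (rather than $T/\sqrt N$) comes from. The first-order term does \emph{not} cancel in expectation over $x$; it is only small. A $B_t+C_t$-style squared-amplitude estimate controls only the second-order term. Fix $r$, and let $\ket{v_i^x}$ and $\ket{v_i}$ be the states before the $(i+1)$-th query under $\cO_x$ and under $\cO$, respectively; here $\cO_x$ answers $r$ at $x$ without touching the database. Set $\Delta_i^x:=\ket{v_i^x}-\ket{v_i}$ and $\bar\Delta_i:=\mathbb E_x\Delta_i^x$. Then
\[
\Bigl|\mathbb E_x\bra{v_T^x}\Pi\ket{v_T^x}-\bra{v_T}\Pi\ket{v_T}\Bigr|\le 2\norm{\bar\Delta_T}+\mathbb E_x\norm{\Delta_T^x}^2 ,
\]
and BBBV gives $\mathbb E_x\norm{\Delta_T^x}^2\le 4T^2/N$.

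What you are missing is the bound $\norm{\bar\Delta_T}=O(T^2/N)$ on the \emph{averaged} difference vector. The key point is that $\cO_x-\cO$ is supported on the query-address-$x$ subspace, and these subspaces are orthogonal for distinct $x$. Hence $\|\mathbb E_x(\cO_x-\cO)\ket{w}\|\le 2\|\ket{w}\|/N$, and $\|\mathbb E_x(\cO_x-\cO)\ket{w_x}\|\le \frac{2}{\sqrt N}(\mathbb E_x\|\ket{w_x}\|^2)^{1/2}$ for $x$-dependent vectors. Feed these into the recursion $\bar\Delta_{i+1}=\cO\bar\Delta_i+\mathbb E_x(\cO_x-\cO)\Delta_i^x+\mathbb E_x(\cO_x-\cO)\ket{v_i}$ (common $U$ suppressed). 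This gives $\norm{\bar\Delta_{i+1}}\le\norm{\bar\Delta_i}+(4i+2)/N$, hence $\norm{\bar\Delta_T}\le 2T^2/N$ and the overall $8T^2/N$ bound.

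Without this step your plan yields only $O(T/\sqrt N+\sqrt{\ell/N})$, which is the weaker bound recorded in the paper's appendix. Also, your worry about collisions with the initial database is unnecessary in this comparison: the estimate holds for every starting state $\ket{r}\ket{\psi}$.
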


\paragraph{Proof overview.} 
Via the trace-moment method, we first express the distinguishing advantage as the expected norm of a Hermitian operator on the advice register and bound it by a trace moment (\Cref{sec:prg-trace-moment}). Then, viewing everything in the compressed oracle database mode, the main step is to bound the corresponding distinguishing gap when the algorithm starts with a state with compressed databases of size at most $\ell$ (\Cref{sec:prg-trace-moment-on-small-database}).

To prove this bound (\Cref{lem:prg-l-database-t-query}), we introduce a resampling hybrid: instead of generating the challenge as $R(x)$, we sample an independent challenge $r$ and
replace the oracle value at the hidden seed $x$ by $r$. 
\begin{itemize}
    \item Compared with the real experiment, one can tell the difference only in the case where the initial database contains $x$. Averaging over the hidden seed bounds this contribution by $O(\sqrt{\ell/N})$ for database of size $\le \ell$.
    \item Compared with an ideal experiment, where the challenge is independent of the oracle, a modified BBBV-style \cite{BBBV97} hybrid argument will give a $O(T^2/N)$ bound.
\end{itemize}
Finally, a trace moment of order $S$ gives databases of size about $O(ST)$, and this gives our bound.

\subsection{PRG trace-moment setup}
\label{sec:prg-trace-moment}
For any quantum oracle algorithm in the distinguishing game, denote its intermediate unitary between queries as $U$, final distinguishing measurement as $0\preceq\Pi\preceq I$, and suppose on hash function $R$ its advice is $\ket{\phi_R}$. Define $A_R=(O_RU)^T$, we can write its maximum distinguishing advantage as the following (with algorithm's workspace of $w$ qubits initialized as $\ket{0^w}_\mathsf W$):
\begin{align*}
    \mathbb{E}_R \max_{\phi_R}
    \left|\mathbb{E}_x
    \bra{\phi_R,0^w}\bra{R(x)}A_R^\dagger \Pi A_R\ket{R(x)} 
    \ket{\phi_R,0^w}
    - 
    \mathbb{E}_r\bra{\phi_R,0^w}
    \bra{r}A_R^\dagger \Pi A_R\ket{r} 
    \ket{\phi_R,0^w}
    \right|.
\end{align*}
From the definition of the operator norm, this is exactly equal to
\begin{align}\label{eq:prg-goal-advantage}
    \mathbb{E}_R \left\|Y_R\right\|:=
\mathbb{E}_R\left\|
    \mathbb{E}_x
    \bra{R(x),0^w}A_R^\dagger \Pi A_R\ket{R(x),0^w}
    - 
    \mathbb{E}_r
    \bra{r,0^w}A_R^\dagger \Pi A_R\ket{r,0^w}
    \right\|.
\end{align}
Here we define $Y_{R}$ as the matrix inside the norm. Now our goal is to bound $\mathbb{E}_{R}\|Y_{R}\|$.

Via the trace moment method, by Jensen's inequality, we have
\begin{align*}
    \mathbb{E}_{R}\|Y_{R}\|
    \le \left(\mathbb{E}_{R}\|Y_{R}\|^{2t}\right)^{1/2t}\le \left(\mathbb{E}_{R}\Tr(Y_{R}^{2t})\right)^{1/2t}.
\end{align*}
Therefore, now our goal is to upper bound
\begin{align}\label{eq:prg-goal-trace-moment}
    \mathbb{E}_{R} \Tr(Y_{R}^{2t})=\sum_{z\in [2^S]} \mathbb{E}_R\bra{z}Y_R^{2t}\ket{z}.
\end{align}

\paragraph{Rewrite in compressed oracle framework.} For random function $R$, we can write all things above in the compressed oracle framework. 

Note that in the un-compressed world, the isometry that writes $R(x)$ into a fresh challenge register can be viewed as   
\[
\ket{R}\mapsto 
\ket{R(x)}\otimes\ket{R}=
\frac{1}{\sqrt{M}}\sum_{y} H\ket{y}\otimes (-1)^{y\cdot R(x)}\ket{R}.
\]
Define $T_{x,y}$ on the compressed oracle register that flips the presence of the
database entry $(x,y)$, that is, $T_{x,y}\ket{D}=\ket{D\oplus \{(x,y)\}}$.
Now the isometry that generates the fresh challenge using
one compressed query can be viewed as
\[
\ket{D}\mapsto  \left(\frac{1}{\sqrt{M}}\sum_yH\ket{y}\ket{0^w}\otimes T_{x,y}\right) \ket{D}.
\]
Therefore, by having $\widehat A:= (\cO\cdot U)^T$, we can define $\widehat Y$ in the compressed oracle database mode as
\begin{align*}
    \widehat Y
    =\frac1M \cdot\mathbb{E}_x \sum_{y_1,y_2}\bra{y_1,0^w}H \widehat A^\dagger \Pi \widehat A H\ket{y_2,0^w}\otimes T_{x,(y_1\oplus y_2)}
    -
    \mathbb{E}_r
    \bra{r,0^w}\widehat A^\dagger \Pi \widehat A\ket{r,0^w}\otimes I_D
\end{align*}
with compressed phase oracle $\cO$. This will help us rewrite \cref{eq:prg-goal-trace-moment} as
\begin{align}\label{eq:prg-goal-compressO}
    \mathbb{E}_R\Tr(Y_R^{2t})=\sum_{z\in [2^S]}\bra{z}\bra{\emptyset}\widehat Y^{2t}\ket{z}\ket{\emptyset}.
\end{align}

\subsection{Bounding the trace moment on small databases}
\label{sec:prg-trace-moment-on-small-database}

\begin{lemma}\label{lem:prg-l-database-t-query}
For $\widehat Y$ defined from the PRG security game, if normalized state $\ket{\psi}$ (with both the algorithm's register and the database register) is supported on databases of size $\le \ell$, then
\[
|\bra\psi \widehat Y\ket\psi|\le 
8\cdot \frac{T^2}{N}+4\cdot \sqrt{\frac{\ell}{N}}=O\left(\frac{T^2}{N}+\sqrt{\frac{\ell}{N}}\right)
\]
\end{lemma}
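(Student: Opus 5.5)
We first reformulate $\bra\psi\widehat Y\ket\psi$ operationally, as was done for \cref{lem:1ows-l-database-t-query-reform}. Write $\ket\psi=\sum_D\alpha_D\ket{\varphi_D}\ket D$ and purify the seed $x$ in a register $\mathsf X$. The first term of $\widehat Y$ is then the acceptance probability $p_{\mathrm{real}}$ of the following experiment: start from $\ket\psi$, prepare the challenge $R(x)$ by one compressed query at the hidden seed $x$ (the isometry $\frac1{\sqrt M}\sum_y H\ket y\otimes T_{x,y}$), run $\widehat A=(\cO U)^T$, and measure $\Pi$. The second term is the acceptance probability $p_{\mathrm{ideal}}$ when the challenge is a fresh uniform $r$, independent of the database. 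The claim is therefore $|p_{\mathrm{real}}-p_{\mathrm{ideal}}|\le 8T^2/N+4\sqrt{\ell/N}$. We bound this through an intermediate \emph{resampling hybrid} $p_{\mathrm{hyb}}$: sample an independent $r$, hand $r$ to the algorithm, and reprogram the oracle value at the hidden seed $x$ to $r$. In compressed-oracle terms, we replace the row-$x$ entry of the database by a fresh one carrying $r$.

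\textbf{Step 1 (real vs.\ hybrid).} If the initial database has no entry at $x$, the real and hybrid experiments produce identical joint states. In the real experiment the challenge query creates a fresh uniform entry at $x$ whose value equals the challenge. In the hybrid we create exactly such an entry with value $r$. To make this an exact equality, we use a controlled change of purification analogous to \cref{lem:1ows-classical-controlled-unitary}, acting only on the database, $\mathsf X$, and an auxiliary register holding $r$. The two states therefore differ only on the component $\sum_D\alpha_D\ket{\varphi_D}\ket D$ restricted to $x\in D$. Averaging over the uniform seed $x$, this component has squared norm $\sum_D|\alpha_D|^2|D|/N\le\ell/N$. Both experiments apply the same unitary $\widehat A$ afterwards, so the trace-distance bound gives
\[
|p_{\mathrm{real}}-p_{\mathrm{hyb}}|\le 2\sqrt{\ell/N}+\ell/N\le 4\sqrt{\ell/N}
\]
for $\ell\le N$; for $\ell>N$ the bound is trivial.

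\textbf{Step 2 (hybrid vs.\ ideal).} The hybrid and ideal experiments differ only in whether the oracle at the hidden uniform point $x$ is reprogrammed to the challenge value $r$. We run a BBBV-style query hybrid as in the proof of \cref{lem:1ows-l-database-t-query-classical}. Conjugating the oracle by the controlled swap changes a query only when the queried input equals the hidden $x$, so
\[
\|\widetilde\cO-\cO\|\le 2\|Q\cdot\|,
\]
where $Q$ projects onto ``query equals $x$''. In the ideal experiment $x$ is independent of everything the algorithm sees, so each query's weight on $x$ averages to at most $1/N$.

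A naive amplitude argument only gives $|p_{\mathrm{hyb}}-p_{\mathrm{ideal}}|\le O(T/\sqrt N)$. Getting $O(T^2/N)$ is the step we expect to be the main obstacle. Our plan is to note that $p_{\mathrm{hyb}}-p_{\mathrm{ideal}}$ is a difference of probabilities averaged over a uniformly random reprogrammed point. We then apply a one-point reprogramming lemma, in the spirit of the adaptive reprogramming bounds of Grilo--Hövelmanns--Hülsing--Majenz. This lemma bounds the change in acceptance probability by $O\big(\sum_i\sqrt{q_i}\big)^2$, where $q_i$ is the query weight on $x$ before the $i$-th query. Concretely, we write both final states as $\ket{\Psi_0}$ plus a correction $\ket{\Delta}$ with $\|\ket\Delta\|\le 2\sum_i\|Q\ket{\cdot}_i\|$. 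We expand $\|\Pi(\ket{\Psi_0}+\ket\Delta)\|^2-\|\Pi\ket{\Psi_0}\|^2$ and handle the cross term by symmetrising over the challenge value. Reprogramming $x$ to a fresh uniform $r$, while the ideal experiment leaves a uniformly random value there, makes the first-order term vanish in expectation. That leaves $\|\ket\Delta\|^2$, and averaging over $x$ with $\sum_x q_i(x)\le 1$ gives $\mathbb E_x(\sum_i\sqrt{q_i})^2\le T\sum_i\mathbb E_x q_i\le T^2/N$. Tracking constants yields $8T^2/N$.

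\textbf{Conclusion.} By the triangle inequality, the two steps combine to $|\bra\psi\widehat Y\ket\psi|\le 8T^2/N+4\sqrt{\ell/N}$.
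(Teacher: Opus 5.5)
Your Step 1 is fine. It is the content of the paper's bound on the $\widehat\Gamma$ part, which the paper phrases at the operator level via $(I-Q_x)\widehat\Gamma_x(I-Q_x)=0$, $\|\widehat\Gamma_x\|\le 2$ and Cauchy--Schwarz over $x$. Your second-order estimate $\mathbb{E}_x\|\ket{v_T^x}-\ket{v_T}\|^2=O(T^2/N)$ in Step 2 is also fine.

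The gap is the cross term $2\Re\bra{\Psi_0}\Pi\ket{\Delta}$, which is exactly the step you flag, and neither of your justifications for it is valid. Take $T=1$, an empty initial database, $m\ge 2$, and fixed $x_0\neq x_1$. On challenge $r$, the adversary picks $z=z(r)\neq 0$ with $z\cdot r=0$, queries $\sqrt{q}\,\ket{x_0}H\ket{z}+\sqrt{1-q}\,\ket{x_1}H\ket{0}$, and projects onto $\tfrac{1}{\sqrt2}\bigl(\ket{x_0}H\ket{z}+\ket{x_1}H\ket{0}\bigr)$. If the reprogrammed point is $x_0$, it accepts with probability $\tfrac12+\sqrt{q(1-q)}$ instead of $\tfrac12$. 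The cross term alone equals $\sqrt{q(1-q)}-q$, for every $r$, and reprogramming any other point changes nothing. This example refutes both of your claims:
\begin{itemize}
\item For a fixed reprogrammed point the change is first order in $\sqrt{q_1}$, so no one-point lemma of the form $O\bigl((\sum_i\sqrt{q_i})^2\bigr)$ can hold. The adaptive-reprogramming bounds you allude to are first-order, square-root bounds.
\item The cross term does not vanish after averaging over $r$, because the adversary sees $r$ and its queries and $\Pi$ may depend on it.
\end{itemize}
As written, Step 2 therefore only gives the naive $O(T/\sqrt N)$.

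The missing idea is that the cancellation comes from averaging over the hidden point $x$, and this average must be taken inside the norm. Since the ideal state $\ket{v_T}$ does not depend on $x$, the cross term equals $2\Re\bra{v_T}\Pi\,\mathbb{E}_x\bigl(\ket{v_T^x}-\ket{v_T}\bigr)$. So it suffices to bound the norm of the averaged difference $\bigl\|\mathbb{E}_x(\ket{v_T^x}-\ket{v_T})\bigr\|$.

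The operators $\cO_x-\cO$ for distinct $x$ act on orthogonal query-input subspaces. Hence $\|\mathbb{E}_x(\cO_x-\cO)\|\le 2/N$ and $\|\mathbb{E}_x(\cO_x-\cO)\ket{w_x}\|\le \frac{2}{\sqrt N}\bigl(\mathbb{E}_x\|\ket{w_x}\|^2\bigr)^{1/2}$. Write $\ket{v_{i+1}^x}-\ket{v_{i+1}}$ as $\cO(\ket{v_i^x}-\ket{v_i})+(\cO_x-\cO)(\ket{v_i^x}-\ket{v_i})+(\cO_x-\cO)\ket{v_i}$, up to the fixed unitary $U$. Feeding in $\mathbb{E}_x\|\ket{v_i^x}-\ket{v_i}\|^2\le 4i^2/N$ gives $\|\mathbb{E}_x(\ket{v_{i+1}^x}-\ket{v_{i+1}})\|\le\|\mathbb{E}_x(\ket{v_i^x}-\ket{v_i})\|+\frac{4i+2}{N}$. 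Hence $\|\mathbb{E}_x(\ket{v_T^x}-\ket{v_T})\|\le 2T^2/N$, and the hybrid--ideal gap is at most $2\cdot\frac{2T^2}{N}+\frac{4T^2}{N}=\frac{8T^2}{N}$.

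This is the paper's argument for the $\widehat B$ part. In the example above, it is what turns the per-point first-order change $\sqrt{q(1-q)}$ into the harmless averaged change $\frac1N\sqrt{q(1-q)}$.
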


With \cref{lem:prg-l-database-t-query}, now we can bound \cref{eq:prg-goal-compressO}. 

\begin{proof}[Proof of \cref{thm:prg-ts-tradeoff}]
For each $z$, each $Y$ will append the database by size at most $(2T+2)$. Therefore,  $\ket{\psi_z^i}:=\widehat Y^i\ket{z}\ket{\emptyset}$ is supported on database of size $\le (2T+2)i$, and for $P_{\le i}$ defined as the projector on database of size $\le i$,
\begin{align*}
    \|\ket{\psi_z^i}\|
    &\le 
    \left\|P_{\le (2T+2)i}\widehat Y P_{\le (2T+2)(i-1)}\ket{\psi_z^{i-1}}\right\|
    \\
    &\le
    \left\|P_{\le (2T+2)i}\widehat Y P_{\le (2T+2)i}\right\|\cdot \left\|\ket{\psi_z^{i-1}}\right\|
    \\
    &\le 8\cdot \left(\frac{T^2}{N}+\sqrt{\frac{(2T+2)i}{N}}\right)\cdot \left\|\ket{\psi_z^{i-1}}\right\|
\end{align*}
Therefore, for each $z\in [2^S]$, $\bra{z}\bra{\emptyset}Y^{2t}\ket{z}\ket{\emptyset}
=\|\ket{\psi_z^t}\|^2$, and we can upper bound \cref{eq:prg-goal-compressO} by
\[
\sum_{z\in [2^S]}\bra{z}\bra{\emptyset}Y^{2t}\ket{z}\ket{\emptyset}
\le 2^S\cdot 8^{2t}\left(\frac{T^2}{N}+\sqrt{\frac{t(2T+2)}{N}}\right)^{2t}.
\]
This will help us upper bound the distinguishing advantage in \cref{eq:prg-goal-advantage} as 
\begin{align*}
    \mathbb{E}_R\|Y_R\|\le 
2^{S/2t}\cdot 8\cdot \left(\frac{T^2}{N}+\sqrt{\frac{t(2T+2)}{N}}\right).
\end{align*}
By setting $t=S/2$, we have distinguishing advantage $O\left(\frac{T^2}{N}+\sqrt{\frac{S(T+1)}{N}}\right)$.
\end{proof}

Now we are only left to prove \cref{lem:prg-l-database-t-query}, which is left to \cref{sec:append-prg}.
\section{One-query unitary synthesis lower bound from compressed oracles}
\label{sec:unitary-synthesis}

In this section, we apply the trace-moment method to two oracle state games to reprove one-query unitary synthesis lower bounds: the oracle state search game \cite{DLM26} and the oracle state distinguishing game \cite{STOC:LomMaWri24}. 
We give a common compressed-oracle approach to bounding those security, recovering the asymptotic bounds of \cite{DLM26,STOC:LomMaWri24}.

To apply the trace moment method, both arguments begin with the weight-vector decomposition, which converts the optimization over the arbitrary phase oracle into a spectral relaxation. We then bound the resulting operator norm through trace moments in the compressed-oracle representation.

\subsection{Oracle state search game hardness}
\label{sec:unitary-synthesis-search}
In this subsection, we upper bound a search game winning probability of a one-query (unitary synthesis) adversary, which gives an alternate proof for Theorem A.1 in \cite{DLM26}. Instead of using matrix concentration inequalities for matrix Rademacher series, we upper bound the winning probability via the compressed oracle techniques. Together with the reduction in \cite[Section~4.1]{DLM26}, this will give a one-query unitary synthesis lower bound by the compressed oracle techniques.

\begin{definition}[Oracle state search game, from {\cite[Definition~4.1]{DLM26}}]\label{def:state-search-game}
Fix a random variable $\mathbf R$ and, for every $R$ in its support, define a family of normalized states $\bigl\{\ket{\psi_{R,k}}:k\in[K]\bigr\}\subseteq \mathbb{C}^{N}$.
Without loss of generality, we may describe $\ket{\psi_{R, k}}$ in the computational basis with the following normalization:
\[ \ket{\psi_{R, k}} = \frac 1 {\sqrt N}\sum_{x\in [N]} R(k,x) \ket{x}
\]
for some random variables $\{\mathbf R(k,x) \in \mathbb C\}_{k\in [K], x\in [N]}$. 

In the oracle state search game, the challenger samples $R$ together with a uniformly random key $k\in[K]$, gives the adversary one copy of $\ket{\psi_{R,k}}$, and the adversary must output $k$ after making one oracle query. The oracle can depend on the variable $R$ but not on $k$. 
\end{definition}

Together with the one-query normal form of a unitary synthesis algorithm as in \Cref{def:one-query-algorithm}, now we can denote the winning probability of this search game conditioned on sampling $R$ as
\begin{align*}
    \Win(\cA\mid R):= \max_f\mathbb{E}_k\left\|
    \Pi_k \cdot O_f \cdot V\ket{\psi_{R,k}}
    \right\|^2,
\end{align*}
where the final projective measurement $\{\Pi_k\}_{k\in [K]}$ is also independent of $R$, and the post query unitary $U$ can be absorbed in the projective measurement by defining $\{\Pi'_k:=U^\dagger \Pi_k U\}_k$. $O_f$ is the arbitrary phase oracle, where the underlying boolean function is $f$.

\begin{theorem}[One-query search bound for binary phase states, {\cite[Theorem~A.1]{DLM26}}]
\label{thm:binary-phase-search}
For the oracle state search game associated with the family $\{\ket{\psi_{R,k}}\}_{k\in[K]}$ defined as in \Cref{def:state-search-game} with $R=(R(k,x))_{k\in[K],\,x\in[N]}$ as a family of independent Rademacher random variables, every one-query adversary with workspace dimension $M$ satisfies
\[
\mathbb{E}_R\bigl[\Win(\cA\mid R)\bigr]
\le \frac{2\bigl(1+\log(2M)\bigr)}{K}.
\]
\end{theorem}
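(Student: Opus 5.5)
We follow the route sketched in the technical overview. First we remove the arbitrary function query by a spectral relaxation. By \Cref{lem:weight-vector-decomposition}, $V\ket{\psi_{R,k}}=D_{R,k}\ket{\wt_V}$ with
\[
D_{R,k}=\sum_x R(k,x)D_x,\qquad D_x:=\frac{1}{\sqrt N}\sum_{i:\,p_i>0}\frac{\langle v_i|x\rangle}{\sqrt{p_i}}\ketbra{i}.
\]
Since $D_{R,k}$ and $O_f$ are both diagonal, they commute. Hence
\[
\Win(\cA\mid R)=\max_f \bra{\wt_V}O_f^\dagger Y_RO_f\ket{\wt_V}\le\|Y_R\|,\qquad Y_R:=\mathbb E_kD_{R,k}^\dagger\Pi_kD_{R,k}.
\]
The matrix $Y_R$ is PSD on $\mathbb C^M$ and quadratic in $R$. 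A direct computation from $\sum_i\ketbra{v_i}=\Id_N$ and $p_i=\langle v_i|v_i\rangle/N$ gives
\[
\sum_xD_x^\dagger D_x=\sum_xD_xD_x^\dagger=\sum_{i:p_i>0}\frac{\langle v_i|v_i\rangle}{Np_i}\ketbra i\preceq I.
\]

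Next we apply the trace moment method: $\mathbb E_R\|Y_R\|\le(\mathbb E_R\Tr Y_R^{2t})^{1/2t}$. We purify the Rademacher variables $R(k,x)$ with a compressed (Fourier) database, so that multiplication by $R(k,x)$ becomes the toggle $T_{k,x}$. This gives
\[
\widehat Y=\frac1K\sum_k\sum_{x_1,x_2}D_{x_1}^\dagger\Pi_kD_{x_2}\otimes T_{k,x_1}T_{k,x_2},\qquad \mathbb E_R\Tr Y_R^{2t}=\sum_{i\in[M]}\bra{i,\emptyset}\widehat Y^{2t}\ket{i,\emptyset}.
\]
Each application of $\widehat Y$ grows the database by at most $2$. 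The induction used in the proof of \cref{thm:1ows-ts-tradeoff} then reduces everything to a small-database claim. The claim we aim for: for normalized $\ket\phi=\sum_D\ket{\phi_D}\ket D$ supported on $|D|\le\ell$,
\[
\bra\phi\widehat Y\ket\phi\le\frac{2(\ell+1)}{K}.
\]
Given the claim, the trace moment is at most $M\,(2(2t+1)/K)^{2t}$, so $\mathbb E\|Y_R\|\le M^{1/2t}\cdot 2(2t+1)/K$. Choosing $t\approx\frac12\log(2M)$ gives $O((1+\log 2M)/K)$. Matching the exact constant $2(1+\log(2M))$ may require a slightly sharper claim, e.g.\ a bound of the form $(\ell+1)/K$ with $\ell\le 2t$, or choosing $t$ more carefully.

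To prove the claim, write $\bra\phi\widehat Y\ket\phi=\frac1K\sum_k\|(\Pi_k\otimes I)\sum_x D_xT_{k,x}\ket\phi\|^2$ and purify $k$ into a register, as in \Cref{sec:1OWS-bound-with-initial-database}. We then split the sum over $x$ into two parts.

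\textbf{Addition part ($(k,x)\notin D$).} Group the output by the final database $F=D\cup\{(k,x)\}$. For fixed $F$ and $k$, the colliding histories are indexed by $x\in F_k$. Rather than paying $|F_k|$ by Cauchy--Schwarz, we handle this term as in \cite{DLM26}: first sum over $k$ using $\sum_k\Pi_k=I$. The interference between histories $x\neq x'$ in the same $F_k$ is then controlled by the sizes $|D_k|$, giving a total contribution of order $(1+\ell)/K$ after averaging over $k$. Concretely, we would write it as $\frac1K\sum_x\|D_x\ket{\phi}\|^2\le\frac1K$ plus cross terms bounded by $\frac{\ell}{K}$ via $\sum_x D_xD_x^\dagger\preceq I$.

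\textbf{Deletion part ($(k,x)\in D$).} Here there are at most $|D|$ choices of $(k,x)$ per database. Using $D_x^\dagger D_x\preceq\sum_{x'}D_{x'}^\dagger D_{x'}\preceq I$ and dropping $\Pi_k$, this part contributes at most $\frac1K\sum_D\|\phi_D\|^2|D|\le\ell/K$.

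The triangle inequality (with a factor $2$) combines the two parts. The main obstacle is the addition part: unlike the OWS setting, the $D_x$ are not orthogonal embeddings. The interference between the histories $x\in F_k$ must therefore be controlled using only $\sum_xD_xD_x^\dagger\preceq I$, possibly via a heavy-row threshold as in \cref{lem:1ows-l-database-t-query}, which would even give $\sqrt\ell$.
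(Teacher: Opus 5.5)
Your outer reduction (weight-vector spectral relaxation, trace moments, compressed purification with $\widehat D_k=\sum_x D_x\otimes T_{k,x}$, and induction on database size) matches the paper's. The exact constant is not the issue, since the paper's own argument also only gives $O(\log M/K)$. The gap is in the small-database claim: you have assigned the two available tools to the wrong halves of the addition/deletion split.

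\textbf{Deletion part.} Here $(k,x)\in D$, and the histories that interfere are those sharing the final database $F$ and key $k$, i.e.\ $D=F\cup\{(k,x)\}$ for every $x\notin F_k$. There can be up to $N-|F_k|$ of them, not $|D|$. Bounding each term by $\|D_x w\|\le\|w\|$ and counting $|D|$ deletions per \emph{initial} database tacitly treats these contributions as orthogonal. A plain Cauchy--Schwarz over them would cost a factor of order $N$. What is needed is the operator-valued Cauchy--Schwarz inequality $\|\sum_x D_x w_x\|^2\le\|\sum_x D_xD_x^\dagger\|\sum_x\|w_x\|^2$. This is exactly where $\sum_xD_xD_x^\dagger\preceq I$ (diagonality of the $D_x$) enters, and it gives the deletion term $\frac1K\sum_S|S|\,|\alpha_S|^2\le\ell/K$ even after dropping $\Pi_k$.

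\textbf{Addition part.} This is the easy half, even though you flag it as the main obstacle, and $\sum_xD_xD_x^\dagger\preceq I$ cannot control it. Using that inequality means moving all of $\Pi_kD_x$ into the operator. The resulting bound $\sum_{x\in F_k}|\alpha_{F\setminus\{(k,x)\}}|^2$ no longer sees $\Pi_k$, and summed over $(k,F)$ it gives $\frac1K\sum_S|\alpha_S|^2(KN-|S|)\approx N$. You also cannot ``first sum over $k$ using $\sum_k\Pi_k=I$'' while the vector inside the norm still depends on $k$. The missing observation is that for fixed $(k,F)$ only the $|F_k|\le\ell+1$ entries of row $k$ can be the newly added one. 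Ordinary Cauchy--Schwarz then gives $\|\Pi_k\sum_{x\in F_k}v_x\|^2\le(\ell+1)\sum_{x\in F_k}\|\Pi_k v_x\|^2$. Reindexing by the initial database and applying $\sum_k\Pi_k=I$ and $\sum_xD_x^\dagger D_x\preceq I$ bounds the addition term by $(\ell+1)/K$. Your diagonal-plus-cross-terms numbers are right, but the cross terms are controlled by this same row counting via AM--GM, not by $\sum_xD_xD_x^\dagger$.

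Combining the two parts with the factor $2$ gives $4(\ell+1)/K$ rather than your $2(\ell+1)/K$, which still suffices. The heavy-row threshold is unnecessary here. It would also not yield $\sqrt\ell$, because in this setting the deletion term is only $O(\ell/K)$, not $O(\ell/(KN))$ as in the 1OWS proof.
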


\paragraph{Proof overview.}
By the weight-vector decomposition together with the spectral relaxation, the search probability can be upper bounded by the expected norm of an operator on the algorithm's register. Via the trace-moment method and rewriting in the compressed oracle database mode, the key remaining step is the small-database analysis (\Cref{lem:1-query-compressed-oracle-search}).

After the weight vector decomposition, we can view the reweighting diagonal operator $D_{V,R,k}$ (or $\widehat D_k$ in compressed database mode) as generating a challenge, and similarly, we can group terms by whether the final database contains the ``challenge entry''; that is, view $\widehat D_k$ as a sum of operators $D_x$ for different $x$, and for the $x$ part with $D_x$ applying on states, we group according to whether $(k,x)$ is in the final database.  
\begin{itemize}
    \item If $(k,x)$ is in the final database, note that the final database on challenge key $k$ is bounded by $\ell$. Therefore, in the corresponding coherent sum over $x$ inside the final database, paying only $O(\ell)$ can make it into a sum of individual squared norms, which would be easy to upper bound.
    \item If $(k,x)$ is not in the final database, this means $(k,x)$ should appear in the initial database, where one would expect to bound it by $\ell/K$.
\end{itemize}
Since each application of the moment operator changes the database size by at most two, a moment of order $1+\log M$ gives the desired asymptotic search bound.

\subsubsection{Spectral relaxation and the compressed oracle trace moment}
\label{sec:trace-moment-search}

Our goal is to upper bound the search game success probability of a one-query adversary, which is to upper bound
\[
\mathbb{E}_R[\Win(\cA\mid R)]= 
\mathbb{E}_R\max_f\mathbb{E}_k\left\|
    \Pi_k \cdot O_f \cdot V\ket{\psi_{R,k}}
    \right\|^2.
\]
By \Cref{lem:weight-vector-decomposition}, we can apply a generic spectral relaxation as in \cite[Section~6.1]{DLM26}:
\begin{align*}
    \mathbb{E}_R\max_f\mathbb{E}_k\left\|
    \Pi_k \cdot O_f \cdot V\ket{\psi_{R,k}}
    \right\|^2
    &=
    \mathbb{E}_R\max_f\mathbb{E}_k\left\|
    \Pi_k D_{R,k}\cdot O_f \ket{\wt_V}
    \right\|^2
    \\
    &=
    \mathbb{E}_R\max_f\bra{\wt_V}O_f^\dagger\cdot 
    \left(\mathbb{E}_k D_{R,k}^\dagger \Pi_k D_{R,k}\right) \cdot O_f\ket{\wt_V}
    \\
    &\le 
    \mathbb{E}_R \norm{\mathbb{E}_k D_{R,k}^\dagger \Pi_k D_{R,k}}.
\end{align*}
here we omit $V$ in $D_{R,k}:=D_{V,R,k}$ from \Cref{lem:weight-vector-decomposition}.
Define
\[
    W_R := \frac1K\sum_{k=1}^K D_{R,k}^\dagger \Pi_k D_{R,k}.
\]
Then $W_R\succeq 0$, and for every positive integer $t$,
\begin{align}
    \mathbb{E}_R \norm{W_R}
    \leq \Big(\mathbb{E}_R \norm{W_R}^{2t}\Big)^{1/(2t)}\leq \Big(\mathbb{E}_R \Tr(W_R^{2t})\Big)^{1/(2t)} . \label{ineq:trace-moment-search}
\end{align}

Now we only need to upper bound this trace moment, and we rewrite it within the compressed oracle methodology. 

\paragraph{Rewrite in compressed oracle framework. } First we write $D_{R,k}$ as
\[
    D_{R,k}=\sum_{x\in[N]} R(k,x)D_x,
\]
where the fixed diagonal matrices $\{D_x\}_x$ satisfy $\sum_x D_xD_x^\dagger = \sum_x D_x^\dagger D_x = \mathbb{E}_R D_{R,k}^\dagger D_{R,k}\preceq \Id$.
Define $T_{k,x}$ on the compressed oracle register that flips the presence of the database entry $(k,x)$; that is, $T_{k,x} \ket{S} = \ket{S\oplus \{(k,x)\}}$, and thus define
\begin{align}\label{eqdef:trace-synthesis-D-compress}
    \widehat D_k := \sum_{x\in[N]} D_x\otimes T_{k,x}.
\end{align}
This means that $\widehat D_k\ket{\phi}\ket S
=\sum_x D_x\ket{\phi}\otimes \ket{S\oplus\{(k,x)\}}$.

Therefore, the compressed oracle purification of $W_R$ is then expressed as
\[
    \widehat W
    :=\frac1K\sum_{k=1}^K
    \widehat D_k^\dagger(\Pi_k\otimes \Id)\widehat D_k,
\]
and our goal from \cref{ineq:trace-moment-search} can be written as
\begin{align}
    \mathbb{E}_R\Tr\big(W_R^{2t}\big)
    &= \sum_{z\in[M]} \bra{z}\big(\mathbb{E}_R W_R^{2t}\big)\ket{z} 
    = \sum_{z\in[M]} \bra{z}\bra{\emptyset}\widehat W^{2t}\ket{z}\ket{\emptyset}. \label{eq:trace-to-compressed-search}
\end{align}
It remains to bound the last quantity for each fixed $z$.

\subsubsection{Bounding the trace moment on small databases}
\label{subsec:trace-search-small-database}

\begin{lemma}\label{lem:1-query-compressed-oracle-search}
For $\widehat{W}$ defined as above, if normalized state $\ket{\psi}$ (with both the algorithm's register and the database register) is supported on databases of size $\le\ell$, then
\[
    \bra{\psi}\widehat W\ket{\psi}
    \leq \frac{4(\ell+1)}{K}.
\]
\end{lemma}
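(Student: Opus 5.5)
We will mimic the $T=0$ case of \cref{lem:1ows-l-database-t-query-reform}, with the isometry-based challenge preparation replaced by $\widehat D_k=\sum_x D_x\otimes T_{k,x}$. Write $\ket\psi=\sum_S \ket{\phi_S}\ket S$ with $\sum_S\|\ket{\phi_S}\|^2=1$ and $|S|\le\ell$. Purifying the key as before,
\[
\bra\psi\widehat W\ket\psi=\frac1K\sum_{k,F}\Bigl\|\Pi_k\sum_{x}D_x\ket{\phi_{F\oplus(k,x)}}\Bigr\|^2 ,
\]
where we have grouped the terms by the final database $F=S\oplus\{(k,x)\}$. For each $(k,F)$ we split the sum over $x$ into $x\in F_k$ (the entry $(k,x)$ was newly added) and $x\notin F_k$ (the entry was deleted from the initial database). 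By $(a+b)^2\le 2a^2+2b^2$ we then get $\bra\psi\widehat W\ket\psi\le 2(\mathrm{I}+\mathrm{II})$.

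\emph{Addition term} $\mathrm{I}$. Here $|F_k|\le |F|\le \ell+1$. By Cauchy--Schwarz over $x\in F_k$, we pay a factor $\ell+1$ and get
\[
\mathrm I\le\frac{\ell+1}{K}\sum_{k,F}\sum_{x\in F_k}\|\Pi_kD_x\ket{\phi_{F\oplus(k,x)}}\|^2\le\frac{\ell+1}{K}\sum_{S}\sum_x\sum_k\|\Pi_kD_x\ket{\phi_S}\|^2 .
\]
Here we reparametrize by $S=F\oplus(k,x)$ and then drop the constraint. Summing $\sum_k\Pi_k=I$ and using $\sum_xD_x^\dagger D_x\preceq I$, this is at most $(\ell+1)/K$. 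We could try a heavy-row threshold as in \cref{ineq:l-t-search-large-Sk} to reach $\sqrt\ell$. However, the statement only asks for $O(\ell+1)/K$, so we will not refine this step.

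\emph{Deletion term} $\mathrm{II}$. Here $(k,x)\in S$, so $S$ determines at most $|S|\le\ell$ pairs $(k,x)$. We drop $\Pi_k$ and rewrite $\mathrm{II}$ as $\frac1K\sum_k\|\sum_{F}\sum_{x\in S_k}\ldots\|^2$. The main obstacle is that the $D_x$ are not orthogonal, so we cannot simply sum the squared norms over $x$. We plan to handle this with the operator inequality $\sum_xD_xD_x^\dagger\preceq I$. For fixed $k$ and $F$, the vector $\sum_{x\notin F_k}D_x\ket{\phi_{F\cup(k,x)}}$ equals $[D_{x_1},D_{x_2},\dots]\,(\ket{\phi_{F\cup(k,x)}})_x$, a row operator applied to a column vector. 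The row operator has norm $\|\sum_xD_xD_x^\dagger\|^{1/2}\le1$, so the squared norm is at most $\sum_{x\notin F_k}\|\ket{\phi_{F\cup(k,x)}}\|^2$. Summing over $k$ and $F$ and reparametrizing by $S$ counts each $S$ exactly $|S|$ times, which gives
\[
\mathrm{II}\le\frac1K\sum_S|S|\,\|\ket{\phi_S}\|^2\le\frac{\ell}{K}.
\]
The same row-operator bound is an alternative route for $\mathrm I$ that also works.

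Combining the two terms gives $\bra\psi\widehat W\ket\psi\le 2\bigl(\frac{\ell+1}{K}+\frac{\ell}{K}\bigr)\le\frac{4(\ell+1)}{K}$. This is exactly the claimed constant, which suggests the constant was designed for this argument. The only delicate points are the reindexing bookkeeping and checking that $\sum_xD_xD_x^\dagger\preceq I$ holds as stated. That condition holds because the $D_x$ are diagonal, so $D_xD_x^\dagger=D_x^\dagger D_x$.
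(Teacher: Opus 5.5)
Your proposal is correct and follows the paper's proof essentially step for step. Both arguments split the final-database sum into newly added ($x\in F_k$) and deleted ($x\notin F_k$) entries via $(a+b)^2\le 2a^2+2b^2$, use ordinary Cauchy--Schwarz with the factor $|F_k|\le\ell+1$ on the first part, and use the operator-valued Cauchy--Schwarz with $\sum_x D_xD_x^\dagger\preceq\Id$ on the second, yielding $\frac{2}{K}\bigl((\ell+1)+\ell\bigr)\le\frac{4(\ell+1)}{K}$; the only cosmetic difference is that you drop $\Pi_k$ before the row-operator bound, whereas the paper keeps it inside $B_x=\Pi_kD_x$.
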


With \Cref{lem:1-query-compressed-oracle-search}, now we can bound \cref{eq:trace-to-compressed-search}.

\begin{proof}[Proof of \Cref{thm:binary-phase-search}]
For each $z$, each $\widehat W$ will append the database by size at most $2$. Therefore, $\ket{\psi_z^i}:=\widehat W^i\ket{z}\ket{\emptyset}$ is supported on database of size $\le 2i$, and for $P_{\le i}$ defined as the projector on database of size $\le i$,
\begin{align*}
    \norm{\ket{\psi_z^i}}
    &=\norm{P_{\leq 2i}\widehat W P_{\leq 2i-2}\ket{\psi_z^{i-1}}} \\
    &\leq \norm{P_{\leq 2i}\widehat W P_{\leq 2i}}\cdot \norm{\ket{\psi_z^{i-1}}} \\
    &\leq \frac{4(2i+1)}{K}\,\norm{\ket{\psi_z^{i-1}}}.
\end{align*}
Therefore, for each $z\in[M]$,
\[
\bra{z}\bra{\emptyset}\widehat W^{2t}\ket{z}\ket{\emptyset}
    =\norm{\ket{\psi_t}}^2
    \leq \prod_{i=1}^t\left(\frac{4(2i+1)}{K}\right)^2
    \leq \left(\frac{8t+4}{K}\right)^{2t}.
\]
From \cref{ineq:trace-moment-search,eq:trace-to-compressed-search}, this gives
\begin{align*}
    \mathbb{E}_R\norm{W_R}
    \leq
    \left(M\cdot \left(\frac{8t+4}{K}\right)^{2t}\right)^{1/2t} 
    = M^{1/2t}\cdot \frac{8t+4}{K}.
\end{align*}
Taking $t=\log M$ gives
\[
    \mathbb{E}_R\norm{W_R}\leq O\left(\frac{\log M}{K}\right).
\]
\end{proof}

Now we are left to prove \Cref{lem:1-query-compressed-oracle-search}.

\subsubsection{Proof of the small database search bound}

\begin{proof}[Proof of \Cref{lem:1-query-compressed-oracle-search}]
By definition,
\begin{align*}
    \bra{\psi}\widehat W\ket{\psi}
    =\frac1K\sum_k \norm{(\Pi_k\otimes \Id)\widehat D_k\ket{\psi}}^2.
\end{align*}
Write
\[
    \ket{\psi}=\sum_S \alpha_S\ket{\varphi_S}\ket S,
\]
where each $\ket{\varphi_S}$ is normalized, and where $\alpha_S = 0$ whenever $|S|> \ell$. Expanding $\widehat D_k$ and by the orthogonality of the final database $S^+$, this gives
\begin{align*}
    \bra{\psi}\widehat W\ket{\psi}
    &=\frac1K\sum_{k,S^+}
      \norm{\Pi_k\sum_{x,S:\, S^+=S\oplus\{(k,x)\}}
      \alpha_SD_x\ket{\varphi_S}}^2 \\
    &\leq \frac2K\sum_{k,S^+}\Bigg(
      \norm{\Pi_k\sum_{\substack{x:(k,x)\in S^+\\ S=S^+\setminus\{(k,x)\}}}
      \alpha_SD_x\ket{\varphi_S}}^2 \nonumber +
      \norm{\Pi_k\sum_{\substack{x:(k,x)\notin S^+\\ S=S^+\cup\{(k,x)\}}}
      \alpha_SD_x\ket{\varphi_S}}^2\Bigg).
\end{align*}
We bound the two sums separately.

\paragraph{For the first sum.} The inner sum has at most $|S^+|\leq \ell+1$ terms, so by Cauchy-Schwarz,
\begin{align*}
    \sum_{k,S^+}
      \norm{\Pi_k\sum_{\substack{x:(k,x)\in S^+\\ S=S^+\setminus\{(k,x)\}}}
      \alpha_SD_x\ket{\varphi_S}}^2
    &\leq (\ell+1)\sum_{k,S^+}
      \sum_{\substack{x:(k,x)\in S^+\\ S=S^+\setminus\{(k,x)\}}}
      |\alpha_S|^2\norm{\Pi_kD_x\ket{\varphi_S}}^2 \\
    &\leq (\ell+1) \sum_{k,x,S}|\alpha_S|^2\norm{\Pi_kD_x\ket{\varphi_S}}^2 \\
    &=(\ell+1)\sum_{x,S}|\alpha_S|^2
      \bra{\varphi_S}D_x^\dagger\Big(\sum_k\Pi_k\Big)D_x\ket{\varphi_S} \\
    &\leq (\ell+1)\sum_S |\alpha_S|^2
      \bra{\varphi_S}\Big(\sum_xD_x^\dagger D_x\Big)\ket{\varphi_S}
    \\
    &\leq \ell+1.
\end{align*}

\paragraph{For the second sum.} We use the operator-valued Cauchy-Schwarz inequality
\[
    \norm{\sum_j B_jv_j}^2
    \leq \norm{\sum_j B_jB_j^\dagger}\sum_j\norm{v_j}^2.
\]
Applying this with $B_x=\Pi_kD_x$ and using $\sum_x \Pi_kD_xD_x^\dagger\Pi_k\preceq \Pi_k\preceq I$,
we will obtain
\begin{align*}
    \sum_{k,S^+}
      \norm{\Pi_k\sum_{\substack{x:(k,x)\notin S^+\\ S=S^+\cup\{(k,x)\}}}
      \alpha_SD_x\ket{\varphi_S}}^2
    \leq
      \sum_{k,S^+}
      \sum_{\substack{x:(k,x)\notin S^+\\ S=S^+\cup\{(k,x)\}}}|\alpha_S|^2 \|\ket{\varphi_S}\|^2
    =\sum_S |S|\,|\alpha_S|^2
      \leq \ell. 
\end{align*}
\paragraph{Combining the two sums.} From the above upper bounds, 
\[
    \bra{\psi}\widehat W\ket{\psi}
    \leq \frac{2}{K}\big((\ell+1)+\ell\big)
    \leq \frac{4(\ell+1)}{K}.
\]
\end{proof}

\subsection{Oracle state distinguishing game hardness}
\label{sec:unitary-synthesis-distinguish}

In this subsection, we upper bound a distinguishing game winning probability of a one-query (unitary synthesis) adversary, which gives an alternate proof for Theorem 2.2 in \cite{STOC:LomMaWri24} (and thus an alternate proof of a one-query unitary synthesis lower bound). Instead of using matrix concentration inequalities for matrix Rademacher series, we upper bound the winning probability via the compressed oracle techniques.

\begin{definition}[Oracle state distinguishing game, from {\cite[Definition~3.8]{STOC:LomMaWri24}}]
\label{def:state-distinguishing-game}
Fix a random variable $\mathbf R$ and, for every $R$ in its support, define a family of normalized states $\bigl\{\ket{\psi_{R,k}}:k\in[K]\bigr\}\subseteq \mathbb{C}^{N}$.
Without loss of generality, we may describe $\ket{\psi_{R,k}}$ in the computational basis with the following normalization:
\[
\ket{\psi_{R,k}}
=
\frac{1}{\sqrt N}\sum_{x\in[N]} R(k,x)\ket{x}
\]
for some random variables
$\{\mathbf R(k,x)\in\mathbb C\}_{k\in[K],\,x\in[N]}$.

In the oracle state distinguishing game, the challenger samples $R$ together with an independent uniformly random bit $b\in\{0,1\}$.
If $b=0$, the challenger samples a uniformly random key $k\in[K]$ and gives the adversary one copy of $\ket{\psi_{R,k}}$.
If $b=1$, the challenger samples a uniformly random $x\in[N]$ and gives the adversary $\ket{x}$.
The adversary must output a bit $b'\in\{0,1\}$ after making one oracle query, and wins if $b'=b$.
The oracle can depend on $R$ but not on the challenge bit $b$, the sampled key $k$, or the sampled basis index $x$.
\end{definition}

Together with the one-query normal form of a unitary synthesis algorithm as in \Cref{def:one-query-algorithm}, now we can denote the distinguishing advantage conditioned on sampling $R$ as
\begin{align*}
    \Delta_\cA(R):= \max_f
    \left| \mathbb{E}_k\bra{\psi_{R,k}}V^\dagger O_f^\dagger \Pi O_f V\ket{\psi_{R,k}}
    -
    \mathbb{E}_h\bra{\psi_{h}}V^\dagger O_f^\dagger \Pi O_f V\ket{\psi_{h}}
    \right|
    ,
\end{align*}
where the final projective measurement  $\{\Pi, I-\Pi\}$ for distinguishing is also independent of $R$, and the post query unitary $U$ can be absorbed in the projective measurement by defining $\Pi':=U^\dagger \Pi U$. $O_f$ is the arbitrary phase oracle, where the underlying boolean function is $f$. $h$ is an independent random binary phase vector, and defines $\ket{\psi_h}$ as a random binary phase state.

\begin{theorem}[One-query distinguishing bound for binary phase states, {\cite[Theorem~4.17]{STOC:LomMaWri24}}]
\label{thm:binary-phase-distinguish}
For the oracle state distinguishing game associated with the family $\{\ket{\psi_{R,k}}\}_{k\in[K]}$ defined as in \Cref{def:state-distinguishing-game} with $R=(R(k,x))_{k\in[K],\,x\in[N]}$ as a family of independent Rademacher random variables, every one-query adversary with workspace dimension $M$ satisfies
\[
\mathbb{E}_R\bigl[\Delta_\cA(R)\bigr]
\le O\left(\sqrt{\frac{\log M}{K}}\right)
\]
for $\Delta_\cA(R)$ as the distinguishing advantage when $R$ is sampled.
\end{theorem}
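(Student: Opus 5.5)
We follow the same template as the proof of \Cref{thm:binary-phase-search}, but work with a centered (signed) operator, as in the PRG analysis. Write $V\ket{\psi_{R,k}}=D_{R,k}\ket{\wt_V}$ by \Cref{lem:weight-vector-decomposition}. For the ideal branch, note that $\mathbb{E}_h D_{V,\psi_h}^\dagger \Pi D_{V,\psi_h}$ is the same expression with $R(k,\cdot)$ replaced by an independent Rademacher vector. Its expectation over $h$ is the deterministic matrix $E:=\sum_x D_x^\dagger\Pi D_x$, since cross terms vanish. Both branches commute with $O_f$, so the spectral relaxation gives
\[
\mathbb{E}_R\Delta_\cA(R)\le \mathbb{E}_R\norm{Z_R},\qquad Z_R:=\frac1K\sum_k D_{R,k}^\dagger\Pi D_{R,k}-E .
\]
$Z_R$ is Hermitian but not PSD, and it is quadratic in $R$. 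We then apply the trace-moment bound $\mathbb{E}_R\norm{Z_R}\le(\sum_{z\in[M]}\bra{z,\emptyset}\widehat Z^{2t}\ket{z,\emptyset})^{1/2t}$. Here $\widehat Z=\frac1K\sum_k\widehat D_k^\dagger\Pi\widehat D_k-E\otimes\Id$ with $\widehat D_k$ as in \cref{eqdef:trace-synthesis-D-compress}. Each application of $\widehat Z$ changes the database size by at most $2$.

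\textbf{Key lemma.} For every normalized $\ket\psi$ supported on databases of size $\le\ell$, we aim to show
\[
\norm{P_{\le \ell+2}\widehat Z P_{\le\ell}}\le O\!\left(\sqrt{\frac{\ell+1}{K}}\right).
\]
We would bound $|\bra{\psi'}\widehat Z\ket\psi|$ for two such states, or simply bound $|\bra\psi\widehat Z\ket\psi|$ and use Hermiticity after restricting to $P_{\le\ell+2}$. Expand $\widehat D_k^\dagger\Pi\widehat D_k=\sum_{x,x'}D_{x'}^\dagger\Pi D_x\otimes T_{k,x'}T_{k,x}$. The diagonal terms $x=x'$ give exactly $\frac1K\sum_k E=E$, so they cancel against $E$. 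What remains is the off-diagonal part $\frac1K\sum_k\sum_{x\ne x'}D_{x'}^\dagger\Pi D_x\otimes T_{k,x'}T_{k,x}$, which acts as a two-entry toggle on row $k$.

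We would bound its matrix element as $\frac1K\sum_k\langle \widehat D_k\psi',\Pi\widehat D_k\psi\rangle$ minus the diagonal piece. The main step is a Cauchy--Schwarz split in the spirit of the search lemma. As there, split $\widehat D_k\ket\psi$ by whether the toggled entry is added or removed. Call the added part $\ket{a_k}$ and the removed part $\ket{r_k}$. The earlier computation gives $\frac1K\sum_k\norm{r_k}^2\le\ell/K$ and, for the added part grouped by final database, a coherent sum. The expected difficulty is that the naive bound on the added-added cross term only gives $(\ell+1)/K$ for the \emph{search} quantity, not a small \emph{centered} quantity. To fix this, I would regroup the off-diagonal added-added terms by the final database $S^+\supseteq\{(k,x),(k,x')\}$. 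Such terms can only be nonzero when both $(k,x)$ and $(k,x')$ are reachable from small databases. This forces one of the entries to lie in the initial database, so at least one side is a ``removal-like'' vector of total weight $\le\ell/K$. Cauchy--Schwarz, pairing a vector of weight $O(\ell/K)$ with one of weight $O(1)$ (using $\sum_xD_xD_x^\dagger\preceq I$ and $\sum_k\Pi$-type completeness replaced by $\Pi\preceq I$), then yields $O(\sqrt{(\ell+1)/K})$. This step, controlling the off-diagonal interference after cancellation, is where I expect the main obstacle. I would first try the operator Cauchy--Schwarz inequality $\norm{\sum_jB_jv_j}^2\le\norm{\sum_jB_jB_j^\dagger}\sum_j\norm{v_j}^2$ used earlier.

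\textbf{Conclusion.} Given the lemma, iterate as in the proof of \Cref{thm:binary-phase-search}: $\norm{\widehat Z^i\ket{z,\emptyset}}\le\prod_{j\le i}O(\sqrt{(2j+1)/K})$. Hence the trace moment is at most $M\cdot(O(\sqrt{t/K}))^{2t}$, and
\[
\mathbb{E}_R\norm{Z_R}\le M^{1/2t}\,O\!\left(\sqrt{\frac{t}{K}}\right).
\]
Choosing $t=\lceil\log M\rceil$ gives $O(\sqrt{\log M/K})$.
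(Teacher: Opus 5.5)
Your proposal is correct, but it proves the small-database estimate by a different route from the paper.

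\textbf{The paper's route.} The paper never subtracts $E$ (its $\Gamma$) inside the moment computation. Instead it symmetrizes with an independent copy $R'$ and bounds $\mathbb{E}_{R,R'}\norm{W_R-W_{R'}}$ using a two-database compressed oracle. For the local bound it factors $\frac1K\sum_k(p_k-q_k)$ by Cauchy--Schwarz into $\bigl(\frac1K\sum_k(\sqrt{p_k}-\sqrt{q_k})^2\bigr)^{1/2}\bigl(\frac1K\sum_k(\sqrt{p_k}+\sqrt{q_k})^2\bigr)^{1/2}$. It then applies the unitary that swaps row $k$ of the two copies. This swap fixes every component in which both rows are empty, so the first factor is charged to the $O(\ell/K)$ weight of components with row $k$ occupied. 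The row-by-row claim makes the second factor $O(1)$.

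\textbf{Your route.} You keep a single copy and use the exact operator-level cancellation of the $x=x'$ terms, as the paper itself does in its 1PRS section. Your obstacle step can be made precise more simply than you suggest.
\begin{itemize}
\item For each $k$, the diagonal parts of $\langle a_k|\Pi|a_k\rangle$ and $\langle r_k|\Pi|r_k\rangle$ add up exactly to $\bra{\psi}E\ket{\psi}$.
\item The remaining same-size (swap) terms are already $O(\ell/K)$. Write $\langle a_k|\Pi|a_k\rangle=\sum_{S^+}\norm{\sum_{x\in S^+_k}u_x}^2$. Then $\bigl|\norm{\sum_x u_x}^2-\sum_x\norm{u_x}^2\bigr|\le(|S^+_k|-1)\sum_x\norm{u_x}^2$, which averages over $k$ to at most $\frac1K\sum_S|S|\,|\alpha_S|^2\le\ell/K$. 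The same holds for $r_k$.
\item The $\sqrt{\ell/K}$ comes only from the size-changing cross term $2\operatorname{Re}\langle r_k|\Pi|a_k\rangle$. It is bounded via $\frac1K\sum_k\norm{r_k}^2\le\ell/K$ and $\frac1K\sum_k\norm{a_k}^2\le 1+\ell/K$. This is exactly your ``weight $O(\ell/K)$ against weight $O(1)$'' pairing.
\item You also need $\ell\lesssim K$ so that $\ell/K\le\sqrt{\ell/K}$. Handle this as the paper does: assume $K>2\log M$, and use the trivial bound otherwise.
\end{itemize}

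\textbf{What each buys.} Your version avoids the second oracle copy. It also shows that the square-root loss comes only from the blocks of $\widehat Z$ that change the database size by two. The paper's swap argument avoids tracking interfering histories by reusing its row-by-row claim. Its swap-a-row idea is also what the paper later extends to the post-query setting.
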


\paragraph{Proof overview.}
By the weight-vector decomposition together with the spectral relaxation, the distinguishing advantage can be upper bounded by the expected norm of an operator on the algorithm's register. 
\begin{itemize}
    \item For distinguishing, the operator is centered. To keep this cancellation, we introduce an independent copy of the oracle. By symmetrization and convexity of the operator norm, it then suffices to bound the expected norm of the difference between the two associated acceptance operators.
\end{itemize}
Via the trace-moment method and rewriting in the compressed oracle database mode, the key remaining step is the small-database analysis (\Cref{lem:1-query-compressed-oracle-distinguish}).

In the compressed-oracle representation, we compare these operators by swapping the two copies of the challenge key row. 
\begin{itemize}
    \item This swap leaves any database component unchanged when both rows are empty, so its effect can be charged to components in which at least one of the rows is occupied.
    Therefore, averaging over the hidden key, one would expect to bound some squared difference by $O(\ell/K)$ (and later Cauchy-Schwarz would give a small-database distinguishing bound of $O(\sqrt{\ell/K})$).
\end{itemize}
Applying the trace-moment argument at order $1+\log M$ proves the theorem.

\subsubsection{Spectral relaxation and the compressed oracle trace moment}
\label{sec:trace-moment-distinguish}

Our goal is to upper bound the distinguishing advantage of a one-query adversary, which is to upper bound
\begin{align*}
    \mathbb{E}_R[\Delta_\cA(R)]
    =
    \mathbb{E}_R\max_f
    \left| \mathbb{E}_k\bra{\psi_{R,k}}V^\dagger O_f^\dagger \Pi O_f V\ket{\psi_{R,k}}
    -
    \mathbb{E}_h\bra{\psi_{h}}V^\dagger O_f^\dagger \Pi O_f V\ket{\psi_{h}}
    \right|.
\end{align*}
By \Cref{lem:weight-vector-decomposition}, we can apply a generic spectral relaxation similar to the search case in \Cref{sec:trace-moment-search}:
\begin{align*}
    &\mathbb{E}_R\max_f
    \left| \mathbb{E}_k\bra{\psi_{R,k}}V^\dagger O_f^\dagger \Pi O_f V\ket{\psi_{R,k}}
    -
    \mathbb{E}_h\bra{\psi_{h}}V^\dagger O_f^\dagger \Pi O_f V\ket{\psi_{h}}
    \right|
    \\
    &\quad =
    \mathbb{E}_R\max_f 
    \left|
    \bra{\wt_V}O_f^\dagger\cdot 
    \left(
        \mathbb{E}_k D_{R,k}^\dagger \Pi D_{R,k}-\mathbb{E}_h D_h^\dagger \Pi D_h
    \right)
    \cdot O_f\ket{\wt_V}
    \right|
    \\
    &\quad 
    \le \mathbb{E}_R
    \norm{\mathbb{E}_k D_{R,k}^\dagger \Pi D_{R,k} - \Gamma}, \qquad \text{for }\Gamma \coloneqq \mathbb{E}_{h} D_h^\dagger \Pi D_h,
\end{align*}
Define
\[ X_R \coloneqq \mathbb{E}_k D_{R,k}^\dagger \Pi D_{R,k} - \Gamma. \]
By symmetrization with an independent copy $R'$, with $W_R := \mathbb{E}_k D_{R,k}^\dagger \Pi D_{R,k}$ and by convexity of the operator norm, we have 
\begin{align*}
    \mathbb{E}_R \norm{X_R} 
    = \mathbb{E}_R \norm{W_R - \mathbb{E}_{R'} W_{R'}} 
    \leq \mathbb{E}_{R,R'} \norm{W_R - W_{R'}}.
\end{align*}
Therefore, define
\[ Y_{R,R'} \coloneqq W_R - W_{R'} = \mathbb{E}_k\big(D_{R,k}^\dagger \Pi D_{R,k} - D_{R',k}^\dagger \Pi D_{R',k}\big), \]
now it suffices to bound $\mathbb{E}_{R,R'} \norm{Y_{R,R'}}$.

Now $Y_{R,R'}$ is Hermitian.
As before, we use an even trace power. For every positive integer $t$,
\begin{align}\label{ineq:trace-moment-distinguishing}
    \mathbb{E}_{R,R'} \norm{Y_{R,R'}}
    \leq \Big(\mathbb{E}_{R,R'} \norm{Y_{R,R'}}^{2t}\Big)^{1/2t}
    \leq \Big(\mathbb{E}_{R,R'} \Tr(Y_{R,R'}^{2t})\Big)^{1/2t}.
\end{align}

We will now upper bound this trace moment using the compressed oracle methodology.

\paragraph{Rewrite in the compressed oracle framework.} Similarly as in \cref{eqdef:trace-synthesis-D-compress}, 
for key $k$, define
\[
    \widehat D_k := \sum_{x\in[N]} D_x\otimes T_{k,x},\qquad \widehat D'_k = \sum_x D_x \otimes T'_{k,x}.
\]
This allows us to define the compressed oracle purification
\begin{align*}
    \widehat Y
    := \frac{1}{K} \sum_{k=1}^K \big(\widehat D_k^\dagger \Pi \widehat D_k - \big(\widehat D'_{k}\big)^\dagger \Pi \widehat D'_k\big).
\end{align*}
Then, similarly \cref{ineq:trace-moment-distinguishing} can be written as 
\begin{align}
    \mathbb{E}_{R,R'} \Tr\big(Y_{R,R'}^{2t}\big)
    = \sum_{z\in[M]} \bra{z}\bra*{\emptyset} \widehat Y^{2t} \ket{z}\ket{\emptyset},
    \label{eq:trace-to-compressed-distinguish}
\end{align}
where now $\ket{\emptyset}$ denotes the empty compressed database for the two independent copies $R$ and $R'$. It remains to bound the last quantity for each fixed $z$. 

\subsubsection{Bounding the trace moment on small databases}
\label{subsec:trace-distinguish-small-database}

\begin{lemma}\label{lem:1-query-compressed-oracle-distinguish}
For $\widehat Y$ defined as above, if normalized state is supported on databases of size $\leq \ell$, where $\ell+1\leq K$, then
\begin{align*}
    \abs{\bra{\psi} \widehat Y \ket{\psi}} \leq 20 \sqrt{\frac{\ell+1}{K}}.
\end{align*}  
\end{lemma}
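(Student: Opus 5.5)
Write $\ket{\psi}=\sum_{S,S'}\alpha_{S,S'}\ket{\varphi_{S,S'}}\ket{S}\ket{S'}$, where $S$ and $S'$ are the compressed databases of $R$ and $R'$, with $|S|+|S'|\le\ell$. By definition,
\[
\bra{\psi}\widehat Y\ket{\psi}=\frac1K\sum_k\Bigl(\norm{(\Pi\otimes\Id)\widehat D_k\ket\psi}^2-\norm{(\Pi\otimes\Id)\widehat D'_k\ket\psi}^2\Bigr).
\]
The plan is to compare the two terms for each fixed $k$ via a unitary $\sigma_k$ on the database registers. This unitary swaps row $k$ of the $R$-database with row $k$ of the $R'$-database, i.e.\ it exchanges the sets $S_k$ and $S'_k$. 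It satisfies $\sigma_k T_{k,x}\sigma_k^\dagger=T'_{k,x}$, so $\sigma_k\widehat D_k\sigma_k^\dagger=\widehat D'_k$, and it commutes with $\Pi$ and with every $D_x$, since these act only on the algorithm register. Consequently
\[
\norm{\Pi\widehat D'_k\ket\psi}=\norm{\Pi\widehat D_k\sigma_k^\dagger\ket\psi}.
\]
Using $|a^2-b^2|=|a-b|(a+b)$ together with the reverse triangle inequality, each summand is then bounded by
\[
\norm{\Pi\widehat D_k(\ket\psi-\sigma_k^\dagger\ket\psi)}\cdot\Bigl(\norm{\Pi\widehat D_k\ket\psi}+\norm{\Pi\widehat D'_k\ket\psi}\Bigr).
\]

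\textbf{The swap only moves occupied rows.} The difference $\ket\psi-\sigma_k^\dagger\ket\psi$ vanishes on every component with $S_k=S'_k=\emptyset$. Its norm is therefore at most $2\norm{P_k\ket\psi}$, where $P_k$ projects onto database pairs with $S_k\cup S'_k\ne\emptyset$. Since a pair with $|S|+|S'|\le\ell$ has at most $\ell$ occupied rows, summing gives
\[
\sum_k\norm{P_k\ket\psi}^2\le\ell.
\]
Next I need $\widehat D_k$ to act boundedly. Here $\widehat D_k$ is not an isometry on arbitrary inputs, because interference across the toggles $T_{k,x}$ can occur. Still, on a state supported on databases of size at most $\ell$ (note that $\sigma_k$ and $P_k$ preserve this support), the computation in the proof of \Cref{lem:1-query-compressed-oracle-search} splits $\widehat D_k\ket\phi$ into add and delete parts. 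Without the averaging over $k$, the operator-valued Cauchy--Schwarz inequality gives $\norm{\text{add part}}^2\le\sum_x\norm{D_x}^2$-type bounds. To avoid a per-$k$ operator-norm estimate, I will instead keep the sum over $k$ inside Cauchy--Schwarz.

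\textbf{Applying Cauchy--Schwarz over $k$.} This gives
\[
|\bra\psi\widehat Y\ket\psi|\le\frac1K\Bigl(\sum_k\norm{\Pi\widehat D_k P_k\ket{\chi_k}}^2\Bigr)^{1/2}\Bigl(\sum_k 2\bigl(\norm{\Pi\widehat D_k\ket\psi}^2+\norm{\Pi\widehat D'_k\ket\psi}^2\bigr)\Bigr)^{1/2},
\]
where $\ket{\chi_k}=\ket\psi-\sigma_k^\dagger\ket\psi$. The second factor is at most $\sqrt{4K\cdot 4(\ell+1)/K}=O(\sqrt{\ell+1})$. Indeed, \Cref{lem:1-query-compressed-oracle-search}, applied with $\Pi_k$ replaced by $\Pi$ and database size $\ell$, gives $\sum_k\norm{\Pi\widehat D_k\ket\psi}^2\le 4(\ell+1)$; the same argument on the primed copy gives the same bound for $\widehat D'_k$.

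\textbf{Main obstacle: the first factor.} The target is
\[
\sum_k\norm{\Pi\widehat D_k\ket{\chi_k}}^2=O(\ell+1),
\]
since then the total is $\frac1K\cdot O(\sqrt{\ell+1})\cdot O(\sqrt{\ell+1})=O((\ell+1)/K)$. That is at most $20\sqrt{(\ell+1)/K}$ whenever $\ell+1\le K$, which is exactly the hypothesis. A naive bound $\norm{\widehat D_k}\le O(\sqrt{\ell+1})$ is too lossy: it only yields $\sqrt{\ell}\cdot\sqrt{\ell}\cdot\sqrt\ell/K$. So I will redo the add/delete split directly on $\widehat D_k\ket{\chi_k}$. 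For the add part, the operator Cauchy--Schwarz with $\sum_x D_xD_x^\dagger\preceq\Id$ bounds it by $\norm{\ket{\chi_k}}^2\le 4\norm{P_k\ket\psi}^2$, which sums over $k$ to at most $4\ell$. For the delete part, Cauchy--Schwarz over the at most $|S_k|+1$ entries of row $k$ is summed over $k$ exactly as in the first-sum argument of \Cref{lem:1-query-compressed-oracle-search}; using $\sum_k\Pi\preceq K\Pi$ is not allowed, so instead I bound $|S_k|\le\ell$ and $\sum_x D_x^\dagger D_x\preceq\Id$. This gives $O((\ell+1)\sum_k\norm{P_k\ket\psi}^2/\ldots)$.

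The delicate point is avoiding an extra factor of $\ell$ in this delete-part bound. If it cannot be avoided, I will fall back to the weaker but sufficient route: bound the first factor by $O(\ell(\ell+1))$ and the second by $O(1)$, using instead $\sum_k\norm{\Pi\widehat D_k P_k\ket\psi}^2$ restricted to the occupied rows. Either way, the final bound is $O(\sqrt{(\ell+1)/K})$.
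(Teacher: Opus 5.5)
Your skeleton is the same as the paper's: the swap $\sigma_k$ of row $k$ between the two copies, the bound $|p_k-q_k|\le\norm{\Pi\widehat D_k\ket{\chi_k}}(\sqrt{p_k}+\sqrt{q_k})$ with $p_k=\norm{\Pi\widehat D_k\ket\psi}^2$ and $q_k=\norm{\Pi\widehat D'_k\ket\psi}^2$, and Cauchy--Schwarz over $k$. The gap is in how you bound the two factors, and it is substantive.

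\textbf{The second factor.} You cannot apply \Cref{lem:1-query-compressed-oracle-search} with every $\Pi_k$ replaced by the common $\Pi$. Its first-sum estimate uses $\sum_k\Pi_k\preceq\Id$, which here becomes $K\Pi$. Already for $\ket\psi=\ket\varphi\ket\emptyset$, $\Pi=\Id$ and $V=\Id$, one gets $p_k=\sum_x\norm{D_x\ket\varphi}^2=1$ for every $k$. The correct input is the per-row estimate $\norm{\Pi\widehat D_k\ket\phi}^2\le2\sum_S(2n_k(S)+1)|\alpha_S|^2$, where $n_k(S)$ is the size of row $k$ in the relevant copy. It gives $\sum_k(p_k+q_k)\le4K+8\ell$, so your second factor is of order $\sqrt K$, not $O(\sqrt{\ell+1})$.

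\textbf{The intermediate bound is false.} Your $O((\ell+1)/K)$ therefore does not follow, and it is in fact false. Take $M=N$ and $V=\Id$, so $D_x=\ket{x}\bra{x}$. Pair $[N]$ into $N/2$ pairs $\{x_j,x'_j\}$, let $\ket{\phi_j}=(\ket{x_j}+\ket{x'_j})/\sqrt2$, set $\Pi=\sum_j\ket{\phi_j}\bra{\phi_j}$, and let
\[
\ket\psi=\tfrac{1}{\sqrt2}\ket{+}\ket{\emptyset}+\tfrac{1}{\sqrt{KN}}\sum_{k'\in[K]}\sum_{j}\ket{\phi_j}\ket{\{(k',x_j),(k',x'_j)\}},
\]
with all entries in the $R$-copy and $\ket{+}=N^{-1/2}\sum_x\ket{x}$. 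The final database $\{(k,x)\}$ is reached both by adding $x$ to $\emptyset$ and by deleting the partner of $x$. A direct computation gives $p_k-q_k=1/(2\sqrt K)$ for every $k$, so $\bra\psi\widehat Y\ket\psi=1/(2\sqrt K)$ with $\ell=2$.

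\textbf{The fallback.} It fails for the same reason. The second factor is never $O(1)$, and an $O(\ell^2)$ first factor combined with the correct $O(\sqrt K)$ second factor gives only $O(\ell/\sqrt K)$. That is not the lemma for large $\ell$, which is exactly the regime $\ell\approx\log M$ that the trace-moment argument needs.

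\textbf{What is needed.} You need a first factor of $O(\ell)$, and the same per-row estimate gives it once you stop bounding $|S_k|\le\ell$ uniformly. Since $\ket{\chi_k}=(\Id-\sigma_k)P_k\ket\psi$ and $\widehat D_k\sigma_k=\sigma_k\widehat D'_k$, you have $\norm{\Pi\widehat D_k\ket{\chi_k}}^2\le2\norm{\Pi\widehat D_kP_k\ket\psi}^2+2\norm{\Pi\widehat D'_kP_k\ket\psi}^2$. The per-row estimate then bounds the sum over $k$ by $4\sum_S|\alpha_S|^2\sum_{k:\,n_k(S)+n'_k(S)>0}(2n_k(S)+2n'_k(S)+2)\le16\ell$. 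This uses $\sum_k(n_k(S)+n'_k(S))=|S|$ and the fact that at most $|S|$ rows are occupied. Combined with the second factor, this gives $\frac1K\sqrt{16\ell}\sqrt{24K}\le20\sqrt{(\ell+1)/K}$, which is the paper's proof.

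\textbf{Which Cauchy--Schwarz goes where.} You have attached the two Cauchy--Schwarz inequalities to the wrong parts. Operator Cauchy--Schwarz applied to additions would give $\sum_S(N-n_k(S))|\alpha_S|^2$, which is useless. It belongs on deletions: there are at most $n_k(S)$ per initial database, yielding $\sum_S n_k(S)|\alpha_S|^2$ rather than $\norm{\ket{\chi_k}}^2$. The counting Cauchy--Schwarz, over at most $n_k(S^+)$ interfering histories, belongs on additions.
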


With \Cref{lem:1-query-compressed-oracle-distinguish}, now we can bound \cref{eq:trace-to-compressed-distinguish}.

\begin{proof}[Proof of \Cref{thm:binary-phase-distinguish}]
This proceeds exactly as in \Cref{subsec:trace-search-small-database}. 
For each $z$, each $\widehat Y$ will append the database by size at most $2$. Therefore, $\ket{\psi_z^i}:=\widehat Y^i\ket{z}\ket{\emptyset}$ is supported on database of size $\le 2i$, and for $P_{\le i}$ defined as the projector on database of size $\le i$, as long as $2i+1\le K$, 
\begin{align*}
    \norm{\ket{\psi_z^i}}
    &=\norm{P_{\leq 2i}\widehat Y P_{\leq 2i-2}\ket{\psi_z^{i-1}}} \\
    &\leq \norm{P_{\leq 2i}\widehat Y P_{\leq 2i}}\cdot \norm{\ket{\psi_z^{i-1}}} \\
    &\leq 20\sqrt{\frac{2i+1}{K}}\,\norm{\ket{\psi_z^{i-1}}}.
\end{align*}

Suppose that $K > 2\log M$ (or we will trivially bound the distinguishing advantage in the regime $K \le 2 \log M$ by $1$). Then we set $t = \log M$, and 
\begin{align*}
    \bra{z}\bra*{\emptyset} \widehat Y^{2t} \ket{z}\ket{\emptyset}
    = \norm{\ket{\psi_t}}^2
    \leq \prod_{i=1}^t \frac{400(2i+1)}{K}
    \leq \left(\frac{1600t}{K}\right)^t.
\end{align*}
Summing over $z\in[M]$, this gives
\begin{align*}
    \mathbb{E}_R \norm{X_R}
    \leq \mathbb{E}_{R,R'} \norm{Y_{R,R'}} 
    \leq \left( M \cdot \left(\frac{1600t}{K}\right)^t \right)^{1/2t} 
    &\leq O\!\left(\sqrt{\frac{\log M}{K}}\right).
\end{align*}

\end{proof}

\subsubsection{Proof of the small database distinguishing bound}

\begin{proof}[Proof of \Cref{lem:1-query-compressed-oracle-distinguish}]
For each $k$, define $p_k \coloneqq \norm{\Pi \widehat D_k \ket{\psi}}^2$ and $q_k \coloneqq \norm{\Pi \widehat D'_k \ket{\psi}}^2$, so that
\begin{align*}
    \bra{\psi} \widehat Y \ket{\psi} = \frac{1}{K} \sum_k (p_k - q_k).
\end{align*}

For a fixed choice of $k$ and for $b\in \{0,1\}$, define
\begin{align*}
    n_k^{(b)}(S) \coloneqq \#\{x : (b,k,x) \in S\}.
\end{align*}
We first prove the following bound.
\begin{claim}\label{claim:trace-distinguish-row-by-row}
    Suppose $\ket{\psi} = \sum_S \alpha_S \ket{\varphi_S}\ket{S}$. Then for every $k \in [K]$, $b\in \{0,1\}$, we have
    \begin{align*}
        \norm{\Pi \widehat D_k^{(b)} \ket{\psi}}^2 \leq 2 \sum_S (2n_k^{(b)}(S)+1) \cdot |\alpha_S|^2,
    \end{align*}
    where $\widehat D_k^{(0)} = \widehat D_k$ and $\widehat D_k^{(1)} = \widehat D_k'$.
\end{claim}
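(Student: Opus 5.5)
This claim is the distinguishing-game analogue of the first and second sums in the proof of \Cref{lem:1-query-compressed-oracle-search}. I would follow that argument nearly verbatim, with the single projector $\Pi$ (satisfying $0\preceq\Pi\preceq I$) in place of the family $\{\Pi_k\}$. Fix $k$ and $b$, and write $T^{(b)}_{k,x}$ for the toggle on the $b$-th copy of the database. Expanding and grouping by the final database $S^+$ (distinct $S^+$ are orthogonal in the database register) gives
\[
\norm{\Pi\widehat D^{(b)}_k\ket\psi}^2=\sum_{S^+}\norm{\Pi\sum_{x,S:\,S^+=S\oplus\{(b,k,x)\}}\alpha_S D_x\ket{\varphi_S}}^2 .
\]
I then split the inner sum according to whether $(b,k,x)\in S^+$ (an addition, $S=S^+\setminus\{(b,k,x)\}$) or $(b,k,x)\notin S^+$ (a deletion, $S=S^+\cup\{(b,k,x)\}$). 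Using $\|u+v\|^2\le 2\|u\|^2+2\|v\|^2$ accounts for the overall factor $2$.

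\textbf{Deletion part.} Apply the operator Cauchy--Schwarz inequality $\|\sum_j B_jv_j\|^2\le\|\sum_j B_jB_j^\dagger\|\sum_j\|v_j\|^2$ with $B_x=\Pi D_x$. Since $\sum_x\Pi D_xD_x^\dagger\Pi\preceq I$, each $S^+$ contributes at most $\sum_x|\alpha_S|^2$ over the relevant $x$. Reindexing by $S$, each $S$ is counted once per $x$ with $(b,k,x)\in S$, giving a total of $\sum_S n^{(b)}_k(S)|\alpha_S|^2$.

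\textbf{Addition part.} The main care is needed here. Since the bound must scale with $n^{(b)}_k(S)$ rather than $\ell$, I cannot simply pay $|S^+|$ as in \Cref{lem:1-query-compressed-oracle-search}. I would apply Cauchy--Schwarz over the $x$ with $(b,k,x)\in S^+$, paying a factor $n^{(b)}_k(S^+)=n^{(b)}_k(S)+1$. This turns the coherent sum into
\[
\sum_{S,\,x\notin S^{(b)}_k}\bigl(n^{(b)}_k(S)+1\bigr)\,|\alpha_S|^2\,\|\Pi D_x\ket{\varphi_S}\|^2 .
\]
Because the weight now depends only on $S$, I can drop $\Pi$, extend the sum to all $x$, and use $\sum_x D_x^\dagger D_x\preceq I$. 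This bounds the addition part by $\sum_S\bigl(n^{(b)}_k(S)+1\bigr)|\alpha_S|^2$.

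Adding the two parts and multiplying by $2$ gives $2\sum_S\bigl(2n^{(b)}_k(S)+1\bigr)|\alpha_S|^2$, as claimed. The only step I would double-check is this reindexing of the Cauchy--Schwarz weight from $S^+$ back to $S$. It works because the row count changes by exactly one, so the weight can be written as a function of the initial database alone.
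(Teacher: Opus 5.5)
Your proposal is correct and follows the paper's proof essentially step for step. You use the same split by the final database $S^+$ into an addition part and a deletion part, which gives the factor $2$. On the addition part, Cauchy--Schwarz pays $n_k^{(b)}(S^+)=n_k^{(b)}(S)+1$, and then $\Pi\preceq I$ and $\sum_x D_x^\dagger D_x\preceq I$ finish the bound. On the deletion part, the operator-valued Cauchy--Schwarz with $B_x=\Pi D_x$ gives $\sum_S n_k^{(b)}(S)|\alpha_S|^2$. The reindexing you flag is exactly the bijection $(S^+,x)\leftrightarrow(S,x)$ the paper uses, so there is no gap.
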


\begin{proof}[Proof of \Cref{claim:trace-distinguish-row-by-row}]
The $b=0$ and $b=1$ cases are symmetric, so we only need to analyze the case for $b=0$. Set $n_k(S) = n_k^{(0)}(S)$. For $S^+$ defined as the final database, we expand
\begin{align*}
    \norm{\Pi \widehat D_k \ket{\phi}}^2
    = \sum_{S^+}
    \bigg\| \Pi \cdot \sum_{x,S:S^+=S\oplus\{(k,x)\}} \alpha_S D_x \ket{\phi_S} \bigg\|^2 
    \leq 2(A_k+B_k),
\end{align*}
where
\begin{align*}
    A_k \coloneqq \sum_{S^+}
    \bigg\| \Pi \cdot \sum_{\substack{x:(k,x)\in S^+\\ S=S^+\setminus\{(k,x)\}}} \alpha_S D_x \ket{\phi_S} \bigg\|^2, \quad
    B_k \coloneqq \sum_{S^+}
    \bigg\| \Pi \cdot \sum_{\substack{x:(k,x)\notin S^+\\ S=S^+\cup\{(k,x)\}}} \alpha_S D_x \ket{\phi_S} \bigg\|^2.
\end{align*}

\paragraph{Bound $A_k$.} Note that for fixed $S^+$, the inner sum contains exactly $n_k(S^+)$ terms, and thus,
\begin{align*}
    A_k
    &\leq \sum_{S^+} n_k(S^+) \cdot \sum_{\substack{x:(k,x)\in S^+\\ S=S^+\setminus\{(k,x)\}}} |\alpha_S|^2 \cdot \norm{\Pi D_x \ket{\phi_S}}^2 \\
    &\leq \sum_S (n_k(S)+1) \cdot \sum_x |\alpha_S|^2 \cdot \norm{\Pi D_x \ket{\phi_S}}^2 \\
    &= \sum_S (n_k(S)+1) \cdot |\alpha_S|^2 \cdot \bra{\phi_S} \sum_x D_x^\dagger \Pi D_x \ket{\phi_S} \\
    &\leq \sum_S (n_k(S)+1) \cdot |\alpha_S|^2,
\end{align*}
where in the last line we used $\Pi \preceq I$ together with $\sum_x D_x^\dagger D_x \preceq I$.

\paragraph{Bound $B_k$.} We use the same operator-valued Cauchy-Schwarz as in the proof of \Cref{lem:1-query-compressed-oracle-search}:
\begin{align*}
    B_k
    &\leq \sum_{S^+}
    \left\| \sum_{\substack{x:(k,x)\notin S^+}} \Pi D_x D_x^\dagger \Pi \right\|
    \cdot
    \sum_{\substack{x:(k,x)\notin S^+\\ S=S^+\cup\{(k,x)\}}} |\alpha_S|^2 \\
    &\leq \sum_{S^+}
    \sum_{\substack{x:(k,x)\notin S^+\\ S=S^+\cup\{(k,x)\}}} |\alpha_S|^2 \\
    &= \sum_S n_k(S) \cdot |\alpha_S|^2,
\end{align*}
where we used that the $D_x$ are diagonal, and therefore $\sum_x D_x D_x^\dagger = \sum_x D_x^\dagger D_x \preceq \Id$.

\paragraph{Combining the two estimates. } This will give
\begin{align*}
    \norm{\Pi \widehat D_k \ket{\phi}}^2 \leq 2 \sum_S (2n_k(S)+1) \cdot |\alpha_S|^2,
\end{align*}
which proves \Cref{claim:trace-distinguish-row-by-row}.
\end{proof}

We now return to the proof of the lemma. 

Applying Cauchy-Schwarz exactly as in the first section,
\begin{align*}
    \abs{\bra{\psi} \widehat Y \ket{\psi}}
    &= \frac{1}{K} \abs{\sum_k (p_k - q_k)} 
    \leq \left(\frac{1}{K}\sum_k (\sqrt{p_k}-\sqrt{q_k})^2\right)^{1/2}
    \cdot
    \left(\frac{1}{K}\sum_k (\sqrt{p_k}+\sqrt{q_k})^2\right)^{1/2}.
\end{align*}
We bound the two factors separately.

\paragraph{The second factor.} Applying \Cref{claim:trace-distinguish-row-by-row} with $b=0$ and $b=1$ respectively, we obtain
\begin{align*}
    \frac{1}{K} \sum_k p_k
    &\leq \frac{2}{K} \sum_k \sum_S (2n_k(S)+1) \cdot |\alpha_S|^2 
    = 2 \sum_S \left(1 + \frac{2}{K}\sum_k n_k(S)\right) |\alpha_S|^2 
    \leq 2 + \frac{4\ell}{K},
\end{align*}
and this holds similarly for $q_k$, with $\frac1K\sum_k q_k\le 2+\frac{4\ell}{K}$. Therefore, for $\ell+1\le K$,
\begin{align*}
    \frac{1}{K}\sum_k (\sqrt{p_k}+\sqrt{q_k})^2
    &\leq \frac{2}{K} \sum_k (p_k+q_k)
    \leq 
    8+\frac{16\ell}{K}\le 24.
\end{align*}

\paragraph{The first factor.}
For each $k$, let $S_k$ be the unitary that swaps $(0,k)$ and $(1,k)$ in the compressed database; that is, swapping the database $D_k$ and $D_k'$. Let $B_k$ denote the projector onto database states $S$ satisfying $ n_k^{(0)}(S)+n_{k}^{(1)}(S) > 0$; that is, projecting on database with $D_k\cup D_k'$ non-empty.

Write
$\ket{a_k} \coloneqq (I-B_k)\ket{\psi}$ and $\ket{b_k} \coloneqq B_k \ket{\psi}$.
Then $S_k \ket{a_k} = \ket{a_k}$, and $S_k \widehat D'_k = \widehat D_k S_k$. This implies
\begin{align*}
    \sqrt{p_k}
    = \norm{\Pi \widehat D_k (\ket{a_k}+\ket{b_k})}
    = \norm{\Pi \widehat D'_k (\ket{a_k} + S_k \ket{b_k})},
\end{align*}
while
\begin{align*}
    \sqrt{q_k} = \norm{\Pi \widehat D'_k (\ket{a_k}+\ket{b_k})}.
\end{align*}
Therefore,
\begin{align*}
    \abs{\sqrt{p_k}-\sqrt{q_k}}^2
    &\leq \norm{\Pi \widehat D'_k (S_k \ket{b_k} - \ket{b_k})}^2
    \leq 2\norm{\Pi \widehat D'_k S_k \ket{b_k}}^2 + 2\norm{\Pi \widehat D'_k \ket{b_k}}^2 \\
    &= 2\norm{\Pi \widehat D_k \ket{b_k}}^2 + 2\norm{\Pi \widehat D'_k \ket{b_k}}^2.
\end{align*}
Applying \Cref{claim:trace-distinguish-row-by-row} to state $\ket{b_k}$, we have
\begin{align*}
    \norm{\Pi \widehat D_k \ket{b_k}}^2
    &\leq 2 \sum_S (2n_k^{(0)}(S)+1) \cdot \mathbf{1}[n_k^{(0)}(S)+n_k^{(1)}(S)>0] \cdot |\alpha_S|^2,
\end{align*}
and similarly for $\widehat D_k'$. Therefore
\begin{align*}
    \frac{1}{K}\sum_k (\sqrt{p_k}-\sqrt{q_k})^2
    &\leq \frac{4}{K}\sum_S |\alpha_S|^2 \cdot \sum_{k:n_k^{(0)}(S)+n_k^{(1)}(S)>0} \big(2n_k^{(0)}(S)+1+2n_k^{(1)}(S)+1\big)
    \\
    &\leq \frac{4}{K}\sum_S |\alpha_S|^2 \cdot 4|S|
    \le \frac{16\ell}{K}.
\end{align*}

\paragraph{Combining the two factors.} This will give
\begin{align*}
    \abs{\bra{\psi} \widehat Y \ket{\psi}}
    \leq \sqrt{\frac{16\ell}{K}} \cdot \sqrt{24} 
    \leq 20 \sqrt{\frac{\ell+1}{K}},
\end{align*}
which proves \Cref{lem:1-query-compressed-oracle-distinguish}.
\end{proof}

\section{One-query unitary synthesis with additional query access to $R$}
\label{sec:1-query-post-query}

In this section, we extend the oracle state search and distinguishing bounds of \cref{sec:unitary-synthesis} to adversaries with additional access to the underlying random oracle $R$. After receiving the challenge, the adversary may make $t$ queries to $R$, one query to an arbitrary phase oracle $O_{f_R}$, and then $T$ further queries to $R$. As before, $f_R$ may depend arbitrarily on $R$ but not on the challenge, while all non-oracle operations are fixed independently of $R$.

We first consider the post-query setting, where $t=0$. Here the isometry preceding the arbitrary phase query remains independent of $R$, so the spectral relaxation from \cref{sec:unitary-synthesis} still applies.
We extend the compressed-oracle analysis to handle the subsequent queries, proving the search bound in \cref{sec:post-query-search} and the distinguishing bound in \cref{sec:post-query-distinguishing}. 
Pre-queries make the preceding isometry depend on $R$, so this analysis no longer applies directly.
In \cref{sec:pre-query}, we use a resampling argument to reduce that setting to the post-query case. The resulting bounds are quantitatively weaker, but still give negligible success probability and distinguishing advantage when $K$ is exponential and the numbers of queries and workspace qubits are polynomial.

With some abuse of notation, in the post-query proofs of \cref{sec:post-query-search,sec:post-query-distinguishing}, the number of pre-queries is 0, and $t$ denotes a freely chosen trace-moment parameter, with moments of order $2t$; in \cref{sec:pre-query}, we reuse $t$ for the number of queries to $R$ before the arbitrary phase query.

\subsection{Oracle state search game with post-query  access}
\label{sec:post-query-search}

We begin with the search game in the post-query setting, where there is no pre-query. The following theorem extends the bound from \cref{sec:unitary-synthesis-search} to adversaries making $T$ queries to $R$ after the arbitrary phase query. 

\begin{theorem}[One-query search bound for binary phase states, with post queries]
\label{thm:binary-phase-search-post}
For the variant of the oracle state search game associated with the family $\{\ket{\psi_{R,k}}\}_{k\in[K]}$ with $R=(R(k,x))_{k\in[K],\,x\in[N]}$ as a family of independent Rademacher random variables, where the adversary is also allowed to make $T$ queries to $R$ after the arbitrary phase query,
every such adversary with workspace dimension $M$ satisfies
\[
\mathbb{E}_R\bigl[\Win(\cA\mid R)\bigr]
\le O\left(\frac{T^2+(T+1)\log M}{K}\right).
\]
\end{theorem}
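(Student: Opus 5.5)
We follow the template of \Cref{thm:binary-phase-search}, inserting the $T$ post-queries into the spectral relaxation. Write the adversary as $V$, then $O_{f_R}$, then $A_R:=(O_RU)^T$, then $\{\Pi_k\}$. By \Cref{lem:weight-vector-decomposition}, $V\ket{\psi_{R,k}}=D_{R,k}\ket{\wt_V}$. Since $D_{R,k}$ is diagonal it commutes with $O_{f_R}$, so the success probability equals the quadratic form
\[
\bra{\wt_V}O_{f_R}^\dagger\, W_R\, O_{f_R}\ket{\wt_V},
\qquad
W_R:=\mathbb{E}_k D_{R,k}^\dagger A_R^\dagger\Pi_kA_R D_{R,k}.
\]
Maximizing over $f$ therefore gives at most $\mathbb{E}_R\|W_R\|\le(\mathbb{E}_R\Tr W_R^{2t})^{1/2t}$.

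We then purify $R$ with the compressed phase oracle, exactly as in \Cref{sec:1OWS-trace-moment-setup}. This gives
\[
\widehat W=\mathbb{E}_k\,\widehat D_k^\dagger(\cO U)^{\dagger T}\Pi_k(\cO U)^T\widehat D_k,
\qquad
\widehat D_k=\sum_x D_x\otimes T_{k,x}.
\]
Each application of $\widehat W$ enlarges the database by at most $2T+2$. The same iteration as in the proof of \Cref{thm:1ows-ts-tradeoff} then reduces the theorem to a small-database lemma: for normalized $\ket\psi$ supported on databases of size $\le\ell$,
\[
\bra\psi\widehat W\ket\psi\le c\,\frac{1+T^2+\ell}{K}.
\]
Granting this lemma, we get
\[
\mathbb{E}\|W_R\|\le M^{1/2t}\,c\,\frac{1+T^2+(2T+2)t}{K},
\]
and choosing $t=\lceil\log M\rceil$ yields the claimed $O\big((T^2+(T+1)\log M)/K\big)$.

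For the small-database lemma, we mimic the hybrid of \Cref{lem:1ows-l-database-t-query-classical}, using the row swap from \Cref{lem:1-query-compressed-oracle-distinguish}. Introduce a fresh independent row (a second compressed oracle copy $R'$ with empty database), and purify the key into a register $\mathsf K$. In the reference experiment the challenge is prepared with $\widehat D'_k$ on the fresh row, while the $T$ queries go to $R$. Because $\sum_x D_x^\dagger D_x\preceq I$, the reference state has norm at most $1$. Since neither the online queries nor $\ket\psi$ touch $\mathsf K$ or the fresh row, the key stays uniform and independent, so the reference success probability is at most $1/K$.

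We then compare the real and reference experiments with the controlled swap $\mathcal S=\sum_k S_k\otimes\ketbra{k}$, which exchanges row $k$ of $R$ with the fresh row. This operator commutes with $U$, with $\Pi$, and with $T_{a,x}$ for $a\ne k$. Two error terms arise.

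\textbf{Initial error.} As in the first-factor bound of \Cref{lem:1-query-compressed-oracle-distinguish}, the swap fixes components whose row $k$ is empty. Applying \Cref{claim:trace-distinguish-row-by-row} to $B_k\ket\psi$ and averaging over $k$ bounds the squared amplitude error by $O(\ell/K)$.

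\textbf{Query error.} The conjugated oracle $\mathcal S\cO\mathcal S^\dagger$ differs from $\cO$ only on queries to row $k$. A BBBV-style hybrid, as in \cref{ineq:1ows-classical-BBBV-bound}, bounds each step's error by $2\|QU\cdots\|$. In the reference experiment the key is uniform and independent, so $\|Q\cdot\|^2\le\mathbb{E}_k\|Q_k\cdot\|^2\cdot(\text{norm})\le 1/K$. Summing over the $T$ steps gives an error of $2T/\sqrt K$.

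Combining, the success amplitude is at most $(1+2T)/\sqrt K+O(\sqrt{\ell/K})$, and squaring gives the lemma.

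The main obstacle is the hybrid step, because $\widehat D_k$ is not an isometry. The real and reference states are sub-normalized and non-physical, so we must check that the reference state's norm stays at most $1$ and that the key register remains uniform and independent after applying $\widehat D'_k$. The operator $\sum_k\widehat D'_k\otimes\ket k$ does depend on $k$, so this independence is not automatic. If it fails, we would instead bound $\|Q_k(\cdot)\|$ per $k$ directly: summing over $k$ gives $\sum_k\|Q_k\,\cdot\,\widehat D'_k\ket\psi\|^2$. Because $\widehat D'_k$ acts on a fresh row in its $x$-added part, cross terms vanish, and this equals $\sum_x\sum_k\|Q_kD_x\cdots\|^2\le1$, after handling the deletion part as in $B_k$.
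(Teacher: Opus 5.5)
Your proposal is correct and follows the paper's proof essentially step by step. It uses the same spectral relaxation with $A_R$ inserted, the same trace-moment reduction at $t\approx\log M$ to a small-database bound, and the same three-term comparison: an initial swap error of $O(\sqrt{\ell/K})$ via the row-by-row claim, a $2T/\sqrt K$ query hybrid, and a $1/\sqrt K$ reference term. The only difference is cosmetic: the paper swaps row $k$ with a single auxiliary row $0\notin[K]$ and uses the $k$-independent $\widehat D_0$, which makes the reference state exactly $\ket u_{\mathsf K}\otimes\ket\Theta$ and so removes the key-independence obstacle you flag; your orthogonality fallback for a $k$-indexed fresh row also works.
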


\paragraph{Proof overview.}
The spectral relaxation from the one-query search game \Cref{sec:unitary-synthesis-search} still applies, since the isometry preceding the arbitrary phase query is independent of $R$. It therefore remains to prove the corresponding small-database estimate in the presence of post-queries (\Cref{lem:post-query-sparse-search}). 

The main additional difficulty is that the subsequent $T$ queries can use the correlation between the challenge state and the random oracle. We handle this correlation by introducing a fresh auxiliary row $0\notin[K]$, which is never queried by the algorithm, and swapping it with the challenge row $k$. This helps us compare in these ways:
\begin{itemize}
    \item The swap changes the initial compressed database only on components where database on key $k$ is nonempty. For a database of size at most $\ell$, this contributes $O(\sqrt{\ell/K})$ to the success amplitude.
    \item Conjugating the post-query computation by the swap, this changes the oracle only on queries with address on row $k$. Replacing these queries one at a time contributes $O(T/\sqrt K)$.
\end{itemize}
Also, in the ideal challenge case where the challenge is generated from the fresh row, the computation is independent of the hidden key, so its success probability is at most $1/K$. Combining the three amplitude contributions gives the small-database bound $O((\ell+T^2+1)/K)$.

This extends to the distinguishing case in \Cref{sec:post-query-distinguishing}, as we can follow the same hybrids in the analysis for the centered distinguishing operator.

\subsubsection{Spectral relaxation with post-query access}
We define
\[
    A_R := A_R^{(T)} =(O_R\cdot U)^T
\]
as the $T$-query post-processing unitary, where $U$ is a fixed unitary independent of $R$. The weight vector decomposition in \Cref{sec:weight-vector-decomposition}  still holds, since all $T$ queries are applied after the arbitrary phase oracle query $O_f$.
Therefore, the corresponding spectral relaxation, modified from \Cref{sec:trace-moment-search} by adding $A_R$, is the spectral norm of the matrix
\[
    W_R^{(T)}
    :=\frac1K\sum_{k=1}^K D_{R,k}^\dagger A_R^\dagger\Pi_k A_R D_{R,k}.
\]

Since $W_R^{(T)}\succeq0$, for every positive integer $t$,
\begin{align}\label{ineq:post-query-trace-moment}
    \mathbb{E}_R\norm{W_R^{(T)}}
    &\leq \Big(\mathbb{E}_R\Tr\big((W_R^{(T)})^{2t}\big)\Big)^{1/2t}. 
\end{align}

For random $R$, we can write all things in the compressed oracle framework. For compressed phase oracle $\cO$, we can rewrite $A_R^{(T)}$ and $W_R^{(T)}$ by
\[
    \widehat A=(\cO\,U)^T, \qquad \widehat W
    :=\frac1K\sum_{k=1}^K
    \widehat D_k^\dagger \widehat A^\dagger (\Pi_k\otimes\Id)\widehat A\widehat D_k .
\]
And therefore, our goal from \cref{ineq:post-query-trace-moment} can be written as
\begin{align}
    \Tr\big(\mathbb{E}_R (W_R^{(T)})^{2t}\big)
    &=\sum_{z\in[M]}\bra z\bra\emptyset \widehat W^{2t}\ket z\ket\emptyset . \label{eq:post-query-compressed-trace}
\end{align}

\subsubsection{From the small-database estimate to the trace moment}
\begin{lemma}\label{lem:post-query-sparse-search}
Let $\ket u_{\mathsf K}=K^{-1/2}\sum_{k\in[K]}\ket k_\mathsf K$, and define
\[
    P_{\mathrm{win}}:=\sum_{k=1}^K\ketbra{k}_{\mathsf K}\otimes(\Pi_k\otimes\Id),
    \qquad
    D_{\mathrm{ctrl}}:=\sum_{k=1}^K\ketbra{k}_{\mathsf K}\otimes\widehat D_k.
\]
If $\ket\psi$ is normalized, supported on databases of size at most $\ell$, and has no support on row $0$, then
\[
    p:=\norm{P_{\mathrm{win}}(\Id_{\mathsf K}\otimes \widehat A)D_{\mathrm{ctrl}}(\ket u_{\mathsf K}\otimes\ket\psi)}^2
    \leq \frac{(1+2T+\sqrt{14\ell})^2}{K}.
\]
In particular,
\[
    p=O\left(\frac{\ell+T^2+1}{K}\right).
\]
\end{lemma}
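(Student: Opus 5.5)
We follow the three-hybrid strategy sketched in the proof overview, mirroring the classical-advice argument of \Cref{lem:1ows-classical-controlled-unitary} but with an explicit row swap. For each $k\in[K]$ let $\mathsf{Sw}_k$ be the unitary on the database register that exchanges row $k$ with the auxiliary row $0$, i.e.\ $(k,x)\leftrightarrow(0,x)$ for all $x$. Set $W:=\sum_k\ketbra{k}_{\mathsf K}\otimes\mathsf{Sw}_k$. This $W$ commutes with $U$, with $P_{\mathrm{win}}$, and with every $T_{a,x}$ for $a\notin\{0,k\}$ on the branch $\ket k_{\mathsf K}$. It also satisfies $\mathsf{Sw}_k\widehat D_k=\widehat D_0\mathsf{Sw}_k$, where $\widehat D_0:=\sum_xD_x\otimes T_{0,x}$ is a key-independent challenge generator. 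Since $W$ commutes with $P_{\mathrm{win}}$, we have $\sqrt p=\norm{P_{\mathrm{win}}W\widehat A D_{\mathrm{ctrl}}(\ket u\otimes\ket\psi)}$.

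\textbf{Step 1: initial swap.} Write $\ket\psi=\ket{a_k}+\ket{b_k}$, where $\ket{b_k}$ is the component whose database has nonempty row $k$. Since $\ket\psi$ has no row-$0$ support, $\mathsf{Sw}_k\ket{a_k}=\ket{a_k}$. Hence
\[
W D_{\mathrm{ctrl}}(\ket u\otimes\ket\psi)=\widehat D_0(\ket u\otimes\ket\psi)+\frac{1}{\sqrt K}\sum_k\ket k\otimes \widehat D_0(\mathsf{Sw}_k\ket{b_k}-\ket{b_k}).
\]
We bound the error term using \Cref{claim:trace-distinguish-row-by-row}, whose proof applies verbatim to $\widehat D_0$ with $\Pi$ replaced by $I$. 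Since $\mathsf{Sw}_k\ket{b_k}$ and $\ket{b_k}$ have row-$0$ counts $n_k(S)$ and $0$, this gives squared norm at most $\frac{C}{K}\sum_S|\alpha_S|^2\sum_{k:n_k(S)>0}(2n_k(S)+1)\le \frac{C'\ell}{K}$. Tracking constants should give the amplitude error $\sqrt{14\ell/K}$.

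\textbf{Step 2: query hybrid.} Put $\widetilde{\cO}:=W\cO W^\dagger$, so that $W\widehat A=(\widetilde{\cO}U)^TW$. On the $\ket k$ branch, $\widetilde{\cO}$ differs from $\cO$ only on queries with key $k$: row $0$ is never queried, and a swapped query to row $k$ acts on row $0$. Therefore $\norm{(\widetilde{\cO}-\cO)\ket v}\le 2\norm{Q\ket v}$ with $Q=\sum_k Q_k\otimes\ketbra k$. Now run the reference computation $(\cO U)^T\widehat D_0(\ket u\otimes\ket\psi)$ and replace one query at a time. Throughout the reference computation, $\mathsf K$ remains in the product state $\ket u$, so $\norm{Q\,\cdot}^2\le 1/K$ at each step, as in \cref{ineq:1ows-classical-BBBV-bound}. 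The total hybrid cost is $2T/\sqrt K$. This uses that $\widehat D_0$ and $\cO$ never touch $\mathsf K$.

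\textbf{Step 3: ideal term and conclusion.} In the reference state, $\mathsf K$ is unentangled in the state $\ket u$, and the norm is $\norm{(\cO U)^T\widehat D_0\ket\psi}\le1$ because $\sum_xD_x^\dagger D_x\preceq I$ and $T_{0,x}$ flips distinct entries. Using $\sum_k\Pi_k=I$, the success amplitude is therefore at most $1/\sqrt K$. The triangle inequality then gives
\[
\sqrt p\le \frac{1+2T+\sqrt{14\ell}}{\sqrt K}.
\]

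The main obstacle is Step 1. One must show that $\widehat D_0$ applied to the difference $\mathsf{Sw}_k\ket{b_k}-\ket{b_k}$ is controlled by $O(\ell/K)$ in squared norm, even though $\widehat D_0$ is not an isometry. The operator-valued Cauchy--Schwarz bound of the row-by-row claim, applied to branches with occupied row $k$ or row $0$, is the tool we would try first.
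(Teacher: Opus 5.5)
Your proposal is correct and follows the paper's proof essentially step for step: the same controlled swap of row $k$ with the fresh row $0$, the initial-swap error bounded via the row-by-row estimate applied to $\widehat D_0$, the one-query-at-a-time hybrid costing $2T/\sqrt K$ while $\mathsf K$ stays in $\ket u$, and the $1/\sqrt K$ ideal term. One small point on constants: the exact $14$ needs the sharper bound $\norm{\widehat D_0\ket{b_k}}\le\norm{\ket{b_k}}$ for the row-$0$-empty piece (the second part of the paper's row-$0$ claim). The general row-by-row bound applied to that piece gives $16$ instead, which does not affect the $O(\cdot)$ statement.
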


From the definition of $\widehat W$, \Cref{lem:post-query-sparse-search} gives an upper bound for $\bra\psi\widehat W\ket\psi$. Similarly, this gives us a way to bound \cref{eq:post-query-compressed-trace}.

\begin{proof}[Proof of \Cref{thm:binary-phase-search-post}]
For each $z\in [M]$, each $\widehat W$ will append (or change) the database by size at most $2(T+1)$: 2 in $\widehat D_k$ and ${\widehat D_k}^\dagger$, and $2T$ in $\widehat A$ and ${\widehat A}^\dagger$. Therefore, $\ket{\psi_z^i}:=\widehat W^i\ket{z}\ket{\emptyset}$ is supported on database of size $\le 2i(T+1)$. 

For $P_{\leq \ell}^{\mathrm{no0}}$ defined as the projector onto databases of size at most $\ell$ that contain no entry in row $0$ (row $0$ will be used only as a fresh auxiliary row in the proof), we have
\begin{align*}
    \norm{\ket{\psi_z^i}}
    &=\norm{P^{\mathrm{no0}}_{\leq 2i(T+1)}\widehat W P^{\mathrm{no0}}_{\leq 2(i-1)(T+1)}\ket{\psi_z^{i-1}}} \\
    &\leq \norm{P^{\mathrm{no0}}_{\leq 2i(T+1)}\widehat W P^{\mathrm{no0}}_{\leq 2i(T+1)}}\cdot \norm{\ket{\psi_z^{i-1}}} \\
    &\leq \frac{(1+2T+\sqrt{28i(T+1)})^2}{K}\,\norm{\ket{\psi_z^{i-1}}}.
\end{align*}
Therefore, for each $z\in[M]$, for some universal constant $C$,
\[
    \bra z\bra\emptyset(\widehat W)^{2t}\ket z\ket\emptyset
    =\norm{\psi_z^t}^2
    \leq \left(\frac{C(T^2+(T+1)t)}{K}\right)^{2t}.
\]
From \cref{ineq:post-query-trace-moment,eq:post-query-compressed-trace}, and taking $t=\log M$, this gives
\[
    \mathbb{E}_R\norm{W_R^{(T)}}
    \leq O\left(\frac{T^2+(T+1)\log M}{K}\right).
\]
\end{proof}

Now we are left to prove \Cref{lem:post-query-sparse-search}.

\subsubsection{Proof of the search bound with post-query}
\label{sec:proof-of-post-query-sparse-search}

We first prove the following bound. As above, row $0$ is an auxiliary row
and is not queried by $\cO$.

\begin{claim}\label{claim:post-query-row-zero}
Let
\[
    \widehat D_0:=\sum_x D_x\otimes T_{0,x},
    \qquad
    n_0(S):=\#\{x:(0,x)\in S\}.
\]
For any state $\ket{\psi}=\sum_S\beta_S\ket{\varphi_S}\ket S$,
where each $\ket{\varphi_S}$ is normalized, we have
\begin{equation}\label{eq:post-query-row-zero-general}
    \norm{\widehat D_0\ket{\psi}}^2
    \le 2\sum_S(2n_0(S)+1)|\beta_S|^2.
\end{equation}
Moreover, if every database in the support of $\ket{\varphi}$ has row $0$
empty, then
\begin{equation}\label{eq:post-query-row-zero-empty}
    \norm{\widehat D_0\ket{\psi}}^2
    \le \norm{\ket{\psi}}^2.
\end{equation}
\end{claim}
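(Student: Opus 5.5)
The plan is to reuse, almost verbatim, the argument behind \Cref{claim:trace-distinguish-row-by-row}, with $\Pi$ replaced by the identity and row $k$ replaced by the auxiliary row $0$. First I expand $\widehat D_0\ket\psi$ in the basis of final databases $S^+$:
\[
\widehat D_0\ket\psi=\sum_{S^+}\Big(\sum_{x,S:\,S^+=S\oplus\{(0,x)\}}\beta_S D_x\ket{\varphi_S}\Big)\ket{S^+},
\]
so that $\norm{\widehat D_0\ket\psi}^2=\sum_{S^+}\norm{\sum_{x}\beta_S D_x\ket{\varphi_S}}^2$. For each $S^+$, I split the inner sum according to whether $(0,x)\in S^+$ (an entry was added, so $S=S^+\setminus\{(0,x)\}$) or $(0,x)\notin S^+$ (an entry was deleted, so $S=S^+\cup\{(0,x)\}$). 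The inequality $\norm{a+b}^2\le 2\norm a^2+2\norm b^2$ then gives $\norm{\widehat D_0\ket\psi}^2\le 2(A+B)$.

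For the addition part $A$, the inner sum for fixed $S^+$ has exactly $n_0(S^+)$ terms. Scalar Cauchy--Schwarz gives
\[
A\le \sum_{S^+} n_0(S^+)\sum_x|\beta_S|^2\norm{D_x\ket{\varphi_S}}^2.
\]
Reindexing by $S$ uses $n_0(S^+)=n_0(S)+1$. Together with $\sum_x D_x^\dagger D_x\preceq \Id$, this yields $A\le\sum_S(n_0(S)+1)|\beta_S|^2$.

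For the deletion part $B$, I apply the operator-valued Cauchy--Schwarz inequality with $B_x=D_x$. Since the $D_x$ are diagonal, $\sum_x D_xD_x^\dagger=\sum_xD_x^\dagger D_x\preceq\Id$. This gives
\[
B\le\sum_{S^+}\sum_{x:(0,x)\notin S^+}|\beta_{S^+\cup\{(0,x)\}}|^2=\sum_S n_0(S)|\beta_S|^2.
\]
Adding the two bounds proves \eqref{eq:post-query-row-zero-general}.

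For \eqref{eq:post-query-row-zero-empty}, row $0$ is empty in every $S$ of the support, so the deletion part vanishes identically. I therefore avoid the crude factor-2 splitting and compute directly. Each final database $S^+$ now has exactly one entry in row $0$, say $(0,x)$, and it arises from a unique pair $(x,S)$ with $S=S^+\setminus\{(0,x)\}$. The map $(x,S)\mapsto S^+$ is thus injective, and the terms are orthogonal in the database register. Hence
\[
\norm{\widehat D_0\ket\psi}^2=\sum_{S,x}|\beta_S|^2\norm{D_x\ket{\varphi_S}}^2\le\sum_S|\beta_S|^2=\norm{\ket\psi}^2,
\]
again using $\sum_xD_x^\dagger D_x\preceq\Id$.

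The only mildly delicate step is the reindexing in $A$, namely checking the count $n_0(S^+)=n_0(S)+1$ and that each pair $(S,x)$ is counted once. Everything else is mechanical.
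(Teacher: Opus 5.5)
Your proposal is correct and follows the paper's proof essentially step for step. For the general bound, the paper likewise reruns the argument of \Cref{claim:trace-distinguish-row-by-row} with $\Pi=\mathrm{Id}$: scalar Cauchy--Schwarz with the count $n_0(S)+1$ on the addition part, and operator-valued Cauchy--Schwarz with $\sum_x D_xD_x^\dagger\preceq\Id$ on the deletion part. For the empty-row bound, the paper uses the same injectivity of $(S,x)\mapsto S\cup\{(0,x)\}$ and orthogonality of the final databases.
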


\begin{proof}[Proof of \Cref{claim:post-query-row-zero}]
The proof of \cref{eq:post-query-row-zero-general} is the same as
the proof of \Cref{claim:trace-distinguish-row-by-row}, with $\Pi=\mathrm{Id}$ and row $0$ in place of
row $k$. For \cref{eq:post-query-row-zero-empty}, since row $0$ is empty,
the final databases $S\cup\{(0,x)\}$ are distinct for distinct pairs
$(S,x)$. Therefore,
\[
    \norm{\widehat D_0\ket{\psi}}^2
    =\sum_S|\beta_S|^2\sum_x\norm{D_x\ket{\varphi_S}}^2
    \le\sum_S|\beta_S|^2
    =\norm{\ket{\psi}}^2,
\]
where we used $\sum_xD_x^\dagger D_x\preceq\mathrm{Id}$.
\end{proof}

We now return to the proof of \Cref{lem:post-query-sparse-search}.

\begin{proof}[Proof of \Cref{lem:post-query-sparse-search}]
Set $\ket\Psi:=\ket u_\mathsf K\otimes\ket\psi$. For each $k\in[K]$, let
$S_k$ be the unitary that swaps rows $0$ and key $k$ part in the compressed
database. Define
\[
    S:=\sum_{k=1}^K\ket k\!\bra k_\mathsf K\otimes S_k,
    \qquad
    \widehat A':=S(\mathrm{Id}_\mathsf K\otimes\widehat A)S.
\]
By definition, $D_{\mathrm{ctrl}}S
    =S(\mathrm{Id}_\mathsf K\otimes\widehat D_0)$.
Since $P_{\mathrm{win}}$ acts trivially on the database, it commutes
with $S$. Therefore, from the definition of $p$ as in \Cref{lem:post-query-sparse-search}, it can be rewritten as 
\begin{align}\label{eq:post-query-search-prob-compressed}
    p
    =\norm{P_{\mathrm{win}}(\mathrm{Id}_\mathsf K\otimes\widehat A)
        D_{\mathrm{ctrl}}\ket\Psi}^2
    =\norm{P_{\mathrm{win}}\widehat A'
        (\mathrm{Id}_\mathsf K\otimes\widehat D_0)S\ket\Psi}^2.
\end{align}
Define modified winning probability $p_0$ (for acting always on row $0$) and $p_*$ (for starting unentangled with the challenge key):
\begin{align}
    p_0&:=\norm{P_{\mathrm{win}}\widehat A'
        (\mathrm{Id}_\mathsf K\otimes\widehat D_0)\ket\Psi}^2,
        \label{eq:post-query-search-prob-compress-0}
        \\
    p_*&:=\norm{P_{\mathrm{win}}(\mathrm{Id}_\mathsf K\otimes\widehat A)
        (\mathrm{Id}_\mathsf K\otimes\widehat D_0)\ket\Psi}^2.
        \label{eq:post-query-search-prob-compress-*}
\end{align}
By triangle inequality,
\[
    \sqrt p
    \le |\sqrt p-\sqrt{p_0}|
       +|\sqrt{p_0}-\sqrt{p_*}|+\sqrt{p_*}.
\]
We bound the three terms separately.

\paragraph{The first term $|\sqrt p-\sqrt{p_0}|$.}
By triangle inequality and unitarity of $\widehat A'$,
\[
    |\sqrt p-\sqrt{p_0}|
    \le\norm{(\mathrm{Id}_\mathsf K\otimes\widehat D_0)
        (S-\mathrm{Id})\ket\Psi}.
\]
Write $\ket\psi=\sum_S\alpha_S\ket{\varphi_S}\ket S$, $n_k(S):=\#\{x:(k,x)\in S\}$, where each $\ket{\varphi_S}$ is normalized, and $\alpha_S=0$ whenever $|S|>\ell$ or $n_0(S)>0$.
Let $B_k$ be the projector onto databases with row $k$ (key $k$ part) non-empty.
Since row $0$ is empty in the support of $\ket\psi$, we have
\[
    (S_k-\mathrm{Id})\ket\psi
    =(S_k-\mathrm{Id})B_k\ket\psi.
\]
The state $B_k\ket\psi$ has row $0$ empty, while $S_kB_k\ket\psi$
has $n_k(S)$ entries in row $0$ for each original database $S$.
Applying \Cref{claim:post-query-row-zero}, this gives
\[
\begin{aligned}
    \norm{\widehat D_0(S_k-\mathrm{Id})\ket\psi}^2
    &\le 2\norm{\widehat D_0S_kB_k\ket\psi}^2
        +2\norm{\widehat D_0B_k\ket\psi}^2\\
    &\le 4\sum_{S:n_k(S)>0}(2n_k(S)+1)|\alpha_S|^2
        +2\sum_{S:n_k(S)>0}|\alpha_S|^2\\
    &\le 14\sum_S n_k(S)|\alpha_S|^2.
\end{aligned}
\]
Therefore, by orthogonality of the register $K$,
\[
    \norm{(\mathrm{Id}_\mathsf K\otimes\widehat D_0)
        (S-\mathrm{Id})\ket\Psi}^2
    =\frac1K\sum_{k=1}^K
        \norm{\widehat D_0(S_k-\mathrm{Id})\ket\psi}^2
    \le\frac{14}{K}\sum_S|S|\,|\alpha_S|^2
    \le\frac{14\ell}{K}.
\]
This gives $|\sqrt p-\sqrt{p_0}|\le\sqrt{14\ell/K}$.

\paragraph{The second term $|\sqrt{p_0}-\sqrt{p_*}|$.}
We compare $p_0$ and $p_*$ by replacing the $T$ queries one at a time. 

Define $\cO':=S(\mathrm{Id}_\mathsf K\otimes\cO)S$.
Since $\mathrm{Id}_\mathsf K\otimes U$ commutes with $S$, the circuit
$\widehat A'$ is obtained from $\mathrm{Id}_\mathsf K\otimes\widehat A$ by
replacing each $\mathrm{Id}_\mathsf K\otimes\cO$ with $\cO'$.
For $j=0,1,\ldots,T$, let $\widehat A_j$ be the circuit in which the
first $j$ queries are $\mathrm{Id}_\mathsf K\otimes\cO$, and the remaining
$T-j$ queries are $\cO'$. Then
$\widehat A_0=\widehat A'$ and
$\widehat A_T=\mathrm{Id}_\mathsf K\otimes\widehat A$. Therefore,
\[
    |\sqrt{p_0}-\sqrt{p_*}|
    \le\sum_{j=0}^{T-1}
        \norm{(\widehat A_j-\widehat A_{j+1})
        (\mathrm{Id}_\mathsf K\otimes\widehat D_0)\ket\Psi}.
\]

For a fixed $j$, immediately before the differing query, all previous
queries are $\mathrm{Id}_\mathsf K\otimes\cO$. Thus the operations already
applied act trivially on register $\mathsf K$, and the state has the form
\[
    \ket u_\mathsf{K}\otimes\ket\Theta,
    \qquad \norm{\ket\Theta}\le1,
\]
where the norm bound follows from unitarity of the previous operations. 
Note that the two oracle calls $\cO,\cO'$ differ only when the queried row $r$ equals the challenge key $k$.
For
\[
    B:=\sum_{k=1}^K\ket k\!\bra k_\mathsf K\otimes\ket k\!\bra k_\mathsf Q,
\]
for the queried row register $\mathsf Q$, we will see
\[
    \norm{(\widehat A_j-\widehat A_{j+1})
        (\mathrm{Id}_\mathsf K\otimes\widehat D_0)\ket\Psi}
    \le2\norm{B(\ket u_\mathsf K\otimes\ket\Theta)}
    \le\frac{2}{\sqrt{K}}.
\]
Therefore, by summing over $j=0$ to $T-1$, we will obtain
\[
    |\sqrt{p_0}-\sqrt{p_*}|\le\frac{2T}{\sqrt K}.
\]

\paragraph{The third term $\sqrt{p_*}$.}
Since $\widehat D_0$ and $\widehat A$ act trivially on register $K$,
\[
    (\mathrm{Id}_\mathsf K\otimes\widehat A)
    (\mathrm{Id}_\mathsf K\otimes\widehat D_0)\ket\Psi
    =\ket u_\mathsf K\otimes\ket\xi
\]
for some $\ket\xi$ with $\norm{\ket\xi}\le1$. Therefore, as $\sum_k\Pi_k\preceq\mathrm{Id}$,
\[
    p_*
    =\norm{P_{\mathrm{win}}(\ket u_\mathsf K\otimes\ket\xi)}^2
    =\frac1K\sum_{k=1}^K
        \norm{(\Pi_k\otimes\mathrm{Id})\ket\xi}^2
    \le\frac1K.
\]

\paragraph{Combining the three terms.}
From the above upper bounds,
\[
    \sqrt p
    \le\sqrt{\frac{14\ell}{K}}+\frac{2T}{\sqrt K}
        +\frac1{\sqrt K}
    =\frac{1+2T+\sqrt{14\ell}}{\sqrt K}.
\]
Squaring both sides, this gives $p\le{(1+2T+\sqrt{14\ell})^2}/{K}$,
which proves \Cref{lem:post-query-sparse-search}.
\end{proof}

\subsection{Oracle state distinguishing game with post-query access}\label{sec:post-query-distinguishing}

From the proof of \Cref{thm:binary-phase-search-post} for the search case, in fact we can simply obtain a bound for the corresponding distinguishing case. This extends the one-query distinguishing lower bound in \cite{STOC:LomMaWri24} to hold even when the adversary is allowed to make polynomial queries to $R$ after the one query.

\begin{theorem}[One-query distinguishing bound for binary phase states, with post queries]
\label{thm:binary-phase-distinguish-post}
For variant of the oracle state distinguishing game associated with the family $\{\ket{\psi_{R,k}}\}_{k\in[K]}$ with $R=(R(k,x))_{k\in[K],\,x\in[N]}$ as a family of independent Rademacher random variables, where the adversary is also allowed to make $T$ queries to $R$ after the arbitrary phase query, every such adversary with workspace dimension $M$ satisfies
\[
\mathbb{E}_R\bigl[\Delta_\cA(R)\bigr]
\le O\left(\sqrt{\frac{T^2+(T+1)\log M}{K}}\right)
\]
for $\Delta_\cA(R)$ as the distinguishing advantage when $R$ is sampled.
\end{theorem}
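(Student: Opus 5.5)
We would combine the distinguishing-game template of \Cref{sec:unitary-synthesis-distinguish} with the swap-and-hybrid analysis of \Cref{lem:post-query-sparse-search}. Let $A_R=(O_RU)^T$ and $W_R^{(T)}=\frac1K\sum_k D_{R,k}^\dagger A_R^\dagger\Pi A_RD_{R,k}$. Since all $T$ queries to $R$ happen after $O_f$, the weight-vector decomposition (\Cref{lem:weight-vector-decomposition}) still applies. The spectral relaxation then bounds $\mathbb{E}_R\Delta_\cA(R)$ by $\mathbb{E}_R\|W_R^{(T)}-\Gamma\|$, where $\Gamma$ is the corresponding ideal operator.

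The ideal operator needs care because the oracle queries still depend on $R$. We would realize the ideal challenge by a fresh independent row $0$ of the same oracle: $\Gamma_R:=\mathbb{E}_{h}D_h^\dagger A_R^\dagger\Pi A_R D_h$ with $h=R(0,\cdot)$ never queried. The centered matrix $Y_R:=W_R^{(T)}-\Gamma_R$ is then Hermitian with entries of degree at most $2T+2$ in $R$. This avoids symmetrization entirely. We bound $\mathbb{E}_R\|Y_R\|\le(\mathbb{E}_R\Tr Y_R^{2t})^{1/2t}$ and pass to the compressed operator
\[
\widehat Y=\frac1K\sum_k\Bigl(\widehat D_k^\dagger\widehat A^\dagger\Pi\widehat A\widehat D_k-\widehat D_0^\dagger\widehat A^\dagger\Pi\widehat A\widehat D_0\Bigr).
\]
Each application of $\widehat Y$ changes the database size by at most $2T+2$. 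As in the proof of \Cref{thm:binary-phase-search-post}, we would also keep track of the fact that row $0$ is re-emptied, or at least stays small, in the iterates.

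The key small-database lemma we would aim for is
\[
|\bra\psi\widehat Y\ket\psi|\le O\Bigl(\sqrt{\tfrac{T^2+\ell+1}{K}}\Bigr)
\]
for $\ket\psi$ supported on databases of size at most $\ell$. Write $p_k=\|\Pi\widehat A\widehat D_k\ket\psi\|^2$ and $q=\|\Pi\widehat A\widehat D_0\ket\psi\|^2$. Then $\bra\psi\widehat Y\ket\psi=\frac1K\sum_k(p_k-q)$. Following \Cref{lem:1-query-compressed-oracle-distinguish}, we apply Cauchy--Schwarz to write this as
\[
\Bigl(\tfrac1K\sum_k(\sqrt{p_k}-\sqrt q)^2\Bigr)^{1/2}\Bigl(\tfrac1K\sum_k(\sqrt{p_k}+\sqrt q)^2\Bigr)^{1/2}.
\]
For the second factor, we bound $p_k$ and $q$ by $O(1+\ell/K)$ on average using \Cref{claim:trace-distinguish-row-by-row}, since the unitary $\widehat A$ does not affect norms. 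For the first factor, we reuse the three-term hybrid from \Cref{lem:post-query-sparse-search} row by row. First, swap rows $0$ and $k$ via $S_k$, which costs $\|\widehat D_0(S_k-\Id)\ket\psi\|^2\le 14\sum_Sn_k(S)|\alpha_S|^2$. Second, replace the conjugated queries $S_k\cO S_k$ by $\cO$ one at a time.

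The main obstacle is this second step. In the search proof, the BBBV bound $2/\sqrt K$ per query came from averaging over a key register held in uniform superposition. Here we have the scalar differences $(\sqrt{p_k}-\sqrt q)^2$ for each fixed $k$. Our plan is to keep a purified key register $\ket u_{\mathsf K}$ and bound the vector difference
\[
\sum_k\ket k\otimes\bigl(\widehat A'_k-\widehat A\bigr)\widehat D_0\ket\psi .
\]
Its squared norm equals $\frac1K\sum_k\|(\widehat A'_k-\widehat A)\widehat D_0\ket\psi\|^2$, which dominates the average of the squared per-$k$ differences by the reverse triangle inequality. The hybrid then gives amplitude $2T/\sqrt K$, because before the differing query the state is a product with $\ket u_{\mathsf K}$. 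Here we would use that $\widehat D_0\ket\psi$ has norm at most $1$, by \cref{eq:post-query-row-zero-empty}, when row $0$ is empty. Altogether the first factor is $O((T^2+\ell)/K)$, and the lemma follows.

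Finally, we plug the lemma into the iterate argument with $\ell=2i(T+1)$:
\[
\bra z\bra\emptyset\widehat Y^{2t}\ket z\ket\emptyset\le\bigl(C(T^2+(T+1)t)/K\bigr)^{t}.
\]
Taking $t=\log M$ then yields $\mathbb{E}_R\Delta_\cA(R)\le O\bigl(\sqrt{(T^2+(T+1)\log M)/K}\bigr)$.
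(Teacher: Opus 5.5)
Your overall route is the paper's: realize $\Gamma_R$ with a fresh, never-queried row $0$ (so no symmetrization), reuse the swap $S_k$ and the query-by-query hybrid from the post-query search lemma to get a small-database bound of order $(\sqrt{\ell}+T)/\sqrt K$, and run the iterate argument with $t=\log M$. Your per-key Cauchy--Schwarz, with the key re-purified for the hybrid step, is a valid variant. The paper's version is shorter. It bounds $|p'-q|$ for $p'=\frac1K\sum_k p_k$ directly, using
\[
|p'-q|\le|\sqrt{p'}-\sqrt q|\,\bigl(|\sqrt{p'}-\sqrt q|+2\bigr).
\]
This works because $\sqrt{p'}$ and $\sqrt q$ are exactly the norms of the key-purified vectors in the search proof (with $P_{\mathrm{win}}=\Id_{\mathsf K}\otimes\Pi$) and $q\le 1$, so the row-by-row claim is not needed.

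There is one point you must fix: the displayed $\widehat{Y}$ is not the compressed form of your $Y_R$. Since $\Gamma_R$ averages over row $0$, its compressed form is the row-$0$-empty block of $\widehat D_0^\dagger\widehat A^\dagger\Pi\widehat A\widehat D_0$, namely
\[
\sum_x (D_x^\dagger\otimes\Id)\,\widehat A^\dagger(\Pi\otimes\Id)\widehat A\,(D_x\otimes\Id).
\]
The cross terms $x\neq y$ in your unprojected expression leave the entries $(0,x),(0,y)$ in the database.

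With the correct operator, row $0$ is never touched, every iterate has row $0$ exactly empty, and on such states the expectation equals your $q$. So your small-database lemma applies as stated. With the literal operator, $\bra{z}\bra{\emptyset}\widehat{Y}^{2t}\ket{z}\ket{\emptyset}$ is a moment of $W_R^{(T)}-D_h^\dagger A_R^\dagger\Pi A_R D_h$ for a single frozen row $h=R(0,\cdot)$. That matrix typically has norm $\Omega(1)$. For example, with $V=\Id$, $T=0$ and $\Pi=\ket{+}\bra{+}$, one gets $D_h^\dagger\Pi D_h=\ket{\psi_h}\bra{\psi_h}$, of norm $1$, while $\|W_R\|$ is small. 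So your fallback of letting row $0$ merely ``stay small'' would not work; it must be exactly empty, which the correct operator guarantees.

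A minor remark: your second Cauchy--Schwarz factor is $O(1)$ only when $\ell\lesssim K$, i.e.\ when $(T+1)\log M\lesssim K$. Outside that regime the claimed bound exceeds a constant and is trivial.
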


\begin{proof}[Proof sketch.]

For the distinguishing game, the advantage is upper bounded by
$\mathbb{E}_R\|X_R\|$, where
\[
\begin{aligned}
X_R
&:= \frac{1}{K}\sum_{k=1}^K
D_{R,k}^{\dagger}A_R^{\dagger}\Pi A_RD_{R,k}-\Gamma_R,\\
\Gamma_R
&:= \mathbb{E}_h D_h^{\dagger}A_R^{\dagger}\Pi A_RD_h
 = \sum_x D_x^{\dagger}A_R^{\dagger}\Pi A_RD_x.
\end{aligned}
\]
Here $\Pi$ is the distinguishing projector and $h$ is an independent
random binary phase vector, and the spectral relaxation is modified from \Cref{sec:trace-moment-distinguish} by adding $A_R$. In particular, both experiments use the
same oracle $R$ for the $T$ post-queries. Let $\widehat X$ be the
compressed-oracle representation of $X_R$.

\paragraph{Bounding through small database (as \Cref{lem:1-query-compressed-oracle-distinguish} for $T=0$).}
Let $|\psi\rangle$ be normalized, supported on databases of size at
most $\ell$, and have row $0$ empty. Define
\[
\begin{aligned}
p'&:=\frac{1}{K}\sum_{k=1}^K
\bigl\|(\Pi\otimes\mathrm{Id})\widehat A
\widehat D_k|\psi\rangle\bigr\|^2,\\
p_*'&:=\bigl\|(\Pi\otimes\mathrm{Id})\widehat A
\widehat D_0|\psi\rangle\bigr\|^2.
\end{aligned}
\]
Since row $0$ is empty and is not queried by $\widehat A$, averaging
over that row gives
$\left|\langle\psi|\widehat X|\psi\rangle\right|=|p'-p_*'|$.
Moreover, by \cref{eq:post-query-row-zero-empty}, $p_*'\le1$.

By defining $P_{\mathrm{win}}=\mathrm{Id}_{K}\otimes
(\Pi\otimes\mathrm{Id})$, $p'$ and $p'_*$ will correspond to the $p$ and $p_*$ as in \cref{eq:post-query-search-prob-compressed,eq:post-query-search-prob-compress-*} in \Cref{sec:proof-of-post-query-sparse-search}. Moreover, since upper bounding $|\sqrt{p}-\sqrt{p_0}|$ and $|\sqrt{p_0}-\sqrt{p_*}|$ is independent of $P_\mathrm{win}$ in \Cref{sec:proof-of-post-query-sparse-search}, by triangle inequality, we can obtain
\begin{align*}
    \left|\sqrt{p'}-\sqrt{p'_*}\right|
    \le \norm{(\mathrm{Id}_\mathsf K\otimes\widehat D_0)
        (S-\mathrm{Id})\ket\Psi}+\sum_{j=0}^{T-1}
        \norm{(\widehat A_j-\widehat A_{j+1})
        (\mathrm{Id}_\mathsf K\otimes\widehat D_0)\ket\Psi}
    \le \frac{\sqrt{14\ell}+2T}{\sqrt K}.
\end{align*}
Therefore, we can upper bound $\langle\psi|\widehat X|\psi\rangle$ by
\[
\left|\langle\psi|\widehat X|\psi\rangle\right|
=|p'-p'_*|
\le \left|\sqrt{p'}-\sqrt{p'_*}\right|\cdot \left(\left|\sqrt{p'}-\sqrt{p'_*}\right|+2\right)
=O\left(\frac{\sqrt{\ell}+T}{\sqrt{K}}\right).
\]

\paragraph{Bounding the trace moment (as \Cref{subsec:trace-distinguish-small-database} for $T=0$).}
Each application of $\widehat X$ changes the database size by at most $2(T+1)$ and never changes row $0$. Thus, by a similar induction as in \Cref{subsec:trace-distinguish-small-database}, for every positive integer $t$,
\[
\begin{aligned}
\mathbb{E}_R\|X_R\|
\le
\left(
\sum_{z\in[M]}
\bigl\|\widehat X^{\,t}|z\rangle|\emptyset\rangle\bigr\|^2
\right)^{1/2t}
\le M^{1/2t}
\cdot
C\cdot \frac{\sqrt{2t(T+1)}+T}{\sqrt{K}}.
\end{aligned}
\]
Taking $t=\log M$ gives
\[
\mathbb{E}_R\|X_R\|
\le O\left(\sqrt{\frac{T^2+(T+1)\log M}{K}}\right).
\]
Therefore, 
the expected distinguishing advantage is at most $O\!\left(
\min\!\left\{
1,\sqrt{\frac{T^2+(T+1)\log M}{K}}
\right\}
\right)$.
  
\end{proof}

\subsection{Extension to the setting with both post-query and pre-query access}\label{sec:pre-query}

We now allow the adversary to make $t$ ordinary queries to $R$ after receiving the challenge but before making the one arbitrary phase query, followed by $T$ queries to $R$.

With the post $T$ query part defined as $A_R$,
now the computation (before measurement) can be written as 
\[
A_R O_{f_R}V_R\ket{\psi_{R,k}}
\]
for input $\ket{\psi_{R,k}}$, where $V_R:\mathbb C^N\to \mathbb C^M$ is the isometry that includes the $t$ pre-queries (with fixed ancillas absorbed in this isometry). All gates other than the oracle calls are fixed independently of
$R$ and the hidden challenge key $k$. The function $f_R$ may depend arbitrarily on $R$, but not on the challenge. The dimension $M$ includes all workspace and retained environments.

\begin{theorem}[One-query lower bound with pre-queries and post-queries]\label{thm:1-query-pre-post-queries}
Let $\{\ket{\psi_{R,k}}\}_{k\in [K]}$ be a binary phase state family with $R=(R(k,x))_{k\in[K],x\in[N]}$ as a family of independent Rademacher random variables. For the variants of the oracle state search game and distinguishing game, where the adversary is allowed to make $t$ ($t\ge 1$) queries before and $T$ queries after the arbitrary phase query, the security satisfies:
\begin{align*}
    \mathbb{E}_R[\Win(\mathcal{A}\mid R)]
    &\le O\left(
    \min\left\{
    1, \frac{t^2}{K}+\sqrt{\frac{tT^2+t(T+1)\log M}{K}}
    \right\}
    \right) \tag{Search}
    \\
    \mathbb{E}_R[\Delta_\cA(R)]
    &\le
    O\left(
    \min\left\{
    1, \frac{t}{\sqrt{K}}+\left(\frac{tT^2+t(T+1)\log M}{K}\right)^{1/4}
    \right\}
    \right) \tag{Distinguishing}
\end{align*}
The case $t=0$ is covered separately by \Cref{thm:binary-phase-search-post,thm:binary-phase-distinguish-post}.
\end{theorem}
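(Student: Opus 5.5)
The plan is to reduce the pre-query setting to the post-query setting of \Cref{thm:binary-phase-search-post,thm:binary-phase-distinguish-post} by a resampling argument. The loss should be polynomial in $t$, which accounts for the two terms in the bound. The obstacle is that $V_R$ now depends on $R$, so the weight-vector decomposition of \Cref{lem:weight-vector-decomposition} and the spectral relaxation no longer apply directly. I will first remove the correlation between the pre-queries and the challenge row. I will then remove the remaining $R$-dependence of $V_R$ at the cost of a factor $t$ inside the post-query bound.

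\textbf{Step 1 (decoupling the challenge row from the pre-queries).} As in the proof of \Cref{lem:post-query-sparse-search}, I adjoin a fresh row $0$ and purify the key in a register $\mathsf K$. In the pre-query phase only, I replace each query to row $k$ by a query to the independent row $0$, using the controlled swap $S=\sum_k \ketbra{k}_{\mathsf K}\otimes S_k$. A BBBV-style hybrid over the $t$ pre-queries compares the real run with this reprogrammed run. By the argument in the second term of that proof, each replaced query changes the state by at most $2\norm{B(\cdot)}$, which averages to $O(1/\sqrt K)$ over the key. This gives a total amplitude error $O(t/\sqrt K)$. Since a squared amplitude error appears additively, this costs $O(t^2/K)$ in the search probability and $O(t/\sqrt K)$ in the distinguishing advantage. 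These are the first terms in the two bounds.

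\textbf{Step 2 (removing the remaining dependence of $V_R$).} After Step 1, the pre-query isometry depends only on rows other than $k$, together with the fresh row $0$. I then apply the compressed-oracle purification to the pre-queries alone, viewing them as creating a database of size at most $t$ entangled with the workspace. Given the challenge, the resulting state is $\sum_D \alpha_D \ket{\varphi_D}\ket D$, produced by an $R$-independent isometry $\widehat V$ acting on $\ket{\psi_{R,k}}\ket{\emptyset}$.

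The remaining problem is that the diagonal rescaling $D_{R,k}$ no longer commutes through $\widehat V$. To handle this, I will use Cauchy--Schwarz over a uniformly random "cut" $i\in[t]$. The idea is to charge the interaction between the challenge amplitudes and the pre-query database to the $i$-th query. This converts the success probability $p$ into a bound of the form $p\le \sqrt{t\cdot p_{\mathrm{post}}}$ (respectively $\Delta\le (t\cdot \Delta_{\mathrm{post}}^2)^{1/2}$ after one more square root). Here $p_{\mathrm{post}}=O\big(\frac{T^2+(T+1)\log M}{K}\big)$ is the post-query search value obtained from \Cref{thm:binary-phase-search-post}. The dimension $M$ includes all retained environments, so the trace-moment step with $t'=\log M$ is unchanged.

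\textbf{Step 3 (combining).} Adding the two contributions gives the search bound $\frac{t^2}{K}+\sqrt{\frac{tT^2+t(T+1)\log M}{K}}$. For the distinguishing bound, I apply the same hybrids to the centered operator exactly as in \Cref{thm:binary-phase-distinguish-post}. Converting the square-root loss once more gives the stated fourth-root term. The $\min\{1,\cdot\}$ is trivial.

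I expect Step 2 to be the main obstacle. Making the commutation of $D_{R,k}$ past an $R$-dependent isometry rigorous, with only a $\sqrt t$-type loss rather than a loss exponential in the database size, is the delicate part. I would first try to realize it by simulating the pre-query database contents as part of the function-query answer and paying for the mismatch through the cut-index averaging.
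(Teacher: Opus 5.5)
Your Step~1 is the same as the paper's first hybrid: it answers the pre-queries with $R^{\{k\}}$, at amplitude cost $2t/\sqrt K$. Step~2, however, is a genuine gap, and it is exactly where the difficulty of the theorem lies.

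\textbf{Why Step~2 fails.} After Step~1 the prefix $V_{R^{\{k\}}}$ still depends on every row $a\neq k$, and $f_R$ and the post-queries depend on those same rows. So \Cref{thm:binary-phase-search-post} cannot be invoked, because its proof needs a fixed $V$. For an $R$-dependent prefix, the weights $p_i$, the vector $\ket{\mathrm{wt}_{V_R}}$, and the rescaling $D_{V_R,R,k}$ (which divides by $\sqrt{p_i}$) all depend on $R$. The relaxed operator is then no longer a low-degree polynomial in $R$, and the compressed-oracle moment computation does not apply.

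Purifying only the pre-queries does not rescue this. In the compressed basis, the arbitrary query $\sum_R\ketbra{R}\otimes O_{f_R}$ becomes an arbitrary operator on the database register that destroys sparsity.

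The inequality $p\le\sqrt{t\,p_{\mathrm{post}}}$ from a random cut $i\in[t]$ is asserted rather than derived. You give no operator inequality that would produce it.

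Step~3 is also internally inconsistent. With $P:=T^2+(T+1)\log M$, your expression $(t\,\Delta_{\mathrm{post}}^2)^{1/2}$ equals $\sqrt{tP/K}$, which is not the fourth-root term.

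\textbf{The missing idea.} Shrink the key set instead of trying to move $D_{R,k}$ past the prefix. The paper samples a uniform $L$-subset $J\subseteq[K]$ with $k$ uniform in $J$. It answers the pre-queries with $R^J$, i.e.\ with all rows in $J$ replaced by an independent table $R'$.

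Passing from $R^{\{k\}}$ to $R^J$ costs only $O(tL/K)$ in squared distance. To see this, expand $V_F\ket\phi=\sum_{|S|\le t}\chi_S(F)\ket{a_S(\phi)}$. The two tables differ only on $J\setminus\{k\}$, which is independent of the challenge and of $R^{\{k\}}$. A random set of $L-1$ rows meets the at most $t$ rows of $S$ with probability $O(tL/K)$.

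Now condition on $J$, $R_{[K]\setminus J}$ and $R'_J$. This makes $V_{R^J}$ a fixed isometry, and the rest becomes a post-query-only game on the $L$ keys of $J$, with $f_R$ still arbitrary in $R_J$. \Cref{thm:binary-phase-search-post} then gives $O(P/L)$.

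The total is $O(t^2/K+tL/K+P/L)$, and choosing $L=\sqrt{KP/t}$ gives the $\sqrt{tP/K}$ term. In the distinguishing case, the same hybrids together with \Cref{thm:binary-phase-distinguish-post} give $O(t/\sqrt K+\sqrt{tL/K}+\sqrt{P/L})$. The same choice of $L$ produces $(tP/K)^{1/4}$. The factor $t$ comes from this trade-off in $L$, not from a Cauchy--Schwarz over a cut index.
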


\paragraph{Proof overview.} We are going to reduce the $t$-pre-query case to a $0$-pre-query case, so that we can apply \Cref{thm:binary-phase-search-post,thm:binary-phase-distinguish-post}.

The proof uses a swapping or resampling trick similar to the hybrids introduced in the previous proofs, which helps reduce the security to the case where there are no pre-queries. More specifically, here we introduce two hybrids, by modifying only the pre-query part of the original algorithm:
\begin{itemize}
    \item pre-queries to $R$ are replaced by $R^{\{k\}}$, and
    \item pre-queries to $R$ are replaced by $R^{J}$, for randomly sampled subset $J$, and $k\gets J$,
\end{itemize}
and here $R^{J}$ denotes resampling $R$ by replacing every $\{R(k,x):k\in J\}$ by a uniformly random $\pm 1$; that is, resampling the rows in $J$. 
The difference between hybrids can then be bounded according to the following intuitions.
\begin{itemize}
    \item When pre-queries to $R$ replaced by $R^{\{k\}}$, the difference only comes from the case when the query address is on $k$. This gives the difference of real and $R^{(k)}$ bounded by $t/\sqrt{K}$.
    \item For the $R^J$ where $k\in J$, the difference only comes from the case when some row in $J\backslash\{k\}$ is queried, which can be managed in $\sqrt{t(|J|-1)/K}$. (This is different from the above difference, since $J\backslash\{k\}$ is randomly sampled and independent of the challenge state.)
    \item When we are in the $R^J$ case, we can view the pre-query part as a fixed unitary by fixing all $[K]\backslash J$ parts and the resampling $J$ rows in $R^J$. The security in this case is then reduced to the one-query with (only) post-query case, where we can apply \Cref{thm:binary-phase-search-post,thm:binary-phase-distinguish-post}. 
\end{itemize}

\begin{proof}[Proof of \Cref{thm:1-query-pre-post-queries}]
Fix $1\le L\le K$. Sample a uniform $L$-element subset $J\subseteq[K]$, and then a uniform $k\in J$, independently of $R$. The marginal distribution of $k$ is still uniform on $[K]$. Independently sample another table $R'$. For $B\subseteq[K]$, let $R^B$ agree with $R'$ on $B$ and with $R$ outside $B$. Define the following three prefix output states:
\[
\ket{\alpha_{\mathrm{real}}}=V_R\ket{\psi_{R,k}},\qquad
\ket\alpha_1=V_{R^{\{k\}}}\ket{\psi_{R,k}},\qquad
\ket\alpha_2=V_{R^J}\ket{\psi_{R,k}}.
\]
Only the oracle used in the pre-$t$-query is replaced; the challenge $\ket{\psi_{R,k}}$, the arbitrary query $O_{f_R}$, and the post-processing $A_R$ remain unchanged.

For $R^{\{k\}}$, the joint distribution $(R^{\{k\}},\ket{\psi_{R,k}})$ is independent of $k$. Let $\ket{u_i}$ be the state just before the $i$-th query of $V_{R^{\{k\}}}$ on challenge $\ket{\psi_{R,k}}$, and let $Q_k$ be the projector onto query address with row (key) $k$. Then, since $\ket{u_i}$ remains independent of $k$, $\mathbb{E}_k\|Q_k\ket{u_i}\|^2=1/K$. Note that oracle to $R^{\{k\}}$ and $R$ differ only at query address with row on $k$, by the triangle inequality,
\begin{align}\label{ineq:pre-query-real-and-1}
    \sqrt{\mathbb{E}_k \left\|\ket{\alpha_{\mathrm{real}}}-\ket{\alpha_1}\right\|^2}
    \le \sum_{i=1}^t\sqrt{
    \mathbb{E}_k\left\|(O_R-O_{R^{\{k\}}})\ket{u_i}\right\|^2
    }
    \le \sum_{i=1}^t 2\cdot \frac{1}{\sqrt{K}}=\frac{2t}{\sqrt{K}}.
\end{align}

For $R^J$, consider the set $J\backslash \{k\}$. Conditioned on $(k, J\backslash \{k\})$, now $R^{\{k\}}$, the challenge $\ket{\psi_{R,k}}$, and the resampled rows (the $J\backslash \{k\}$ rows in $R^J$) are mutually independent. Write $\ket{\psi_{R,k}}$ as $\ket{\phi}$. 
For each fixed normalized input $\ket{\phi}$, the $t$-query computation admits the Fourier expansion
\[
V_F\ket{\phi}
=
\sum_{|S|\le t}\chi_S(F)\ket{a_S(\phi)},
\qquad
\sum_S\|\ket{a_S(\phi)}\|^2=1,
\]
where $\chi_S(F)=\prod_{(j,x)\in S}F(j,x)$, and define $\operatorname{rows}(S)=\{j:\exists x,\ (j,x)\in S\}$.
Indeed, each query increases the Fourier degree by at most one,
and the normalization follows from Parseval. Conditional on $(\ket{\phi},J\backslash\{k\})$, the two oracle tables agree outside $J\backslash\{k\}$ and have independent uniform entries on $J\backslash\{k\}$. Thus
\[
\mathbb E\!\left[
\chi_S(R^{\{k\}})\chi_{S'}(R^J)
\,\middle|\,\ket{\phi},J\backslash\{k\}
\right]
=
\begin{cases}
1,
& S=S' \text{ and }
  \operatorname{rows}(S)\cap J\backslash\{k\}=\varnothing,\\
0,
& \text{otherwise}.
\end{cases}
\]
Therefore, averaging over $\ket{\phi}, J\backslash\{k\}$,
\begin{align}
\mathbb E\|\ket{\alpha_1}-\ket{\alpha_2}\|^2
&=2-
2\cdot \mathbb{E}\Re\left<\alpha_1|\alpha_2\right>
\notag
\\
&=2\cdot \mathbb E_{\ket\phi}\sum_S
\Pr_{J\backslash\{k\}}\!\left[
\operatorname{rows}(S)\cap J\backslash\{k\}\ne\varnothing
\right]
\cdot
\left\|\ket{a_S(\phi)}\right\|^2
\notag
\\
&\le
\frac{2(L-1)}{K}\,
\mathbb E_{\ket\phi}\sum_S
|\operatorname{rows}(S)|
\cdot \left\|\ket{a_S(\phi)}\right\|^2
\tag{union bound}
\\
&\le \frac{2t(L-1)}{K}.
\label{ineq:pre-query-1-and-2}
\end{align}
Combining \cref{ineq:pre-query-real-and-1,ineq:pre-query-1-and-2}, we conclude
\begin{align}\label{ineq:pre-query-prefix-part}
\sqrt{\mathbb E
\|\ket{\alpha_{\mathrm{real}}}-\ket{\alpha_2}\|^2}
\le
\frac{2t}{\sqrt K}
+
\sqrt{\frac{2t(L-1)}{K}}.
\end{align}

\paragraph{For the search game.} For the winning probability of the search game, by the triangle inequality,
\begin{align*}
    \mathbb{E}_R [\Win(\cA\mid R)]
    &=\mathbb{E}_{R,k}\left\|
    \Pi_k A_RO_{f_R}V_R\ket{\psi_{R,k}}
    \right\|^2
    \\
    &\le \left(
    \sqrt{
    \mathbb{E}_{R,k,J,R'} \left\|\alpha_{\mathrm{real}}-\ket{\alpha_2}\right\|^2
    }
    +\sqrt{
    \mathbb{E}_{R,k,J,R'} \left\|\Pi_kA_RO_{f_R}V_{R^J}\ket{\psi_{R,k}}\right\|^2
    }
    \right)^2.
\end{align*}
The first term can be bounded by \cref{ineq:pre-query-prefix-part}. For the second term, now we fix $J$, $R_{[K]\backslash J}, R'_J$. $V_{R^J}$ now is a fixed isometry, and queries outside of $J$ have fixed answers. Still, $f_R$ can depend arbitrarily on $R_J$. Therefore, from \Cref{thm:binary-phase-search-post}, we can bound the term by $C\cdot (T^2+(T+1)\log M)/L$ on keys in $J$ (for constant $C$), and
\begin{align*}
    \mathbb{E}_R [\Win(\cA\mid R)]
    &\le \left(
    \sqrt{
    \mathbb{E}_{R,k,J,R'} \left\|\alpha_{\mathrm{real}}-\ket{\alpha_2}\right\|^2
    }
    +\sqrt{
    \mathbb{E}_{R,k,J,R'} \left\|\Pi_kA_RO_{f_R}V_{R^J}\ket{\psi_{R,k}}\right\|^2
    }
    \right)^2
    \\
    &\le C'\cdot\left(
    \frac{t^2}{K}+\frac{tL}{K}+\frac{T^2+(T+1)\log M}{L}
    \right)
    =O\left(\frac{t^2}{K}+\sqrt{\frac{t(T^2+(T+1)\log M)}{K}}\right)
\end{align*}
by setting $L=\sqrt{K(T^2+(T+1)\log M)/t}$.

\paragraph{For the distinguishing game.}
Let $\Delta_{\cA}^{J}(R,R')$ denote the gap after replacing $V_R$
by $V_{R^J}$ in both experiments:
\[
\Delta_{\cA}^{J}(R,R')
:=
\left|
\mathbb E_{k\in J}
\left\|\Pi A_RO_{f_R}V_{R^J}\ket{\psi_{R,k}}\right\|^2
-
\mathbb E_{r\in[N]}
\left\|\Pi A_RO_{f_R}V_{R^J}\ket{r}\right\|^2
\right|.
\]
For each fixed $R$, the original distribution of $k$ can be viewed as taking the average over $J$ and then taking $k$ uniform in $J$. By convexity of absolute value, it therefore
suffices to bound the average distinguishing gap in these
restricted experiments on a random $J$.
Thus, by the triangle inequality (adding and subtracting the acceptance probabilities of the two experiments with prefix $V_{R^J}$), together with $\left|\|\Pi u\|^2-\|\Pi v\|^2\right|\le (\|\Pi u\|+\|\Pi v\|)\|\Pi(u-v)\|\le 2\|\Pi(u-v)\|$,
\begin{align*}
\mathbb E_R\Delta_{\cA}(R)
&\le
\mathbb{E}_{R,J}
\left|
\mathbb{E}_{k\in J}\left\|\Pi A_RO_{f_R}V_{R}\ket{\psi_{R,k}}\right\|^2
-
\mathbb E_{r\in[N]}
\left\|\Pi A_RO_{f_R}V_{R}\ket{r}\right\|^2
\right|
\\
&\le
\mathbb E_{R,R',J}\Delta_{\cA}^{J}(R,R')
+
2\cdot \mathbb E_{R,R',J,k\in J}
\left\|
\Pi A_RO_{f_R}(V_R-V_{R^J})\ket{\psi_{R,k}}
\right\|\\
&\quad +
2\cdot \mathbb E_{R,R',J,r\in[N]}
\left\|
\Pi A_RO_{f_R}(V_R-V_{R^J})\ket{r}
\right\|
\\
&\le C\cdot \sqrt{\frac{T^2+(T+1)\log M}{L}}+2\cdot\left(\frac{t}{\sqrt{K}}+\sqrt{\frac{2t(L-1)}{K}}\right)+
2\cdot \mathbb E_{R,R',J,r}
\left\|
(V_R-V_{R^J})\ket{r}
\right\|
\tag{by \Cref{thm:binary-phase-distinguish-post}, and Jensen's inequality followed from \cref{ineq:pre-query-prefix-part}}
\end{align*}
Similar to the calculation in \cref{ineq:pre-query-1-and-2},
we obtain
\begin{align*}
\mathbb E_{R,R',J,r}
\left\|(V_R-V_{R^J})\ket{r}\right\|^2
=
2\cdot \mathbb E_r\sum_S
\Pr_J[\operatorname{rows}(S)\cap J\ne\varnothing]\,
\|\ket{a_S(r)}\|^2
\le \frac{2tL}{K}.
\end{align*}
By Jensen's inequality, combining the above gives 
\begin{align*}
\mathbb E_R\Delta_{\cA}(R)
&\le
C\sqrt{\frac{T^2+(T+1)\log M}{L}}
+\frac{4t}{\sqrt K}
+2\sqrt{\frac{2t(L-1)}{K}}
+2\sqrt{\frac{2tL}{K}}
\\
&=O\left(
\frac{t}{\sqrt{K}}+\left(\frac{tT^2+t(T+1)\log M}{K}\right)^{1/4}
\right)
\end{align*}
by setting $L=\sqrt{K(T^2+(T+1)\log M)/t}$.

\end{proof}

\section{Quantum advice lower bound for 1PRS} \label{sec:quantum-program-1prs}
In this section, we prove a lower bound on the size of a quantum advice for distinguishing a single copy of a random binary phase state from a Haar-random state, when there are no online queries to the random oracle. This can be viewed as a distinguishing version of \Cref{thm:1ows-ts-tradeoff} with $T=0$. Specifically, we show that any $S$-qubit quantum advice (or say, quantum program) can distinguish a random binary phase state from Haar random with distinguishing advantage at most $O\left(\frac{\sqrt{S}}{K}\right)$.

\begin{definition}[1PRS security game] In the non-uniform security game of 1PRS:
\begin{itemize}
    \item In the offline stage, a hash function $R:[K]\times [N]\to\{\pm 1\}$ is sampled and fixed (here $K\le N$), and an algorithm is allowed to have an $S$-qubit advice that depends arbitrarily on $R$ before receiving the challenge.
    \item Then, in the online stage, a challenger will sample a bit $b\gets\{0,1\}$.
    \begin{itemize}
        \item If $b=0$, the challenger will sample a random key $k\gets [K]$, generate a single-copy of state $|\psi_{R,k}\rangle:=\frac{1}{\sqrt N}\sum_{x\in[N]}R(k,x)|x\rangle$, and send it to the algorithm.
        \item If $b=1$, the challenger will sample a Haar random state and send it to the algorithm.
    \end{itemize}
    \item The algorithm, after receiving the challenge state, is allowed to make $T$ queries to $R$ and is asked to output a guess $b'$. It succeeds if $b'=b$.
\end{itemize}
We consider security game in QROM, for $R$ as a random oracle, and the algorithm can have oracle access to $O_R$. The non-uniform security (in terms of $S$-qubit, $T$-query) of 1PRS in QROM is defined as the maximum distinguishing advantage, parametrized by $(S,T,K,N)$.
\end{definition}
Note here we consider quantum advice lower bound, with zero online queries. That is, now we assume $T=0$.

\begin{theorem}[Quantum advice lower bound for 1PRS]\label{thm:program-1prs}
Any (zero-query) adversary with $S$ qubit of advice has distinguishing advantage bounded by $O\left(\frac{\sqrt{S}}{K}\right)$ in QROM. Consequently, constant advantage requires $\Omega(K^2)$ qubits (for $K=\Theta(N)$, this requires $\Omega(N^2)$ qubits).
\end{theorem}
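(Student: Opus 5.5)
Following the template of \Cref{thm:1ows-ts-tradeoff}, we express the advantage as the expected operator norm of a centered random matrix and bound its trace moments in the compressed-oracle picture. Fix the zero-query distinguisher, i.e.\ a test $0\preceq\Pi\preceq I$ on advice, challenge and workspace. Since a single copy of a Haar state is maximally mixed, the advantage with advice $\ket\varphi$ is $|\bra\varphi Y_R\ket\varphi|$, where
\[
    Y_R:=\mathbb E_k\bra{\psi_{R,k}}\Pi\ket{\psi_{R,k}}-\frac1N\sum_{x}\bra x\Pi\ket x .
\]
Hence the optimal advantage is $\mathbb E_R\|Y_R\|\le(\mathbb E_R\Tr(Y_R^{2t}))^{1/2t}$. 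Purifying $R$ gives
\[
    \widehat Y=\frac1{KN}\sum_k\sum_{x_1,x_2}\bra{x_1}\Pi\ket{x_2}\otimes T_{k,x_1}T_{k,x_2}-\frac1N\sum_x\bra x\Pi\ket x\otimes I,
\]
whose diagonal terms $x_1=x_2$ cancel exactly against the ideal term. Each application of $\widehat Y$ therefore changes the database by exactly two entries in a single row $k$. We then have $\mathbb E_R\Tr(Y_R^{2t})=\sum_z\|\widehat Y^t\ket{z,\emptyset}\|^2$. Provided we prove a small-database bound $\|P_{\le\ell}\widehat YP_{\le\ell}\|\le c(1+\sqrt\ell)/K$, the same induction as in the proof of \Cref{thm:1ows-ts-tradeoff} with $t=\max\{S/2,1\}$ and $\ell=2t$ yields $O(\sqrt S/K)$.

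The whole difficulty is this small-database lemma: for $\ket\phi=\sum_D\ket{\phi_D}\ket D$ supported on databases of size at most $\ell$, we must show
\[
    |\bra\phi\widehat Y\ket\phi|\le O\!\left(\frac{1+\sqrt\ell}{K}\right).
\]
This is harder than the search case, where only a one-sided bound on a PSD operator was needed. Here $\widehat Y$ is not PSD, and the naive estimate $\|\frac1K\sum_k\widehat C_k^\dagger\Pi\widehat C_k\|\le O(1)$ gives nothing. Write $\bra\phi\widehat Y\ket\phi=\frac1{KN}\sum_k\sum_{x_1\neq x_2}\langle\cdot\rangle$. For each row $k$, the off-diagonal pairs $(x_1,x_2)$ connect $D$ to $D'=D\oplus\{(k,x_1),(k,x_2)\}$, and there are three types according to how many of $x_1,x_2$ lie in $D_k$: add two, swap one, or delete two.

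\emph{Heavy rows.} Rows with $|D_k|>\sqrt\ell$ are handled as in \Cref{lem:1ows-l-database-t-query}. There are at most $\sqrt\ell$ such rows per database, and each contributes $O(1/K)$ after Cauchy--Schwarz using $\sum_{x_1,x_2}$-bounded operator norms $\|\frac1N\sum_{x,x'}\ket x\bra{x'}\|\le1$. This gives $O(\sqrt\ell/K)$.

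\emph{Light rows, add/delete type.} These terms are adjoint to each other. A term that adds $\{x_1,x_2\}$ to a light row $k$ is recognized in the final database by choosing an unordered pair inside $F_k$ with $|F_k|\le\sqrt\ell+2$. Grouping by the final database and applying Cauchy--Schwarz with a history multiplicity of at most $\binom{\sqrt\ell+2}{2}$ is too lossy. Instead, we plan to use the ordered-database purification announced in the overview. We represent a database as the uniform superposition over orderings and view challenge generation as appending a fresh slot and then applying a sum of transpositions with existing slots. In this representation, the add--add part of $\widehat Y$ becomes $\frac1K\sum_k$ of an operator of the form $\frac1N\sum_{x\ne x'}\ket{x}\bra{x'}$ acting on a fresh pair of slots tensored with $\Pi$. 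Its contribution, compared with the identical term for an independent fresh row, differs only through the swaps with the at most $|D_k|$ existing slots. Each swap costs amplitude $O(\sqrt{|D_k|/N})$, and the errors sum to $O(\sqrt\ell/K)$ overall, using $\frac1K\sum_k|D_k|\le\ell/K$ together with one factor of $1/\sqrt K$ coming from each of the two sides, as in the distinguishing lemma \Cref{lem:1-query-compressed-oracle-distinguish}. The fresh-row term itself vanishes in expectation because the centering removes its diagonal, and its off-diagonal part maps to databases orthogonal to the support of $\ket\phi$.

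\emph{Light rows, swap type.} These change one entry of the row, so their sum is controlled by the operator-valued Cauchy--Schwarz inequality used in \Cref{lem:1-query-compressed-oracle-search}, with a multiplicity of $|D_k|\le\sqrt\ell$ on one side. This gives $O(\frac1{KN}\sum_k|D_k|\sqrt\ell)\cdot N/\ldots$, and we expect $O(\sqrt\ell/K)$ after tracking the $1/N$ normalization.

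The main obstacle is the second case. We must exhibit the cancellation between $\Pi$ applied to a genuinely new challenge row and the maximally mixed reference while losing only $\sqrt\ell$, not $\ell$, and this is where the swap-based ordered purification has to do the work.
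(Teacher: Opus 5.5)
Your trace-moment reduction matches the paper's. The small-database argument, however, has a genuine gap: you have the easy and the hard terms reversed. Purify the challenge as $\ket{\Psi}=\frac{1}{\sqrt{KN}}\sum_{D,k,x}\ket{\varphi_D}\ket{x}\ket{D\oplus\{(k,x)\}}\ket{k}$. Split it into the new-entry branch $\ket{\Psi_0}$ ($x\notin D_k$) and the deletion branch $\ket{\Psi_1}$ ($x\in D_k$). Your add-two/delete-two terms of $\widehat Y$ are exactly $2\operatorname{Re}\bra{\Psi_1}(\Pi\otimes I)\ket{\Psi_0}$, because $D\cup\{(k,x_2)\}$ is also obtained by deleting $(k,x_1)$ from $D\cup\{(k,x_1),(k,x_2)\}$. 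Since $\|\ket{\Psi_1}\|^2=\frac{1}{KN}\sum_D|D|\,\|\ket{\varphi_D}\|^2\le\ell/(KN)$, Cauchy--Schwarz bounds these terms by $2\sqrt{\ell/(KN)}$, so no ordered purification is needed for them. Separately, your claim that the fresh-row off-diagonal part lands on databases orthogonal to the support of $\ket\phi$ fails whenever $\ket\phi$ has weight on several database sizes.

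The whole difficulty sits in your swap type, specifically the half where the fresh challenge column is exchanged with a stored column of the same row. This is the paper's operator $B_\varphi=\sum_{S,k}\sum_{x\ne y\in S_k}\ket{\varphi_{S\setminus\{(k,y)\}},y}\bra{\varphi_{S\setminus\{(k,x)\}},x}$, which contributes $\operatorname{Tr}(\Pi B_\varphi)/(KN)$. It is the source of the $\sqrt\ell/K$ term; everything else is $O(\sqrt{\ell/(KN)}+\ell/(KN))$.

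Operator-valued Cauchy--Schwarz cannot give $\sqrt\ell/K$ here. In the search lemma the $1/K$ came from $\sum_k\Pi_k\preceq I$. Here a single $\Pi$ serves all rows, and for any one occupied row a tailored test (e.g.\ a swap test against the stored copy) makes that row's exchange term $\Theta(1/K)$. So any bound that treats the groups $(k,F)$ separately yields only $\frac{1}{KN}\sum_D\|\ket{\varphi_D}\|^2\,N|D|\approx\ell/K$. This holds even when every row is light, e.g.\ $\ell$ rows with one entry each, where your heavy/light split does nothing. Your unfinished expression ``$O(\frac1{KN}\sum_k|D_k|\sqrt\ell)\cdot N/\ldots$'' cannot be completed to $\sqrt\ell/K$.

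The missing idea is a global trace-norm bound $\|B_\varphi\|_1\le N\sqrt\ell+2\ell$, expressing that one test $0\preceq\Pi\preceq I$ cannot align with all exchanges at once. The paper uses the ordered-database purification to write $B_\varphi=(L\otimes I)(J_m-E_m)(L^\dagger\otimes I)$, where $J_m=\sum_{i=1}^m\operatorname{Swap}_{0i}$, $E_m\succeq0$ and $\operatorname{Tr}(L^\dagger L)=1$. It then proves $\operatorname{Tr}_{\mathsf X_0}|J_m|\preceq(N\sqrt m+m)I$ (\Cref{lem:1prs-rewrite-swap}). This beats the trivial $Nm$ because swaps applied to a fresh slot move it to $m$ orthogonal positions. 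With this bound, no heavy/light split is needed.
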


\subsection{1PRS trace-moment setup}

Let $\mathsf A$ be the advice register and $\mathsf X$ the challenge register. Absorb the fixed online computation into an acceptance operator $0\preceq\Pi\preceq I$ on $\mathsf A\otimes \mathsf X$, and set
\[
    \Pi_{x,y}:=(I_\mathsf{A}\otimes\langle x|)\Pi(I_\mathsf{A}\otimes|y\rangle).
\]
The centered acceptance matrix on the advice register is
\begin{align}
 Y_R
 &:=\mathbb E_k(I_\mathsf{A}\otimes\langle\psi_{R,k}|)
          \Pi(I_\mathsf{A}\otimes|\psi_{R,k}\rangle)
          -\frac1N\sum_x\Pi_{x,x}
    =\frac1{KN}\sum_k\sum_{x\ne y}R(k,x)R(k,y)\Pi_{x,y}.
 \label{eq:program-centered}
\end{align}
Here the diagonal terms cancel because $R(k,x)^2=1$. Thus
\[
    \Delta_{\mathcal A}(R)=|\operatorname{Tr}(\rho_RY_R)|\le\|Y_R\|,
\]
and now our goal is to bound $\mathbb E_R\|Y_R\|$.

In the compressed oracle database mode, for a database $D\subseteq[K]\times[N]$, define the Fourier character as
$\chi_D(R)=\prod_{(k,x)\in D}R(k,x)$, and define
\[
 T_{k,x}|D\rangle:=|D\oplus\{(k,x)\}\rangle,
 \qquad
 \widehat Y:=\frac1{KN}\sum_k\sum_{x\ne y}
               \Pi_{x,y}\otimes T_{k,x}T_{k,y}.
\]
The usual trace-moment method gives, for every positive integer $t$,
\begin{equation}
 \mathbb E_R\|Y_R\|
 \le\left(\mathbb E_R\operatorname{Tr}(Y_R^{2t})\right)^{1/2t},
 \qquad
 \mathbb E_R\operatorname{Tr}(Y_R^{2t})
 =\sum_{z\in[2^S]}\|\widehat Y^t|z\rangle|\emptyset\rangle\|^2.
 \label{eq:program-compressed-moment}
\end{equation}
We will use the following local bound.

\begin{lemma}[Small-database bound]
\label{lem:program-local}
For every normalized state $|\phi\rangle$ supported on databases of size at most $\ell$,
\begin{equation}
 \left|\langle\phi|\widehat Y|\phi\rangle\right|
 \le\beta(\ell):=\frac{\sqrt\ell}{K}
               +2\sqrt{\frac\ell{KN}}+\frac{3\ell}{KN}.
 \label{eq:program-local-bound}
\end{equation}
\end{lemma}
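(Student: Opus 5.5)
Throughout, write $\ket{\phi}=\sum_D\ket{\phi_D}\ket D$ with unnormalized $\ket{\phi_D}$, $\sum_D\|\phi_D\|^2=1$, and $D_k:=\{x:(k,x)\in D\}$. The plan is to rewrite $\bra\phi\widehat Y\ket\phi$ as an average over $k$ of a difference between a ``real'' and a ``reference'' acceptance quadratic form. The only difference from the $T=0$ analysis in the technical overview is that the diagonal $x=y$ terms are subtracted. Let
\[
\ket{\Psi_k}:=\tfrac1{\sqrt N}\textstyle\sum_x\ket x\otimes T_{k,x}\ket\phi,\qquad
\ket{\Psi_{\mathrm{ref}}}:=\tfrac1{\sqrt N}\textstyle\sum_x\ket x\ket\phi\ket x_{\mathsf B}.
\]
Then
\[
\bra\phi\widehat Y\ket\phi=\mathbb E_k\big(\bra{\Psi_k}\Pi\ket{\Psi_k}-\bra{\Psi_{\mathrm{ref}}}\Pi\ket{\Psi_{\mathrm{ref}}}\big),
\]
because $T_{k,x}^2=I$, so the $x=y$ part of $\bra{\Psi_k}\Pi\ket{\Psi_k}$ is exactly $\frac1N\sum_x\bra\phi\Pi_{x,x}\ket\phi$. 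I will bound this difference term by term and aim to produce the three summands of $\beta(\ell)$ separately.

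\textbf{Step 1: split by addition and deletion.} Write $\ket{\Psi_k}=\ket{a_k}+\ket{d_k}$. Here $\ket{a_k}$ collects the terms with $x\notin D_k$, so challenge preparation adds $(k,x)$, and $\ket{d_k}$ collects the terms with $x\in D_k$, so it deletes $(k,x)$. The deleted databases $D\setminus\{(k,x)\}$ are distinct for distinct pairs $(D,x)$, so
\[
\|d_k\|^2=\tfrac1N\textstyle\sum_D|D_k|\,\|\phi_D\|^2,
\]
and averaging over $k$ gives $\mathbb E_k\|d_k\|^2\le \ell/(KN)$. The same holds for $\ket{a_k}$, so $\|a_k\|\le1$. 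Expanding
\[
\bra{\Psi_k}\Pi\ket{\Psi_k}=\bra{a_k}\Pi\ket{a_k}+2\,\mathrm{Re}\bra{a_k}\Pi\ket{d_k}+\bra{d_k}\Pi\ket{d_k},
\]
Cauchy--Schwarz and Jensen over $k$ bound the cross terms by $2\sqrt{\ell/(KN)}$, which is the middle term of $\beta$. The last term contributes at most $\ell/(KN)$.

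\textbf{Step 2: the addition part against the reference.} Group $\ket{a_k}$ by its final database $F=D\cup\{(k,x)\}$:
\[
\ket{a_k}=\tfrac1{\sqrt N}\textstyle\sum_F\sum_{x\in F_k}\ket x\ket{\phi_{F\setminus(k,x)}}\ket F .
\]
The diagonal pairs (same history $x$) in $\bra{a_k}\Pi\ket{a_k}$ equal $\frac1N\sum_D\sum_{x\notin D_k}\bra{\phi_D}\Pi_{x,x}\ket{\phi_D}$. This differs from the reference only by the omitted $x\in D_k$ terms, which average to at most $\ell/(KN)$ since $\Pi_{x,x}\preceq I$. Together with Step 1 this accounts for the $3\ell/(KN)$ term. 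What remains is the off-diagonal interference $\mathrm{Off}_k$: pairs $x\ne y$ in the same $F_k$, i.e.\ $D=F\setminus(k,x)$ and $D'=F\setminus(k,y)$. These are exactly the ``move'' terms $D'=D\oplus\{(k,x),(k,y)\}$ with $y\in D_k$ and $x\notin D_k$. Note $\mathrm{Off}_k=0$ when $D_k=\emptyset$.

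\textbf{Step 3 (main obstacle): $|\mathbb E_k\mathrm{Off}_k|\le\sqrt\ell/K$ with constant exactly $1$.} The naive estimate uses the PSD block structure of $\Pi$: for fixed $F$, $|\sum_{x\ne y}\langle u_x|\Pi|u_y\rangle|\le(|F_k|-1)\sum_x\|\sqrt\Pi u_x\|^2$. This yields
\[
|\mathrm{Off}_k|\le\tfrac1N\textstyle\sum_D|D_k|\sum_{x\notin D_k}\bra{\phi_D}\Pi_{x,x}\ket{\phi_D},
\]
and combined with $\frac1N\sum_x\Pi_{x,x}\preceq I$ it gives only $\ell/K$. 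To reach $\sqrt\ell/K$, I will instead use a weighted Cauchy--Schwarz per $D$ that exploits the two constraints jointly:
\[
\textstyle\sum_k|D_k|\le\ell,\qquad \frac1N\sum_x\Pi_{x,x}\preceq I,
\]
and pays $\min(|D_k|,\cdot)$ rather than $|D_k|$. The heuristic is the threshold argument of the technical overview: rows with $|D_k|\le h$ cost at most $h/K$, while at most $\ell/h$ rows are heavy. For each heavy row, $|\mathrm{Off}_k|$ is controlled by the positive diagonal mass, at most $\|a_k\|^2$ weighted by $1/K$; with $h=\sqrt\ell$ this gives $O(\sqrt\ell/K)$.

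Getting the leading constant down to $1$ rather than $2$ is the delicate point. I would first try to avoid the threshold altogether, by passing to the ordered-database purification the paper alludes to. There, challenge preparation acts as a uniform swap of the new position with one of the $|D_k|+1$ existing positions of row $k$. Then $\mathrm{Off}_k$ becomes an inner product between $\ket{a_k}$ and its image under a swap average, whose norm-deviation from identity I hope to bound by $\sqrt{|D_k|}$ directly. Jensen, $\mathbb E_k\sqrt{|D_k|}\le\sqrt{\ell/K}$, and the extra $1/\sqrt K$ coming from the fact that only row $k$ participates would then give exactly $\sqrt\ell/K$. If that fails, I fall back to the threshold split with $h=\sqrt\ell$ and accept a worse constant on the first term of $\beta$.
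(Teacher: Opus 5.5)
\paragraph{Verdict: genuine gap in Step 3.} Your Steps 1 and 2 are sound and match the paper's decomposition: deletion mass $q\le \ell/(KN)$, cross term $2\sqrt q$, the omitted diagonal $H$, and the interference operator $B_\varphi$. One accounting remark: $\operatorname{Tr}(\Pi\sigma_1)$ and $\operatorname{Tr}(\Pi H)$ both lie in $[0,q]$ and enter with opposite signs, so together they cost only $q$. You need that slack, because the paper's interference bound is $\frac{\sqrt\ell}{K}+\frac{2\ell}{KN}$, not a bare $\frac{\sqrt\ell}{K}$.

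\paragraph{Why Step 3 fails.} Step 3 is the whole content of the lemma. Both of your routes bound the interference row by row, and no row-by-row argument reaches $\sqrt\ell/K$.

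The threshold split works for 1OWS only because the light-row cost $\frac{h+1}{KN}\sum_{k,D,x}\|\Pi_k v_{D,x}\|^2$ collapses to $\frac{h+1}{K}$ via $\sum_k\Pi_k=I$. With a single $\Pi$ there is no such collapse. From $\frac1N\sum_x\Pi_{x,x}\preceq I$, the light part of $\mathrm{Off}_k$ is bounded only by $\sum_{D:\,|D_k|\le h}|D_k|\,\|\phi_D\|^2$. Its $k$-average equals $\ell/K$ when the $\ell$ entries sit in $\ell$ distinct rows, since then every row is light. So the fallback just reproduces the naive $O(\ell/K)$, i.e.\ $O(S/K)$ in \cref{thm:program-1prs}.

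The primary route fails the same way. A per-row bound of order $\sqrt{|D_k|}$ averages to $\frac1K\sum_k\sqrt{|D_k|}$, which is again $\ell/K$ in the spread configuration. Jensen only gives $\sqrt{\ell/K}$. There is no extra $1/\sqrt K$ available, since $\mathrm{Off}_k$ for a fixed row does not involve $K$ and the $1/K$ is already spent in $\mathbb E_k$.

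In fact, no bound depending only on row-$k$ data can work. For a fixed $k$, take a $\Pi$ that swap-tests the challenge against an advice copy of row $k$'s stored column. This makes $|\mathrm{Off}_k|$ about half the weight of databases with $D_k\neq\emptyset$, and these weights sum to $\ell$ over $k$. The $\sqrt\ell$ saving exists only because one $\Pi$ cannot do this for all rows simultaneously.

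\paragraph{The missing idea: treat all stored entries at once.} In the ordered-database purification, every interference term, summed over $k$, is a swap of the challenge column $\mathsf X_0$ with one of the $m$ stored column registers. The challenge row is identified with the row of the swapped position. This gives
\[
B_\varphi=(L\otimes I)(J_m-E_m)(L^\dagger\otimes I),\qquad J_m=\sum_{i=1}^m\operatorname{Swap}_{0i},
\]
where the swaps range over positions in \emph{all} rows and $\operatorname{Tr}(L^\dagger L)=1$.

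Using $\|CJC^\dagger\|_1\le\operatorname{Tr}(C^\dagger C\,|J|)$ and $\operatorname{Tr}_{\mathsf X_0}E_m=mI$, everything reduces to the operator inequality $\operatorname{Tr}_{\mathsf X_0}|J_m|\preceq(N\sqrt m+m)I$. The paper proves this with an auxiliary empty-slot model. The partial trace of any function $f(J_m)$ becomes the $00$-block of $(NI+\widetilde J)f(\widetilde J)$. Swaps with different positions send the empty slot to orthogonal labels, so $\bigl\|\,|\widetilde J|\,|u\rangle\bigr\|=\sqrt m$ even though $\|J_m\|=m$.

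This cross-row orthogonality is exactly what your per-row estimates cannot see. It yields $\|B_\varphi\|_1\le N\sqrt\ell+2\ell$, hence the $\frac{\sqrt\ell}{K}$ term.
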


\begin{proof}[Proof of \Cref{thm:program-1prs}]
Let $P_{\le\ell}$ project onto databases of size at most $\ell$. Since $\widehat Y$ is Hermitian, \Cref{lem:program-local} implies
\[
    \|P_{\le\ell}\widehat YP_{\le\ell}\|\le\beta(\ell).
\]
Each application of $\widehat Y$ changes the database size by at most two. Therefore,
\[
 \widehat Y^t|z\rangle|\emptyset\rangle
 =(P_{\le2t}\widehat YP_{\le2t})^t|z\rangle|\emptyset\rangle,
 \qquad
 \|\widehat Y^t|z\rangle|\emptyset\rangle\|\le\beta(2t)^t.
\]
setting $t=S/2$ gives an upper bound for \cref{eq:program-compressed-moment} as
\[
 \mathbb E_R\|Y_R\|
 \le 2^{S/2t}\beta(2t)
 \le O\!\left(\frac{\sqrt S}{K}+\sqrt{\frac S{KN}}+\frac S{KN}\right).
\]
When $K\le N$ and $S\le K^2$, the last two terms are at most $\sqrt S/K$. For $S>K^2$, use the trivial advantage bound $1$.
\end{proof}

\subsection{Proof of the small-database bound}\label{sec:prs-small-database}
We prove \Cref{lem:program-local} by comparing the real challenge with an independent maximally mixed state. Note that challenge preparation either adds a new database entry or deletes an existing one, while deletions have small total weight. For additions, the main issue is that different choices of the newly added entry can lead to the same final database and hence interfere. We express this interference as a sum of swaps, which we bound in the next subsection.

\begin{proof}[Proof of \Cref{lem:program-local}]
Write the initial state as
\[
  |\varphi\rangle
  =\sum_D |\varphi_D\rangle_{\mathsf A}|D\rangle_{\mathsf{db}},
  \qquad
  \sum_D \bigl\|\,|\varphi_D\rangle\,\bigr\|^2=1,
\]
where $|\varphi_D\rangle=0$ when $|D|>\ell$. A purification of the real
challenge is
\[
  |\Psi\rangle
  =\frac{1}{\sqrt{KN}}
    \sum_{D,k,x}|\varphi_D\rangle_{\mathsf A}|x\rangle_{\mathsf X}
    |D\oplus\{(k,x)\}\rangle_{\mathsf{db}}|k\rangle_{\mathsf{key}}.
\]
The algorithm sees only $\mathsf A\mathsf X$; the database and key registers
are inaccessible.

\paragraph{Separating deletions from new entries.}
Split $|\Psi\rangle=|\Psi_0\rangle+|\Psi_1\rangle$ according to whether
$(k,x)\notin D$ or $(k,x)\in D$, respectively. These two parts are orthogonal,
and the deletion part has squared norm
\[
  q:=\bigl\|\,|\Psi_1\rangle\,\bigr\|^2
   =\frac{1}{KN}\sum_D |D|\,
       \bigl\|\,|\varphi_D\rangle\,\bigr\|^2
   \le\frac{\ell}{KN},
  \qquad
  \bigl\|\,|\Psi_0\rangle\,\bigr\|^2=1-q.
\]
Let $\sigma_b$ be the reduced, subnormalized state of $|\Psi_b\rangle$ on
$\mathsf A\mathsf X$. The independent-challenge state on these registers is
\[
  \tau:=\left(\sum_D|\varphi_D\rangle\langle\varphi_D|\right)
            \otimes\frac{I_{\mathsf X}}{N}.
\]

\paragraph{Identifying the surviving interference.}
When we trace out the hidden registers, two new-entry terms can interfere
only if they have the same final database $S$ and key $k$. They may, however,
have added different entries: adding $(k,x)$ to $S\setminus\{(k,x)\}$ and
adding $(k,y)$ to $S\setminus\{(k,y)\}$ both produce $S$. Thus the two histories exchange the roles of the challenge column and a
stored column.

Write $S_k:=\{x:(k,x)\in S\}$ and
$|\varphi_{D,x}\rangle:=|\varphi_D\rangle|x\rangle$. The off-diagonal terms
are collected in
\[
  B_\varphi:=\sum_{S,k}
      \sum_{\substack{x,y\in S_k\\x\ne y}}
      |\varphi_{S\setminus\{(k,y)\},y}\rangle
      \langle\varphi_{S\setminus\{(k,x)\},x}|.
\]
The diagonal terms recover $\tau$, except for the terms excluded by the
new-entry condition. Thus
\begin{equation}\label{eq:1prs-rewrite-diagonal}
  \sigma_0-\tau=\frac{B_\varphi}{KN}-H,
  \qquad
  H:=\frac{1}{KN}\sum_{\substack{D,k,x\\(k,x)\in D}}
        |\varphi_{D,x}\rangle\langle\varphi_{D,x}|.
\end{equation}
Both $H$ and $\sigma_1$ are positive with trace $q$. Hence, for the acceptance
operator $0\preceq\Pi\preceq I$,
$|\operatorname{Tr}(\Pi(\sigma_1-H))|\le q$. The cross terms $\bra{\Psi_0}\Pi\otimes I\ket{\Psi_1}$ between
$|\Psi_0\rangle$ and $|\Psi_1\rangle$ contribute at most
$2\sqrt{q(1-q)}\le 2\sqrt{q}$, by Cauchy-Schwarz. Consequently,
\begin{align}\label{eq:1prs-rewrite-gap}
\langle\varphi|\widehat Y|\varphi\rangle
&=\langle\Psi|(\Pi\otimes I)|\Psi\rangle
-\operatorname{Tr}(\Pi\tau)
\notag
\\
&=\frac{\operatorname{Tr}(\Pi B_\varphi)}{KN}
+\operatorname{Tr}\bigl(\Pi(\sigma_1-H)\bigr)
+2\operatorname{Re}
\langle\Psi_0|(\Pi\otimes I)|\Psi_1\rangle
\notag
\\
&\le \frac{\|B_\varphi\|_1}{KN}+q+2\sqrt q.
\end{align}
Here $\|\cdot\|_1$ denotes the trace norm. It remains to prove $\|B_\varphi\|_1\le N\sqrt\ell+2\ell$.

\paragraph{Writing the interference as swaps.}
First suppose the initial state is normalized and supported on databases of
exactly size $m\ge1$. To give each stored entry a slot, now we encode a database $D=\{(k_1,x_1),\ldots,(k_m,x_m)\}$ by the uniform superposition of its
orderings (as a different purification of the database register):
\[
  |\operatorname{ord}(D)\rangle
  :=\frac{1}{\sqrt{m!}}\sum_{\pi\in S_m}
       |k_{\pi(1)},x_{\pi(1)}\rangle\cdots
       |k_{\pi(m)},x_{\pi(m)}\rangle.
\]
Define
\[
  L:=\sum_{|D|=m}|\varphi_D\rangle\langle\operatorname{ord}(D)|.
\]
Thus $L$ maps the ordered encoding of $D$ to its advice amplitude
$|\varphi_D\rangle$. The ordered encodings are orthonormal, so
$\operatorname{Tr}(L^\dagger L)=1$.

Let $\mathsf X_0$ be the challenge-column register and
$\mathsf X_1,\ldots,\mathsf X_m$ the stored column registers. Define
\[
  J_m:=\sum_{i=1}^m\operatorname{Swap}_{0i},
  \qquad
  E_m:=\sum_{i=1}^m\sum_{x\in[N]}
       |x,x\rangle\langle x,x|_{\mathsf X_0\mathsf X_i},
\]
with identities on all other registers, including the row registers.
The operator $J_m$ sums all exchanges of the challenge column with a stored
column; $E_m$ removes the equal-column exchanges, which are diagonal terms.
We claim that
\begin{equation}\label{eq:1prs-rewrite-swaps}
  B_\varphi=(L\otimes I)(J_m-E_m)(L^\dagger\otimes I).
\end{equation}
Indeed, starting with $S\setminus\{(k,x)\}$ and visible column $x$, swapping
with the stored entry $(k,y)$ produces $S\setminus\{(k,y)\}$ and visible
column $y$. This is precisely the corresponding term of $B_\varphi$.
The swap pairs the $m!$ orderings bijectively, giving coefficient one.
Equal-column swaps are canceled by $E_m$, and a swap creating a repeated
entry is killed by $L$, since the resulting tuple is not an encoding of a
set.

The advantage of \cref{eq:1prs-rewrite-swaps} is that all dependence on the
advice is now contained in $L$. For any Hermitian $J$ and any $C$,
\[
  \|CJC^\dagger\|_1\le\operatorname{Tr}(C^\dagger C\,|J|),
\]
by splitting $J$ into its positive and negative parts. Also,
$E_m\succeq0$ and $\operatorname{Tr}_{\mathsf X_0}E_m=mI$. Therefore,
\[
  \|B_\varphi\|_1
  \le\operatorname{Tr}\!\left[
       L^\dagger L\,\operatorname{Tr}_{\mathsf X_0}|J_m|
     \right]+m.
\]
Since $L^\dagger L$ is positive with trace one, it suffices to bound the
partial trace uniformly. \Cref{lem:1prs-rewrite-swap}, proved below,
gives
\[
  \operatorname{Tr}_{\mathsf X_0}|J_m|
  \preceq(N\sqrt m+m)I,
  \qquad
  \|B_\varphi\|_1\le N\sqrt m+2m.
\]
The case $m=0$ is immediate, since there are no off-diagonal terms.

For a general state, let $p_m$ be its squared norm on databases of size $m$.
Every term of $B_\varphi$ pairs initial databases of the same size: both
have size $|S|-1$. Applying the exact-size bound to each normalized
component and then the triangle inequality gives
\[
  \|B_\varphi\|_1
  \le\sum_m p_m(N\sqrt m+2m)
  \le N\sqrt\ell+2\ell.
\]
Substituting this and $q\le\ell/(KN)$ into \Cref{eq:1prs-rewrite-gap},
we obtain
\[
  \bigl|\langle\varphi|\widehat Y|\varphi\rangle\bigr|
  \le\frac{N\sqrt\ell+2\ell}{KN}
       +\frac{\ell}{KN}+2\sqrt{\frac{\ell}{KN}}
  =\frac{\sqrt\ell}{K}
       +2\sqrt{\frac{\ell}{KN}}+\frac{3\ell}{KN},
\]
proving \Cref{lem:program-local}.
\end{proof}

\subsection{An elementary bound for the swap operator}

\begin{lemma}\label{lem:1prs-rewrite-swap}
On $(\mathbb C^N)^{\otimes(m+1)}$, let
$J_m=\sum_{i=1}^m\operatorname{Swap}_{0i}$ and
$|J_m|=(J_m^2)^{1/2}$. Then
\[
  \operatorname{Tr}_{\mathsf X_0}|J_m|
  \preceq\sqrt{m(N^2+m)}\,I
  \preceq(N\sqrt m+m)I.
\]
The same bound holds with additional registers on which all swaps act
trivially.
\end{lemma}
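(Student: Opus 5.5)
The plan is to reduce the claimed operator inequality to a scalar estimate for each fixed unit vector on the stored registers. Fix a unit vector $\ket v$ on $\mathsf X_1\cdots\mathsf X_m$ (tensored with any extra registers, on which everything acts trivially, so they can be ignored). Then
\[
\bra v\operatorname{Tr}_{\mathsf X_0}|J_m|\ket v=\operatorname{Tr}\bigl(|J_m|\,(I_{\mathsf X_0}\otimes\ketbra v)\bigr),
\]
and it suffices to bound this uniformly in $\ket v$. The natural tool is a Cauchy--Schwarz step trading $|J_m|$ for $J_m^2$, which is an explicit polynomial in swaps. From $|J|\preceq \frac{1}{2}(c^{-1}J^2+cI)$ for every $c>0$, and then optimizing $c$,
\[
\operatorname{Tr}\bigl(|J_m|\rho\bigr)\le\sqrt{\operatorname{Tr}(\rho)\operatorname{Tr}(J_m^2\rho)}
\]
for $\rho\succeq0$. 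For $\rho=I_{\mathsf X_0}\otimes\ketbra v$ we have $\operatorname{Tr}\rho=N$.

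Next I would compute $\operatorname{Tr}_{\mathsf X_0}J_m^2$ exactly. Expanding gives
\[
J_m^2=mI+\sum_{i\ne j}\operatorname{Swap}_{0i}\operatorname{Swap}_{0j}.
\]
For $i\ne j$ the product is a $3$-cycle on $\{0,i,j\}$, whose partial trace over $\mathsf X_0$ is $\operatorname{Swap}_{ij}$. Hence
\[
\operatorname{Tr}_{\mathsf X_0}J_m^2=mN\,I+\sum_{i\ne j}\operatorname{Swap}_{ij}.
\]
The sum $\sum_{i\ne j}\operatorname{Swap}_{ij}$ is twice the content sum of the Young diagram on each Schur--Weyl isotypic component, so it is $\preceq m(m-1)I$.

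Combining the two steps gives $\sqrt{N(mN+m^2)}$. This is weaker than the stated $\sqrt{m(N^2+m)}$ by an extra factor $N$ on the $m^2$ term.

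The main obstacle is removing this loss, since the crude Cauchy--Schwarz with the flat weight $I_{\mathsf X_0}$ is too lossy. My plan is to perform the Cauchy--Schwarz blockwise in the Schur--Weyl decomposition. On the $\lambda$-isotypic component of $\mathsf X_1\cdots\mathsf X_m$ ($\lambda\vdash m$), $J_m$ is the Jucys--Murphy element of $S_{m+1}$. It acts on the component for $\mu=\lambda+\square$ by the content $c(\square)$. The partial trace over $\mathsf X_0$ then gives
\[
\operatorname{Tr}_{\mathsf X_0}|J_m|\big|_\lambda=\sum_{\square}w_\square|c(\square)|,
\qquad
\operatorname{Tr}_{\mathsf X_0}J_m^2\big|_\lambda=\sum_\square w_\square c(\square)^2 .
\]
Here the branching weights $w_\square\ge0$ (products of ratios of $U(N)$ and $S_m$ dimensions) satisfy $\sum_\square w_\square=N$; the first identity is the analogue of the partial-trace computation above.

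I would then try a weighted Cauchy--Schwarz of the form
\[
\sum w_\square|c|\le\Bigl(\sum w_\square\,g_\square\Bigr)^{1/2}\Bigl(\sum w_\square\,c^2/g_\square\Bigr)^{1/2},
\]
choosing $g_\square$ to exploit that the weight $w_\square\propto(N+c(\square))\cdot(\text{hook ratio})$ suppresses boxes with large $|c|$. The aim is to show $\sum_\square w_\square|c(\square)|\le\sqrt{m(N^2+m)}$. As a fallback, I would check whether the cruder bound $N\sqrt m+\sqrt{Nm(m-1)}$ still suffices downstream, after adjusting constants in \Cref{lem:program-local} (it does whenever $m\le N$, which covers $\ell\le K\le N$, but not the full range).
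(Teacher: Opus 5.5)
Your first Cauchy--Schwarz step is correct, but as you observe it only gives $\sqrt{N(mN+m(m-1))}$. The loss is genuine. On the symmetric component $\lambda=(m)$ the true value of $\operatorname{Tr}_{\mathsf X_0}|J_m|$ is $m(2N+m-1)/(m+1)\approx m$ for $m\gg N$, while the flat-weight bound there is $\approx\sqrt N\,m$.

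The fallback does not suffice either. \Cref{lem:program-local} is applied with $\ell=S$ all the way up to $K^2$. An extra $\sqrt N\,m$ term contributes $\ell/(K\sqrt N)$ to $\beta(\ell)$, which is $O(\sqrt\ell/K)$ only for $\ell\lesssim N$. For $K=N$, this would exclude constant advantage only for $S\ll N^{3/2}$ instead of $S\ll N^2$.

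So everything rests on the refined Schur--Weyl step, and there the proposal stops at ``the aim is to show''. No $g_\square$ is chosen. The heuristic that $w_\square$ ``suppresses large $|c|$'' is also not the operative mechanism, since the factor $N+c(\square)$ actually favors large positive contents. This is the gap.

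The missing idea is to factor $w_\square=(N+c(\square))\,\pi_\square$ with $\pi_\square=f^{\lambda+\square}/((m+1)f^{\lambda})$ (hook--content formula), and to use that $\pi$ is a \emph{centered} probability measure with variance $m$. The normalization $\sum_\square\pi_\square=1$ is the branching rule $\sum_\square f^{\lambda+\square}=(m+1)f^\lambda$. Your own identities then do the rest: $\sum_\square w_\square=N$, and $\sum_\square w_\square c(\square)=m$ (the restriction of $\operatorname{Tr}_{\mathsf X_0}J_m=mI$ to the $\lambda$-component). Together with $\sum_\square\pi_\square=1$ they force $\sum_\square\pi_\square c(\square)=0$ and $\sum_\square\pi_\square c(\square)^2=m$. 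Taking $g_\square=N+c(\square)$, which is nonnegative on every addable box, gives
\[
\sum_\square w_\square\,|c(\square)|=\sum_\square\pi_\square\,(N+c(\square))\,|c(\square)|\le\Bigl(\sum_\square\pi_\square\,(N+c(\square))^2\Bigr)^{1/2}\Bigl(\sum_\square\pi_\square\,c(\square)^2\Bigr)^{1/2}=\sqrt{(N^2+m)\,m}.
\]
The vanishing first moment is what kills the cross term $2N\sum_\square\pi_\square c(\square)$. You should also state why $\operatorname{Tr}_{\mathsf X_0}|J_m|$ is a scalar on each isotypic component: it commutes with $U^{\otimes m}$ and with $S_m$.

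Completed this way, your argument is the representation-theoretic form of the paper's proof, which avoids Schur--Weyl altogether. Its empty-slot model gives $\operatorname{Tr}_{\mathsf X_0}f(J_m)=N\,f(\widetilde J)_{00}+(\widetilde J f(\widetilde J))_{00}$, i.e.\ the weight $NI+\widetilde J$ evaluated at $\lvert u\rangle=\lvert 0\rangle\lvert v\rangle$. There $\langle u|\widetilde J|u\rangle=0$ and $\|\widetilde J\,\lvert u\rangle\|^2=m$ play exactly the role of the two moment identities, and Cauchy--Schwarz is applied to $\langle u|(NI+\widetilde J)\,|\widetilde J|\,|u\rangle$.
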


\begin{proof}
The case $m=0$ is immediate. A direct operator-norm bound would give only
$NmI$, since $\|J_m\|_{\mathsf{op}}=m$. To exploit the partial trace, we
introduce an auxiliary model with $m$ occupied positions and one movable
empty slot. Swaps starting with the empty slot at position $0$ move it to orthogonal positions, giving a norm of $\sqrt m$.
The main step is a general identity converting the partial trace of a permutation into a block of its auxiliary operator. 
This helps transfer back to the original operator.

\paragraph{Permutations with an empty slot.}
The auxiliary space is
\[
  \mathbb C^{m+1}\otimes(\mathbb C^N)^{\otimes m}.
\]
Its first register records the position $j\in\{0,\ldots,m\}$ of the empty
slot; the other registers store the $m$ column values. We use the following
indexing convention: a basis state $|j\rangle|x_1,\ldots,x_m\rangle$ encodes
the configuration obtained from
\[
  [\,\square,x_1,\ldots,x_m\,]
\]
by swapping positions $0$ and $j$. In particular, moving the empty slot
from $0$ to $i$ changes the label from $0$ to $i$ without changing the stored
tuple in this encoding.

For a permutation $\pi$ of the $m+1$ positions, let $Q_\pi$ permute this configuration, moving both the contents and the empty slot. These are unitary permutation matrices and satisfy
$Q_\pi Q_\sigma=Q_{\pi\sigma}$, just as the usual register-permutation operators $U_\pi$ do (while the difference is, $Q_\pi$ applies on the auxiliary space $\mathbb C^{m+1}\otimes(\mathbb C^N)^{\otimes m}$). In both cases, a value at position $j$ moves to
position $\pi(j)$.

Define the Hermitian operator
\[
  \widetilde J:=\sum_{i=1}^m Q_{(0,i)},
\]
and write $(\cdot)_{00}$ for the block with empty-slot input and output
label $0$. For every stored-register vector $|v\rangle$,
\begin{equation}\label{eq:1prs-rewrite-empty-slot}
  \begin{aligned}
    \widetilde J(|0\rangle\otimes|v\rangle)
       &=\sum_{i=1}^m |i\rangle\otimes|v\rangle,\qquad 
    (\widetilde J)_{00}=0,
    \qquad (\widetilde J^2)_{00}=mI.
  \end{aligned}
\end{equation}
The different swaps give different empty-slot labels, so their
contributions are orthogonal. This is the source of the square-root
dependence on $m$.

\paragraph{Converting the partial trace into a matrix block.}
For every permutation $\pi$,
\begin{equation}\label{eq:1prs-rewrite-permutation-trace}
  \operatorname{Tr}_{\mathsf X_0}U_\pi
  =N(Q_\pi)_{00}+(\widetilde JQ_\pi)_{00}.
\end{equation}
To see this, consider whether $\pi$ fixes position $0$. 
\begin{itemize}
    \item If it does, $\pi(0)=0$, tracing out $\mathsf X_0$ gives $N$ times the permutation of the remaining registers. In the auxiliary model the empty slot stays at $0$, so this is exactly the first term on the right; the second term is zero.
    \item Otherwise, let $i=\pi(0)\ne0$. Tracing out $\mathsf X_0$ deletes $0$ from its permutation cycle, with no factor of $N$. In the auxiliary model, $Q_\pi$ moves the empty slot to $i$, so $(Q_\pi)_{00}=0$. Exactly one summand of $\widetilde J$ returns it to $0$, namely $Q_{(0,i)}$. On the remaining registers, $(0,i)\pi$ performs precisely the permutation obtained by deleting $0$ from its cycle. Thus the second term on the right gives the required partial trace.
\end{itemize}

Because the $U_\pi$ and $Q_\pi$ obey the same multiplication rules,
expanding any polynomial $f$ into products of swaps extends
\cref{eq:1prs-rewrite-permutation-trace} to
\[
  \operatorname{Tr}_{\mathsf X_0}f(J_m)
   =N\bigl(f(\widetilde J)\bigr)_{00}
       +\bigl(\widetilde Jf(\widetilde J)\bigr)_{00}.
\]
Choose $f$ to agree with absolute value on the finite spectra of both
$J_m$ and $\widetilde J$. This gives the exact identity
\begin{equation}\label{eq:1prs-rewrite-absolute-trace}
  \operatorname{Tr}_{\mathsf X_0}|J_m|
    =N(|\widetilde J|)_{00}
       +(\widetilde J|\widetilde J|)_{00}.
\end{equation}

\paragraph{Applying Cauchy--Schwarz.}
Fix a unit vector $|v\rangle$ on the stored registers and set
$|u\rangle=|0\rangle\otimes|v\rangle$. By
\cref{eq:1prs-rewrite-empty-slot},
\[
  \langle u|\widetilde J|u\rangle=0,
  \qquad
  \bigl\|\,\widetilde J|u\rangle\,\bigr\|^2=m,
  \qquad
  \bigl\|\,|\widetilde J|\,|u\rangle\,\bigr\|^2=m,
\]
where the last equality uses $|\widetilde J|^2=\widetilde J^2$.
Equation \cref{eq:1prs-rewrite-absolute-trace} and Cauchy--Schwarz give
\[
  \begin{aligned}
    \langle v|\operatorname{Tr}_{\mathsf X_0}|J_m|\,|v\rangle
      &=\langle u|(NI+\widetilde J)|\widetilde J|\,|u\rangle\\
      &\le
       \bigl\|\,(NI+\widetilde J)|u\rangle\,\bigr\|
       \bigl\|\,|\widetilde J|\,|u\rangle\,\bigr\|\\
      &=\sqrt{N^2+m}\,\sqrt m.
  \end{aligned}
\]
This holds for every unit vector $|v\rangle$, proving the operator bound.
Tensoring with an identity gives the assertion about additional registers.
\end{proof}

\bibliographystyle{alpha}
\bibliography{custom,abbrev3,crypto}

\newcommand{\etalchar}[1]{$^{#1}$}
\begin{thebibliography}{BCKM21}

\bibitem[AK07]{AK07}
Scott Aaronson and Greg Kuperberg.
\newblock Quantum versus classical proofs and advice.
\newblock {\em Theory of Computing}, 3(1):129--157, 2007.

\bibitem[BBBV97]{BBBV97}
Charles~H. Bennett, Ethan Bernstein, Gilles Brassard, and Umesh Vazirani.
\newblock Strengths and weaknesses of quantum computing.
\newblock {\em SIAM Journal on Computing}, 26(5):1510–1523, October 1997.

\bibitem[BCKM21]{C:BCKM21b}
James Bartusek, Andrea Coladangelo, Dakshita Khurana, and Fermi Ma.
\newblock One-way functions imply secure computation in a quantum world.
\newblock In Tal Malkin and Chris Peikert, editors, {\em CRYPTO~2021, Part~I},
  volume 12825 of {\em {LNCS}}, pages 467--496, Virtual Event, August 2021.
  Springer, Cham.

\bibitem[BDF{\etalchar{+}}11]{AC:BDFLSZ11}
Dan Boneh, {\"O}zg{\"u}r Dagdelen, Marc Fischlin, Anja Lehmann, Christian
  Schaffner, and Mark Zhandry.
\newblock Random oracles in a quantum world.
\newblock In Dong~Hoon Lee and Xiaoyun Wang, editors, {\em ASIACRYPT~2011},
  volume 7073 of {\em {LNCS}}, pages 41--69. Springer, Berlin, Heidelberg,
  December 2011.

\bibitem[CDGS18]{EC:CDGS18}
Sandro Coretti, Yevgeniy Dodis, Siyao Guo, and John~P. Steinberger.
\newblock Random oracles and non-uniformity.
\newblock In Jesper~Buus Nielsen and Vincent Rijmen, editors, {\em
  EUROCRYPT~2018, Part~I}, volume 10820 of {\em {LNCS}}, pages 227--258.
  Springer, Cham, April~/~May 2018.

\bibitem[CGLQ20]{FOCS:CGLQ20}
Kai-Min Chung, Siyao Guo, Qipeng Liu, and Luowen Qian.
\newblock Tight quantum time-space tradeoffs for function inversion.
\newblock In {\em 61st FOCS}, pages 673--684. {IEEE} Computer Society Press,
  November 2020.

\bibitem[CK19]{TCC:CorKog19}
Henry {Corrigan-Gibbs} and Dmitry Kogan.
\newblock The function-inversion problem: Barriers and opportunities.
\newblock In Dennis Hofheinz and Alon Rosen, editors, {\em TCC~2019, Part~I},
  volume 11891 of {\em {LNCS}}, pages 393--421. Springer, Cham, December 2019.

\bibitem[CLQ20]{ITC:ChuLiaQia20}
Kai-Min Chung, Tai-Ning Liao, and Luowen Qian.
\newblock Lower bounds for function inversion with quantum advice.
\newblock In Yael~Tauman Kalai, Adam~D. Smith, and Daniel Wichs, editors, {\em
  ITC 2020}, volume 163 of {\em {LIPIcs}}, pages 8:1--8:15. Schloss Dagstuhl,
  June 2020.

\bibitem[DGK17]{EC:DodGuoKat17}
Yevgeniy Dodis, Siyao Guo, and Jonathan Katz.
\newblock Fixing cracks in the concrete: Random oracles with auxiliary input,
  revisited.
\newblock In Jean-S{\'{e}}bastien Coron and Jesper~Buus Nielsen, editors, {\em
  EUROCRYPT~2017, Part~II}, volume 10211 of {\em {LNCS}}, pages 473--495.
  Springer, Cham, April~/~May 2017.

\bibitem[DLM26]{DLM26}
Fangqi Dong, Alex Lombardi, and Fermi Ma.
\newblock Explicit separations for one-query unitary synthesis, 2026.

\bibitem[DLW24]{C:DonLiuWu24}
Fangqi Dong, Qipeng Liu, and Kewen Wu.
\newblock Tight characterizations for preprocessing against cryptographic
  salting.
\newblock In Leonid Reyzin and Douglas Stebila, editors, {\em CRYPTO~2024,
  Part~IV}, volume 14923 of {\em {LNCS}}, pages 377--411. Springer, Cham,
  August 2024.

\bibitem[DTT10]{C:DeTreTul10}
Anindya De, Luca Trevisan, and Madhur Tulsiani.
\newblock Time space tradeoffs for attacks against one-way functions and
  {PRGs}.
\newblock In Tal Rabin, editor, {\em CRYPTO~2010}, volume 6223 of {\em {LNCS}},
  pages 649--665. Springer, Berlin, Heidelberg, August 2010.

\bibitem[FN91]{STOC:FiaNao91}
Amos Fiat and Moni Naor.
\newblock Rigorous time/space tradeoffs for inverting functions.
\newblock In {\em 23rd ACM STOC}, pages 534--541. {ACM} Press, May 1991.

\bibitem[Gao15]{Gao15}
Jingliang Gao.
\newblock Quantum union bounds for sequential projective measurements.
\newblock {\em Physical Review A}, 92(5), November 2015.

\bibitem[GLLZ21]{TCC:GLLZ21}
Siyao Guo, Qian Li, Qipeng Liu, and Jiapeng Zhang.
\newblock Unifying presampling via concentration bounds.
\newblock In Kobbi Nissim and Brent Waters, editors, {\em TCC~2021, Part~I},
  volume 13042 of {\em {LNCS}}, pages 177--208. Springer, Cham, November 2021.

\bibitem[GLSV21]{EC:GLSV21}
Alex~B. Grilo, Huijia Lin, Fang Song, and Vinod Vaikuntanathan.
\newblock Oblivious transfer is in {MiniQCrypt}.
\newblock In Anne Canteaut and Fran\c{c}ois-Xavier Standaert, editors, {\em
  EUROCRYPT~2021, Part~II}, volume 12697 of {\em {LNCS}}, pages 531--561.
  Springer, Cham, October 2021.

\bibitem[Hel80]{Hellman80}
Martin~E. Hellman.
\newblock A cryptanalytic time--memory trade-off.
\newblock {\em IEEE Transactions on Information Theory}, 26(4):401--406, July
  1980.

\bibitem[HMY23]{EC:HhaMorYam23}
Minki Hhan, Tomoyuki Morimae, and Takashi Yamakawa.
\newblock From the hardness of detecting superpositions to cryptography:
  Quantum public key encryption and commitments.
\newblock In Carmit Hazay and Martijn Stam, editors, {\em EUROCRYPT~2023,
  Part~I}, volume 14004 of {\em {LNCS}}, pages 639--667. Springer, Cham, April
  2023.

\bibitem[HXY19]{AC:HhaXagYam19}
Minki Hhan, Keita Xagawa, and Takashi Yamakawa.
\newblock Quantum random oracle model with auxiliary input.
\newblock In Steven~D. Galbraith and Shiho Moriai, editors, {\em
  ASIACRYPT~2019, Part~I}, volume 11921 of {\em {LNCS}}, pages 584--614.
  Springer, Cham, December 2019.

\bibitem[JLS18]{C:JiLiuSon18}
Zhengfeng Ji, Yi-Kai Liu, and Fang Song.
\newblock Pseudorandom quantum states.
\newblock In Hovav Shacham and Alexandra Boldyreva, editors, {\em CRYPTO~2018,
  Part~III}, volume 10993 of {\em {LNCS}}, pages 126--152. Springer, Cham,
  August 2018.

\bibitem[Liu23]{EC:Liu23}
Qipeng Liu.
\newblock Non-uniformity and quantum advice in the quantum random oracle model.
\newblock In Carmit Hazay and Martijn Stam, editors, {\em EUROCRYPT~2023,
  Part~I}, volume 14004 of {\em {LNCS}}, pages 117--143. Springer, Cham, April
  2023.

\bibitem[LMW24]{STOC:LomMaWri24}
Alex Lombardi, Fermi Ma, and John Wright.
\newblock A one-query lower bound for unitary synthesis and breaking quantum
  cryptography.
\newblock In Bojan Mohar, Igor Shinkar, and Ryan {O'Donnell}, editors, {\em
  56th ACM STOC}, pages 979--990. {ACM} Press, June 2024.

\bibitem[MY22]{C:MorYam22}
Tomoyuki Morimae and Takashi Yamakawa.
\newblock Quantum commitments and signatures without one-way functions.
\newblock In Yevgeniy Dodis and Thomas Shrimpton, editors, {\em CRYPTO~2022,
  Part~I}, volume 13507 of {\em {LNCS}}, pages 269--295. Springer, Cham, August
  2022.

\bibitem[MY24]{TQC:MY24}
Tomoyuki Morimae and Takashi Yamakawa.
\newblock One-wayness in quantum cryptography.
\newblock In {\em 19th Conference on the Theory of Quantum Computation,
  Communication and Cryptography (TQC 2024)}, volume 310 of {\em Leibniz
  International Proceedings in Informatics (LIPIcs)}, pages 4:1--4:21. Schloss
  Dagstuhl -- Leibniz-Zentrum f{\"u}r Informatik, 2024.

\bibitem[Ros26]{Rosenthal22}
Gregory Rosenthal.
\newblock Query and depth upper bounds for quantum unitaries via grover search.
\newblock {\em Quantum}, 10:2144, June 2026.

\bibitem[Unr07]{C:Unruh07}
Dominique Unruh.
\newblock Random oracles and auxiliary input.
\newblock In Alfred Menezes, editor, {\em CRYPTO~2007}, volume 4622 of {\em
  {LNCS}}, pages 205--223. Springer, Berlin, Heidelberg, August 2007.

\bibitem[Yan22]{AC:Yan22}
Jun Yan.
\newblock General properties of quantum bit commitments (extended abstract).
\newblock In Shweta Agrawal and Dongdai Lin, editors, {\em ASIACRYPT~2022,
  Part~IV}, volume 13794 of {\em {LNCS}}, pages 628--657. Springer, Cham,
  December 2022.

\bibitem[Yao90]{STOC:Yao90}
Andrew Chi-Chih Yao.
\newblock Coherent functions and program checkers (extended abstract).
\newblock In {\em 22nd ACM STOC}, pages 84--94. {ACM} Press, May 1990.

\bibitem[YRC20]{YRC20}
Yuxiang Yang, Renato Renner, and Giulio Chiribella.
\newblock Optimal universal programming of unitary gates.
\newblock {\em Physical Review Letters}, 125(21):210501, 2020.

\bibitem[Yue22]{Yuen}
Henry Yuen.
\newblock Lecture 7 from {COMS E6998}: Frontiers of quantum complexity and
  cryptography, 2022.

\bibitem[Zha19]{C:Zhandry19}
Mark Zhandry.
\newblock How to record quantum queries, and applications to quantum
  indifferentiability.
\newblock In Alexandra Boldyreva and Daniele Micciancio, editors, {\em
  CRYPTO~2019, Part~II}, volume 11693 of {\em {LNCS}}, pages 239--268.
  Springer, Cham, August 2019.

\end{thebibliography}

\appendix
\crefalias{section}{appendix}

\section{A matching zero-query attack for 1OWS and 1PRS}
\label{sec:symmetric-subspace-attack}

We give a quantitative version of the folklore symmetric-subspace attack (see~\cite{STOC:LomMaWri24}), which matches our zero-query bounds for both 1OWS (\Cref{thm:1ows-ts-tradeoff}) and 1PRS (\Cref{thm:program-1prs}) up to logarithmic factors.
The idea is to store many copies of each of a fixed set of states and compare the challenge against these states sequentially.
Using more copies reduces false positives, while the quantum union bound controls the disturbance to the single challenge copy.

\begin{proposition}[Symmetric-subspace attack]
\label{prop:symmetric-subspace-attack}
Suppose $K \le N$, and let $1 \le L \le \min\{K,N/32\}$ be an integer. There are zero-query attacks using $32\cdot L^2\cdot\log N$ qubits of advice that achieve 1OWS key-recovery probability at least $3L/4K$
and 1PRS distinguishing advantage at least $L/2K$, respectively.
Both guarantees are averaged over the random oracle.
\end{proposition}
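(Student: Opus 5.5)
We construct the advice and the test explicitly, using the first $L$ keys. Set $c := 32L\log N$. The advice consists of $c$ copies of each of $\ket{\psi_{R,1}},\dots,\ket{\psi_{R,L}}$, which is $Lc\cdot n = 32L^2\log N$ qubits. The online algorithm works through $j=1,\dots,L$ in order. At step $j$ it runs the symmetric-subspace projective test $\Pi^{\mathrm{sym}}_{c+1}$ on the challenge register together with the $c$ stored copies of $\ket{\psi_{R,j}}$. For 1OWS, it outputs the first $j$ that accepts, and outputs a uniformly random key if none accepts. For 1PRS, it outputs "pseudorandom" if some test accepts.

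We then analyze the test statistics. Because $\ket{\psi_{R,j}}^{\otimes c}$ is already symmetric, the acceptance probability on challenge $\ket{\phi}$ is
\[
\bra{\phi}\bigl(\tfrac{1}{c+1}I+\tfrac{c}{c+1}\ketbra{\psi_{R,j}}\bigr)\ket{\phi}
= \tfrac{1}{c+1}+\tfrac{c}{c+1}|\langle\psi_{R,j}|\phi\rangle|^2 .
\]
This equals $1$ when $\phi=\psi_{R,j}$. For a wrong key $k\neq j$ we have $\mathbb{E}_R|\langle\psi_{R,j}|\psi_{R,k}\rangle|^2 = 1/N$. For a Haar-random state the expected overlap is also $1/N$. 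So the false-positive probability per test is at most $\frac{1}{c+1}+\frac{1}{N}$ in expectation. Summing over the $L$ tests gives at most
\[
\tfrac{L}{c}+\tfrac{L}{N} \le \tfrac{1}{32\log N}+\tfrac{1}{32},
\]
using $L\le N/32$.

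The key step, and the main obstacle, is the case where the challenge key satisfies $k\le L$ (probability $L/K$). Earlier tests $j<k$ may disturb the challenge before test $k$ is run, so we must control that disturbance. We will use a quantum union bound for sequential projections, such as Gao's bound~\cite{Gao15}. It says the probability that every one of the rejections $I-\Pi_1,\dots,I-\Pi_{k-1}$ occurs and then $\Pi_k$ accepts is at least $1-4\sum_{j<k}\Pr[\Pi_j \text{ accepts}]$, where the original state passes $\Pi_k$ with probability $1$. By the per-test bound, in expectation over $R$ this total is at most $4(L/c+L/N)\le 1/4$. We expect the main work to be choosing constants so the final numbers come out as stated: the expectation is taken over $R$ and linearity suffices, and because the stored copies are projected only when a test accepts, tests on other registers do not interfere. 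With the stated constants the correct-output probability given $k\le L$ is at least $3/4$, so the 1OWS success probability is at least $\frac{3L}{4K}$.

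For 1PRS, the acceptance probability on the real challenge is at least $\frac{L}{K}\cdot\frac34$. On a Haar-random challenge it is at most the false-positive total. When $k>L$ the real challenge is also a "wrong" state, with the same false-positive bound as the Haar case. So the two cases are compared against the same baseline, and the gap is at least roughly $\frac34\cdot\frac{L}{K}$ minus the $k\le L$ false-positive correction. After tightening constants (for example, comparing only the $k\le L$ branch), this gives advantage at least $\frac{L}{2K}$.
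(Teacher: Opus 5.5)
Your attack and analysis are essentially the paper's: stored copies, sequential symmetric-subspace tests, Gao's quantum union bound, and splitting the real 1PRS experiment into stored and unstored keys. Two steps need repair, though. First, the advice is miscounted. $c=32L\log N$ copies of each of $L$ states on $n=\log N$ qubits is $Lcn=32L^2(\log N)^2$ qubits, not $32L^2\log N$, so the attack as written exceeds the budget in the statement. The extra $\log N$ buys nothing. Take $c=32L$ as the paper does; then $L\le N/32$ gives $4L\bigl(\frac{1}{c+1}+\frac1N\bigr)\le\frac18+\frac18=\frac14$, which is all the union-bound step uses, and the count is exactly $32L^2\log N$.

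Second, for 1PRS, a common false-positive \emph{upper bound} for unstored-key and Haar challenges does not suffice. From $p_{\mathrm{in}}\ge\frac34$, $p_{\mathrm{out}}\ge 0$ and $p_{\mathrm H}\le\frac14$ alone, the gap $\frac LK p_{\mathrm{in}}+(1-\frac LK)p_{\mathrm{out}}-p_{\mathrm H}$ is only bounded below by $\frac{3L}{4K}-\frac14$, which is negative when $L<K/3$. ``Comparing only the $k\le L$ branch'' silently uses the exact equality $p_{\mathrm{out}}=p_{\mathrm H}$, and you must justify it:
\begin{itemize}
\item The advice depends only on the rows $R(1,\cdot),\dots,R(L,\cdot)$, so for $k>L$ the row $R(k,\cdot)$ is independent of it.
\item Averaging over that row gives $\mathbb E_{R(k,\cdot)}|\psi_{R,k}\rangle\langle\psi_{R,k}|=I/N$, the same as the Haar average.
\item The acceptance probability of the fixed measurement is linear in the challenge's density matrix, so the two probabilities coincide conditioned on the stored rows.
\end{itemize}
This gives $p_{\mathrm{real}}-p_{\mathrm H}=\frac LK(p_{\mathrm{in}}-p_{\mathrm H})$. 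Also, $p_{\mathrm H}$ is not bounded by the raw sum of per-test acceptance probabilities, because the tests are applied sequentially. Instead, apply Gao's bound to the all-reject sequence to get $p_{\mathrm H}\le 4L\bigl(\frac1{c+1}+\frac1N\bigr)\le\frac14$. Together with $p_{\mathrm{in}}\ge\frac34$, this yields advantage $\frac{L}{2K}$.
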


\begin{proof}
Fix distinct keys $k_1,\ldots,k_L$ independently of $R$, and let
$J=\{k_1,\ldots,k_L\}$.
Set $r=32L$, and use the advice state
\[
    \ket{\Phi_R}
    :=
    \bigotimes_{j=1}^{L}
    \ket{\psi_{R,k_j}}^{\otimes r}.
\]
This uses $rL\log_2 N=32L^2\log_2 N$ qubits.
The keys are fixed in advance, so their descriptions need not be
included in the advice.

\paragraph{The symmetric-subspace test.}
Let $\Pi_{\mathrm{sym}}^{(r+1)}$ be the projector onto the totally symmetric subspace of $(\mathbb C^N)^{\otimes(r+1)}$.
For any unit vectors $\ket{\varphi},\ket{\psi}$,
\begin{equation}
\label{eq:symmetric-test-acceptance}
    \left\|
        \Pi_{\mathrm{sym}}^{(r+1)}
        \bigl(\ket{\varphi}\otimes\ket{\psi}^{\otimes r}\bigr)
    \right\|^2
    =
    \frac{1+r|\langle\varphi|\psi\rangle|^2}{r+1}.
\end{equation}
Indeed, the component of $\ket{\varphi}$ parallel to $\ket{\psi}$ is accepted with probability one, while its orthogonal component has squared projection $1/(r+1)$.

On receiving the challenge, the algorithm tests it against the $r$ stored copies for $k_1$, then $k_2$, and so on, stopping at the first accepting outcome.
Each test is the two-outcome projective measurement
$\{\Pi_{\mathrm{sym}}^{(r+1)},
I-\Pi_{\mathrm{sym}}^{(r+1)}\}$ on the challenge and the corresponding stored block. All operations are independent of $R$, and no online oracle queries are needed.

For different keys, the expected squared overlap is $1/N$. The same holds between any fixed stored state and an independent Haar-random challenge. Thus the expected acceptance probability of an individual non-matching test, evaluated on the original input state, is
\[
    a := \frac{1+r/N}{r+1}
    \le \frac{1}{r+1}+\frac{1}{N},
    \qquad
    4\cdot L\cdot a \le \frac14.
\]
The last inequality follows from $r=32L$ and $L\le N/32$.

To account for the disturbance from earlier tests, we use the quantum union bound \cite{Gao15}: for projectors $Q_1,\ldots,Q_h$ and a unit vector $\ket{v}$,
\[
    \left\|Q_h\cdots Q_1\ket{v}\right\|^2
    \ge
    1-4\sum_{i=1}^{h}
    \left\|(I-Q_i)\ket{v}\right\|^2.
\]

\paragraph{For the 1OWS game.}
Output the key corresponding to the first accepting test, and output an arbitrary key if all tests reject. Suppose the challenge key is $k_j$. On the original input, the test for $k_j$ accepts with probability one, while every earlier test has expected acceptance probability $a$. Applying the quantum union bound to the desired sequence ``reject $k_1,\ldots,k_{j-1}$, then accept $k_j$'' gives
\[
    \mathbb E_R
    \Pr[\text{output }k_j \mid k=k_j]
    \ge 1-4(j-1)a
    \ge \frac34.
\]
Since the challenge key lies in $J$ with probability $L/K$, the overall key-recovery probability is at least $3L/4K$.

\paragraph{For the 1PRS game.}
Use the same sequence of tests, declaring the challenge real if any test accepts, and Haar-random otherwise.
Let $p_{\mathrm{in}}$ be the probability of declaring real conditioned on a real challenge with key in $J$, and let $p_{\mathrm H}$ be the probability of declaring real on a Haar-random challenge.
These probabilities include the average over $R$.
The preceding argument gives $p_{\mathrm{in}}\ge 3/4$.
Applying the quantum union bound to the sequence in which all
tests reject gives
\[
    p_{\mathrm H}\le 4La\le \frac14.
\]

It remains to account for real challenges whose keys are not
stored.
Conditioned on the oracle rows indexed by $J$, every such
challenge is independent of the advice and has average density
matrix $I/N$, exactly as in the Haar experiment.
Thus these challenges contribute the same acceptance probability
as the Haar baseline.
Writing $p_{\mathrm{real}}$ for the acceptance probability in
the real experiment, we obtain
\[
    p_{\mathrm{real}}
    =
    \frac{L}{K}p_{\mathrm{in}}
    +
    \left(1-\frac{L}{K}\right)p_{\mathrm H},
\]
and hence
\[
    p_{\mathrm{real}}-p_{\mathrm H}
    =
    \frac{L}{K}
    \bigl(p_{\mathrm{in}}-p_{\mathrm H}\bigr)
    \ge \frac{L}{2K}.
\]
This also lower bounds the distinguishing advantage averaged
over $R$, since the average absolute gap is at least the
absolute average gap.
\end{proof}

For $N\ge 32$ and an advice budget $S\ge 32\log_2 N$, take
\[
    L
    =
    \min\left\{
        K,\,
        \left\lfloor\frac{N}{32}\right\rfloor,\,
        \left\lfloor
            \sqrt{\frac{S}{32\log_2 N}}
        \right\rfloor
    \right\}.
\]
Since $K\le N$, both attacks achieve
\[
    \Omega\left(
        \min\left\{
            1,\,
            \frac{1}{K}\sqrt{\frac{S}{\log N}}
        \right\}
    \right).
\]
In particular, constant success probability or distinguishing
advantage is achievable with $O(K^2\log N)$ advice qubits and
no online queries.

\section{Proof of \Cref{lem:1ows-classical-controlled-unitary}}
\label{sec:proof-of-classical-controlled-unitary}

Now we prove \cref{lem:1ows-classical-controlled-unitary}. We first construct the
unitary for a single key, and then apply this construction after un-compressing
all the other rows.

\begin{lemma}[Single-key database estimate]
\label{lem:1ows-classical-single-row}
Let $\mathsf D_0$ be a single-key database register with basis
$\{\ket E:E\subseteq[N]\}$. For a normalized vector on $\mathsf D_0$
\[
\ket\beta=\sum_{E\subseteq[N]}\beta_E\ket E,
\qquad
\nu:=\sum_E|E|\,|\beta_E|^2,
\]
define $T_x\ket E:=\ket{E\oplus\{x\}}$.
There is a unitary $W_\beta$ acting only on $\mathsf D_0\mathsf B$ such that
\begin{align}
&\norm{W_\beta\left(\frac1{\sqrt N}\sum_x\ket x\otimes 
                 T_x\ket\beta_{\mathsf D_0}\ket0_{\mathsf B}\right)
       -\frac1{\sqrt N}\sum_x\ket x\otimes \ket\beta_{\mathsf D_0}\ket x_{\mathsf B}}
\le(1+\sqrt3)\sqrt{\frac\nu N}.
\label{eq:1ows-classical-single-row-bound}
\end{align}
Here $\nu$ is the expected size the database, and $W_\beta$ acts as identity on the $\ket{x}$ register.
\end{lemma}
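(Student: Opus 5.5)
The plan is to build $W_\beta$ from Uhlmann's theorem. The registers $\mathsf D_0\mathsf B$ purify the challenge register $\mathsf X$ in both states of \cref{eq:1ows-classical-single-row-bound}, and the $\ket x$ register is left untouched. Hence the best achievable distance is governed by the fidelity of the two reduced states on $\mathsf X$.

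First we separate deletions from additions, as in the proof of \cref{lem:program-local}. Write $P_x^{\in}$ and $P_x^{\notin}$ for the projectors onto $\{E:x\in E\}$ and $\{E:x\notin E\}$. Split the real state as
\[
\ket{\Xi}=\frac1{\sqrt N}\sum_x\ket x\, T_xP_x^{\notin}\ket\beta\ket0+\frac1{\sqrt N}\sum_x\ket x\, T_xP_x^{\in}\ket\beta\ket0=:\ket{\Xi_0}+\ket{\Xi_1}.
\]
Orthogonality of the $\ket x$ summands gives $\|\ket{\Xi_1}\|^2=\frac1N\sum_E|E||\beta_E|^2=\nu/N$. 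By the triangle inequality, discarding $\ket{\Xi_1}$ costs $\sqrt{\nu/N}$, which should account for the ``$1$'' in the constant $1+\sqrt3$. It then remains to find a unitary $W_\beta$ with
\[
\bigl\|W_\beta\ket{\Xi_0}-\ket{\Xi_{\mathrm{ref}}}\bigr\|\le\sqrt{3\nu/N}.
\]

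Next we apply Uhlmann's theorem to the addition part. Because $\ket{\Xi_0}$ is subnormalized, I would either renormalize it or apply Uhlmann to the pair $(\ket{\Xi_0},\ket{\Xi_{\mathrm{ref}}})$ directly. This gives a unitary $W_\beta$ on $\mathsf D_0\mathsf B$, of the form partial isometry plus completion, with
\[
\|W_\beta\ket{\Xi_0}-\ket{\Xi_{\mathrm{ref}}}\|^2=\|\Xi_0\|^2+1-2\operatorname{Tr}\bigl|\sqrt{\rho_0}\sqrt{I/N}\bigr|,
\]
where $\rho_0$ is the reduced state of $\ket{\Xi_0}$ on $\mathsf X$. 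Concretely,
\[
N\rho_0=\sum_{x,y}\bra\beta P_y^{\notin}T_yT_xP_x^{\notin}\ket\beta\,\ket x\!\bra y=I-\Delta+G.
\]
Here $\Delta$ is diagonal with entries $\|P_x^{\in}\beta\|^2$, so $\operatorname{Tr}\Delta=\nu$. The matrix $G$ is the off-diagonal interference: $G_{xy}=\sum_F\overline{\beta_{F\setminus\{y\}}}\beta_{F\setminus\{x\}}$, summed over $F\ni x,y$. The fidelity term is $\frac1{\sqrt N}\operatorname{Tr}\sqrt{\rho_0}$, so we need a lower bound $\operatorname{Tr}\sqrt{I-\Delta+G}\ge N-O(\nu)$.

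To get it, I would compare with the diagonal. Concavity gives $\operatorname{Tr}\sqrt{A}\ge\sum_x\sqrt{A_{xx}}$ for PSD $A$, since $\sqrt{\cdot}$ is operator concave and pinching does not increase $\operatorname{Tr} f$ for concave $f$. Applying this to $A=N\rho_0$, whose diagonal is $1-\Delta_{xx}$, gives
\[
\operatorname{Tr}\sqrt{N\rho_0}\ge\sum_x\sqrt{1-\Delta_{xx}}\ge N-\nu.
\]
Plugging in yields squared error at most $(1-\nu/N)+1-2(1-\nu/N)=\nu/N$. Combined with the deletion part, this gives a bound comfortably within $(1+\sqrt3)\sqrt{\nu/N}$. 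A sign or normalization slip could cost the extra constant factors, which the $\sqrt3$ has room to absorb.

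The main obstacle I expect is precisely this step. One must justify that the interference matrix $G$ cannot reduce $\operatorname{Tr}\sqrt{N\rho_0}$ below the diagonal contribution. The pinching and concavity argument seems to do this cleanly, since it never requires bounding $\|G\|$, which could be large. If it fails, my fallback is the ordered-database and swap picture from \Cref{lem:1prs-rewrite-swap}: there I would construct $W_\beta$ explicitly, mapping $\ket{F}\ket0$ to a superposition over the removed column $x\in F$ written into $\mathsf B$.
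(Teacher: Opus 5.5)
There is a genuine gap, and it sits exactly at the step you flagged. The inequality $\operatorname{Tr}\sqrt{A}\ge\sum_x\sqrt{A_{xx}}$ for PSD $A$ is false; the true inequality goes the other way. Pinching is a unital channel, so for concave $f$ it can only \emph{increase} $\operatorname{Tr}f$. Concretely, write $A_{xx}=\sum_j\lambda_j|u_j(x)|^2$ and apply Jensen: this gives $\sqrt{A_{xx}}\ge(\sqrt A)_{xx}$, hence $\sum_x\sqrt{A_{xx}}\ge\operatorname{Tr}\sqrt A$.

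The lemma's own setting already violates your step. Take $N=2$ and $\ket\beta=(\ket{\{1\}}+\ket{\{2\}})/\sqrt2$. Then $\nu=1$, $\Delta=\frac12 I$, and $N\rho_0=\frac12\begin{pmatrix}1&1\\1&1\end{pmatrix}$, so $\operatorname{Tr}\sqrt{N\rho_0}=1<\sqrt2=\sum_x\sqrt{1-\Delta_{xx}}$. So the interference $G$ \emph{can} pull $\operatorname{Tr}\sqrt{N\rho_0}$ below its diagonal value, and some quantitative control of $G$ is unavoidable. Your claimed bound $\operatorname{Tr}\sqrt{N\rho_0}\ge N-\nu$ (which would even give the constant $2$ in place of $1+\sqrt3$) is therefore left without proof.

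The missing idea is that $G$ is controlled by the deletion mass you split off. Put $\ket{w_x}:=T_xP_x^{\in}\ket\beta=\sum_{E\ni x}\beta_E\ket{E\setminus\{x\}}$. A direct check gives $\langle w_x|w_x\rangle=\Delta_{xx}$ and $\langle w_x|w_y\rangle=\sum_{F\ni x,y}\overline{\beta_{F\setminus\{y\}}}\beta_{F\setminus\{x\}}=G_{xy}$ for $x\ne y$. So $\Gamma:=\Delta+G$ is a PSD Gram matrix with $\operatorname{Tr}\Gamma=\nu$, and therefore
\[
\|N\rho_0-I\|_1=\|\Gamma-2\Delta\|_1\le\|\Gamma\|_1+2\|\Delta\|_1= 3\nu .
\]
Let $\lambda_j$ be the eigenvalues of $N\rho_0$. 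Your Uhlmann expression equals $\frac1N\sum_j(\sqrt{\lambda_j}-1)^2\le\frac1N\sum_j|\lambda_j-1|\le 3\nu/N$. Adding the deletion term gives exactly $(1+\sqrt3)\sqrt{\nu/N}$, so the $\sqrt3$ is not slack but precisely this trace-norm estimate.

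Once repaired, your argument is the paper's proof in different language. The paper writes $M_\pm$ for the addition and deletion maps and proves $(G_+)_{xy}=(G_-)_{yx}$ off the diagonal, which is your identity $G=\Gamma-\Delta$. It then bounds $\|G_+-I\|_1\le 3\nu$ and takes $W_\beta$ from the polar decomposition $M_+=V\sqrt{G_+}$. That $W_\beta$ is an Uhlmann unitary for your pair $(\ket{\Xi_0},\ket{\Xi_{\mathrm{ref}}})$.
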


\begin{proof}
Define $\mathcal H_{\mathsf D_0}$ as the space of register $\mathsf D_0$. Define two linear maps $M_+,M_-:\mathbb C^N\to\mathcal H_{\mathsf{D}_0}$ by
\[
M_+\ket x:=\sum_{E:x\notin E}\beta_E\ket{E\cup\{x\}},
\qquad
M_-\ket x:=\sum_{E:x\in E}\beta_E\ket{E\setminus\{x\}}.
\]
Thus $(M_++M_-)\ket x=T_x\ket\beta$. For any matrix $M$, write
$\norm{M}_{\mathrm{HS}}^2:=\operatorname{Tr}(M^\dagger M)$ for the sum of its squared
column norms. For a fixed $x$, distinct sets containing $x$ remain distinct after
deleting $x$, and thus
\begin{align}
\label{eq:1ows-classical-deletion-mass}
\norm{M_-}_{\mathrm{HS}}^2
=\sum_x\sum_{E:x\in E}|\beta_E|^2
=\sum_E|E|\,|\beta_E|^2=\nu.
\end{align}

\paragraph{Replacing by orthonormal vectors from the $M_+$ part.} Define the $N\times N$ Gram matrices
$G_+:=M_+^\dagger M_+$ and $G_-:=M_-^\dagger M_-$.
They satisfy the following:
\begin{align*}
    (G_-)_{xx}=\sum_{E:x\in E}|\beta_E|^2,
    \qquad
    (G_+)_{xx}=1-(G_-)_{xx},\\
    (G_+)_{xy}
=\sum_{E\subseteq[N]\setminus\{x,y\}}
       \overline{\beta_{E\cup\{y\}}}\,\beta_{E\cup\{x\}}
=(G_-)_{yx}.
\end{align*}
Therefore, for positive semidefinite $G_-^{\mathsf T}$ and $\operatorname{diag}(G_-)$, and trace norm $\norm{\cdot}_1$, 
\begin{align*}
    \norm{G_+-I_N}_1&=\norm{G_-^{\mathsf{T}}-I_N+(\operatorname{diag}(G_+)-\operatorname{diag}(G_-))}_1=\norm{G_-^{\mathsf{T}}-2\operatorname{diag}(G_-)}_1\\
    &\le \norm{G_-^{\mathsf{T}}}_1 + 2\norm{\operatorname{diag}(G_-)}_1\le 3\nu
\end{align*}

Choose an isometry $V:\mathbb C^N\to\mathcal H_{\mathrm{D}_0}$ satisfying $M_+=V\sqrt{G_+}$. That is, for an orthonormal eigenbasis $\{\ket{u_j}\}_{j=1}^N$ of $G_+$ with eigenvalues $\lambda_j\ge0$, set $V\ket{u_j}=M_+\ket{u_j}/\sqrt{\lambda_j}$ when $\lambda_j>0$.
These vectors are orthonormal because their inner products are given by the matrix $G_+$ in its eigenbasis. For the zero eigenvalues, extend them to an orthonormal family of $N$ vectors (this is possible since
$\dim\mathcal H_{\mathrm{D}_0}=2^N\ge N$).
Then $V^\dagger V=I_N$, and since $\lambda_j>0$,
\begin{align*}
\norm{M_+-V}_{\mathrm{HS}}^2
&=\sum_{j=1}^N(\sqrt{\lambda_j}-1)^2
\le\sum_{j=1}^N|\lambda_j-1|\\
&=\norm{G_+-I_N}_1\le3\nu.
\end{align*}
Together with \cref{eq:1ows-classical-deletion-mass}, this gives
\begin{align}
\label{eq:1ows-classical-orthonormal-approximation}
\norm{M_++M_--V}_{\mathrm{HS}}
\le\norm{M_-}_{\mathrm{HS}}+\norm{M_+-V}_{\mathrm{HS}}
\le(1+\sqrt3)\sqrt\nu.
\end{align}

\paragraph{Constructing $W_\beta$.}
The two families $\{V\ket x\ket0:x\in[N]\}$ and
$\{\ket\beta\ket x:x\in[N]\}$ are orthonormal in
$\mathcal H_{\mathrm{D}_0}\otimes\mathcal H_{\mathsf B}$, where $\mathcal H_{\mathsf B}$ is the space of register $\mathsf B$.
Complete them to orthonormal bases and map the first basis to the second.
This defines one unitary $W_\beta$ satisfying, simultaneously for all $x$,
\[
W_\beta(V\ket x\ket0)=\ket\beta\ket x.
\]
Thus $W_\beta$ moves the orthonormal encoding of $x$ into $\mathsf B$ and restores
$\ket\beta$ on the row. By orthogonality of the labels in $\mathsf X$ and unitarity,
the square of the norm in \cref{eq:1ows-classical-single-row-bound} is
\begin{align*}
&\frac1N\sum_x
\norm{W_\beta(T_x\ket\beta\ket0)-W_\beta(V\ket x\ket0)}^2\\
&\quad=\frac1N\sum_x\norm{T_x\ket\beta-V\ket x}^2
=\frac1N\norm{M_++M_--V}_{\mathrm{HS}}^2
\le\frac{(1+\sqrt3)^2\nu}{N}.
\end{align*}
\end{proof}

From the single-key database case in \cref{lem:1ows-classical-single-row}, now we are ready to prove \cref{lem:1ows-classical-controlled-unitary}

\begin{proof}[Proof of \cref{lem:1ows-classical-controlled-unitary}]
Fix $k$. As we would want $[W_k,T_{a,y}]=0$, we would need to go back to the uncompressed mode. Write $\mathsf D_k$ for the key $k$ part and $\mathsf D_{-k}$ for all other parts, and put $D_k:=\{x:(k,x)\in D\}$. On the parts outside of $k$, use the orthonormal basis
\[
\ket s_{-k}:=
\bigotimes_{\substack{a\in[K]\setminus\{k\}\\ y\in[N]}}
\frac{\ket0_{a,y}+{s(a,y)}\ket1_{a,y}}{\sqrt2}
=
\bigotimes_{\substack{a\in[K]\setminus\{k\}\\ y\in[N]}}
H\ket{(1-s(a,y))/2}.
\]
Here $s:([K]\backslash\{k\})\times[N]\to\{\pm 1\}$ ranges over all assignments (as for random oracle $R_{-k}$), and $\ket{0}_{a,y},\ket{1}_{a,y}$ mean absence or presence of a compressed database entry $(a,y)$. This is the uncompressed outside-$k$ basis written in compressed coordinates, while the key $k$ part remains compressed.
The two single-coordinate vectors are orthonormal, so $\sum_s\ket s\bra s_{-k}=I_{-k}$.

Define the unnormalized vectors on a single key (while fixing the oracle $R_{-k}$ on other keys):
\[
\ket{\beta_{k,s}}_k:=(\bra s_{-k}\otimes I_k)\ket\chi
=\sum_{E\subseteq[N]}\beta_{k,s,E}\ket E_k.
\]
Here $\ket E_k$ contains exactly the entries $E$ with key $k$, and $I_{-k},I_k$ are the identities on the indicated registers. 
For each nonzero $\ket{\beta_{k,s}}$, apply \cref{lem:1ows-classical-single-row} to $\ket{\beta_{k,s}}/\norm{\ket{\beta_{k,s}}}$, obtaining a unitary $W_{k,s}$ on $\mathsf D_k\mathsf B$. Squaring and using orthogonality of register $\ket{x}$,
\begin{equation}
\label{eq:1ows-classical-branch-error}
\sum_{x\in[N]}
\norm{W_{k,s}(T_x\ket{\beta_{k,s}}\ket0)
      -\ket{\beta_{k,s}}\ket x}^2
\le (1+\sqrt3)^2\sum_E |E|\,|\beta_{k,s,E}|^2.
\end{equation}
Here $T_x\ket E_k=\ket{E\oplus\{x\}}_k$, and the displayed registers inside the norm are $\mathsf D_k,\mathsf B$. When $\beta_{k,s}=0$, choose $W_{k,s}=I$, and the estimate remains valid.

Now set
\[
W_k:=\sum_s\ket s\bra s_{-k}\otimes W_{k,s},
\qquad
W:=\sum_{k\in[K]}W_k\otimes\ket k\bra k_{\mathsf K}.
\]
Both are unitary because their blocks are unitary. For $a\ne k$,
$T_{a,y}\ket s_{-k}={s(a,y)}\ket s_{-k}$, so $T_{a,y}$ is scalar on each
$s$-block and commutes with $W_k$. Also, $W$ leaves the algorithm's registers
untouched, hence $[W,U]=0$. Since $\Pi$ is diagonal in $k$ and each $\Pi_k$
acts only on the algorithm's registers, $[W,\Pi]=0$.

Finally, expand both initial states. The initial state on the database register can be written as 
\[
\ket\chi=
\left(\sum_s\ket s\bra s_{-k}\otimes I_k\right)\ket\chi
=\sum_s\ket s_{-k}\ket{\beta_{k,s}}_k.
\]
For each $E$, orthogonality gives
\[
\sum_s|\beta_{k,s,E}|^2
=\norm{(I_{-k}\otimes\bra E_k)\ket\chi}^2
=\sum_{D:D_k=E}|\alpha_D|^2.
\]
The labels $k,s,x$ are orthogonal, and the fixed algorithm state $\ket z$ has norm one. Therefore, by \cref{eq:1ows-classical-branch-error} and $\mu=\sum_D|\alpha_D|^2|D|$,
\begin{align*}
&\norm{W\ket{\xi_{\mathrm{real}}}-\ket{\xi_{\mathrm{ref}}}}^2\\
&\quad=\frac1{KN}\sum_{k,s,x}
\norm{W_{k,s}(T_x\ket{\beta_{k,s}}\ket0)
      -\ket{\beta_{k,s}}\ket x}^2\\
&\quad\le \frac{(1+\sqrt3)^2}{KN}
       \sum_{k,s,E}|E|\,|\beta_{k,s,E}|^2\\
&\quad=\frac{(1+\sqrt3)^2}{KN}
       \sum_D|\alpha_D|^2\sum_k|D_k|
=\frac{(1+\sqrt3)^2\mu}{KN}.
\end{align*}
\end{proof}

\begin{remark}[Why this does not extend to the quantum advice case?]
The construction uses the fact that $\ket\chi$ is on the database register alone, while the initial algorithm state $\ket z$ is fixed (and unentangled with $\ket{\chi}$). Thus for each $\ket{\beta_{k,s}}$ as a state on key $k$ alone, $W_k$ can be chosen entirely on inaccessible registers (the database register and $\mathsf B$).

For a general initial state entangled between quantum advice and the database, say $\ket{\psi}:=\sum_D\alpha_D\ket{\varphi_D}_{\mathsf A}\ket{D}_{\mathsf D}$,
the corresponding state on a single $k$ can also contain the advice register $\mathsf A$. Therefore, the same orthonormal-family construction only guarantees a unitary $V'$ on $\mathsf{AD_0B}$, not on $\mathsf{D_0 B}$ alone. This will result in a $W$ that also acts on register $\mathsf A$, and thus cannot commute with the operations on the algorithm's registers as $U$ and $\Pi$. Thus this proof does not extend to arbitrary quantum advice.
\end{remark}

\section{Proof of \cref{lem:prg-l-database-t-query}}
\label{sec:append-prg}

\begin{proof}[Proof of \cref{lem:prg-l-database-t-query}]
From \cref{eq:prg-goal-advantage}, we know 
\begin{align*}
    Y_R:=\mathbb{E}_x
    \bra{R(x),0^w}(U^\dagger O_R^\dagger)^T \Pi (O_R U)^T\ket{R(x),0^w} 
    - 
    \mathbb{E}_r
    \bra{r,0^w}(U^\dagger O_R^\dagger)^T \Pi (O_R U)^T\ket{r,0^w}.
\end{align*}
Throughout this section, we suppress the fixed workspace initialization $\ket{0^w}$. Operators $Y_R$ are on the advice register. The workspace is included in all evolving hybrid states, and suppressing its notation does not affect the hybrid estimates.

Now we add a resampling hybrid in between, by resampling position $x$ on the random $R$ (or, swapping position $x$ on $R$ with a random image).
Let $R^{x\to r}$ denote $R$ with its value at $x$ replaced by $r$. That is, now $R^{x\to r}(x)=r$, and for any $x'\neq x$, $R^{x\to r}(x')=R(x)$. Define 
\begin{align*}
\Gamma_R
&:=
\mathbb E_x \bra{R(x)}(U^\dagger O_R^\dagger)^T \Pi (O_R U)^T\ket{R(x)}
-
\mathbb E_{x,r} \bra{r}(U^\dagger O_{R^{x\to r}}^\dagger)^T \Pi (O_{R^{x\to r}} U)^T\ket{r},\\
B_R
&:=
\mathbb E_{x,r} \bra{r}(U^\dagger O_{R^{x\to r}}^\dagger)^T \Pi (O_{R^{x\to r}} U)^T\ket{r}
-
\mathbb E_r \bra{r}(U^\dagger O_R^\dagger)^T \Pi (O_R U)^T\ket{r}.
\end{align*}
Now $Y_R=\Gamma_R+B_R$. By linearity, in the compressed oracle framework,
$\widehat Y=\widehat\Gamma+\widehat B$. Therefore, we only need to bound $\bra{\psi}\widehat{\Gamma}\ket{\psi}$ and $\bra{\psi}\widehat{B}\ket{\psi}$ respectively.

\paragraph{For the $\widehat\Gamma$ part.} In the real mode, for each $x$, define the $x$ part of the operator for $\Gamma_R$ as 
\[
\Gamma_R(x):=\bra{R(x)}(U^\dagger O_R^\dagger)^T \Pi (O_R U)^T\ket{R(x)}
-
\mathbb E_{r} \bra{r}(U^\dagger O_{R^{x\to r}}^\dagger)^T \Pi (O_{R^{x\to r}} U)^T\ket{r},
\]
and thus $\Gamma_R=\frac{1}{N}\sum_x \Gamma_R(x)$, and $\Gamma_R(x)$ taking expectation over $R(x)$ will be equal to 0.

In the compressed oracle database mode, Let $Q_x$ be the projector onto databases where $x$ is contained, $(I-Q_x)$ as the part where $x$ is absent, and they act as the identity of the algorithm's register. We can correspondingly define $\widehat \Gamma_x$ as $\Gamma_R(x)$ in the compressed oracle framework, and 
\[
\left\|\widehat \Gamma_x\right\|\le 2, \qquad (I-Q_x)\widehat\Gamma_x(I-Q_x)=0.
\]
This is from the fact that, in the uncompressed representation, $(I-Q_x)$ projects the oracle value register at $x$ onto its uniform superposition, and thus it averages $\Gamma_{R}(x)$ over that coordinate $R(x)$.
Now for every $x\in[N]$,
\begin{align*}
\left|\bra\psi\widehat \Gamma_x\ket\psi\right|
&=
\left|
\bra\psi Q_x\widehat \Gamma_x\ket\psi
+
\bra\psi (I-Q_x)\widehat \Gamma_x Q_x\ket\psi
\right|
\le 2\cdot 2\left\|Q_x\ket\psi\right\|.
\end{align*}
Since $\sum_x Q_x$ will count the database size, on $\ket{\psi}$ supported on databases of size $\le\ell$, $\sum_x\|Q_x|\psi\rangle\|^2\le\ell$. By Cauchy-Schwarz,
\[
\left|\bra\psi\widehat\Gamma\ket\psi\right|
\le 
\frac{1}{N}\sum_x\left|\bra\psi\widehat\Gamma_x\ket\psi\right|
\le
\frac4N\sum_x\left\|Q_x\ket\psi\right\|
\le
4\sqrt{\frac\ell N}.
\]

\paragraph{For the $\widehat B$ part.}
Fix $r$. Define $\cO_x$ from the compressed oracle $\cO$ except that, at query input $x$, it applies the response corresponding to $r$ without changing
the database. Let $\ket{v_i}$ and $\ket{v_i^x}$ be the states just before the $(i+1)$-th query using $\cO$ and $\cO_x$, respectively, starting from the same state $\ket r\ket\psi$. Now our goal is to upper bound $\left|\mathbb E_x\bra{v_T^x}\Pi\ket{v_T^x}-\bra{v_T}\Pi\ket{v_T} \right|$, which can be bounded 
\begin{align*}
\left|
\mathbb E_x\bra{v_T^x}\Pi\ket{v_T^x}
-\bra{v_T}\Pi\ket{v_T}
\right|
&\le \left|2\Re\bra{v_T}\Pi\mathbb{E}_x(\ket{v_T^x}-\ket{v_T})+\mathbb{E}_x(\bra{v_T^x}-\bra{v_T})\Pi(\ket{v_T^x}-\ket{v_T})\right|
\\
&\le 2\left\|\mathbb{E}_x(\ket{v_T^x}-\ket{v_T})\right\|+\mathbb{E}_x\left\|\ket{v_T^x}-\ket{v_T}\right\|^2.
\end{align*}

Consider induction on queries, the difference can be viewed as
\begin{align*}
    \left\|\ket{v_{i+1}^x}-\ket{v_{i+1}}\right\|
    &=
    \left\|
    \cO_x (\ket{v_i^x}-\ket{v_i})
    +
    (\cO_x-\cO)\ket{v_i}
    \right\|.
\end{align*}
As $\cO_x-\cO$ only acts on query address $x$ and has norm at most $2$, $\sum_x \left\|(\cO_x-\cO)\ket{v_i}\right\|^2\le 4$, and the triangle inequality gives
\begin{align*}
\sqrt{\mathbb E_x
\|\ket{v_{i+1}^x}-\ket{v_{i+1}}\|^2}
&\le
\sqrt{\mathbb E_x
\|\ket{v_i^x}-\ket{v_i}\|^2}
+\frac2{\sqrt N}
\end{align*}
Since the initial states $\ket{v_i^0}=\ket{v_i}$, this gives
\begin{align}\label{ineq:prg-state-difference}
    \mathbb E_x\|\ket{v_i^x}-\ket{v_i}\|^2\le\frac{4i^2}{N}.
\end{align}

For the average state difference, we can rewrite the difference as
\begin{align*}
    \left\|\ket{v_{i+1}^x}-\ket{v_{i+1}}\right\|
    &=
    \left\|
    \cO(\ket{v_{i}^x}-\ket{v_{i}})
    +
    (\cO_x-\cO)(\ket{v_i^x}-\ket{v_i})
    +
    (\cO_x-\cO)\ket{v_i}
    \right\|,
\end{align*}
and therefore, by the triangle inequality,
\begin{align*}
    \left\|\mathbb{E}_x(\ket{v_{i+1}^x}-\ket{v_{i+1}})\right\|
    &=
    \left\|
    \mathbb{E}_x \cO(\ket{v_{i}^x}-\ket{v_{i}})
    +
    \mathbb{E}_x(\cO_x-\cO)(\ket{v_i^x}-\ket{v_i})
    +
    \mathbb{E}_x(\cO_x-\cO)\ket{v_i}
    \right\|
    \\
    &\le 
    \left\|
    \mathbb{E}_x (\ket{v_{i}^x}-\ket{v_{i}})
    \right\|
    +
    \left\|
    \mathbb{E}_x(\cO_x-\cO)(\ket{v_i^x}-\ket{v_i})
    \right\|
    +
    \left\|
    \mathbb{E}_x(\cO_x-\cO)\ket{v_i}
    \right\|
    \\
    &\le 
    \left\|
    \mathbb{E}_x (\ket{v_{i}^x}-\ket{v_{i}})
    \right\|
    +
    \sqrt{\frac{4}{N}\cdot \mathbb{E}_x \left\|\ket{v_i^x}-\ket{v_i}\right\|^2}
    +
    \left\|
    \mathbb{E}_x(\cO_x-\cO)
    \right\|
    \\
    &\le 
    \left\|
    \mathbb{E}_x (\ket{v_{i}^x}-\ket{v_{i}})
    \right\|
    +
    \frac{4i}{N}
    +
    \frac{2}{N},
\end{align*}
where the last line is from \cref{ineq:prg-state-difference}, and as $\{\cO_x-\cO\}_x$ are orthogonal for different $x$. This gives
\begin{align}\label{ineq:prg-average-state-difference}
    \left\|\mathbb E_x(\ket{v_T^x}-\ket{v_T})\right\|
    \le\frac{2T^2}{N}.
\end{align}

From \cref{ineq:prg-state-difference,ineq:prg-average-state-difference}, we can upper bound the $\widehat B$ part by
\begin{align*}
    \left|
    \mathbb E_x\bra{v_T^x}\Pi\ket{v_T^x}
    -\bra{v_T}\Pi\ket{v_T}
    \right|
    \le 2\left\|\mathbb{E}_x(\ket{v_T^x}-\ket{v_T})\right\|+\mathbb{E}_x\left\|\ket{v_T^x}-\ket{v_T}\right\|^2
    \le \frac{8T^2}{N}.
\end{align*}

Together with the bound for the $\widehat\Gamma$ part, this proves the lemma.
\end{proof}

\subsection{A weaker but simpler bound}
In this subsection, we also show a weaker bound of $O\left(\sqrt{(T^2+\ell)/N}\right)$ for \cref{lem:prg-l-database-t-query}, which can be derived directly from \cite{FOCS:CGLQ20}.

For normalized state $\ket{\psi}$ (with both the algorithm's register and the database register), by our definition of $\widehat Y$ in the compressed oracle database mode:
\begin{align*}
    &\left| \bra{\psi}\widehat Y\ket{\psi} \right|\\
    &= \left|\bra{\psi}\left(
    \frac1M\cdot\mathbb{E}_x \sum_{y_1,y_2}\bra{y_1}H (U^\dagger \cO^\dagger)^T \Pi (\cO\;U)^TH\ket{y_2}\otimes T_{x,(y_1\oplus y_2)}
    -
    \mathbb{E}_r
    \bra{r}(U^\dagger \cO^\dagger)^T \Pi (\cO\; U)^T\ket{r}\otimes I_D
    \right)\ket{\psi}\right|\\
    &=\left|
    \mathbb{E}_x
    \left\|
         \Pi(\cO\; U)^T \left(\sum_y \frac{1}{\sqrt{M}} \cdot H\ket{y}\otimes T_{x,y}\right) \ket{\psi}
    \right\|^2
    -
    \mathbb{E}_r
    \left\|
    \Pi(\cO\; U)^T\ket{r}\ket{\psi}
    \right\|^2
    \right|.
\end{align*}
This is exactly the distinguishing advantage of the following  distinguishing game:

\begin{enumerate}
    \item The algorithm starts with a normalized state $\ket{\psi}$ with support on databases of size $\le \ell$.
    \item The algorithm will receive a challenge input, from either $R(x)$ for a random $x\gets [N]$ or a completely random $r\gets [M]$.
    \item Then the algorithm is allowed to make $T$ queries to the oracle $R$ (or $O_R$), and it needs to tell whether the challenge is from $R$ or a completely random string from range.
\end{enumerate}
Therefore, now we only need to prove the maximum distinguishing advantage of this game, and this will give a proof for \cref{lem:prg-l-database-t-query}. This has been studied in \cite[Lemma~5.9]{FOCS:CGLQ20}.

From the above interpretation, now our goal is to upper bound the distinguishing advantage of the game starting with $\ket{\psi}$ supported on databases of size $\le\ell$.
\begin{align*}
    &\left|\Pr_{R,x}[A^{O_R}(\ket{\psi},R(x))=1]-\Pr_{R,y}[A^{O_R}(\ket{\psi},y)=1]\right|\\
    &\le \left|\sqrt{\Pr_{R,x}[A^{O_R}(\ket{\psi},R(x))=1]} - \sqrt{\Pr_{R,y}[A^{O_R}(\ket{\psi},y)=1]}\right| 
    \left|\sqrt{\Pr_{R,x}[A^{O_R}(\ket{\psi},R(x))=1]} + \sqrt{\Pr_{R,y}[A^{O_R}(\ket{\psi},y)=1]}\right|
    \\
    &\le \frac{T+\sqrt{\ell}}{\sqrt{N}}\cdot 2\sqrt{2}
    \le 
    4\cdot \sqrt{\frac{T^2+\ell}{N}}.
    \tag{\cite[Lemma~5.9]{FOCS:CGLQ20}}
\end{align*}

\end{document}